\newtheorem{theorem}{Theorem}
\newtheorem{corollary}{Corollary}[theorem]
\newtheorem{lemma}[theorem]{Lemma}
\newtheorem{prop}{Proposition}
\theoremstyle{definition}
\newtheorem{definition}{Definition}
\begin{document}

\preprint{APS/123-QED}
\author{Yi-Xuan Wang}
\affiliation{Institute of Theoretical Physics and IQST, Albert-Einstein Allee 11, Ulm University, 89081 Ulm, Germany}
\affiliation{Department of Modern Physics, University of Science and Technology of China, Hefei 230026, China}
\author{Yuval Gefen}
\email{yuval.gefen@weizmann.ac.il}
\affiliation{Department of Condensed Matter Physics, Weizmann Institute of Science, Rehovot 7610001, Israel}

\title{State Engineering of Unsteerable Hamiltonians}
\begin{abstract}
Lindbladian dynamics of open systems may be employed to steer a many-body system towards a non-trivial ground state of a local Hamiltonian. Such protocols provide us with tunable platforms facilitating the engineering and study of non-trivial many-body states. Steering of a quantum system towards a degenerate ground state manifold  provides us with a protected platform to employ many-body states as a resource for quantum information processing. Notably, ground states of frustrated local Hamiltonians have been known not to be amenable to steering protocols. 
Revisiting this intricate physics we report two new results: (i) we find a broad class of (geometrically) frustrated local Hamiltonians for which steering of the ground state manifold is possible through a sequence of discrete steering steps. Following the steering dynamics, states within the degenerate ground-state manifold keep evolving in a non-stationary manner. (ii) For the class of Hamiltonians with ground states which are non-steerable through local superoperators, we derive a "glass floor" on how close to the ground state one can get implementing a steering protocol.  This is expressed invoking the concept of cooling-by-steering (a lower bound of the achievable temperature), or through an upper bound of the achievable fidelity. Our work provides a systematic outline for studying quantum state manipulation of a broad class of strongly correlated states.
\end{abstract}

\keywords{Suggested keywords}

\maketitle

\tableofcontents
\newpage

\section{Introduction}\label{sec:Introduction}
Constructing the ground state (GS) of a non-trivial Hamiltonian is an important and timely challenge, and concerns low-temperature phases of matter, controlled investigation of many-body physics~\cite{verstraete2009quantum,georgescu2014quantumsim,mcardle2020quantumchem}, mathematically equivalent computational problems ~\cite{kempe2006complexity,lucas2014ising}, quantum information processing~\cite{kitaev2003fault,FTQCPhysRevA.57.127}. Steering of quantum states  ~\cite{Measurement-inducedsteeringPhysRevResearch.2.033347,PRXQuantum.4.020347,PhysRevResearch.6.013244}  provides us with a tool of many-body quantum state steering. A relatively simple paradigm of such protocols is passive steering (e.g. "blind" steering~\cite{Measurement-inducedsteeringPhysRevResearch.2.033347}), which provides us with an automated approach of pushing the system's state towards a desired target state. Note that in the continuum-time (weak interaction) limit, the dynamics of such protocols can be cast in the form of Lindblad dynamics, giving rise to dissipative GS engineering. Unfortunately, such an approach to steering has limited applicability. It has been argued that only the ground states of Frustration-Free Hamiltonians are amenable to passive steering~\cite{verstraete2009quantum,ticozzi2012stabilizing,Measurement-inducedsteeringPhysRevResearch.2.033347}. 

We recall that passive steering of quantum states through generalized measurements ~\cite{Measurement-inducedsteeringPhysRevResearch.2.033347}  is a protocol consisting of three steps, cf. Fig.~\ref{fig:steer}. 
(i) Initializing the detector ancilla in a state which is independent of the system’s running state; 
(ii) coupling  (part of) the system with the detector, and evolving this composite with a local Hamiltonian, $H_{\mathrm{s-d}}$, and then decoupling the two. The system-detector coupling strength is tunable; 
(iii) performing the blind measurement, namely tracing out over the detector’s subspace, remaining with the system whose dynamics can be effectively described by a superoperator. “Passive”, or, in the present case, “blind”, refers to a protocol which is pre-determined, with no reference to information about the detectors’ readouts. A paradigmatic example is the Affleck-Kennedy-Lieb-Tasaki (AKLT) model~\cite{PhysRevLett.59.799AKLT,Measurement-inducedsteeringPhysRevResearch.2.033347}, where one can employ a passive protocol to steer towards the GS manifold. Generalizations to active protocols ~\cite{PhysRevResearch.6.013244,PRXQuantum.4.020347} and post-selective protocols~\cite{PhysRevA.105.L010203} are directions of active study. 

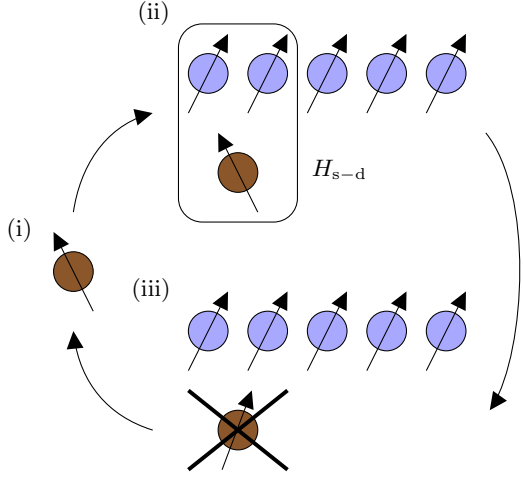
\begin{figure}[htbp]
    \centering
    \begin{tikzpicture}[x=0.75pt,y=0.75pt,yscale=-1,xscale=1]

\draw  [fill={rgb, 255:red, 168; green, 168; blue, 255 }  ,fill opacity=1 ] (118,70) .. controls (118,64.48) and (122.48,60) .. (128,60) .. controls (133.52,60) and (138,64.48) .. (138,70) .. controls (138,75.52) and (133.52,80) .. (128,80) .. controls (122.48,80) and (118,75.52) .. (118,70) -- cycle ;
\draw    (118,90) -- (136.66,52.68) ;
\draw [shift={(138,50)}, rotate = 116.57] [fill={rgb, 255:red, 0; green, 0; blue, 0 }  ][line width=0.08]  [draw opacity=0] (8.93,-4.29) -- (0,0) -- (8.93,4.29) -- cycle    ;
\draw  [fill={rgb, 255:red, 168; green, 168; blue, 255 }  ,fill opacity=1 ] (148,70) .. controls (148,64.48) and (152.48,60) .. (158,60) .. controls (163.52,60) and (168,64.48) .. (168,70) .. controls (168,75.52) and (163.52,80) .. (158,80) .. controls (152.48,80) and (148,75.52) .. (148,70) -- cycle ;
\draw    (148,90) -- (166.66,52.68) ;
\draw [shift={(168,50)}, rotate = 116.57] [fill={rgb, 255:red, 0; green, 0; blue, 0 }  ][line width=0.08]  [draw opacity=0] (8.93,-4.29) -- (0,0) -- (8.93,4.29) -- cycle    ;
\draw  [fill={rgb, 255:red, 168; green, 168; blue, 255 }  ,fill opacity=1 ] (178,70) .. controls (178,64.48) and (182.48,60) .. (188,60) .. controls (193.52,60) and (198,64.48) .. (198,70) .. controls (198,75.52) and (193.52,80) .. (188,80) .. controls (182.48,80) and (178,75.52) .. (178,70) -- cycle ;
\draw    (178,90) -- (196.66,52.68) ;
\draw [shift={(198,50)}, rotate = 116.57] [fill={rgb, 255:red, 0; green, 0; blue, 0 }  ][line width=0.08]  [draw opacity=0] (8.93,-4.29) -- (0,0) -- (8.93,4.29) -- cycle    ;
\draw  [fill={rgb, 255:red, 168; green, 168; blue, 255 }  ,fill opacity=1 ] (208,70) .. controls (208,64.48) and (212.48,60) .. (218,60) .. controls (223.52,60) and (228,64.48) .. (228,70) .. controls (228,75.52) and (223.52,80) .. (218,80) .. controls (212.48,80) and (208,75.52) .. (208,70) -- cycle ;
\draw    (208,90) -- (226.66,52.68) ;
\draw [shift={(228,50)}, rotate = 116.57] [fill={rgb, 255:red, 0; green, 0; blue, 0 }  ][line width=0.08]  [draw opacity=0] (8.93,-4.29) -- (0,0) -- (8.93,4.29) -- cycle    ;
\draw  [fill={rgb, 255:red, 168; green, 168; blue, 255 }  ,fill opacity=1 ] (238,70) .. controls (238,64.48) and (242.48,60) .. (248,60) .. controls (253.52,60) and (258,64.48) .. (258,70) .. controls (258,75.52) and (253.52,80) .. (248,80) .. controls (242.48,80) and (238,75.52) .. (238,70) -- cycle ;
\draw    (238,90) -- (256.66,52.68) ;
\draw [shift={(258,50)}, rotate = 116.57] [fill={rgb, 255:red, 0; green, 0; blue, 0 }  ][line width=0.08]  [draw opacity=0] (8.93,-4.29) -- (0,0) -- (8.93,4.29) -- cycle    ;
\draw  [fill={rgb, 255:red, 139; green, 87; blue, 42 }  ,fill opacity=1 ] (133,120) .. controls (133,114.48) and (137.48,110) .. (143,110) .. controls (148.52,110) and (153,114.48) .. (153,120) .. controls (153,125.52) and (148.52,130) .. (143,130) .. controls (137.48,130) and (133,125.52) .. (133,120) -- cycle ;
\draw    (153,140) -- (134.34,102.68) ;
\draw [shift={(133,100)}, rotate = 63.43] [fill={rgb, 255:red, 0; green, 0; blue, 0 }  ][line width=0.08]  [draw opacity=0] (8.93,-4.29) -- (0,0) -- (8.93,4.29) -- cycle    ;
\draw   (113,57) .. controls (113,50.37) and (118.37,45) .. (125,45) -- (161,45) .. controls (167.63,45) and (173,50.37) .. (173,57) -- (173,133) .. controls (173,139.63) and (167.63,145) .. (161,145) -- (125,145) .. controls (118.37,145) and (113,139.63) .. (113,133) -- cycle ;
\draw  [fill={rgb, 255:red, 168; green, 168; blue, 255 }  ,fill opacity=1 ] (118,200) .. controls (118,194.48) and (122.48,190) .. (128,190) .. controls (133.52,190) and (138,194.48) .. (138,200) .. controls (138,205.52) and (133.52,210) .. (128,210) .. controls (122.48,210) and (118,205.52) .. (118,200) -- cycle ;
\draw    (118,220) -- (136.66,182.68) ;
\draw [shift={(138,180)}, rotate = 116.57] [fill={rgb, 255:red, 0; green, 0; blue, 0 }  ][line width=0.08]  [draw opacity=0] (8.93,-4.29) -- (0,0) -- (8.93,4.29) -- cycle    ;
\draw  [fill={rgb, 255:red, 168; green, 168; blue, 255 }  ,fill opacity=1 ] (148,200) .. controls (148,194.48) and (152.48,190) .. (158,190) .. controls (163.52,190) and (168,194.48) .. (168,200) .. controls (168,205.52) and (163.52,210) .. (158,210) .. controls (152.48,210) and (148,205.52) .. (148,200) -- cycle ;
\draw    (148,220) -- (166.66,182.68) ;
\draw [shift={(168,180)}, rotate = 116.57] [fill={rgb, 255:red, 0; green, 0; blue, 0 }  ][line width=0.08]  [draw opacity=0] (8.93,-4.29) -- (0,0) -- (8.93,4.29) -- cycle    ;
\draw  [fill={rgb, 255:red, 168; green, 168; blue, 255 }  ,fill opacity=1 ] (178,200) .. controls (178,194.48) and (182.48,190) .. (188,190) .. controls (193.52,190) and (198,194.48) .. (198,200) .. controls (198,205.52) and (193.52,210) .. (188,210) .. controls (182.48,210) and (178,205.52) .. (178,200) -- cycle ;
\draw    (178,220) -- (196.66,182.68) ;
\draw [shift={(198,180)}, rotate = 116.57] [fill={rgb, 255:red, 0; green, 0; blue, 0 }  ][line width=0.08]  [draw opacity=0] (8.93,-4.29) -- (0,0) -- (8.93,4.29) -- cycle    ;
\draw  [fill={rgb, 255:red, 168; green, 168; blue, 255 }  ,fill opacity=1 ] (208,200) .. controls (208,194.48) and (212.48,190) .. (218,190) .. controls (223.52,190) and (228,194.48) .. (228,200) .. controls (228,205.52) and (223.52,210) .. (218,210) .. controls (212.48,210) and (208,205.52) .. (208,200) -- cycle ;
\draw    (208,220) -- (226.66,182.68) ;
\draw [shift={(228,180)}, rotate = 116.57] [fill={rgb, 255:red, 0; green, 0; blue, 0 }  ][line width=0.08]  [draw opacity=0] (8.93,-4.29) -- (0,0) -- (8.93,4.29) -- cycle    ;
\draw  [fill={rgb, 255:red, 168; green, 168; blue, 255 }  ,fill opacity=1 ] (238,200) .. controls (238,194.48) and (242.48,190) .. (248,190) .. controls (253.52,190) and (258,194.48) .. (258,200) .. controls (258,205.52) and (253.52,210) .. (248,210) .. controls (242.48,210) and (238,205.52) .. (238,200) -- cycle ;
\draw    (238,220) -- (256.66,182.68) ;
\draw [shift={(258,180)}, rotate = 116.57] [fill={rgb, 255:red, 0; green, 0; blue, 0 }  ][line width=0.08]  [draw opacity=0] (8.93,-4.29) -- (0,0) -- (8.93,4.29) -- cycle    ;
\draw  [fill={rgb, 255:red, 139; green, 87; blue, 42 }  ,fill opacity=1 ] (133,250) .. controls (133,244.48) and (137.48,240) .. (143,240) .. controls (148.52,240) and (153,244.48) .. (153,250) .. controls (153,255.52) and (148.52,260) .. (143,260) .. controls (137.48,260) and (133,255.52) .. (133,250) -- cycle ;
\draw    (135,270) -- (148.95,232.81) ;
\draw [shift={(150,230)}, rotate = 110.56] [fill={rgb, 255:red, 0; green, 0; blue, 0 }  ][line width=0.08]  [draw opacity=0] (8.93,-4.29) -- (0,0) -- (8.93,4.29) -- cycle    ;
\draw [line width=1.5]    (118,230) -- (168,270) ;
\draw [line width=1.5]    (118,270) -- (168,230) ;
\draw    (100,250) .. controls (77.37,243.01) and (63.56,226.15) .. (60.35,202.93) ;
\draw [shift={(60,200)}, rotate = 84.39] [fill={rgb, 255:red, 0; green, 0; blue, 0 }  ][line width=0.08]  [draw opacity=0] (8.93,-4.29) -- (0,0) -- (8.93,4.29) -- cycle    ;
\draw    (268,100) .. controls (292.25,130.15) and (288.27,209.62) .. (271.58,237.56) ;
\draw [shift={(270,240)}, rotate = 305.3] [fill={rgb, 255:red, 0; green, 0; blue, 0 }  ][line width=0.08]  [draw opacity=0] (8.93,-4.29) -- (0,0) -- (8.93,4.29) -- cycle    ;
\draw  [fill={rgb, 255:red, 139; green, 87; blue, 42 }  ,fill opacity=1 ] (50,170) .. controls (50,164.48) and (54.48,160) .. (60,160) .. controls (65.52,160) and (70,164.48) .. (70,170) .. controls (70,175.52) and (65.52,180) .. (60,180) .. controls (54.48,180) and (50,175.52) .. (50,170) -- cycle ;
\draw    (70,190) -- (51.34,152.68) ;
\draw [shift={(50,150)}, rotate = 63.43] [fill={rgb, 255:red, 0; green, 0; blue, 0 }  ][line width=0.08]  [draw opacity=0] (8.93,-4.29) -- (0,0) -- (8.93,4.29) -- cycle    ;
\draw    (60,140) .. controls (63.29,117.65) and (75.4,97.41) .. (97.22,90.76) ;
\draw [shift={(100,90)}, rotate = 166.35] [fill={rgb, 255:red, 0; green, 0; blue, 0 }  ][line width=0.08]  [draw opacity=0] (8.93,-4.29) -- (0,0) -- (8.93,4.29) -- cycle    ;

\draw (179,112.4) node [anchor=north west][inner sep=0.75pt]    {$H_{\mathrm{s-d}}$};
\draw (25,142) node [anchor=north west][inner sep=0.75pt]   [align=left] {(i)};
\draw (91,32) node [anchor=north west][inner sep=0.75pt]   [align=left] {(ii)};
\draw (88,172) node [anchor=north west][inner sep=0.75pt]   [align=left] {(iii)};

\end{tikzpicture}
    \caption{Schematic illustration of a steering protocol. (i) Initialize the detector (ancilla, red spin) in a given state, independent of the system’s running state; (ii) couple part of the system (blue spins) to a detector ancilla (red spin), employing a local Hamiltonian $H_{\mathrm{s-d}}$; (iii) trace out the detector.}
    \label{fig:steer}
\end{figure} 

The context of the present work is steering of quantum system towards many-body quantum states, specifically the ground states (GSs) of a quantum many-body Hamiltonian. Our ambition here is two-fold: first, to extend such passive steering protocols, employing \textit{local physical operations} (i.e., local superoperators a.k.a.  local quantum channels~\cite{preskill1998lecture}, consisting of local Kraus operators) beyond the Frustration-Free (FF) local Hamiltonian scenario, employing local Lindbladians. Second, for scenarios where  steering is not facilitated,   we  aim at  characterizing the distance one can get  from the target state. 

To address the first goal we revise the acceptable classification of local Hamiltonian models. Revising the present classification comprising (i) FF steerable (FFS) and (ii) non-FF non-steerable (NFFNS) classes, we introduce additional classes: (iii) Non-FF Steadily Steerable (NFFSS) and (iv) Non-FF Jittery Steerable (NFFJS), cf. Fig.~\ref{fig:classfication}. We establish certain criteria to determine whether ground states of class (iii) and (iv)  may or may not be steered toward employing passive steering. Our analysis extends the notion of the steerability of FF Hamiltonians to certain types of classically frustrated Hamiltonians.  Once steering (e.g., towards the GS) is successful, expectation values and correlation functions employing the target state as a resource are accessible. Note that as far as the FFS and NFFSS classes are concerned, our dynamical protocols steer towards a given state (a stationary state within the GS manifold). And as far as the NFFJS class is concerned, the Hamiltonian supports a degenerate GS.  Under the steering operations, any given state within the GS manifold evolves dynamically into other states within the space. As far as a second goal is concerned, for NFFNS Hamiltonians we find an upper bound on the asymptotic achievable fidelity (which, clearly, is less than 1). Alternatively, we find a lower bound on the temperature to which the system can be cooled.

\begin{figure}[htbp]
    \centering
    \begin{tikzpicture}[x=0.75pt, y=0.75pt, yscale=-1, xscale=1]

\definecolor{colorA}{rgb}{0.5, 0.6, 1.0}
\definecolor{colorA1}{rgb}{0.5, 0.5, 1.0}
\definecolor{colorB}{rgb}{0.3, 1.0, 0.3}
\definecolor{colorC}{rgb}{1.0, 0.3, 0.3}
\definecolor{colorD}{rgb}{0.7, 0.7, 0.7} 

\fill[colorA] (40,34) .. controls (40,20.75) and (50.75,10) .. (64,10) -- (136,10) .. controls (149.25,10) and (160,20.75) .. (160,34) -- (160,60) -- (40,60) -- cycle; 
\fill[colorA1] (80,36) .. controls (80,23) and (83,20) .. (96,20) -- (104,20) .. controls (117,20) and (120,23) .. (120,36) -- (120,34) .. controls (120,47) and (117,50) .. (104,50) -- (96,50) .. controls (83,50) and (80,47) .. (80,34) -- cycle;
\fill[colorD] (40,60) -- (100,60) -- (100,160) -- (64,160) .. controls (50.75,160) and (40,149.25) .. (40,136) -- cycle; 
\fill[colorB] (100,60) -- (160,60) -- (160,110) -- (100,110) -- cycle; 
\fill[colorC] (100,110) -- (160,110) -- (160,136) .. controls (160,149.25) and (149.25,160) .. (136,160) -- (100,160) -- cycle; 

\draw[line width=0.8pt] (40,34) .. controls (40,20.75) and (50.75,10) .. (64,10) -- (136,10) .. controls (149.25,10) and (160,20.75) .. (160,34) -- (160,136) .. controls (160,149.25) and (149.25,160) .. (136,160) -- (64,160) .. controls (50.75,160) and (40,149.25) .. (40,136) -- cycle;
\draw[line width=1pt] (80,36) .. controls (80,23) and (83,20) .. (96,20) -- (104,20) .. controls (117,20) and (120,23) .. (120,36) -- (120,34) .. controls (120,47) and (117,50) .. (104,50) -- (96,50) .. controls (83,50) and (80,47) .. (80,34) -- cycle;

\draw[line width=0.75pt] (40,60) -- (160,60);
\draw[line width=0.75pt] (100,60) -- (100,160);
\draw[line width=0.75pt] (100,110) -- (160,110);

\draw (100,35) node [anchor=center, font=\large]    {$A$};
\draw (140,35) node [anchor=center, font=\large]    {$B$};
\draw (130,85) node [anchor=center, font=\large]    {$C$};
\draw (130,135) node [anchor=center, font=\large]    {$D$};
\draw (70,110) node [anchor=center, font=\large]    {$\overline{C+D}$};

\draw (81,-15) node [anchor=north west,font=\large]   [align=left] {FFS};
\draw (171,25) node [anchor=north west,font=\large]   [align=left] {NFFSS};
\draw (171,75) node [anchor=north west,font=\large]   [align=left] {NFFJS};
\draw (171,125) node [anchor=north west,font=\large]   [align=left] {NFFNS};

\end{tikzpicture}
    \caption{Steerability of ground states of the underlying Hamiltonians (which are  made up of local operators). $A$: (Frustration Free steerable (FFS)) Hamiltonians: the Hamiltonian is  Frustration Free. $B$: (Non-Frustration-Free Steadily Steerable (NFFSS)) the Hamiltonian is not Frustration Free, but there exists a Frustration Free Parent Hamiltonian. $C$: (Non-FF Jittery Steerable (NFFJS)) and $\overline{C+D}$ (classification is unknown) satisfy the necessary conditions for non-FF steerability. $C$ are further known to be steerable. The dynamics involves steering towards a protected subspace. With continued steering,  jumps are induced among ground states that span this subspace. $D$: Hamiltonians which do not satisfy the necessary conditions, hence are Non-FF Non-Steerable (NFFNS). Hamiltonians belonging to $\overline{C+D}$ may turn out to be either  NFFJS or NFFNS. 
} 
    \label{fig:classfication}
\end{figure}
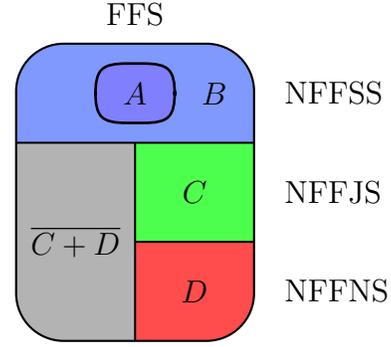

Note that this is not ”steering” in the historical sense of quantum mechanics~\cite{uola2020quantum}. There, ”steering” refers to  non-locality  exhibited by certain multipartite quantum states: a local operation on one part of the quantum state can change the conditioned reduced quantum state on other parts. Such a change of the reduced quantum state by a distant operation had been referred to as  ”quantum steering”.  Accepting the contextual meaning of "quantum steering" in reference to state engineering, we note that there are several important paradigms for the preparation of GSs employing such protocols, which are robust against decoherence~\cite{PRX15,arute2019quantum}: dissipative quantum state preparation~\cite{verstraete2009quantum}, steering of quantum states through generalized measurements~\cite{Measurement-inducedsteeringPhysRevResearch.2.033347}, or variational imaginary time evolution~\cite{mcardle2019variational}. These approaches have been studied experimentally as well~\cite{barreiro2011open,lin2013dissipative,barreiro2010experimental} and extend to further directions such as quantum cooling~\cite{Langbehn2023,Kishony2023,sugiura2025quantum} and quantum error correction~\cite{PhysRevResearch.6.013244}.

This work is structured as follows. In Sec.~\ref{sec:Definition of stabilizability} we provide the proper definition of steerability for a local Hamiltonian, classifying Hamiltonians into FFS, NFFSS, NFFJS, and NFFNS types. In Secs.~\ref{sec:Identifying an FFS Hamiltonian} and ~\ref{sec:Non-Frustration-Free Steadily Steerable Hamiltonians} we discuss the necessary and sufficient conditions, supplementing this discussion with examples, for FFS and NFFSS Hamiltonians. In Sec.~\ref{sec:Necessary conditions for NFFJS Hamiltonian} we explore the necessary conditions for NFFJS Hamiltonians. In doing so, we introduce the notion of conserved quantities in the target subspace and bipartite indistinguishability, where the negation of the latter enables local transitions between ground states. In Sec.~\ref{sec:NFFJS Examples} we examine the steerability of geometrically frustrated Hamiltonians composed of local Hamiltonians made up of commuting Pauli matrices—a standard template for geometric frustration. Such Hamiltonians are fully integrable but not FF, and are proven to belong to the NFFJS class. Our steering protocol employs the local Hamiltonians to construct local superoperators, and does not require the knowledge or otherwise the construction of the explicit GS(s). In Sec.~\ref{sec:NFFNS Hamiltonian} we focus on Hamiltonians that do not satisfy the necessary conditions for steerability (cf. Sec.~\ref{sec:Necessary conditions for NFFJS Hamiltonian})—namely, Hamiltonians that belong to the NFFNS class: we characterize the distance from the ground state that can be approached through passive steering, and derive a general lower bound on achievable effective temperature. In Sec.~\ref{sec:Implementations of NFFNS model}, we implement our analysis of non-steerable states to several examples of many-body Hamiltonians: the antiferromagnetic Heisenberg model, the Dirac/Majorana fermionic SYK model, and the Fermi-Hubbard model.

\begin{widetext}
\centering

\begin{table}[ht]
    \centering
    \makebox[\textwidth]{
    \begin{ruledtabular}
    \begin{tabular}{ccc} 
         Term used &  Abbreviation & Reference in the text\\ \hline 
         Ground State & GS     & Sec.~\ref{sec:Introduction}\\ 
         Steerable Subspace & / & Def.~\ref{def:stabilizability} in Sec.~\ref{sec:Definition of stabilizability}\\ 
 Frustration-Free&FF & Sec.~\ref{sec:Introduction}, Eq.~\eqref{eq:ff def} in Sec.~\ref{sec:Definition of stabilizability}\\ 
 Parent Hamiltonian & PH & Sec.~\ref{sec:Definition of stabilizability}\\ 
 Frustration-Free Steerable& FFS &Sec.~\ref{sec:Introduction},~\ref{sec:Identifying an FFS Hamiltonian}\\ 
 Non-Frustration-Free Steadily-Steerable & NFFSS & Sec.~\ref{sec:Introduction},~\ref{sec:Non-Frustration-Free Steadily Steerable Hamiltonians}\\  
 Non-Frustration-Free Jittery Steerable & NFFJS & Sec.~\ref{sec:Introduction},~\ref{sec:Necessary conditions for NFFJS Hamiltonian},~\ref{sec:NFFJS Examples}\\ 
 Non-Frustration-Free Non-Steerable & NFFNS & Sec.~\ref{sec:Introduction},~\ref{sec:NFFNS Hamiltonian}\\  
 Subspace Conserved Quantity & SCQ & Def.~\ref{def:SCQ} in Sec.~\ref{sec:Identifying an FFS Hamiltonian}\\ 
 Parent Hamiltonian Frustration-Free & PHFF & Def.~\ref{def:pHFF} in Sec.~\ref{sec:Necessary and Sufficient conditions for Non-Frustration-Free Steadily Steerable (NFFSS) Hamiltonian}\\ 
 Reduced Density Matrix & RDM & Sec.~\ref{sec:Identifying Non-Frustration-Free Steadily Steerable Hamiltonians}\\ 
 Bipartite Indistinguishability & / & Sec.~\ref{sec:Necessary conditions for NFFJS Hamiltonian} \\ 
 Fidelity of quantum states & / & Eq.~\eqref{eq:fid def} in Sec.~\ref{sec: Approximate steering and surrogate states}\\ 
 Surrogate State & / & Sec.~\ref{sec: Approximate steering and surrogate states}\\ 
 $p$-approximate Steerable Subspace & / & Def.~\ref{def:approx_stabilizability} in Sec.~\ref{sec: Approximate steering and surrogate states}\\ 
  presumed Surrogate State & / & Sec.~\ref{sec: Approximate steering and surrogate states}\\ 
 Distance to GS (fidelity, energy, temperature) & / & Sec.~\ref{sec: Approximate steering and surrogate states}
 \end{tabular}
 \end{ruledtabular}
 }
    \caption{List of terms and abbreviations.}
    \label{tab:abbreviations}
\end{table}

\end{widetext}


\section{Definition of steerability}\label{sec:Definition of stabilizability}
Before proceeding, we will briefly describe our steering protocol in formal terms. The first issue is how to define steerability.  By analogy with dissipative quantum state engineering~\cite{verstraete2009quantum}, we note that any (passive) steering protocol towards a (multi-dimensional) degenerate subspace comprises two stages: (i) steering an arbitrary initial state towards a desired, multidimensional target {\it space} $\mathcal{H}_{\mathrm{target}}$; (ii) keeping the running state within the target space. Formally speaking, we have the following definition.
\begin{definition}
    \label{def:stabilizability}
    A linear subspace $\mathcal{H}_{\mathrm{target}}\subseteq \mathcal{H}$ is called \textit{steerable subspace}, if there exists an infinite sequence of steering operators, represented as \textit{local} non-unitary superoperators $\{\mathcal{P}_1,\mathcal{P}_2\cdots\}$, such that one of the two following cases is true. Let us also denote a later time state $\rho_n(|\psi\rangle \langle \psi|)= \mathcal{P}_n (\mathcal{P}_{n-1}(\cdots \mathcal{P}_1(|\psi\rangle\langle \psi|)))$.
\begin{enumerate}
  \item[(1a)] For every initial state $|\psi\rangle$ and a sufficiently large $N$, $\rho_N(|\psi\rangle\langle\psi|)$ is arbitrarily close to  a (mixed or pure) state $\sigma\in \mathcal{H}_{\mathrm{target}}$, and $\forall n>N$, $\mathcal{P}_n(\sigma)=\sigma$ is invariant. All possible $\sigma$ spans the steerable subspace $\mathcal{H}_{\mathrm{target}}$.  
  \item[(1b)] For every initial state $|\psi\rangle$ and sufficient large $N$, $\rho_N(|\psi\rangle\langle\psi|)$ is arbitrarily close to  a state $\sigma\in \mathcal{H}_{\mathrm{target}}$, and $\forall n>N$ either $\mathcal{P}_n(\sigma)=\sigma$ or $\mathcal{P}_n(\sigma)=\gamma\neq\sigma$, $\gamma\in \mathcal{H}_{\mathrm{target}}$.  In other words, any initial state is steered towards $\mathcal{H}_{\mathrm{target}}$, then, within $\mathcal{H}_{\mathrm{target}}$, does not converge to a fixed state.
\end{enumerate}
\end{definition}
Specifically, we will consider $\mathcal{H}_{\mathrm{target}}$ to be the GS manifold $\mathcal{H}_{\mathrm{GS}}$ of a system's Hamiltonian $H$. We note that $H$ itself is \textit{not} part of the dynamics of the system, and is employed here only for the sake of defining  $\mathcal{H}_{\mathrm{target}}$. The dynamics here is solely due to local steering superoperators, that is only due to the system-detector Hamiltonian $H_{\mathrm{s-d}}$. More specifically:
(i) for case (1a),  as far as the steadily steerable  (FFS/NFFSS) classes are concerned, once the initial state $|\psi\rangle$ has reached a state within the target space represented by the density matrix  $\rho_{\mathrm{target}}$, the system is then kept stationary;  $\rho_{\mathrm{target}}$ is thus a steady state of our protocol, cf. Fig.~\ref{fig:Stabilizing}a. (ii) for the case (1b), the NFFJS class, approaching a state $\rho_{\mathrm{target}}$ within the target space still entails a dynamical evolution: there exists a subset of states in  $\mathcal{H}_{\mathrm{target}}$, for each of them, there is at least one steering operator, for which this state is not stationary, and therefore, under the action of steering operators, will be mapped onto another state within the target manifold $\mathcal{H}_{\mathrm{target}}$  (hence a "jumpy" behavior), cf. Fig. ~\ref{fig:Stabilizing}b.

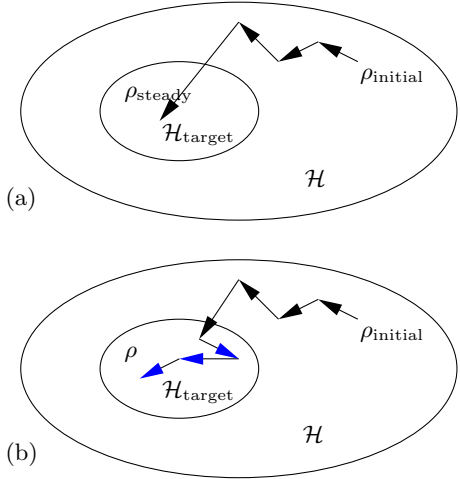
\begin{figure}[htbp]
    \centering
    \begin{tikzpicture}[x=0.75pt,y=0.75pt,yscale=-1,xscale=1]

\draw   (40,75) .. controls (40,44.62) and (89.25,20) .. (150,20) .. controls (210.75,20) and (260,44.62) .. (260,75) .. controls (260,105.38) and (210.75,130) .. (150,130) .. controls (89.25,130) and (40,105.38) .. (40,75) -- cycle ;
\draw   (80,75) .. controls (80,61.19) and (97.91,50) .. (120,50) .. controls (142.09,50) and (160,61.19) .. (160,75) .. controls (160,88.81) and (142.09,100) .. (120,100) .. controls (97.91,100) and (80,88.81) .. (80,75) -- cycle ;
\draw    (210,50) -- (191.79,40.89) ;
\draw [shift={(190,40)}, rotate = 26.57] [fill={rgb, 255:red, 0; green, 0; blue, 0 }  ][line width=0.08]  [draw opacity=0] (12,-3) -- (0,0) -- (12,3) -- cycle    ;
\draw    (150,30) -- (111.25,78.44) ;
\draw [shift={(110,80)}, rotate = 308.66] [fill={rgb, 255:red, 0; green, 0; blue, 0 }  ][line width=0.08]  [draw opacity=0] (12,-3) -- (0,0) -- (12,3) -- cycle    ;
\draw    (170,50) -- (151.41,31.41) ;
\draw [shift={(150,30)}, rotate = 45] [fill={rgb, 255:red, 0; green, 0; blue, 0 }  ][line width=0.08]  [draw opacity=0] (12,-3) -- (0,0) -- (12,3) -- cycle    ;
\draw    (190,40) -- (171.79,49.11) ;
\draw [shift={(170,50)}, rotate = 333.43] [fill={rgb, 255:red, 0; green, 0; blue, 0 }  ][line width=0.08]  [draw opacity=0] (12,-3) -- (0,0) -- (12,3) -- cycle    ;
\draw   (40,205) .. controls (40,174.62) and (89.25,150) .. (150,150) .. controls (210.75,150) and (260,174.62) .. (260,205) .. controls (260,235.38) and (210.75,260) .. (150,260) .. controls (89.25,260) and (40,235.38) .. (40,205) -- cycle ;
\draw   (80,205) .. controls (80,191.19) and (97.91,180) .. (120,180) .. controls (142.09,180) and (160,191.19) .. (160,205) .. controls (160,218.81) and (142.09,230) .. (120,230) .. controls (97.91,230) and (80,218.81) .. (80,205) -- cycle ;
\draw    (210,180) -- (191.79,170.89) ;
\draw [shift={(190,170)}, rotate = 26.57] [fill={rgb, 255:red, 0; green, 0; blue, 0 }  ][line width=0.08]  [draw opacity=0] (12,-3) -- (0,0) -- (12,3) -- cycle    ;
\draw    (150,160) -- (131.11,188.34) ;
\draw [shift={(130,190)}, rotate = 303.69] [fill={rgb, 255:red, 0; green, 0; blue, 0 }  ][line width=0.08]  [draw opacity=0] (12,-3) -- (0,0) -- (12,3) -- cycle    ;
\draw    (170,180) -- (151.41,161.41) ;
\draw [shift={(150,160)}, rotate = 45] [fill={rgb, 255:red, 0; green, 0; blue, 0 }  ][line width=0.08]  [draw opacity=0] (12,-3) -- (0,0) -- (12,3) -- cycle    ;
\draw    (150,200) -- (122,200) ;
\draw [shift={(120,200)}, rotate = 360] [fill={rgb, 255:red, 0; green, 0; blue, 255 }  ][line width=0.08]  [draw opacity=0] (12,-3) -- (0,0) -- (12,3) -- cycle    ;
\draw    (120,200) -- (101.79,209.11) ;
\draw [shift={(100,210)}, rotate = 333.43] [fill={rgb, 255:red, 0; green, 0; blue, 255 }  ][line width=0.08]  [draw opacity=0] (12,-3) -- (0,0) -- (12,3) -- cycle    ;
\draw    (130,190) -- (148.21,199.11) ;
\draw [shift={(150,200)}, rotate = 206.57] [fill={rgb, 255:red, 0; green, 0; blue, 255 }  ][line width=0.08]  [draw opacity=0] (12,-3) -- (0,0) -- (12,3) -- cycle    ;
\draw    (190,170) -- (171.79,179.11) ;
\draw [shift={(170,180)}, rotate = 333.43] [fill={rgb, 255:red, 0; green, 0; blue, 0 }  ][line width=0.08]  [draw opacity=0] (12,-3) -- (0,0) -- (12,3) -- cycle    ;

\draw (182,103.4) node [anchor=north west][inner sep=0.75pt]    {$\mathcal{H}$};
\draw (110,80) node [anchor=north west][inner sep=0.75pt]    {$\mathcal{H}_\mathrm{target}$};
\draw (210,52.4) node [anchor=north west][inner sep=0.75pt]    {$\rho_\mathrm{initial} $};
\draw (181,232.4) node [anchor=north west][inner sep=0.75pt]    {$\mathcal{H}$};
\draw (110,210) node [anchor=north west][inner sep=0.75pt]    {$\mathcal{H}_\mathrm{target}$};
\draw (31,112) node [anchor=north west][inner sep=0.75pt]   [align=left] {(a)};
\draw (31,241) node [anchor=north west][inner sep=0.75pt]   [align=left] {(b)};
\draw (91,61.4) node [anchor=north west][inner sep=0.75pt]    {$\rho _\mathrm{steady}$};
\draw (91,192.4) node [anchor=north west][inner sep=0.75pt]    {$\rho $};
\draw (210,182.4) node [anchor=north west][inner sep=0.75pt]    {$\rho_\mathrm{initial} $};

\end{tikzpicture}
    \caption{ Steering trajectories of two situations in Def.~\ref{def:stabilizability}. (a) Steering towards the GS manifold of an FFS/NFFSS Hamiltonian. Once it is mapped onto $\mathcal{H}_{\mathrm{target}}$, the system's states remain stationary. (b) Steering employing towards the GS manifold of an NFFJS Hamiltonian. Even within $\mathcal{H}_{\mathrm{target}}$, the system's state keeps "jumping" around.}
    \label{fig:Stabilizing}
\end{figure}

Note that for the latter case there exists at least one GS, that is invariant for each  $\mathcal{P}_n$. To show this we first note that each superoperator has at least one fixed point~\cite{Watrous_2018}.  We then argue that if all  the fixed points are not in $\mathcal{H}_{\mathrm{target}}$, states in $\mathcal{H}_{\mathrm{target}}$ may leak towards those exterior fixed points, rendering the probability to remain within $\mathcal{H}_{\mathrm{target}}$ less than 1, in contradiction to the definition of steerable subspace. 

As suggested by the names of our steerability classes, the notion of a \textit{local Frustration-Free} Hamiltonian is of central importance, and relies on the existence of a local \textit{Parent Hamiltonian}~\cite{parentHamiltonian}. Let us elaborate on these notions. First, a Hamiltonian $H$ is called a \textit{local Hamiltonian} if it can be expressed as a sum over a set of local operators, i.e.,  $H = \sum_i H_i$,  where each $H_i$ operates non-trivially only on a local fragment of the entire system~\cite{kempe2006complexity}. Second, given a local Hamiltonian $H = \sum_i H_i$, it is referred to as \textit{Frustration-Free} if \textit{all} local operators $H_i$ can be simultaneously minimized in the GS manifold of $H$. In other words, for any GS of $H$, denoted as $|\psi_{\mathrm{GS}}\rangle$, the global ground-state energy is the sum of the local ground-state energies, i.e.,
\begin{equation}
\label{eq:ff def}
\langle \psi_{\mathrm{GS}} | H | \psi_{\mathrm{GS}} \rangle = \inf_{|\phi\rangle} \langle \phi | H | \phi \rangle = \sum_i \inf_{|\phi\rangle} \langle \phi | H_i | \phi \rangle.
\end{equation}
Thus, $|\psi_{\mathrm{GS}}\rangle$ is the common GS of every local term of the Hamiltonian $H_i \otimes_{k} \mathbbm{1}_k$, such that this operator is proportional to the identity outside the finite range of $H_i$. Thirdly, given a set of target quantum states $\{|\psi_i\rangle\}$, we can define the corresponding parent Hamiltonian $H_{\mathrm{PH}}=\sum_i H_{\mathrm{PH},i}$ as the Hermitian operator, such that the GS manifold of $H_{\mathrm{PH}}$ is $\operatorname{span}\{|\psi_i\rangle\}$. 

Note that the notion of locality in the present analysis is crucial, both for local Hamiltonians and local steering operators. In principle, one can steer towards an arbitrary target state employing non-local ("global") steering operators. Physically this implies a detector (or detectors) that are coupled to an arbitrary non-local set of the system's degrees of freedom. In order to address a physically realizable protocol we confine ourselves here to local Hamiltonians.  Hereafter, our notion of \textit{locality} is associated with either one of the two: (i) an operator that acts on a subset of degrees of freedom, where this subset is defined by the (system size-independent) number of degrees of freedom coupled to a single steering degree of freedom (a single detector); or (ii) an operator acting on less than half of the system’s degrees of freedom. Without further clarification, we always refer to the first case. Also note that being a FF model does \textit{not} imply commutativity of the local terms, i.e., it is possible that $[H_i, H_j] \neq 0$.

In the following, the steerability of FFS, NFFSS and NFFJS classes will be discussed more formally.  

\section{Frustration-Free Steerable (FFS) Hamiltonians}\label{sec:Identifying an FFS Hamiltonian}

\subsection{Steerability of FF Hamiltonians}\label{sec:Steerability of FF Hamiltonian}

We next discuss and demonstrate the steerability for FFS Hamiltonians (class $A$ of Fig.~\ref{fig:classfication})  In the following, we consider a system whose dynamics is governed by a Hamiltonian $H$. An FFS Hamiltonian is defined when $H = \sum_i H_i$ is both local and frustration-free. As discussed in Sec.~\ref{sec:Definition of stabilizability}, it implies that the GS of $H$, denoted as $|\psi_{\text{GS}}\rangle$, is the common ground state of each local term $H_i$ of the Hamiltonian.

If a Hamiltonian $H$ is FFS, then steering towards its  GS manifold can be achieved in two ways~\cite{ticozzi2012stabilizing,verstraete2009quantum}. The first method involves the continuum time limit of generalized weak measurements, equivalent to coupling to an engineered dissipative bath. The system's dynamics is then effectively governed by a set of Lindblad operators $\{L_i\}$, obeying Lindblad dynamics. We can choose a suitable steering protocol such that the right-invariant space of $L_i$ equals the local GS subspace of $H_i$. In the long-time limit, the system converges to the common invariant space of all $\{L_i\}$, which corresponds to the common GS manifold of $\{H_i\}$, and thus to the GS manifold of $H$.  Alternatively, one may design a set of local steering operators $\{\mathcal{P}_i\}$ such that the invariant subspace of $\mathcal{P}_i$ equals the local GS manifold of $H_i$. After sufficiently many discrete applications of these steering operators, the only remaining invariant states are the GSs of $H$.

\subsection{Examples of FFS Hamiltonians}\label{sec:Examples of FFS Hamiltonian}

Let us demonstrate the idea of FFS Hamiltonian with two examples. 

(i) \textbf{\textit{  1d ferromagnetic Ising chain}}. Its Hamiltonian can be represented as $H=\sum_i H_i=-\sum_i J\sigma_i^z\sigma_{i+1}^z$($J>0$). Its two-fold GS manifold is computed to be $\{|+_1+_2+_3\cdots +_N\rangle, |-_1-_2-_3\cdots -_N\rangle\}$, where $\sigma^z_i|\pm_i\rangle=\pm|\pm_i\rangle$. Direct evaluation reveals that the GSs are also GSs of each local term, i.e. $-J\sigma^z_i\sigma^z_{i+1}$. Hence $H$ is an FFS Hamiltonian. 

(ii) \textbf{\textit{Affleck-Kennedy-Lieb-Tasaki(AKLT)  model}}~\cite{PhysRevLett.59.799AKLT}. Its Hamiltonian is defined on a one-dimensional spin-1 chain, as $H=\sum_i \boldsymbol{S}_i\cdot\boldsymbol{S}_{i+1}+ 1/3(\boldsymbol{S}_i\cdot\boldsymbol{S}_{i+1})^2$, where $\boldsymbol{S}_i$ are the spin operators for $i$th spin. The GS of each local term $\boldsymbol{S}_i\cdot\boldsymbol{S}_{i+1}+ 1/3(\boldsymbol{S}_i\cdot\boldsymbol{S}_{i+1})^2$ composed of spin-0,1 subspace for the composition of spin $i$ and spin $i+1$. The valence-bond construction of the Hamiltonian's GSs explicitly shows that the GSs are constrained in the spin-0,1 subspace, and hence $H$ is FFS Hamiltonian. It has been explicitly shown that the AKLT model is steerable~\cite{Measurement-inducedsteeringPhysRevResearch.2.033347}.

\section{Non-Frustration-Free Steadily Steerable Hamiltonians}\label{sec:Non-Frustration-Free Steadily Steerable Hamiltonians}

\subsection{Necessary and Sufficient conditions for Non-Frustration-Free Steadily Steerable (NFFSS) Hamiltonians}\label{sec:Necessary and Sufficient conditions for Non-Frustration-Free Steadily Steerable (NFFSS) Hamiltonian}

We next identify the necessary and sufficient conditions for NFFSS Hamiltonians. The latter corresponds to class $B$, cf. Fig.~\ref{fig:classfication}. NFFSS Hamiltonians are  Hamiltonians whose GS is steady steerable (i.e.,it can be designed to be a stationary state of a steering protocol), cf. Def.~\ref{def:stabilizability}; yet, at the same time are not of the FFS class, as they cannot be written as a sum of local Hamiltonians that share the same GS (one can find, though, a parent Hamiltonian that is FF, cf. PHFF, Def.~\ref{def:pHFF} below). Consider a system whose dynamics is governed by a Hamiltonian, $H$. We then introduce a local or global projection operator, $\Pi$, whose image defines a target subspace (onto which we project states), e.g. ground state(s) subspace of the Hamiltonian, $\mathcal{H}_{\mathrm{GS}}$. One may then construct a class of \textit{local} operators $\{\mathcal{A}_i\}$, which is denoted a {\it subspace conserved quantity} (SCQ).

\begin{definition}[Subspace Conserved Quantity (SCQ)]
\label{def:SCQ}
    Assuming that $[\Pi,H] = 0$, a non-unital local Hermitian operator $\mathcal{A}$ is a subspace conserved quantity in the subspace defined by $\Pi$, iff

\begin{subequations}\label{eq:SCQ}
\begin{align}
    &[\Pi,\mathcal{A}] = 0, \label{eq:SCQcond1}\\
    \Pi&[H,\mathcal{A}]\Pi = 0. \label{eq:SCQcond2}
\end{align}
\end{subequations}

\end{definition}

The trivial case of  $\Pi = \mathbbm{1}$ corresponds to the common definition of a conserved quantity $\mathcal{A}$.  The term \textit{subspace conserved quantity} refers to the fact that if an initial state is an eigenstate of $\Pi$ with eigenvalue 1 (in the image of $\Pi$), the expectation value $\langle \mathcal{A} \rangle$ is conserved under the unitary evolution with $H$, i.e. $e^{-iHt}$. We refer to $\mathcal{A}$ as trivial SCQ if  $\Pi \mathcal{A}\Pi\propto \Pi$. This means that  among the image of $\Pi$, i.e. the set of orthogonal basis $|\psi_i\rangle$ such that $\Pi|\psi_i\rangle = |\psi_i\rangle$, $\mathcal{A}$ has the same expectation value , i.e. $\langle \psi_{i}| \mathcal{A}|\psi_j\rangle =a\delta_{ij}$. For a trivial $\mathcal{A}$, one may simply replace $\mathcal{A}$ with $\Pi_\mathcal{A}$. The latter represents the projection operator onto the eigenspace of $\mathcal{A}$ containing the subspace defined by $\Pi$, i.e. $\Pi\times \Pi_\mathcal{A}\times \Pi=\Pi$.   Otherwise an SCQ $\mathcal{A}$ is denoted non-trivial. 

According to Prop.~\ref{prop:nontrivial SCQ} in Appendix~\ref{sec:Necessary conditions for Non-Frustration-Free Jittery Steerable (NFFJS) class}, the presence of a non-trivial SCQ always implies the presence of several trivial SCQs. We demonstrate this idea for an operator comprising two distinct eigenvalues (each of these may be degenerate). Generalization to more eigenvalues is straightforward.  For this purpose, consider a non-trivial SCQ operator decomposed as $\mathcal{A}=\sum_i a |a^i\rangle\langle a^i|+\sum_i b |b^i\rangle\langle b^i|=a\Pi_a + b\Pi_b$. Here $a, b$ are eigenvalues of $\mathcal{A}$, and $\{|a^i\rangle\}_i,\{|b^i\rangle\}_i$ and $\Pi_a,\Pi_b$ are the corresponding eigenvectors and projection operators. (The fact that O has more than one eigenvalues in the projection subspace defined by $\Pi$ implies automatically that we are dealing with a non-trivial SCQ). Then it is shown that $\Pi_a, \Pi_b$ are SCQs with respect to the subspace defined by $\Pi$. Furthermore, $\Pi_a$ is trivial with respect to either the subspace defined by $\Pi$ or the subspace defined by $\Pi \Pi_a \Pi$. In the latter case, the subspace projected by $\Pi \Pi_a \Pi$ is a subspace of the subspace defined by $\Pi$. Details are shown in the proof of Prop.~\ref{prop:nontrivial SCQ}.  Below we examine the relationship between $\{\mathcal{A}_i\}$, the set of SCQs, and the different steerability classes. 

To fully characterize the necessary and sufficient conditions for NFFSS Hamiltonians (cf. Fig.~\ref{fig:classfication} and Theorem~\ref{thm:FFS} below), we introduce the notion of an auxiliary local \textit{Parent Hamiltonian}, $H_{\mathrm{PH}} = \sum_i H_{\operatorname{PH},i}$, as presented in Sec.~\ref{sec:Definition of stabilizability}. For a given Hamiltonian $H$ and its GS manifold $\mathcal{H}_{\mathrm{GS}}$, we consider the parent Hamiltonian corresponding to $\mathcal{H}_{\mathrm{GS}}$.
While the original Hamiltonian $H$ does not need to be FF, achieving steerability requires the existence of an FF parent Hamiltonian. This requirement is articulated in Thm.~\ref{thm:FFS} below (see also Ref.~\cite{ticozzi2012stabilizing}). Therefore, for the convenience of our discussion, we introduce the notion of a Parent-Hamiltonian-Frustration-Free (PHFF).

\begin{definition}[Parent Hamiltonian Frustration-Free(PHFF)]
\label{def:pHFF}
    A local Hamiltonian is called Parent-Hamiltonian-Frustration-Free(PHFF), if the following conditions are satisfied:
    \begin{itemize}
        \item  There exists a local Parent Hamiltonian $H_{\mathrm{PH}}$, such that GS manifold of $H_{\mathrm{PH}}$ exactly \textit{equals} to the GS manifold of $H$.
        \item $H_{\mathrm{PH}}$ is Frustration-Free.
    \end{itemize}
\end{definition}
Based on this notion, let us introduce the necessary and sufficient conditions for FFS Hamiltonian in the following Theorem.
\begin{theorem}
\label{thm:FFS}
$H=\sum_i H_i$ is steadily steerable, i.e. $H$ is an FFS or a NFFSS Hamiltonian, if and only if there exists a local Hamiltonian $H_{\mathrm{PH}}$, such that $H$ is PHFF. Note $H_{\mathrm{PH}}$ can be constructed from the set of trivial SCQs in GS subspace.
\end{theorem}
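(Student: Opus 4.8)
\section*{Proof proposal}

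The plan is to prove the two implications separately, dispatching the easier sufficiency direction first and reserving most of the effort for necessity. For sufficiency (\(H\) is PHFF \(\Rightarrow\) \(H\) steadily steerable) I would invoke the PHFF data directly. By Def.~\ref{def:pHFF} there is a local frustration-free \(H_{\mathrm{PH}}=\sum_i H_{\mathrm{PH},i}\) whose GS manifold coincides with \(\mathcal{H}_{\mathrm{GS}}\) of \(H\). Since \(H_{\mathrm{PH}}\) is frustration-free, Eq.~\eqref{eq:ff def} guarantees that each common ground state minimizes every local term simultaneously. I then apply the construction of Sec.~\ref{sec:Steerability of FF Hamiltonian}: for each \(i\) design a local superoperator \(\mathcal{P}_i\) whose invariant subspace is exactly the local GS subspace of \(H_{\mathrm{PH},i}\). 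Cycling through the \(\{\mathcal{P}_i\}\), any initial state is driven into the intersection of the local GS subspaces, which by frustration-freeness equals \(\mathcal{H}_{\mathrm{GS}}\), while every state already in \(\mathcal{H}_{\mathrm{GS}}\) is a common fixed point and stays put. This is precisely case (1a) of Def.~\ref{def:stabilizability}, so \(H\) is steadily steerable.

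For necessity (\(H\) steadily steerable \(\Rightarrow\) \(H\) is PHFF) I would start from a steady, i.e.\ case (1a), steering sequence of local superoperators \(\{\mathcal{P}_n\}\) and manufacture from it a local frustration-free parent Hamiltonian. The key object is, for each distinct local superoperator \(\mathcal{P}\) acting on a region \(R\), its local invariant subspace \(V\subseteq\mathcal{H}_R\) (the largest subspace on which \(\mathcal{P}\) acts trivially), together with the local projection \(\Pi^{V}\) onto \(V\), extended by the identity outside \(R\). Physically these projections are furnished by the trivial SCQs of Def.~\ref{def:SCQ}: a trivial SCQ \(\mathcal{A}\) with \(\Pi\mathcal{A}\Pi\propto\Pi\) yields exactly such a projection \(\Pi_{\mathcal{A}}\) obeying \(\Pi\,\Pi_{\mathcal{A}}\,\Pi=\Pi\), whose image therefore contains \(\mathcal{H}_{\mathrm{GS}}\). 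I then set \(H_{\mathrm{PH}}=\sum(\mathbbm{1}-\Pi^{V})\), a sum of local positive semidefinite terms, so that its zero-energy ground subspace is the common invariant subspace \(\bigcap V\). The proof closes by showing \(\bigcap V=\mathcal{H}_{\mathrm{GS}}\): the inclusion \(\mathcal{H}_{\mathrm{GS}}\subseteq\bigcap V\) holds because every ground state is a fixed point of every \(\mathcal{P}_n\) by case (1a); the reverse inclusion uses the fixed-point argument already sketched after Def.~\ref{def:stabilizability}, namely that any common fixed point lying outside \(\mathcal{H}_{\mathrm{GS}}\) would, taken as an initial state, generate a trajectory frozen outside the target manifold, contradicting convergence. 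Hence \(H_{\mathrm{PH}}\) is local, frustration-free, and shares the GS manifold of \(H\), which is precisely the PHFF property.

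The main obstacle lies entirely in necessity, and it is twofold. First, a general local channel's fixed-point set is an operator algebra rather than automatically a subspace of pure states, so I must argue that the steady-steering requirement forces the relevant invariant structure to be a genuine subspace \(V\) captured by a single local projection; this is where the reduction of non-trivial SCQs to trivial ones, via the eigenspace decomposition \(\mathcal{A}=a\Pi_a+b\Pi_b\) discussed before Thm.~\ref{thm:FFS}, does the essential work, allowing \(H_{\mathrm{PH}}\) to be assembled from trivial SCQs alone. Second, I must rule out that \(\bigcap V\) is strictly larger than \(\mathcal{H}_{\mathrm{GS}}\): if the local invariant subspaces intersected in a larger space, steering could not single out \(\mathcal{H}_{\mathrm{GS}}\) as its unique attracting set, so the convergence clause of case (1a) is exactly what excludes such spurious common fixed points and simultaneously certifies frustration-freeness (the local minimizers coincide globally). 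Establishing this equality rigorously, i.e.\ that locality together with steady convergence is equivalent to a frustration-free intersection property, is the crux of the argument.
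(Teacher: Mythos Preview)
Your proposal is correct and follows essentially the same route as the paper's proof. The paper's argument for necessity also extracts, from each local steering channel \(\mathcal{P}_i\) on support \(S_i\), the local projection \(\Pi_i\) onto its invariant subspace, sets \(H_{\mathrm{PH}}=\sum_i(\mathbbm{1}-\Pi_i)\), and then shows that the joint fixed-point set of the \(\{\mathcal{P}_i\}\) coincides with the zero-energy subspace of \(H_{\mathrm{PH}}\), which by case~(1a) equals \(\mathcal{H}_{\mathrm{GS}}\). The one concrete mechanism the paper supplies that your sketch leaves implicit is the Schmidt decomposition of an invariant pure state \(|\psi\rangle=\sum_j c_j|\phi_{S_i,j}\rangle\otimes|\varphi_{\bar S_i,j}\rangle\): orthogonality of the \(|\varphi_{\bar S_i,j}\rangle\) forces \(\mathcal{P}_i(|\phi_{S_i,j}\rangle\langle\phi_{S_i,k}|)=|\phi_{S_i,j}\rangle\langle\phi_{S_i,k}|\) for all \(j,k\), which is exactly what promotes the channel's fixed-point algebra to a full matrix algebra over a subspace and justifies the single projector \(\Pi_i\)---this resolves the first obstacle you flagged.
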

In the following, we refer to a Hamiltonian $H$ as Frustration-Free if and only if it is PHFF. If a Hamiltonian $H$ is PHFF, then its GS manifold can also be steered in two ways, as discussed in Sec.~\ref{sec:Steerability of FF Hamiltonian}.

\subsection{Identifying Non-Frustration-Free Steadily Steerable Hamiltonians}\label{sec:Identifying Non-Frustration-Free Steadily Steerable Hamiltonians}

{\it How can one determine whether such a  FF parent Hamiltonian exists?} Imagine that we are given a pure target state or a target space, i.e. the GS manifold of $H$, and we need to determine whether the problem is or is not steerable. The construction of a corresponding parent Hamiltonian $H_{\mathrm{PH}}$ is intimately related to the notion of SCQ. Concretely, note that when the system is not frustrated, there exist  trivial SCQs with respect to the GS subspace. These can be used to construct  a parent Hamiltonian. Fo such a construction we take the following steps: To do this we take the following steps: 
(i)  We first compute a set of \textit{local} reduced density matrices (RDMs) of target state(s). Specifically, for each candidate steering operator, of a finite range (e.g., a steering operator that acts non-trivially on spins 1,2,3 in a long spin chain and otherwise, acting on other spins is represented by a unit operator) we compute the RDM of the corresponding support (the RDM of the space spanned by spins 1,2,3).  
(ii) We diagonalize the local RDM,  and note the local eigenvectors with non-vanishing eigenvalues, denoted as $|\phi_j^i\rangle$; we then  construct  the subspace spanned by these (non-vanishing eigenvalue) eigenvectors.  Here $i$ runs over the RDMs, while $j$ runs over the eigenvectors of a given RDM. 
(iii) We can then construct a set of local trivial SCQs denoted as $\{\Pi_i\}$. For every local RDM, $\Pi_i=\sum_j|\phi_j^i\rangle\langle\phi_j^i|$ is a projection operator. One readily verifies that $\Pi_i \rho \Pi_i=\rho$ for the RDM $i$.  
(iv) {\it Constructing the parent Hamiltonian}. We are now in a position to construct the parent Hamiltonian, $H_{\mathrm{PH}}=\sum_i (\mathbbm{1}-\Pi_i)$, and subsequently compute the GSs of $H_{\mathrm{PH}}$. Note that for each $\Pi_i$ and each GS $|\psi_{\mathrm{GS}}\rangle$, the above construction ensures $\Pi_i|\psi_{\mathrm{GS}}\rangle=|\psi_{\mathrm{GS}}\rangle$, implying $H_{\mathrm{PH}}|\psi_{\mathrm{GS}}\rangle=0|\psi_{\mathrm{GS}}\rangle$. One then concludes that $H_{\mathrm{PH}}$ is a parent Hamiltonian. 
(v) {\it The GS of $H_{\mathrm{PH}}$ is identical to the  GS of H}. Since target states are by construction GSs of $H_{\mathrm{PH}}$,  it follows that if there are no additional excited states of $H$ that are GSs of $H_{\mathrm{PH}}$, then Thm.~\ref{thm:FFS} is satisfied. This class of Hamiltonians is called either  FFS or NFFSS.  Otherwise $H$ is frustrated,  and is classified as either NFFJS or NFFNS.

We can also construct the parent Hamiltonian -- this time with a symmetry-motivated approach. Accounting for the set of symmetries  of the Hamiltonian, there exists a set of conserved quantities $\{\mathcal{A}_i\}$. Given a set of non-trivial  SCQs,  earlier in this Section we show that this implies a trivial set of  SCQs,  $\{\Pi_{\mathcal{A}_i}\}$,  for the given non-trivial SCQs $\{\mathcal{A}_i\}$. We can then construct the parent Hamiltonian as $H_{\mathrm{PH}}=\sum_i (\mathbbm{1}-\Pi_{\mathcal{A}_i})$. The latter satisfies $\Pi_{\mathcal{A}_i}|\psi_{\mathrm{GS}}\rangle=|\psi_{\mathrm{GS}}\rangle$ for any GS $|\psi_{\mathrm{GS}}\rangle$, which further implies $H_{\mathrm{PH}}|\psi_{\mathrm{GS}}\rangle=0|\psi_{\mathrm{GS}}\rangle$. Since $H_{\mathrm{PH}}\ge 0$, it follows that $|\psi_{\mathrm{GS}}\rangle$ is GS of $H_{\mathrm{PH}}$. Consequently, $H_{\mathrm{PH}}$ is FF.  We next note that the GS of $H_{\mathrm{PH}}$ may be degenerate, and the dimension of this GS manifold may be larger than that of the GS manifold of the original $H$. To render H steerable we must be able to find additional SCQs that will further reduce the dimensionality of the GS manifold of $H_{\mathrm{PH}}$. Once the latter is equal to the GS manifold of H, the problem is steerable. Examples are discussed below. 

\subsection{Examples of NFFSS Hamiltonians}\label{sec:Examples of NFFSS Hamiltonian}

Here we present an example of a Hamiltonian of the NFFSS class:

\textbf{\textit{Honeycomb Kitaev model}}~\cite{kitaev2006anyons}. As a final example, the Honeycomb Kitaev model (class B of Fig.~\ref{fig:classfication}) is defined on a two-dimensional spin-1/2 honeycomb lattice. Through conserved quantities, it is exactly solvable.  One can then engineer a parent Hamiltonian employing the conserved quantities of the Hamiltonian and the related SCQs. Given the set of conserved quantities $\{W_i\}$, the total Hilbert space is divided into numerous sectors according to the eigenvalues of $\{W_i\}$. In each of the sector, the original Hamiltonian can be rewritten as a non-interacting majorana-fermionic Hamiltonian, hence defines a full set of SCQs.  This  model represents an example of a Hamiltonian which is  {\it not} FF, yet one can construct a frustration-free parent Hamiltonian  which satisfies the conditions for steerability (cf. Appendix~\ref{sec:Parent Hamiltonian and FF Steering}).

\section{Necessary conditions for Non-Frustration-Free Jittery Steerable (NFFJS) Hamiltonians}\label{sec:Necessary conditions for NFFJS Hamiltonian}
In the following we present a general  study of NFFJS Hamiltonians, supplemented by a set of specific examples. Formulating the necessary conditions for the NFFJS class allows us to delineate the boundary between classes $B$ and  $\overline{B+C}$ on the one hand, and $C$  on the other hand (for the latter these conditions are not satisfied), cf. Fig.~\ref{fig:classfication}.  The term "jittery"  refers to the fact that for the NFFJS class it is not possible to keep each and every state within the GS manifold invariant under the steering protocol (unlike the case for FFS). This implies that at least a subset of states within this manifold are transformed to others under the steering superoperators. This characterization is also employed in the derivation of the necessary conditions for NFFJS, cf. Appendix~\ref{sec:Necessary conditions for Non-Frustration-Free Jittery Steerable (NFFJS) class}.
The necessary conditions are then:

\begin{widetext}
\begin{subequations}
\label{cond}
\begin{align}
    &\text{CONDITION I: The GS manifold $\mathcal{H}_{\mathrm{GS}}$ must be \textit{degenerate}.} \label{cond:1}\\
    &\text{CONDITION II: There exists at least one \textit{trivial SCQ } (cf.  Def.~\ref{def:SCQ}) within a subspace of the }\nonumber\\
    &\text{ground state manifold, } \mathcal{H}_{\mathrm{GS}}. \label{cond:2}\\
    &\text{CONDITION III: } \mathcal{H}_{\mathrm{GS}} \text{ is \textit{bipartite indistinguishable}.} \label{cond:3}
\end{align}
\end{subequations}
\end{widetext}

 The term in the last condition is readily defined. The GS subspace $\mathcal{H}_{\mathrm{GS}}\subseteq\mathcal{H}$  is called \textit{bipartite indistinguishable}, if there exists at least one bipartite decomposition of the total Hilbert space $\mathcal{H}=\mathcal{H}_S\otimes\mathcal{H}_{\bar S}$, and at leas two mixed ground states $\rho\neq\tau\in\mathcal{L}(\mathcal{H}_{\mathrm{GS}})$ ($\mathcal{L}(\mathcal{H}_{\mathrm{GS}})$ are linear operators over the ground state space), such that $\operatorname{Tr}_S\rho= \operatorname{Tr}_S\tau$, cf. Appendix~\ref{sec:Necessary conditions for Non-Frustration-Free Jittery Steerable (NFFJS) class}. It is termed ”bipartite distinguishable“ since we can distinguish the two globally different states by looking at their reduced density matrix for the given bipartition, i.e. $\operatorname{Tr}_S \rho$ and   $\operatorname{Tr}_S \tau$. 

Interestingly, the mere existence of a non-trivial SCQ $\mathcal{A}$ in the subspace $\mathcal{H}_{\mathrm{GS}}$, also implies that the three above Conditions~\ref{cond} are satisfied, cf. Prop.~\ref{prop:nontrivial SCQ} in Appendix~\ref{sec:Necessary conditions for Non-Frustration-Free Jittery Steerable (NFFJS) class}.
It follows that if the GS manifold of an NFFJS Hamiltonian possesses a non-trivial SCQ, the above necessary conditions are satisfied. We also note that the second Condition~\ref{cond:2} and the third Condition~\ref{cond:3} are mutually independent, cf. Prop.~\ref{prop:independent} in Appendix~\ref{Discussions about steerability}. However, to obtain necessary and sufficient conditions for the steerability of NFFJS Hamiltonians, one needs to exploit global information (such as the global wavefunction, instead of only local RDMs), which is beyond our scope here, cf. Appendix~\ref{Discussions about steerability}. 

\section{Non-Frustration-Free Jittery Steerable (NFFJS) Hamiltonians: Examples}\label{sec:NFFJS Examples}
In the following we will introduce a family of NFFJS systems that, while being frustrated, are nevertheless steerable, cf.  class $B$ in Fig.~\ref{fig:classfication}. To keep it simple we begin with a family of local commuting Pauli Hamiltonians~\cite{haah2013commuting}, which plays an important role as a template for quantum error correction code~\cite{stabilizerPhysRevLett.95.230504,graph}, for phases with topological orders~\cite{kitaev2003fault}, and for realizations of geometric frustration~\cite{blote1993antiferromagnetic}. Note that commutativity does not necessarily imply FF, since there could be geometric frustration, an example would be an Ising anti-ferromagnet on a 2-dimensional triangular  lattice.

We begin with a 1-dimensional system of $N$ qubits, with Hamiltonians of the form $H=\sum_i H_i$. Here each $H_i$ can be written as a tensor product of a ($N$-independent) number of Pauli matrices $\pm\sigma^{x,y,z}$ and the identity operator $\mathbbm{1}$, such that $[H_i,H_j]=0$. The deep reason for the steerability of this family of Hamiltonians is the following: indeed, $H$ is frustrated, defining a multi-dimensional manifold of GSs. It turns out that each pair of these states, see $|m\pm\rangle$ below, is in the GS manifold of a different parent Hamiltonian $H_{\operatorname{PH},m}$, the latter being FF. Let us illustrate this idea with an example. Consider the one-dimensional anti-ferromagnetic Ising model $H=\sum_i H_i=\sum_i J\sigma^z_i\sigma^z_{i+1}, J=1$ with periodic boundary condition and with $N$ odd spins. The GSs of it can be denoted as $|m\pm\rangle=|\cdots\pm_{m-2}\mp_{m-1} \pm_{m}\pm_{m+1}\mp_{m+2}\pm_{m+3}\cdots\rangle$, where $\sigma^z_i|\pm_i\rangle=\pm|\pm_i\rangle$. 

The target GS manifold here is multidimensional ($2N$ in the present example). We first show  that steering towards this manifold based on FFS steering protocol (generalized (weak) measurement) is impossible.  Since $H$ itself is not FF, one may try to search for a FF parent Hamiltonian $H_{\mathrm{PH}}$ instead.  However, it is not possible to construct any two-qubit SCQ (other than the identity operator), hence there exits no non-trivial two-qubit parent Hamiltonian. It turns out (cf. Appendix~\ref{sec:Parent Hamiltonian and FF Steering}), that in order to construct a parent Hamiltonian whose GS is identical to that of the original Hamiltonian, one needs  $n$-qubit terms, with $n$ scaling up with the system's size.
Hence this GS manifold is intractable with the  FFS steering protocol.

A way out from this seemingly dead end is to allow for a scenario where the steered states are not stabilized, even asymptotically. For this we invoke a sequence  of projective measurements, i.e. discrete-time superoperators. Such steering dynamics cannot be cast as a single Lindbladian in the continuum-time limit. Importantly, our proposed protocol will not only steer the system towards the desired manifold, but will also dynamically "jump"  the state within that manifold. 

Coming back to our example,  note that  $|m\pm\rangle$ are only two GSs of the parent FF Hamiltonian  $H^m_{\mathrm{PH}}=\sum_{i\neq m}J\sigma^z_i\sigma^z_{i+1}-J\sigma^z_m\sigma^z_{m+1}$. Following this line of thought, the GS manifold of $H$ can be separated into a sum of GS of FF parent Hamiltonian, $H_{\mathrm{PH}}^m$. The latter, then, are not frustrated. The concrete steering protocol consists of a set of $N$ different superoperators acting on the density matrix $\rho$:
\begin{equation}
\label{eq:supop}
\mathcal{P}^{(i)}(\rho)=\Pi^{(i)}\rho\Pi^{(i)}+\Pi^{(i)}V^{(i)} \rho (V^{(i)})^\dagger \Pi^{(i)},
\end{equation}
where $\Pi^{(i)}=(\mathbbm{1}- H_i)/2$ and $V^{(i)}=\sigma^x_i$, the latter flips the spin $i$. Note that these superoperators cool the system locally ~\cite{Langbehn2023}. This is easy to see in the Heisenberg picture, where $\mathcal{P}^{(i)\dagger}(H_{i-1})=\mathcal{P}^{(i)\dagger}(H_{i})=-\mathbbm{1}$ with all initial states brought to the local GS of $H_{i-1}$ and $H_{i}$, while other $\{H_j\}$ remain invariant. Therefore, a common invariant subspace of the set of superoperators $\{\mathcal{P}^{(i)}\}_{i\neq m,m+1}$ is $\operatorname{span}\{|m+\rangle,|m-\rangle\}$, the GSs of $H_{\mathrm{PH}}^m$. It follows that the set of $\mathcal{P}^{(i)}$ dynamically steers towards the GS manifold of $H$. Yet, any given state within this subspace does not remain invariant under the application of all $\{\mathcal{P}^{(i)}\}$\footnote{This can be verified by noting the effect of $\mathcal{P}_-^{(i)}$ in Heisenberg picture. }. Note that since the $\mathcal{P}^{(i)}$ only acts on spin $i$ and spin $i+1$, $\{\mathcal{P}^{(i)}\}$ are local superoperators.  

The above example can be generalized to an arbitrary commuting Pauli Hamiltonian made up of local commuting Pauli operators. The resulting steering superoperators  will all have a form similar to Eq.~\eqref{eq:supop}. These local Pauli operators are made up of only $k$ different qubits (a "$k$-local" Hamiltonian) and render the entire Hamiltonian translationally invariant.  In particular, it is possible to show that the "flip operators" (the analogues of $V_i$ of Eq.~\eqref{eq:supop}), which facilitate switching from one GS to another, can  be decomposed into \textit{local} flip operators, cf. Fig.~\ref{fig:Stabilizing}b and Appendix~\ref{sec:Locality of superoperator}. In this sense, we only need to apply steering protocol to a constant number of qubits. The steerability of commuting Pauli Hamiltonian is established in Thm.~\ref{thm:CPH} in Appendix~\ref{app:Proof for Stabilizability of Commuting Pauli Hamiltonian}.

The superoperators of the form similar to Eq.~\eqref{eq:supop} for commuting Pauli Hamiltonians can be realized through "system-detector" interaction terms and an auxiliary qubit. The former are represented by Clifford unitary operators, and the latter is reinitialized in the state $|0\rangle$ at each measurement step. Following each measurement step, we trace over the detector's readouts~\cite{Measurement-inducedsteeringPhysRevResearch.2.033347}. The unitary evolution associated with each measurement step may involve a single (or multiple) spin flip(s). 
This is seemingly a classical operation: following the Gottesman–Knill Theorem ~\cite{gottesman1998heisenberg}, such "cooling" of the system's state (i.e. getting closer to the GS manifold) can be efficiently simulated with classical computer if we are only concerned with correlation functions measured on the computational basis $|0\rangle$ and $|1\rangle$. We stress, though, that the problem is inherently quantum. To see this, one notes that in order to reach a ground state, we have chosen the superoperators similar to Eq.~\eqref{eq:supop} only for simplicity; there can be other superoperators that manifestly do not involve only ”classical” spin flipping, cf. Appendix~\ref{app:Alternative superoperators}. Furthermore, once a ground state is reached, one may perform measurements on non-computational bases (i.e., employing superpositions of $|0\rangle$ and $|1\rangle$)  to implement measurement-based quantum  computation~\cite{MBQCPhysRevA.68.022312}, thus going beyond classical simulation.
Interestingly, one may wonder why we do not directly prepare every single GS with Clifford unitary quantum circuits. The reason is that employing superoperators in the presence of environmental noise is more stable than using unitary evolution, since superoperators can eliminate undesired excitations, cf. Appendix~\ref{app:Stability against heating}.

Intriguingly, our analysis may be pushed  beyond the class of commuting Pauli Hamiltonians.  One may then  consider a special class of non-FF Hamiltonians, and  focus on a Hamiltonian, $H$, that possesses a multitude of GSs. Following this line of thought, one  further assumes that each of these GSs is also a GS of a different FF parent Hamiltonian, $\{H^n_{\mathrm{PH}}, n=1,2,\cdots\}$. This implies that it is possible to steer towards one of these GSs. If further GSs of different parent Hamiltonian $H^n_{\mathrm{PH}}$ can be connected by local operations, and our steering protocol facilitates ”visiting”  a multitude (or all)  GSs of $H$, then $H$ is NFFJS. 

We next consider  a class where the above steering protocol as well as  FFS steering are not facilitated.

\section{Non-Frustration-Free Non-Steerable (NFFNS) Hamiltonians: surrogate states and a "glass floor"}\label{sec:NFFNS Hamiltonian}

Following our characterization of various classes of Hamiltonians, whose GS manifolds are accessible through local steering operations, cf. Sec.~\ref{sec:Identifying an FFS Hamiltonian}, ~\ref{sec:Non-Frustration-Free Steadily Steerable Hamiltonians} and ~\ref{sec:Necessary conditions for NFFJS Hamiltonian}, here we address yet another class, that of Non-Frustration-Free Non-Steerable (NFFNS) Hamiltonians. The GSs of these Hamiltonians are not accessible employing any local steering operations. The insistence on \textit{local steering} operators throughout the present study follows the observation that non-local steering, possibly involving products of numerous local operators, renders controllability of this dynamics practically impossible. We conjecture that the difficulty observed in quantum simulation experiments ~\cite{mazurenko2017coldFermiHubbard,xu2023frustration,li2024observation,shao2024antiferromagnetic,Xu2025} to cool the relevant systems to low-energy states is a manifestation of this non-steerability~\cite{Langbehn2023,PhysRevResearch.6.013244}. The latter is due to the high degree of non-local entanglement  of the GS that cannot be created with local steering.

\subsection{Approximate steering and surrogate states}\label{sec: Approximate steering and surrogate states}

Our emphasis here shifts to yet another paradigm: defining protocols that drive the system to a state close to the target of the GS, but not insisting on reaching exactly the target of the GS. The hope is that even with this imperfect cooling (but attaining a state that has a finite overlap with the GS), certain non-trivial properties of the GS can be observed.

To characterize the degree of non-steerability, one can use three possible criteria to quantify the distance between the GS and the “limit of steerability” (i.e., how close we can get to the GS) : (i) evaluate the distance between target state $|\psi_{\mathrm{target}}\rangle$ (or, more generally,  $\rho_{\mathrm{target}}$) and the optimal reachable states, which is denoted as a \textit{surrogate state}, $\rho_{\mathrm{surr}}$. This distance may be evaluated in several optional ways: the \textit{fidelity} of quantum states,  
\begin{equation}
\label{eq:fid def}
    \mathcal{F}(\rho_{\mathrm{target}},\rho_{\mathrm{surr}})=\left(\operatorname{Tr}\sqrt{\sqrt{\rho_{\mathrm{surr}}}\rho_{\mathrm{target}}\sqrt{\rho_{\mathrm{surr}}}}\right)^2,
\end{equation}
or Frobenius norm, $D_F(\rho_{\mathrm{surr}},\rho_{\mathrm{target}})=\sqrt{\operatorname{Tr}(\rho_{\mathrm{surr}}-\rho_{\mathrm{target}})^2}$ and trace norm $D_t(\rho_{\mathrm{surr}},\rho_{\mathrm{target}})=\operatorname{Tr}[(\rho_{\mathrm{surr}}-\rho_{\mathrm{target}})^2]/2$.  
(ii) compute the system’s energy expectation value, $\langle E\rangle \equiv \operatorname{Tr} (\rho H)$,  with the asymptotic state. For  $(\langle  E\rangle -E_{\mathrm{GS}}) / \operatorname{gap}[H] \ll 1$ ($E_{\mathrm{GS}}$ is the energy of the ground state and $\operatorname{gap}[H]$ is the the energy gap of the first excitation above the ground state of $H$), overlap with the target GS is  large. (iii) Employing the basis of the Hamiltonian eigenstates,  approximate the asymptotic state  by an ansatz of Gibbs ensemble, optimizing the effective temperature, cf. Sec.~\ref{sec:Derivation of the Temperature lower bound}. The dimensionless  ratio of this temperature and $\operatorname{gap}[H]$ represents the distance from the true GS.  

The plan for Sec.~\ref{sec:NFFNS Hamiltonian} is to first address the criterion  (i) of the distance between the asymptotic steerable state, $\rho_{\mathrm{surr}}$,  and the target state, $\rho_{\mathrm{target}}$.  The lower bound of this distance serves as a figure of merit to the efficiency of cooling-by-steering. Following this analysis, and attempting to employ definitions based on directly measurable quantities, we next employ criterion (ii) and (iii).

Consider as an example a target ground state which is an $N$-qubit GHZ state, underlined by highly non-local entanglement, $|\psi_{\mathrm{target}}\rangle=(|00\cdots 0\rangle+|11\cdots 1\rangle)/\sqrt{2}$, and assume that our steering is restricted to single-qubit operations. With this constraint, all accessible steerable states are product states;  the best fidelity of such a steerable state (w.r.t.  the GHZ state)  is given by $|\langle \psi_{\mathrm{target}}|00\cdots 0\rangle|^2=1/2$. This upper bound on the fidelity implies a finite distance between the non-steerable target state and any steerable state. The state $|00\cdots 0\rangle$   (as well as $|11\cdots 1\rangle$) is, in this trivial example, a surrogate state. It represents the best approximation to the true GS achievable through single-qubit blind steering. 

We now describe the surrogate steering above in more precise terms (cf. Def.~\ref{def:stabilizability}).
\begin{definition}
    \label{def:approx_stabilizability}
    A linear subspace $\mathcal{H}_{\mathrm{target}}\subseteq \mathcal{H}$ is called a  \textit{$p$-approximate steerable subspace}, if there exists a positive number $p<1$ and an infinite sequence of local non-unitary superoperators $\{\mathcal{P}_1,\mathcal{P}_2\cdots\}$, such that for arbitrary initial state $|\psi\rangle$, the probability of later-time state being in target subspace $\mathcal{H}_{\mathrm{target}}$, which can be computed as follows,
$$
\operatorname{Tr}[ \mathcal{P}_N(\mathcal{P}_{N-1}(\cdots\mathcal{P}_1(|\psi\rangle\langle\psi|))) \Pi_{\mathrm{target}}] \ge p,
$$
for sufficiently large $N$. Here $\Pi_{\mathrm{target}}$ is the projection operator for $\mathcal{H}_{\mathrm{target}}$. We also consider this subspace to be minimal. In other words, in the asymptotic limit any initial state has at least a population of $p$ in the target space.
\end{definition}
For a further discussion we refer to  Appendix~\ref{Discussions about non-steerable case}.
In the following, our general goal is to derive the distance between a non-steerable target state, $\rho_{\mathrm{target}}$, and the corresponding surrogate state, given by $\mathcal{F}_{\mathrm{max}}=\max_{\rho} \mathcal{F}(\rho_{\mathrm{target}},\rho)$, cf. Eq.~\eqref{eq:fid def}. This establishes $\mathcal{H}_{\mathrm{GS}}$ as, at best, a $\mathcal{F}_{\mathrm{max}}$-approximate steerable subspace. The steerable state $\rho$ that optimizes the fidelity with the non-steerable target state $\rho_{\mathrm{target}}$ is the surrogate state $\rho_{\mathrm{surr}}$. Specifically,  we have fully characterized the Hamiltonian classes FFS and NFFSS, cf. Sec.~\ref{sec:Identifying an FFS Hamiltonian} and ~\ref{sec:Non-Frustration-Free Steadily Steerable Hamiltonians}, and specified the necessary conditions for NFFJS Hamiltonians, cf. Sec.~\ref{sec:Necessary conditions for NFFJS Hamiltonian}. It then turns out that the present class of Hamiltonians, violating the necessary steerability conditions (cf. Conditions~\ref{cond}), is non-steerable, i.e. class $C$ (Fig.~\ref{fig:classfication}). However, since we do not have the capacity to classify states in class $\overline{C+D}$ (Fig.~\ref{fig:classfication}) as either NFFJS or NFFNS, we are unable to achieve full characterization of potential surrogate states. We are thus unable to directly compute $\mathcal{F}_{\mathrm{max}}$. 
Our solution to this difficulty is to define a {\it presumed surrogate states}, $\tilde{\rho}_{\mathrm{surr}}$, out of a set of states, that may or may not be steerable, and compute its distance from the target states. The presumed surrogate states is defined to be the state closest to $\rho_{\mathrm{target}}$ in this set. Specifically, we will derive the distance $d_2$ between a non-steerable target state $\rho_{\mathrm{target}}$, cf. class $D$  (Fig.~\ref{fig:classfication}), and a presumed surrogate state, $\tilde{\rho}_{\mathrm{surr}}$, that is closest to $\rho_{\mathrm{target}}$ in the complementary set of class $D$, cf. Fig.~\ref{fig:distances}.  Therefore, this distance $d_2$, represented by the fidelity with presumed surrogate state, namely
$$
\tilde F_{\mathrm{max}} = \max_{\rho \not \in D}\mathcal{F}(\rho,\rho_{\mathrm{target}}),
$$
serves as a lower bound to the distance $d_1$ between surrogate states and target states, represented by $\mathcal{F}_{\mathrm{max}}$, cf. Fig.~\ref{fig:distances}. In other words, for the distances we have $d_2=d(\tilde{\rho}_{\mathrm{surr}},\rho_{\mathrm{target}})\le d_1=d(\rho_{\mathrm{surr}},\rho_{\mathrm{target}})$, and for the fidelities we have 
\begin{equation}
\mathcal{F}_{\mathrm{max}}\le \tilde{\mathcal{F}}_{\mathrm{max}}.
\end{equation}  
Note that this presumed surrogate state might belong to class $A,B,C$, or it might belong to the class $\overline{C+D}$ in Fig.~\ref{fig:classfication}.

The rationale behind the introduction of $\tilde{\mathcal{F}}_{\mathrm{max}}$ is two-fold: (i) We have not managed to find necessary {\it and}  sufficient conditions for the entire class NFFJS (which includes the class $C$ in Fig.~\ref{fig:classfication}), hence this class is not fully characterized. It follows that the optimized distance between a non-steerable state in $D$ and the closest state which is unquestionably steerable, might involve unknown NFFJS state (Fig.~\ref{fig:distances}), hence that distance is not clearly determined. (ii) Even if classes $A,B$ and $C$ are fully characterized, optimizing over these steerable states for the distance to the target state is a computationally hard problem. This is because the mere determination of Frustration-Free property of a given Hamiltonian is already computationally hard~\cite{bravyi2006efficient,gosset2016quantum}.

\begin{figure}
    \centering
    \begin{tikzpicture}[x=0.75pt,y=0.75pt,yscale=-1,xscale=1]

\draw [color={rgb, 255:red, 208; green, 2; blue, 27 }  ,draw opacity=1 ][line width=3]    (100,100) -- (135,100) ;
\draw [color={rgb, 255:red, 74; green, 144; blue, 226 }  ,draw opacity=1 ][line width=0.75]    (100,98.5) -- (135,98.5)(100,101.5) -- (135,101.5) ;
\draw [color={rgb, 255:red, 74; green, 144; blue, 226 }  ,draw opacity=1 ][line width=1.5]    (135,65) -- (135,40) ;
\draw [color={rgb, 255:red, 208; green, 2; blue, 27 }  ,draw opacity=1 ][line width=2.25]    (100,65) -- (135,65) ;
\draw   (20,48) .. controls (20,32.54) and (32.54,20) .. (48,20) -- (162,20) .. controls (177.46,20) and (190,32.54) .. (190,48) -- (190,132) .. controls (190,147.46) and (177.46,160) .. (162,160) -- (48,160) .. controls (32.54,160) and (20,147.46) .. (20,132) -- cycle ;
\draw   (40,90) .. controls (40,84.48) and (44.48,80) .. (50,80) -- (90,80) .. controls (95.52,80) and (100,84.48) .. (100,90) -- (100,120) .. controls (100,125.52) and (95.52,130) .. (90,130) -- (50,130) .. controls (44.48,130) and (40,125.52) .. (40,120) -- cycle ;
\draw   (100,54) .. controls (100,46.27) and (106.27,40) .. (114,40) -- (156,40) .. controls (163.73,40) and (170,46.27) .. (170,54) -- (170,117) .. controls (170,124.73) and (163.73,131) .. (156,131) -- (114,131) .. controls (106.27,131) and (100,124.73) .. (100,117) -- cycle ;
\draw    (100,0) .. controls (101.67,1.67) and (101.67,3.33) .. (100,5) .. controls (98.33,6.67) and (98.33,8.33) .. (100,10) .. controls (101.67,11.67) and (101.67,13.33) .. (100,15) .. controls (98.33,16.67) and (98.33,18.33) .. (100,20) .. controls (101.67,21.67) and (101.67,23.33) .. (100,25) .. controls (98.33,26.67) and (98.33,28.33) .. (100,30) .. controls (101.67,31.67) and (101.67,33.33) .. (100,35) .. controls (98.33,36.67) and (98.33,38.33) .. (100,40) .. controls (101.67,41.67) and (101.67,43.33) .. (100,45) .. controls (98.33,46.67) and (98.33,48.33) .. (100,50) .. controls (101.67,51.67) and (101.67,53.33) .. (100,55) .. controls (98.33,56.67) and (98.33,58.33) .. (100,60) .. controls (101.67,61.67) and (101.67,63.33) .. (100,65) .. controls (98.33,66.67) and (98.33,68.33) .. (100,70) .. controls (101.67,71.67) and (101.67,73.33) .. (100,75) .. controls (98.33,76.67) and (98.33,78.33) .. (100,80) .. controls (101.67,81.67) and (101.67,83.33) .. (100,85) .. controls (98.33,86.67) and (98.33,88.33) .. (100,90) .. controls (101.67,91.67) and (101.67,93.33) .. (100,95) .. controls (98.33,96.67) and (98.33,98.33) .. (100,100) .. controls (101.67,101.67) and (101.67,103.33) .. (100,105) .. controls (98.33,106.67) and (98.33,108.33) .. (100,110) .. controls (101.67,111.67) and (101.67,113.33) .. (100,115) .. controls (98.33,116.67) and (98.33,118.33) .. (100,120) .. controls (101.67,121.67) and (101.67,123.33) .. (100,125) .. controls (98.33,126.67) and (98.33,128.33) .. (100,130) .. controls (101.67,131.67) and (101.67,133.33) .. (100,135) .. controls (98.33,136.67) and (98.33,138.33) .. (100,140) .. controls (101.67,141.67) and (101.67,143.33) .. (100,145) .. controls (98.33,146.67) and (98.33,148.33) .. (100,150) .. controls (101.67,151.67) and (101.67,153.33) .. (100,155) .. controls (98.33,156.67) and (98.33,158.33) .. (100,160) .. controls (101.67,161.67) and (101.67,163.33) .. (100,165) .. controls (98.33,166.67) and (98.33,168.33) .. (100,170) .. controls (101.67,171.67) and (101.67,173.33) .. (100,175) .. controls (98.33,176.67) and (98.33,178.33) .. (100,180) -- (100,180) ;
\draw  [color={rgb, 255:red, 0; green, 0; blue, 0 }  ,draw opacity=1 ][fill={rgb, 255:red, 0; green, 0; blue, 0 }  ,fill opacity=1 ] (130,40) .. controls (130,37.24) and (132.24,35) .. (135,35) .. controls (137.76,35) and (140,37.24) .. (140,40) .. controls (140,42.76) and (137.76,45) .. (135,45) .. controls (132.24,45) and (130,42.76) .. (130,40) -- cycle ;
\draw  [color={rgb, 255:red, 0; green, 0; blue, 0 }  ,draw opacity=1 ][fill={rgb, 255:red, 0; green, 0; blue, 0 }  ,fill opacity=1 ] (130,65) .. controls (130,62.24) and (132.24,60) .. (135,60) .. controls (137.76,60) and (140,62.24) .. (140,65) .. controls (140,67.76) and (137.76,70) .. (135,70) .. controls (132.24,70) and (130,67.76) .. (130,65) -- cycle ;
\draw  [color={rgb, 255:red, 0; green, 0; blue, 0 }  ,draw opacity=1 ][fill={rgb, 255:red, 0; green, 0; blue, 0 }  ,fill opacity=1 ] (95,65) .. controls (95,62.24) and (97.24,60) .. (100,60) .. controls (102.76,60) and (105,62.24) .. (105,65) .. controls (105,67.76) and (102.76,70) .. (100,70) .. controls (97.24,70) and (95,67.76) .. (95,65) -- cycle ;
\draw  [color={rgb, 255:red, 0; green, 0; blue, 0 }  ,draw opacity=1 ][fill={rgb, 255:red, 0; green, 0; blue, 0 }  ,fill opacity=1 ] (130,100) .. controls (130,97.24) and (132.24,95) .. (135,95) .. controls (137.76,95) and (140,97.24) .. (140,100) .. controls (140,102.76) and (137.76,105) .. (135,105) .. controls (132.24,105) and (130,102.76) .. (130,100) -- cycle ;
\draw  [color={rgb, 255:red, 0; green, 0; blue, 0 }  ,draw opacity=1 ][fill={rgb, 255:red, 0; green, 0; blue, 0 }  ,fill opacity=1 ] (95,100) .. controls (95,97.24) and (97.24,95) .. (100,95) .. controls (102.76,95) and (105,97.24) .. (105,100) .. controls (105,102.76) and (102.76,105) .. (100,105) .. controls (97.24,105) and (95,102.76) .. (95,100) -- cycle ;

\draw (139,22.4) node [anchor=north west][inner sep=0.75pt]  [font=\large]  {$\tilde{\rho }_{\text{surr}}$};
\draw (142,68.4) node [anchor=north west][inner sep=0.75pt]  [font=\large]  {$\rho _{\text{tar}}$};
\draw (59,52.4) node [anchor=north west][inner sep=0.75pt]  [font=\large]  {$\rho _{\text{surr}}$};
\draw (145,92.4) node [anchor=north west][inner sep=0.75pt]  [font=\large]  {$\sigma _{\text{tar}}$};
\draw (59,92.4) node [anchor=north west][inner sep=0.75pt]  [font=\large]  {$\sigma _{\text{surr}}$};
\draw (43,112.4) node [anchor=north west][inner sep=0.75pt]  [font=\large]  {$A,B,C$};
\draw (131,112.4) node [anchor=north west][inner sep=0.75pt]  [font=\large]  {$D$};
\draw (111,138.4) node [anchor=north west][inner sep=0.75pt]  [font=\large]  {$\overline{C+D}$};
\draw (109,73.4) node [anchor=north west][inner sep=0.75pt]  [font=\large]  {$d_{1}$};
\draw (142,46.4) node [anchor=north west][inner sep=0.75pt]  [font=\large]  {$d_{2}$};
\draw (21,2) node [anchor=north west][inner sep=0.75pt]  [font=\large] [align=left] {Steerable};
\draw (111,2) node [anchor=north west][inner sep=0.75pt]  [font=\large] [align=left] {Non-steerable};

\end{tikzpicture}
    \caption{Schematics of the distances from target states to surrogate states. $\rho_{tar}$ and $\sigma_{tar}$ are two non-steerable target states defined in the class $D$, cf. Fig.~\ref{fig:classfication}. For $\rho_{tar}$, the presumed surrogate state, $\tilde{\rho}_{\mathrm{surr}}$, lies on the boundary of $D$, and is the closest \textit{possibly} steerable state to $\rho_{tar}$, while $\rho_{\mathrm{surr}}$ is a surrogate state, that is steerable and closest to $\rho_{tar}$. For target states like $\sigma_{tar}$, these two states coincide and are represented by $\sigma_{\mathrm{surr}}$. Red lines indicate the distance $d_1$ between target states and steerable surrogate states. Blue lines indicate the distance $d_2$ between target states and the boundary of class $D$, which is given by the bound $1-p(\Pi_{\mathrm{GS}})$, cf. Eq.~\eqref{eq:upper bound} and Eq.~\eqref{eq:upper bound fidelity}. The latter distance is a lower bound of the former, i.e. $d_1\ge d_2$. The zigzag indicates the boundary between steerable and non-steerable states.}
    \label{fig:distances}
\end{figure}

\subsection{A bound on the distance between the surrogate state and the target state employing local measure}\label{sec:A bound on the distance between the surrogate state and the target state employing local measure}

Dealing now with the set of presumed surrogate states, we are now in a position to compute a lower bound to the distance from the target state. We first derive the (optimal) presumed surrogate state based on the non-degenerate non-steerable target state as in Eq.~\eqref{eq:presumed state}, then we derive the upper bound of the fidelity between the target state and surrogate state, given in Eq.~\eqref{eq:upper bound}. This bound is then generalized to the degenerate case in Eq.~\eqref{eq:upper bound fidelity}.

As stated in the previous section, there are three measures of such a distance.  We stress that these measures employ global features (considering states of the entire system; overlaps  between full system density matrices, etc.). This is computationally challenging. To facilitate computationally feasible protocols, we next define a distance based on local quantities. We consider, for simplicity, a non-degenerate case, i.e., the non-steerable target state is a single GS manifold $|\psi_{\mathrm{GS}}\rangle$. Non-Steerability is due to the fact that such a GS possesses a large degree of non-local entanglement, which cannot be constructed solely by local operations. Simply speaking, in a steerable case, there exists a sufficiently large set of local projection operators $\{\Pi_j\}$ satisfy that 
\begin{equation}    
\label{eq:projection}\Pi_j|\psi_{\mathrm{GS}}\rangle=|\psi_{\mathrm{GS}}\rangle,
\end{equation}
with sufficiently many $\{\Pi_j\}$ such that $|\psi_{\mathrm{GS}}\rangle$ is the only possible state satisfying Eq.~\eqref{eq:projection}. This guarantees that our steering protocol converges only towards the given GS (or the GS manifold in the degenerate case). Here we address a scenario where not even a single  local trivial SCQ  $\Pi_j$ acting on the GS exists which satisfies Eq.~\eqref{eq:projection}. In this case, the necessary Condition~\ref{cond:2} does not hold.
Within this framework, there is a clear boundary between the sector of non-steerable states ($D$ in Fig.~\ref{fig:distances}), and the presumed surrogate states (the rest of the Hilbert space in Fig.~\ref{fig:distances}). The lower bound on the distance, $d_2$, is indeed smaller than the actual distance, $d_1$,  between a target state in $D$ and a true steerable surrogate state.

Now suppose we are given this non-steerable target state $|\psi_{\mathrm{GS}}\rangle$, and we are trying to design a local steering operator. Let us consider this local steering operator acts non-trivially on a local region $S$, e.g. composing of qubit $1$ and $2$, and the complementary space is denoted as $\bar S$. Given this bipartition, we can implement Schmidt decomposition of $|\psi_{\mathrm{GS}}\rangle$ as follows: 
\begin{equation}
\label{eq:schmidt decomposition}
|\psi_{\mathrm{GS}}\rangle=\sum_i c_i |\alpha_i\rangle_S\otimes |\beta_i\rangle_{\bar S},
\end{equation}
where $\{|\alpha_i\rangle_S\}$ is a set of  bases defined on $\mathcal{H}_S$, the Hilbert space on $S$,  $\{|\beta_i\rangle_{\bar S}\}$ is a set of  bases defined on the complementary Hilbert space $\mathcal{H}_{\bar S}$, and $c_i\neq 0$. Suppose $|c_0|^2$ is the smallest among all coefficients.
The condition that there does not exist any trivial SCQ for the GS (cf. Condition~\ref{cond:2}) implies that $\{|\alpha_i\rangle\}$ is a complete basis for $\mathcal{H}_S$, meaning there does not exist a non-trivial local projector $\Pi$ such that $\Pi|\psi_{\mathrm{GS}}\rangle=|\psi_{\mathrm{GS}}\rangle$. 

Let us first consider the construction of a pure presumed surrogate state $|\tilde{\psi}_{\mathrm{surr}}\rangle$. On the contrary, $|\tilde{\psi}_{\mathrm{surr}}\rangle$ does not violate Condition~\ref{cond:2}, thus it must support at least one trivial SCQ $\tilde{\Pi}^{\mathrm{surr}}$, implying that $\tilde{\Pi}^{\mathrm{surr}}|\tilde{\psi}_{\mathrm{surr}}\rangle=|\tilde{\psi}_{\mathrm{surr}}\rangle$. Without loss of generality, we consider the support of this $\tilde{\Pi}^{\mathrm{surr}}$ is also the region $S$.  Then given the two conditions: (i) $\tilde{\Pi}^{\mathrm{surr}}|\tilde{\psi}_{\mathrm{surr}}\rangle=|\tilde{\psi}_{\mathrm{surr}}\rangle$; (ii) $|\tilde{\psi}_{\mathrm{surr}}\rangle$ is as close to $|\psi_{\mathrm{GS}}\rangle$ as possible, the seemingly best choice is to project out the most irrelevant Schmidt component, namely $c_0|\alpha_0\rangle_S\otimes |\beta_i\rangle_{\bar{S}}$. The presumed surrogate state can then be defined as 
\begin{equation}
\label{eq:presumed state}
    |\tilde{\psi}_{\mathrm{surr}}\rangle=\sum_{i\neq 0} \frac{c_i}{\sqrt{1-|c_0|^2} }|\alpha_i\rangle_S\otimes |\beta_i\rangle_{\bar S},
\end{equation} 
with the trivial SCQ $\tilde{\Pi}^{\mathrm{surr}}=\mathbbm{1}-|\alpha_0\rangle\langle \alpha_0|$. We can then quantify the distance between target state and presumed surrogate state as follows:
\begin{align}
\label{eq:fidelity non-deg GS}
        \tilde{\mathcal{F}}(|\psi_{\mathrm{GS}}\rangle,|\tilde{\psi}_{\mathrm{surr}}\rangle)&=|\langle \psi_{\mathrm{GS}}|\tilde{\psi}_{\mathrm{surr}}\rangle|^2\nonumber\\
        &=\Bigg|\sum_{i\neq 0}\frac{|c_i|^2}{\sqrt{1-|c_0|^2}}\Bigg|^2\nonumber\\
        &=1-|c_0|^2.
\end{align}
This fidelity serves as a finite distance between $|\psi_{\mathrm{GS}}\rangle$ and $|\tilde{\psi}_{\mathrm{surr}}\rangle$, and it is calculated through the Schmidt spectrum of the target state $|\psi_{\mathrm{GS}}\rangle$. Equivalently, $|c_0|^2$ is the smallest eigenvalue of the reduced density matrix of $|\psi_{\mathrm{GS}}\rangle$ by tracing over $\mathcal{H}_{\bar S}$, i.e. $\rho^S_{\mathrm{GS}}=\operatorname{Tr}_{\mathcal{H}_{\bar S}} |\psi_{\mathrm{GS}}\rangle\langle \psi_{\mathrm{GS}}|=\sum_i |c_i|^2 |\alpha_i\rangle_S{}_S\langle \alpha_i|$. Also note that $\tilde{\Pi}^{\mathrm{surr}}$ can be equivalently obtained by maximizing $\langle \psi_{\mathrm{GS}}|\tilde{\Pi}^{\mathrm{surr}}|\psi_{\mathrm{GS}}\rangle$ for all non-unit projection operator $\tilde{\Pi}^{\mathrm{surr}}$. Note that the value of the fidelity upper bound of $\mathcal{F}_{\mathrm{max}}$, i.e. $1-|c_0|^2$ (Eq.~\eqref{eq:fidelity non-deg GS}), depends entirely on the way we bipartition the entire system into $S$ and $\bar{S}$ (cf. Eq.~\eqref{eq:schmidt decomposition}). Since $S$ is assumed to be the subsystem we may control and manipulate, it should represent the interaction range of the steering operator (the detector-system coupling). Applying this bipartition (with a given range of $S$, say 2 neighboring sites) to different locations in the system,  $\mathcal{F}_{\mathrm{max}}$ is upper bounded by the smallest value of $1-|c_0|^2$.

More generally, let us consider computing the fidelity between a mixed presumed surrogate state and the tarrget state. In principle, the one  closest  to $|\psi_{\mathrm{GS}}\rangle$ may be a mixed state, and  is denoted as  $\tilde{\rho}_{\mathrm{surr}}$. Since the manifold of presumed surrogate states is required to satisfy Condition~\ref{cond:2}, there exists a set of local trivial SCQs $\{\tilde{\Pi}^{\mathrm{surr}}_i\}$, such that the support of each $\tilde{\Pi}^{\mathrm{surr}}_i$ is defined on a given local region $S_i$. By definition of SCQ (cf. Def.~\ref{def:SCQ}), we have $\tilde{\Pi}^{\mathrm{surr}}_i \tilde{\rho}_{\mathrm{surr}}\tilde{\Pi}^{\mathrm{surr}}_i=\tilde{\rho}_{\mathrm{surr}}$. Hence the distance between the presumed surrogate target state $\tilde{\rho}_{\mathrm{surr}}$  and the GS $|\psi_{\mathrm{GS}}\rangle$ is bounded by the \textit{fidelity}
\begin{align}
\label{eq:upper bound}
&\mathcal{F}(|\psi_{\mathrm{GS}}\rangle,\tilde{\rho}_{\mathrm{surr}})\nonumber\\
  =&\langle\psi_{\mathrm{GS}}|\tilde{\rho}_{\mathrm{surr}}|\psi_{\mathrm{GS}}\rangle \nonumber\\
  =&\operatorname{Tr} |\psi_{\mathrm{GS}}\rangle\langle \psi_{\mathrm{GS}}| \tilde{\rho}_{\mathrm{surr}}\nonumber\\
  =&\operatorname{Tr} (\tilde{\Pi}^{\mathrm{surr}}_i|\psi_{\mathrm{GS}}\rangle\langle \psi_{\mathrm{GS}}|\tilde{\Pi}^{\mathrm{surr}}_i) \tilde{\rho}_{\mathrm{surr}}\nonumber\\
  \leq &1-p(|\psi_{\mathrm{GS}}\rangle),
\end{align}
where $p(|\psi_{\mathrm{GS}}\rangle)$ is the minimum of the set of smallest eigenvalues of  local reduced density matrices of $|\psi_{\mathrm{GS}}\rangle$ computed over all local region $S_i$.  The last inequality holds since $\tilde{\Pi}^{\mathrm{surr}}_i|\psi_{\mathrm{GS}}\rangle \langle\psi_{\mathrm{GS}}|\tilde{\Pi}^{\mathrm{surr}}_i$ is a semi-positive definite operator with trace $\langle \psi_{\mathrm{GS}}|\tilde{\Pi}^{\mathrm{surr}}_i|\psi_{\mathrm{GS}}\rangle\leq 1-p(|\psi_{\mathrm{GS}}\rangle)$, as discussed above.  This indicates how close $|\psi_{\mathrm{GS}}\rangle$ is to $\tilde{\rho}_{\mathrm{surr}}$, cf. Fig.~\ref{fig:distances}. Note that distances here involve local RDMs. Let us denote the complementary set of support of $\tilde{\Pi}^{\mathrm{surr}}_i$ as $S_i^c$; we thus calculate the distance between local reduced presumed surrogate state $\operatorname{Tr}_{\bar S_i}\tilde{\rho}_{\mathrm{surr}}$ and the local reduced target state $\operatorname{Tr}_{\bar S_i}|\psi_{\mathrm{GS}}\rangle\langle \psi_{\mathrm{GS}}|$, quantified by the expectation value of local operator $\tilde{\Pi}^{\mathrm{surr}}_i$. Our measure for the distance is thus \textit{local}.

Next, let us generalize this bound to a degenerate non-steerable GS manifold. In that case (and, again, assuming the necessary Condition ~\ref{cond:2} is violated), a lower bound for the distance from the degenerate GS manifold can be derived as well, denoted as $p(\Pi_{\mathrm{GS}})$. Similarly, for a mixed presumed surrogate state $\tilde{\rho}_{\mathrm{surr}}$, there exists a set of local trivial SCQs $\{\tilde{\Pi}^{\mathrm{surr}}_i\}$, such that $\tilde{\Pi}^{\mathrm{surr}}_i \tilde{\rho}_{\mathrm{surr}}\tilde{\Pi}^{\mathrm{surr}}_i=\tilde{\rho}_{\mathrm{surr}}$.
The probability for a presumed surrogate state $\tilde{\rho}_{\mathrm{surr}}$ to stay in the GS manifold is upper bounded by
\begin{align}
\label{eq: upper bound degenerate}
    \operatorname{Tr}\tilde{\rho}_{\mathrm{surr}} \Pi_{\mathrm{GS}} &= \operatorname{Tr}\tilde{\Pi}^{\mathrm{surr}}_i\tilde{\rho}_{\mathrm{surr}}\tilde{\Pi}^{\mathrm{surr}}_i \Pi_{\mathrm{GS}}\nonumber\\
    &=\operatorname{Tr}\tilde{\rho}_{\mathrm{surr}}\tilde{\Pi}^{\mathrm{surr}}_i \Pi_{\mathrm{GS}}\tilde{\Pi}^{\mathrm{surr}}_i\nonumber\\
    & \leq 1-p(\Pi_{\mathrm{GS}}),
\end{align}
where $1-p(\Pi_{\mathrm{GS}})=\sup_{\tilde{\Pi}^{\mathrm{surr}}_i}\lambda(\tilde{\Pi}^{\mathrm{surr}}_i \Pi_{\mathrm{GS}}\tilde{\Pi}^{\mathrm{surr}}_i)$ is maximized over all possible projection operators $\{\tilde{\Pi}^{\mathrm{surr}}_i\}$ defined on all possible local regions $S_i$, and $\lambda(M)$ is the largest eigenvalue of the operator $M$. The last inequality in Eq.~\eqref{eq: upper bound degenerate} holds since $\tilde{\Pi}^{\mathrm{surr}}_i \Pi_{\mathrm{GS}}\tilde{\Pi}^{\mathrm{surr}}_i$ is a Hermitian operator with spectrum no larger then $1-p(\Pi_{\mathrm{GS}})$. Since there is no trivial SCQ in any subspace of GS manifold, there exists no GS $|\psi_{\mathrm{GS}}\rangle$ satisfying $\tilde{\Pi}^{\mathrm{surr}}_i|\psi_{\mathrm{GS}}\rangle\langle \psi_{\mathrm{GS}}|\tilde{\Pi}^{\mathrm{surr}}_i=|\psi_{\mathrm{GS}}\rangle\langle\psi_{\mathrm{GS}}|$. In other words, there does not exist an eigenstate of $\tilde{\Pi}^{\mathrm{surr}}_i \Pi_{\mathrm{GS}}\tilde{\Pi}^{\mathrm{surr}}_i$ with eigenvalue $1$, hence the above inequality is strictly less than $1$. If we further take $\eta\in \mathcal{H}_{\mathrm{GS}}$ as am arbitrary GS of $H$, then the fidelity of presumed surrogate state $\tilde{\rho}_{\mathrm{surr}}$ with respect to GS $\eta$ is
\begin{align}
    \mathcal{F}(\eta,\tilde{\rho}_{\mathrm{surr}})&=\mathcal{F}(\eta,\tilde{\Pi}^{\mathrm{surr}}\tilde{\rho}_{\mathrm{surr}}\tilde{\Pi}^{\mathrm{surr}})\nonumber\\
    &=\mathcal{F}(\tilde{\Pi}^{\mathrm{surr}}\eta\tilde{\Pi}^{\mathrm{surr}},\tilde{\rho_{\mathrm{surr}}})\nonumber\\
    &\leq \operatorname{Tr}\tilde{\Pi}^{\mathrm{surr}}\eta\tilde{\Pi}^{\mathrm{surr}}\nonumber\\
    & \leq 1-p(\Pi_{\mathrm{GS}}),
    \label{eq:upper bound fidelity}
\end{align}
where we have used the fact that $\mathcal{F}(\rho,\sigma)\leq 1$ for $\rho,\sigma$ with trace 1, and $\tilde{\Pi}^{\mathrm{surr}}\eta\tilde{\Pi}^{\mathrm{surr}}$ might not be normalized, i.e. $\operatorname{Tr}\tilde{\Pi}^{\mathrm{surr}}\eta\tilde{\Pi}^{\mathrm{surr}}<1$. Note that the upper bound of Eq.~\eqref{eq:upper bound fidelity} (the r.h.s of the inequality) usually appears as inequality. This is due to the fact that the $\tilde{\rho}_{\mathrm{surr}}$ is associated with the class $\overline{C+D}$, hence it may or may not be steerable, cf. Sec.~\ref{sec: Approximate steering and surrogate states} and Fig.~\ref{fig:distances}. We note that, in general, if we consider a specific state in the degenerate GS manifold as target state, the upper bound on the fidelity will be lower than the one stated in Eq.~\eqref{eq:upper bound fidelity}. Conversely, if we consider combining multiple states as GS manifold, the upper bound on the fidelity may be higher. To demonstrate the latter statement, consider GHZ states $|GHZ_\pm\rangle=(|000\cdots 0\rangle\pm|111\cdots 1\rangle)/\sqrt{2}$. If only $|GHZ_+\rangle$ is considered, according to the discussion in the last section the fidelity is upper bounded by $1/2$. However, if we consider both $|GHZ_\pm\rangle$ to span the GS manifold, then the superposition of these two states, $|000\cdots 0\rangle$, is steerable, implying the upper bound on fidelity to be $1$.

Noteworthy, our derivation above invokes only the necessary condition for steerable states. This condition, relying on local distances (cf. discussion following Eq.~\eqref{eq:upper bound}), implies a finite distance between the "surrogate state" $\tilde{\rho}_{\mathrm{surr}}$ and the target state. We recall that the presumed surrogate state may be non-steerable, hence the calculated distance to the target state may serve only as a lower bound. Assume now that both states are pure. The non-degenerate target state is decomposed as $|\psi_{\mathrm{GS}}\rangle=\sum_i c_i |\alpha_i\rangle_S\otimes |\beta_i\rangle_{\bar S}$ and the presumed surrogate state is constructed as $|\tilde{\psi}_{\mathrm{surr}}\rangle=\sum_{i\neq 0} c_i/\sqrt{1-|c_0|^2} |\alpha_i\rangle_S\otimes |\beta_i\rangle_{\bar S}$, as discussed above. The global fidelity defined in Eq.~\eqref{eq:fidelity non-deg GS}, which may serve as a measure of the distance between the presumed surrogate state and the target state, also has its local counterpart. For the latter one employs local density matrix as follows:
\begin{align}
    &\mathcal{F}(\operatorname{Tr}_{{\bar S}} |\psi_{\mathrm{GS}}\rangle\langle \psi_{\mathrm{GS}}|,\operatorname{Tr}_{{\bar S}} |\tilde{\psi}_{\mathrm{surr}}\rangle\langle \tilde{\psi}_{\mathrm{surr}}|)\nonumber\\
    =&\Bigg(\operatorname{Tr}\sqrt{\left(\sum_{i\neq 0}|c_i|/\sqrt{1-|c_0|^2}|\alpha_i\rangle\langle \alpha_i|\right)}\nonumber\\
    &\times \sqrt{\left(\sum_{i\neq}|c_i|^2|\alpha_i\rangle\langle \alpha_i|\right)}\nonumber\\
    &\times \sqrt{\left(\sum_{i\neq 0}|c_i|/\sqrt{1-|c_0|^2}|\alpha_i\rangle\langle \alpha_i|\right)}\Bigg)^2\nonumber\\
    =&\left(\operatorname{Tr}\sqrt{\sum_{i\neq 0}\frac{|c_i|^4}{1-|c_0|^2}|\alpha_i\rangle\langle \alpha_i|}\right)^2\nonumber\\
    =&1-|c_0|^2,
\end{align}
which is exactly the same as the upper bound on the global fidelity, which is derived in Eq.~\eqref{eq:upper bound}.

Naively one could employ local operators to steer each system's segment towards the local RDM of the target state. Such a procedure raises two difficulties: first, steering operations in neighboring segments may be non-commuting, hence interfering with each other. Second, even if we achieve the correct local RDMs by direct projecting (which, in general, represent mixed states), we lose the coherence which underlie global entanglement. Importantly, for the case of a nondegenerate GS, the latter will always be excited by an arbitrary local non-trivial operation: such a GS {\it cannot} be stabilized by an arbitrary local operation. This implies  a finite distance between the apparent surrogate state  and $|\psi_{\mathrm{GS}}\rangle\langle \psi_{\mathrm{GS}}|$ (cf. Appendix~\ref{sec:Non-steerability for states with full-rank reduced states}).

So far, we have employed the notion of "global" vs. "local" to characterize our approach to assessing the distance from the target state. We note though that these terms, "local" and "global" might also appear in another context: the range of the steering operators. The latter usually comprises a product of a detector operator and a combination of system operators that span $N$ degrees of freedom. We now focus on this notion of "local" and "global" for steering system operators. To bridge between local and global steering, consider non-local operations that span an ever-increasing number of system's degrees of freedom, namely $3$-body steering operations, $4$-body steering operations, etc. As the degrees of freedom of control increases, the efficacy of steering should also increase. It is manifested by the upper bound $p(|\psi_{\mathrm{GS}}\rangle)$: when the size of the subsystem attached to the detector, $S$ (cf. Eq.~\eqref{eq:schmidt decomposition}) increases, generally the Schmidt coefficients decrease and hence $p(|\psi_{\mathrm{GS}}\rangle)$ decreases.
Particularly interesting is the limiting case, where a steering detector is coupled to $\lfloor N/2\rfloor+1$ degrees of freedom of the system. In this case, addressing the Schmidt decomposition of $|\psi_{\mathrm{GS}}\rangle$, and noting that the dimension of region $S$ is larger than its complement, $\bar S$, i.e. $dim \mathcal{H}_S>dim \mathcal{H}_{\bar S}$, the Schmidt decomposition is automatically not full-rank, implying that the smallest Schmidt coefficient is $0$, implying a trivial bound, i.e. the fidelity is not larger than 1. Since we are dealing here with a necessary condition for steerability, this trivial bound implies that the target state {\it may} be reached, but this is not guaranteed. 

In summary, the non-existence of a trivial SCQ for the ground state represents a sufficient condition for an NFFNS Hamiltonian, cf. Fig.~\ref{fig:classfication}.  This enables us to derive a lower bound on the distance between non-steerable GS and steerable states.  We stress that for the class of NFFNS Hamiltonians discussed here, local operations cannot cool these Hamiltonians to the GS. From a different perspective, in the following we will derive a lower bound for the system's temperature from the global measure (the "glass floor"), assigning a direct physical meaning to our limited ability to implement optimal steering.

\subsection{Lower bound on the energy}\label{sec:Lower bound on the energy}

In the derivation above, we have seen that there exists an upper bound on the population in the non-steerable (degenerate) GS manifold, $1- p(\Pi_{\mathrm{GS}})$, cf. Eq.~\eqref{eq:upper bound fidelity}. Equivalently, it implies a lower bound on the achievable energy of the system. In the best case, one would have a population of $1-p(\Pi_{\mathrm{GS}})$ in the GS manifold with energy $E_{\mathrm{GS}}$, and a population of $p(\Pi_{\mathrm{GS}})$ in the first excited state with energy $E_{\mathrm{GS}}+\operatorname{gap}[H]$ (here $\operatorname{gap}[H]$ is the energy gap of the spectrum of $H$, between the GS and the first excited state). In this situation, the lower bound on energy expectation value is calculated as follows:
\begin{align}
\label{eq: energy}
    \langle H\rangle \ge& (1-p(\Pi_{\mathrm{GS}}))E_{\mathrm{GS}}+p(\Pi_{\mathrm{GS}})(E_{\mathrm{GS}}+\operatorname{gap}[H])\nonumber\\
    =&E_{\mathrm{GS}}+p(\Pi_{\mathrm{GS}})\operatorname{gap}[H].
\end{align}
In the case of a degenerate GS manifold, we note that focusing on one specific state, our energy bound, in general, will be higher than the one stated in Eq.~\eqref{eq: energy}.  Equivalently, we can normalize this energy expectation value with the energy gap and obtain
\begin{equation}
    \frac{\langle H\rangle -E_{\mathrm{GS}}}{\operatorname{gap}[H]}\ge p(\Pi_{\mathrm{GS}}).
\end{equation}

\subsection{Derivation of the temperature lower bound}\label{sec:Derivation of the Temperature lower bound}

Above we have shown that for non-steerable states violating Condition~\ref{cond:2}, there exists an upper bound for the overlap between GS manifold and steerable states. This implies a bound on the weight of the steerable state in the GS manifold. We next use this weight to define (a bound on) the effective temperature of the presumed surrogate state, i.e. the "glass floor". 

Trying to assign a direct physical interpretation to the bound on the distance to the GS, we define an effective temperature of the presumed surrogate state of our steering protocol, $\tilde{\rho}_{\mathrm{surr}}$. Our working approximation is to replace the presumed surrogate state by a thermal state, with thermal Boltzmann weights of the low-lying states. By Eq.~\eqref{eq: upper bound degenerate} the weight of the GS in this thermal state does not exceed $1-p(\Pi_{\mathrm{GS}})$. This implies a lower bound on the temperature of the asymptotic running state, which is a steerable surrogate state, namely
\begin{align}
\label{eq:temp}
    1-p(\Pi_{\mathrm{GS}})&\geq \operatorname{Tr}(\tilde{\rho}_{\mathrm{surr}} \Pi_{\mathrm{GS}})\ge \operatorname{Tr}(\rho_{\mathrm{surr}} \Pi_{\mathrm{GS}})\nonumber\\
    &= \frac{\operatorname{deg}(H) e^{-\beta E_{\mathrm{GS}}}}{Z(\beta)},
\end{align}
where $\operatorname{deg}(H)$ is GS degeneracy of $H$, $\beta=1/T_{\mathrm{eff,min}}$, referring to the minimal effective temperature that can be reached through our steering protocol; $E_{\mathrm{GS}}$ is the GS energy of  $H$ and $Z(\beta)=\operatorname{Tr}e^{-\beta H}$ is the partition function. 
For systems with an additional symmetry (implying that both the system’s Hamiltonian and the steering dynamics satisfy this symmetry), the fit to Boltzmann weights may be replaced by modified Gibbs distributions. For example, imposing particle number conservation, the Boltzmann distribution should be replaced by the canonical partition function. Note that the so-defined effective temperature can be smaller than the gap to the system's lowest excited state, cf. the examples in Sec.~\ref{sec:Implementations of NFFNS model}. We also note that, in general, if we consider a specific state in the degenerate GS manifold, the lower bound on the temperature will be higher than the one stated in Eq.~\eqref{eq:temp}.

\section{Examples of the NFFNS class: the "glass floor"}\label{sec:Implementations of NFFNS model}

Our discussion of Non-Frustration-Free Non-Steerable (NFFNS) models has brought forward a couple of conceptually significant quantities: the emergence of a presumed surrogate  target state and  an upper bound on its overlap with the true ground state, and then an effective minimal temperature that signifies the distance from being at the the ground state (the "glass floor"). 
Furthermore, models underlain with randomness (SYK model) may be characterized by {\it distributions}  of these quantities. To demonstrate our ideas  we focus here on three paradigmatic models, which fall in the class of NFFNS Hamiltonians. Some technical details of the results presented here are discussed in the Appendix~\ref{sec:Implementation of Examples}. 

\subsection{Anti-ferromagnetic Heisenberg model}\label{sec:Anti-ferromagnetic Heisenberg model}

As a first example, we explore the \textit{\textbf{One-dimensional Anti-Ferromagnetic Heisenberg model}}. The Hamiltonian is defined as follows:
\begin{equation}
H_{\mathrm{AFH}} = \sum_i S_x^i S_x^{i+1} + S_y^i S_y^{i+1} + S_z^i S_z^{i+1},
\end{equation}
where $S_{x,y,z}^i$ represent the spin operators for the $i$-th spin, and we consider periodic boundary conditions. The GS manifold is 4-fold degenerate when the spin $s$ is a half-integer, and the number of spins is odd; otherwise, it is non-degenerate.

The non-steerability of this model can be understood as follows. Note that this model possesses a global $SO(3)$ symmetry, i.e., there exists a family of $U = \bigotimes_{i} U_i$, and $UHU^\dagger = H$, where each $U_i$ is the unitary $SO(3)$ rotation for the $i$-th spin. Consequently, for an SCQ (cf. Sec.~\ref{sec:Identifying an FFS Hamiltonian}, Def.~\ref{def:SCQ}) $\mathcal{A}$, $U \mathcal{A} U^\dagger$ is also an SCQ, since $\Pi_{\mathrm{GS}}[H, U\mathcal{A} U^\dagger] \Pi_{\mathrm{GS}} = U\Pi_{\mathrm{GS}}[H, \mathcal{A} ] \Pi_{\mathrm{GS}}U^\dagger = 0$ and $[\Pi, U\mathcal{A}U^\dagger] = U[\Pi, \mathcal{A}]U^\dagger = 0$, cf. Def.~\ref{def:SCQ}. Thus, if there exists an SCQ $\mathcal{A}$, it satisfies either (I) $U \mathcal{A} U^\dagger = \mathcal{A}$ for all $SO(3)$ unitary $U$, or (ii) $U \mathcal{A} U^\dagger \neq \mathcal{A}$ is another SCQ. The former case implies that $\mathcal{A}$, which is a local operator, commutes with the local 'total' angular momentum, e.g., $\boldsymbol{S}_1+\boldsymbol{S}_2$. We note that such commutation relation would imply that projected to the GS subspace, local total angular momenta commutes with the Hamiltonian. This would imply that the  GS become a product state, which is easily shown not to be the case. For the scenario (ii), one also expects the absence of local SCQ; otherwise, the presence of trivial SCQ like $\Pi\rho\Pi=0$, and hence the transformed ones $(U\Pi U^\dagger) \rho (U\Pi U^\dagger)=0$, implies the GS does not exist, i.e., $\rho=0$. For instance, if $|0\rangle_i {}_i\langle 0|$ is an SCQ for $i$th spin-1/2, then $S_x^i |0\rangle_i {}_i\langle 0|S_x^i=|1\rangle_i {}_i\langle 1|$ is also an SCQ. In this case, $|0\rangle_i {}_i\langle 0|\rho |0\rangle_i {}_i\langle 0|=|1\rangle_i {}_i\langle 1|\rho |1\rangle_i {}_i\langle 1|=0$ implies $\rho=0$. The absence of a SCQ implies the violation of Condition~\ref{cond:2}, hence it is non-steerable. The further discussion is presented in Appendix~\ref{sec:Non-steerability for states with full-rank reduced states}.

Based on these considerations, we proceed to calculate the upper bound, $1-p(\Pi_{\mathrm{GS}})$, on the fidelity between the family of states in the  non-steerable GS manifold, and the  family of the respective surrogate states (cf. Eq.~\eqref{eq:upper bound fidelity}).  Subsequently we are able to calculate the lower bound on the energy expectation value of surrogate states (cf. Sec.~\ref{sec:Lower bound on the energy}), and the effective dimensionless temperature $T_{\mathrm{eff,min}}/\operatorname{gap}[H_{\mathrm{AFH}}]$ (cf. Sec.~\ref{sec:Derivation of the Temperature lower bound}, and here $\operatorname{gap}[H]$ is the energy gap of the spectrum of $H$, between the GS and the first excited state). Here we for the computation of $p(\Pi_{\mathrm{GS}})$ we use a two-spin reduced density matrix. In other words, we consider the steering operators are coupled to two neighboring spins of the Heisenberg chain. These quantities are depicted in Fig.~\ref{fig:AFH}, where $N$ is the number of spins and $s=1/2,1,3/2,2$ is the spin. It is evident that (i) The non-vanishing value of $T_{\mathrm{eff,min}}$, which reflects the degree of frustration, tends to converge to a constant value as $N$ grows, and (ii) There is a clear odd-even effect.

\begin{figure}[htbp]
    \centering
      \begin{subfigure}[b]{0.4\textwidth}
         \centering
         \includegraphics[width=\textwidth]{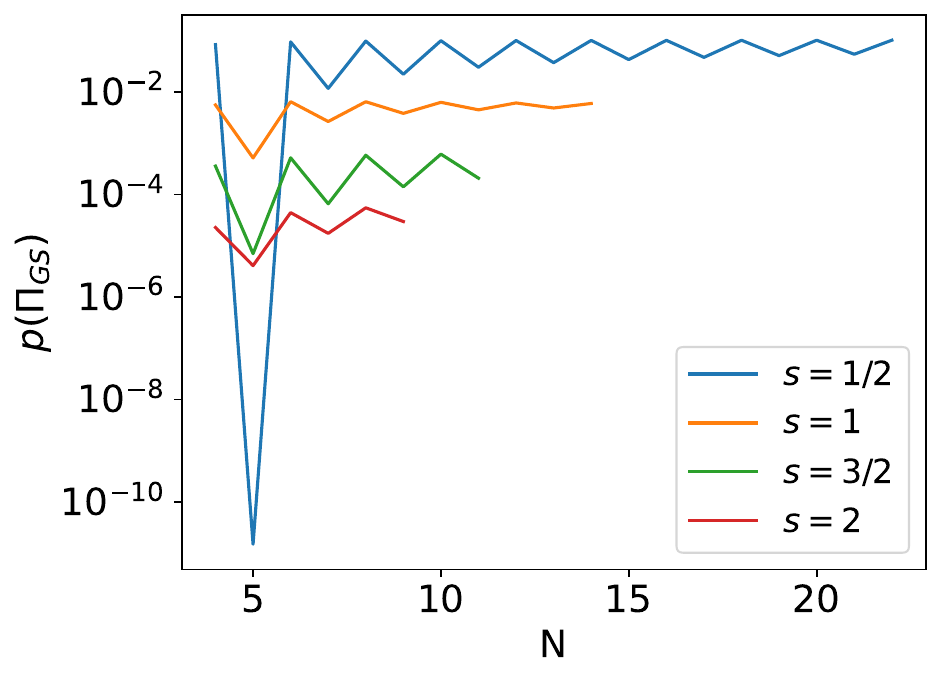}
         \caption{}
         \label{fig:AFH_fied}
     \end{subfigure}
     \hfill
      \begin{subfigure}[b]{0.4\textwidth}
         \centering
         \includegraphics[width=\textwidth]{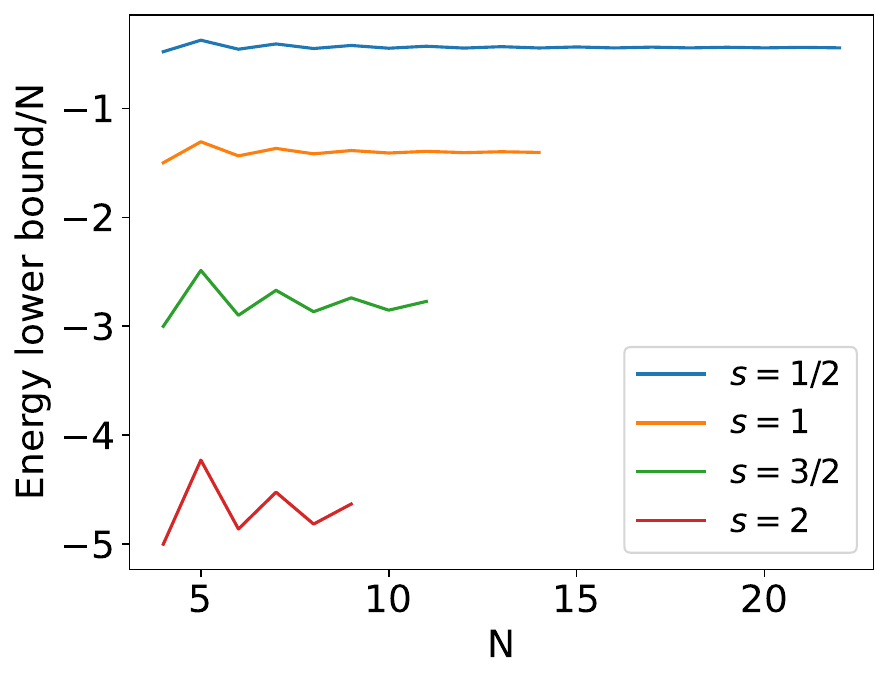}
         \caption{}
         \label{fig:AFH_ene}
     \end{subfigure}
     \hfill
      \begin{subfigure}[b]{0.4\textwidth}
         \centering
         \includegraphics[width=\textwidth]{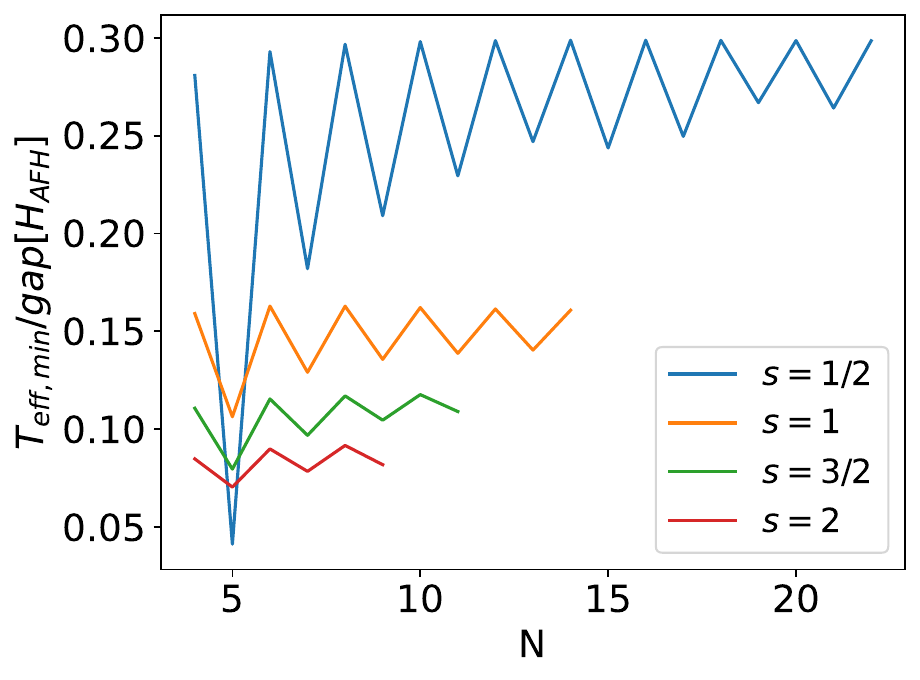}
         \caption{}
         \label{fig:AFH_temp}
     \end{subfigure}
    \caption{\textbf{Anti-ferromagnetic Heisenberg} model with $N$-long chains (periodic boundary conditions) of spin $s=1/2,1,3/2,2$. (a) The function $p(\Pi_{\mathrm{GS}})$, which is 1 minus the fidelity upper bound (cf. Eq.~\eqref{eq:upper bound fidelity}), calculated with the GS (manifold) of the above models. (b) Lower bound on the energy expectation value per spin of the asymptotic state. (c) Effective temperature lower bound $T_{\mathrm{eff,min}}/\operatorname{gap}[H_{\mathrm{AFH}}]$ calculated for the GS (manifold) of the models above.}
    \label{fig:AFH}
\end{figure}

\subsection{SYK model}\label{sec: SYK model}

\subsubsection{Dirac Fermion SYK model}\label{sec:Dirac Fermion SYK model}

 The Dirac Fermion version of the \textit{\textbf{SYK model}}~\cite{Kiteav15,PhysRevB.94.035135}  is given by the Hamiltonian 
\begin{equation}
H_{\mathrm{SYK}}^{\mathrm{D}}=\frac{1}{(2N)^{3/2}}\sum_{i,j,k,l=1}^N J_{ij;kl} c^\dagger_i c^\dagger_j c_k c_l - \mu \sum_i c^\dagger_i c_i,
\end{equation}
where $c_i$($c_i^\dagger$) run over $N$ Dirac fermion modes, $\mu$ is the  chemical potential, and $\{J_{ij;kl}\}$ are complex Gaussian random couplings. This model supports no local SCQ (see also Appendix~\ref{Discussions about non-steerable case}). This is to be expected since the SYK model is a standard platform to model quantum chaos; one then expects no preservation of local information, hence no local SCQ. In more detail, if there exists an SCQ, then under unitary evolution and with GS as the initial state, the expectation value of the SCQ is a constant over time. This would be in contradiction to the quantum chaos nature of the SYK ground state. The absence of SCQ immediately implies Non-Steerability (cf.  Condition~\ref{cond:2}, Sec.~\ref{sec:Necessary conditions for NFFJS Hamiltonian}). 
We next compute the minimal effective temperature, $T_{\mathrm{eff,min}}$ by calculating $p(|\psi_{\mathrm{GS}}\rangle)$ (cf. Eq.~\eqref{eq:upper bound}) for all $m$-body RDMs ($1\le m\le \lfloor N/2\rfloor$) obtained from the GS $|\psi_{\mathrm{GS}}\rangle$, where $N$ is the number of Fermion modes. We expect the GS to be non-degenerate, although our discussion holds for degenerate GS as well.  We note that the more local the steering operators are (hence the lower rank the RDM is), the smaller the bound $1-p(|\psi_{\mathrm{GS}}\rangle)$ (cf. Eq.~\eqref{eq:upper bound}) on how close we can get to the GS is. 
Roughly speaking, the bound on the minimal effective temperature reachable, $T_{\mathrm{eff,min}}$, depends on the rank of the steering operator(s)  (i.e. on $m$, where the $m+1$ body steering operator involves $m$ system's degrees-of-freedom and one ancilla). The specific form of the steering operators will determine only how close to that bound we can actually get.  

Fig.~\ref{fig:SYK}  depicts simulations of our SYK systems of varying sizes. For each value of $N$ we have considered $10^2$--$10^4$ realizations of $\{J_{ij,kl}\}$; for each realization we computed all possible RDMs of GS involving $ m=2 $ or $ \lfloor N/2\rfloor $ degrees of freedom, which would be applied with steering operators. In other words, we investigate both notions of locality as discussed in Sec.~\ref{sec:Definition of stabilizability}. We note that for smaller rank steering operators $T_{\mathrm{eff,min}}$ increases, and it likewise increases with $N$ when $m=2$. We note that for any given $N$ the distribution of $T_{\mathrm{eff,min}}$ as well as the distribution of the lowest excitation gap, $\operatorname{gap}[H_{\mathrm{SYK}}^{\mathrm{D}}]$, are non-Gaussian. Interestingly, for $m=\lfloor N/2 \rfloor $, the dimensionless ratio $T_{\mathrm{eff,min}}/\operatorname{gap}[H_{\mathrm{SYK}}^{\mathrm{D}}]$ is non-zero but Gaussian. This implies that the exact GS is not accessible even when each detector is coupled to more than half the system's degrees-of-freedom.  Even more intriguing: the variance of these dimensionless Gaussian distributions is  $\sim 1/N^2$, and the average scales as  $\sim 1/N$, meaning that our ability to cool this NFFNS model down towards the GS potentially improves with $N$. In other words, the bounds on overlap of the presumed surrogate state with the target state improve with $N$, i.e.   
\begin{equation}
\label{eq:SYKscaling}
 1-\langle\psi_{\mathrm{GS}} | \tilde{\rho}_{\mathrm{surr}}|\psi_{\mathrm{GS}}\rangle  \ge e^{-O(N)}.
\end{equation}
This scaling of Eq.~\eqref{eq:SYKscaling} is numerically obtained by performing finite size scaling with the average over disorder realizations. This is a reasonable result since our steering power, represented by $m$, increases with $N$. On the other hand, for $m=2$, we note that $T_{\mathrm{eff,min}}/\operatorname{gap}[H_{\mathrm{SYK}}^{\mathrm{D}}]$ increases with $N$, as follows
$$
1-\langle\psi_{\mathrm{GS}} | \tilde{\rho}_{\mathrm{surr}}|\psi_{\mathrm{GS}}\rangle  \ge \frac{1}{4}-O(1/N).
$$
It follows that as $N$ increases, this model becomes more chaotic. In other words, as we go to the thermodynamic limit, the two-body RDM is closer and closer to the maximally mixed state. In that limit it is not possible to distinguish the global GS from a generic random quantum state employing the Haar measure ~\cite{zyczkowski2001induced}. Therefore locally it is not possible to design any state-specific steering operation. 

\begin{figure}
    \centering
    \includegraphics[width=0.4\textwidth]{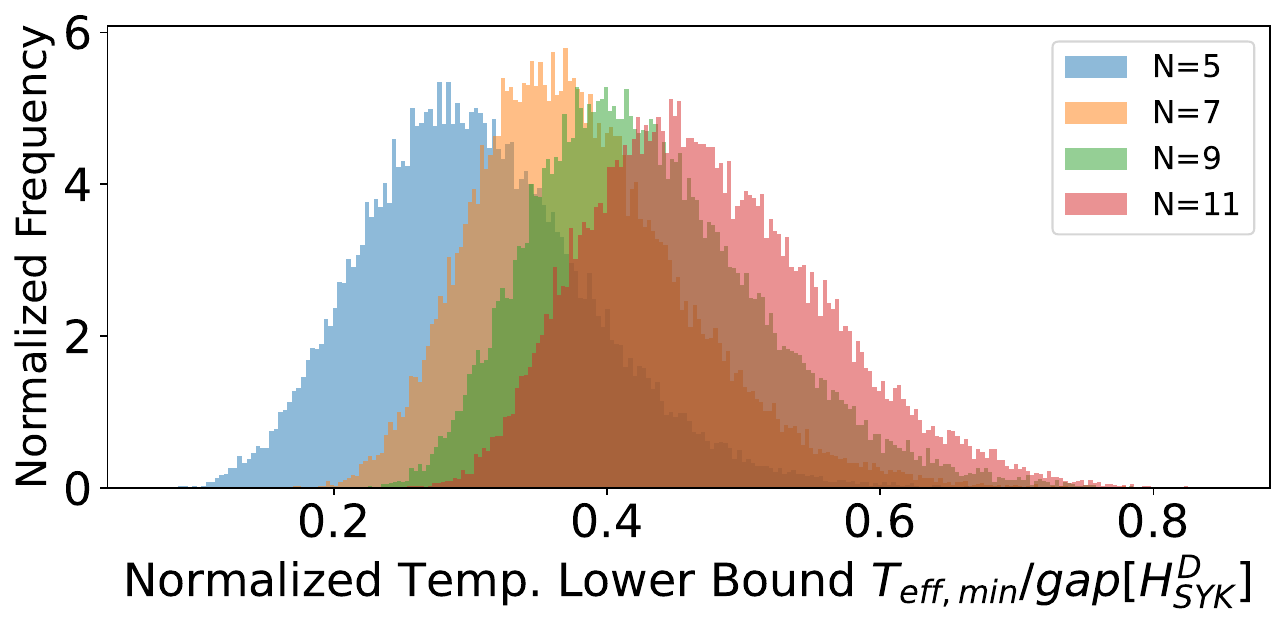}

    \caption{{\it Temperature lower bound for short-range system-detector interaction, $m=2$.} Shown are normalized histograms of the lower bound of the normalized effective temperature $T_{\mathrm{eff,min}}/\operatorname{gap}[H^{\mathrm{D}}_{\mathrm{SYK}}]$ computed for SYK model for Dirac fermions, for different disorder configurations (Eq.~\eqref{eq:temp}), and for different numbers of fermionic modes $N$. Here the occupation (number of Dirac fermions) is  $\lfloor N/2 \rfloor$.  Short-range steering operators consist of the detector's degree of freedom and  $m=2$ fermionic degrees of freedom, cf. Sec.~\ref{sec:Dirac Fermion SYK model}. }
    \label{fig:SYK2}
\end{figure}

\begin{figure}
    \centering
      \begin{subfigure}[b]{0.4\textwidth}
         \centering
         \includegraphics[width=\textwidth]{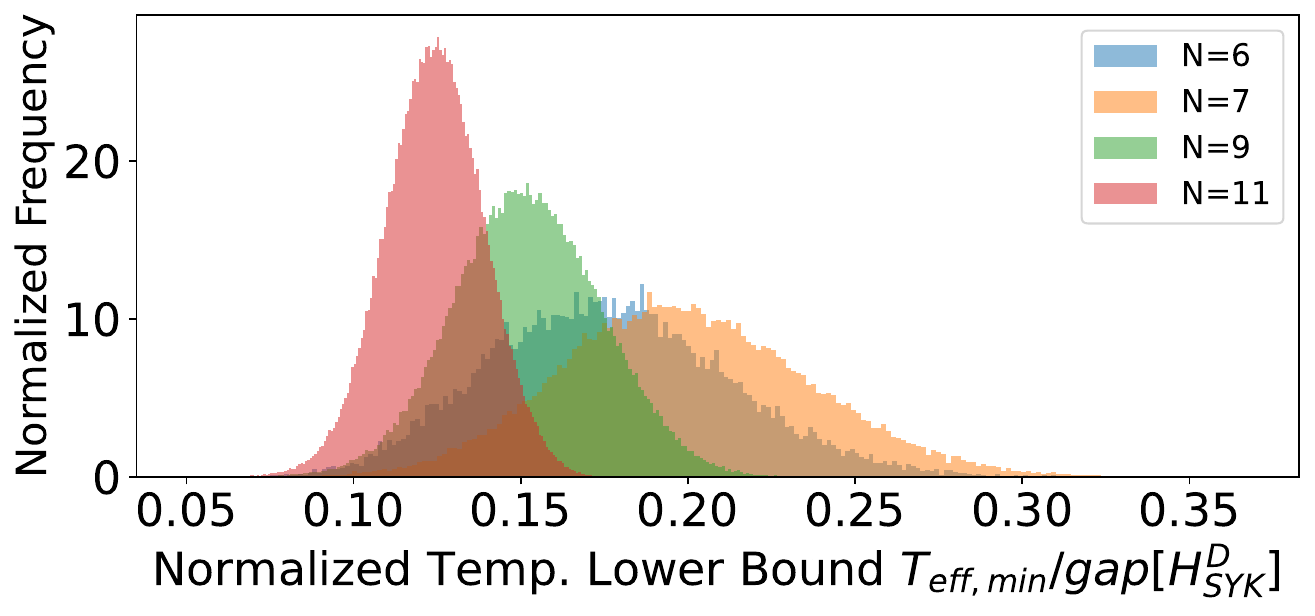}
         \caption{}
         \label{fig:SYK-Tgap}
     \end{subfigure}
     \hfill
      \begin{subfigure}[b]{0.4\textwidth}
         \centering
         \includegraphics[width=\textwidth]{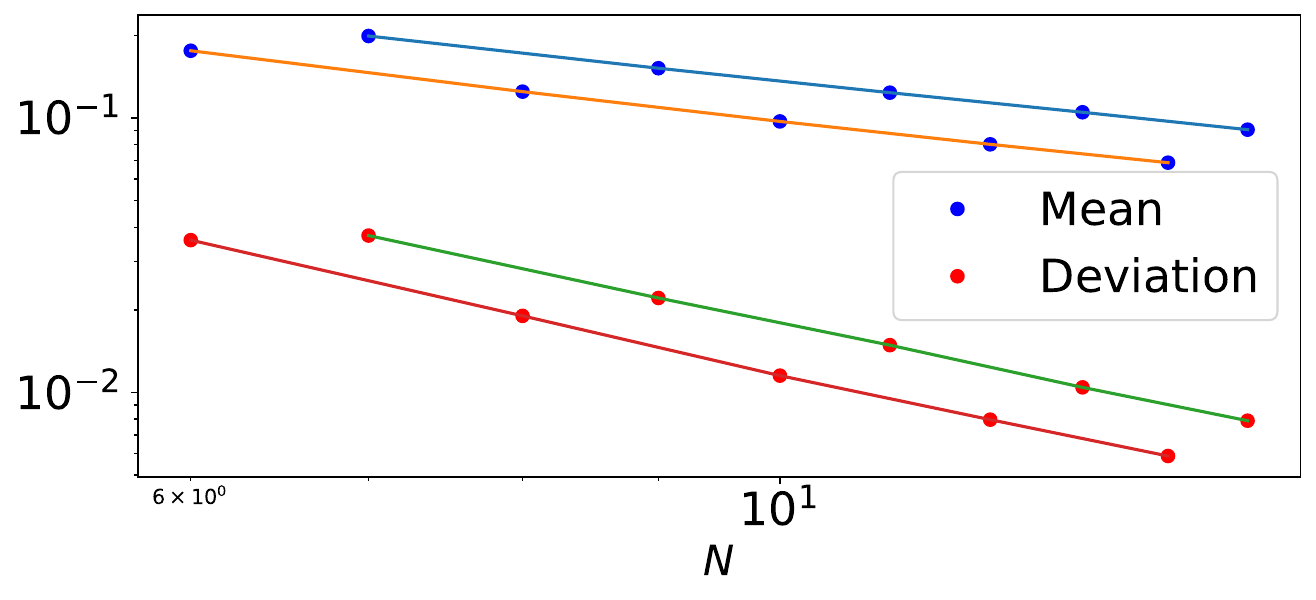}
         \caption{}
         \label{fig:SYK-mean}
     \end{subfigure}
    \caption{\textit{Temperature lower bound for long-range system-detector interaction, $m=\lfloor N/2\rfloor$.} (a) Normalized histograms of  $T_{\mathrm{eff,min}}/\operatorname{gap}[H_{\mathrm{SYK}}^{\mathrm{D}}]$ for Dirac-Fermionic SYK model with different numbers of fermion modes $N$. Note that it behaves like Gaussian distributions. (b) Mean and standard deviation for the Gaussian-like distributions in (a). Note the mean and standard deviation of distributions with odd $N$ is larger than that with even $N$. Here the occupation (number of Dirac fermions) is  $\lfloor N/2 \rfloor$.  Long-range steering operators consist of the detector's degree of freedom and  $\lfloor N/2\rfloor$ fermionic degrees of freedom, cf. Sec.~\ref{sec:Dirac Fermion SYK model}.}
    \label{fig:SYK}
\end{figure}

\subsubsection{Majorana Fermion SYK model }

The \textit{\textbf{Majorana Fermionic SYK Model}} ~\cite{SYKPhysRevLett.70.3339,Kiteav15} is represented by the Hamiltonian 
\begin{equation}
H_{\mathrm{SYK}}^{\mathrm{M}}=\frac{1}{(N)^{3/2}}\sum_{i,j,k,l=1}^{2N} J_{ijkl} \gamma_i \gamma_j \gamma_k \gamma_l,
\end{equation}
where $\{J_{ijkl}\}$ are anti-symmetric real random Gaussian couplings. 
We next re-write this Hamiltonian with Dirac fermion operators, employing the conversion $\gamma_{2j}=c_j+c^\dagger_j$ and $\gamma_{2j-1}=-i(c_j-c^\dagger_j)$, where $j=1,2,\cdots, N$ and $\{c_j\}$ are $N$ Dirac fermion annihilation operators. The resulting Hamiltonian is particle non-conserving but parity-conserving.

We then follow the line of our Dirac fermion SYK model, computing $m$-body RDMs and then deriving the $T_{\mathrm{eff,min}}/\operatorname{gap}[H_{\mathrm{SYK}}^{\mathrm{M}}]$ for the superoperators acting on $m$ Dirac Fermion modes (which represents $2m$ Majorana Fermion modes). Specifically we consider $N=5,6,\cdots, 12$. The resulting distributions, depicted in Fig.~\ref{fig:MFSYK-Tgap} and ~\ref{fig:MFSYK-Tgap2}, are qualitatively similar to the ones derived for the Dirac fermion SYK case. 

\begin{figure}[htbp]
    \centering
    \includegraphics[width=0.4\textwidth]{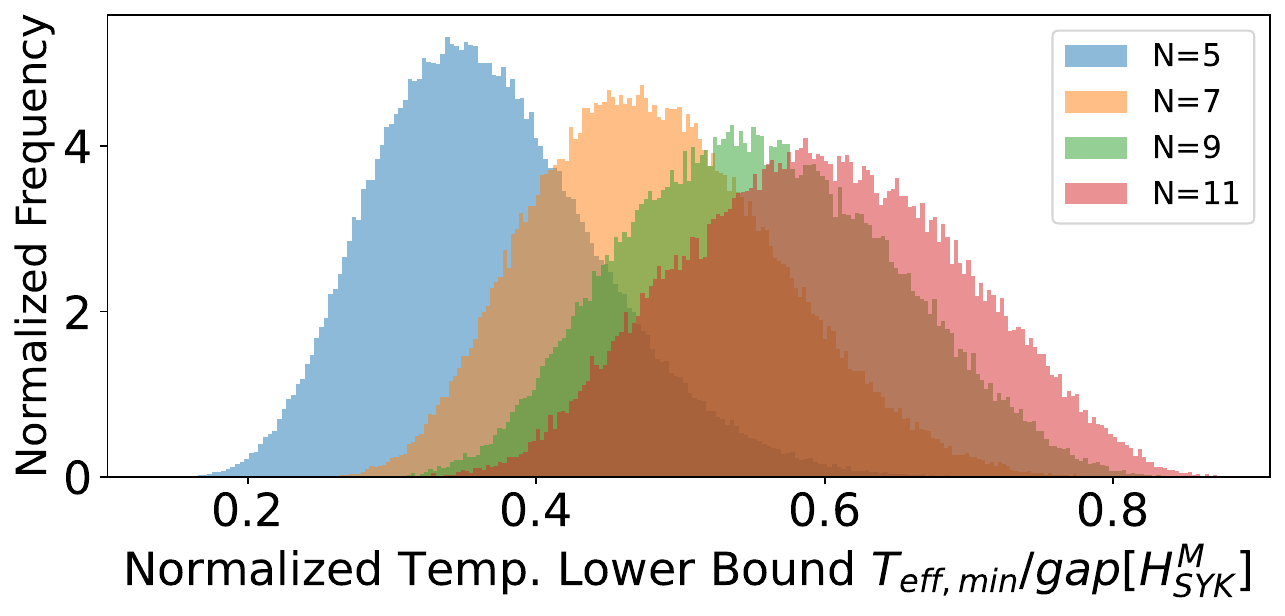}
    \caption{{\it Temperature lower bound for short-range system-detector interaction, $m=2$.} Shown are normalized histograms of the lower bound of the normalized effective temperature $T_{\mathrm{eff,min}}/\operatorname{gap}[H^{\mathrm{M}}_{\mathrm{SYK}}]$ computed for the SYK model for Majorana fermions, for different disorder configurations (Eq.~\eqref{eq:temp}), and for different numbers of fermionic modes $N$.  Short-range steering operators consist of the detector's degree of freedom and  $m=2$ fermionic degrees of freedom, cf. Sec.~\ref{sec:Dirac Fermion SYK model}.}
    \label{fig:MFSYK-Tgap2}
\end{figure}

\begin{figure}[htbp]
    \centering
      \begin{subfigure}[b]{0.4\textwidth}
         \centering
         \includegraphics[width=\textwidth]{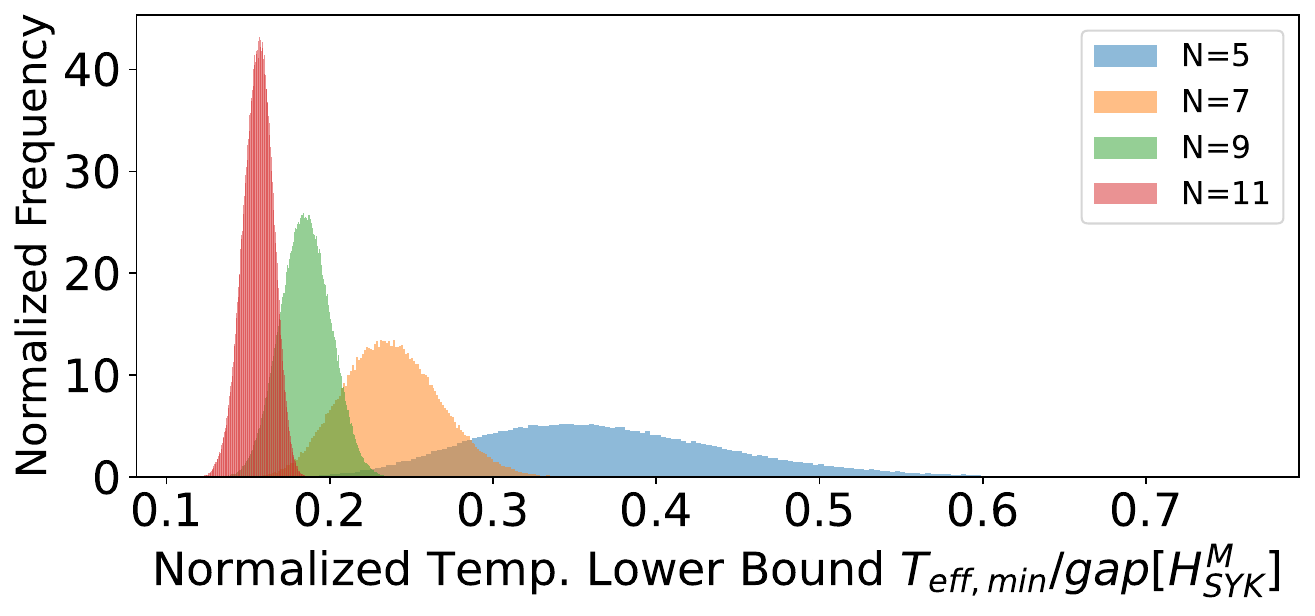}
         \caption{}
         \label{fig:MFSYK-Tgap}
     \end{subfigure}
     \hfill
      \begin{subfigure}[b]{0.4\textwidth}
         \centering
         \includegraphics[width=\textwidth]{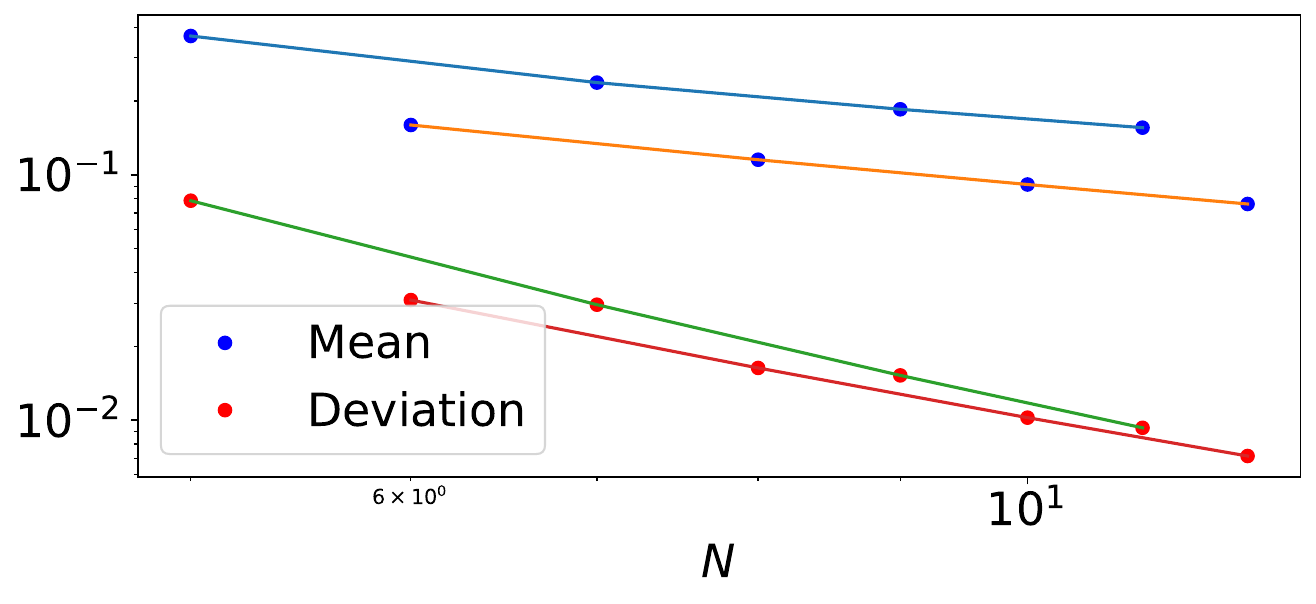}
         \caption{}
         \label{fig:MFSYK-mean}
     \end{subfigure}
    \caption{\textit{Temperature lower bound for long-range system-detector interaction, $m=\lfloor N/2\rfloor$.} (a)Normalized histograms of  $T_{\mathrm{eff,min}}/\operatorname{gap}[H_{\mathrm{SYK}}^{\mathrm{M}}]$ for Majorana-Fermionic SYK model with different numbers of fermion modes $N$. Note that it behaves like Gaussian distributions. (b) Mean and standard deviation for the Gaussian-like distributions in (a). Note the mean and standard deviation of distributions with odd $N$ is larger than that with even $N$.  Long-range steering operators consist of the detector's degree of freedom and  $\lfloor N/2\rfloor$ fermionic degrees of freedom, cf. Sec.~\ref{sec:Dirac Fermion SYK model}.}
    \label{fig:MFSYK}
\end{figure}

\subsection{Fermi-Hubbard model}\label{sec:Fermi-Hubbard model}

The \textit{\textbf{two-dimensional Fermi-Hubbard model}} ~\cite{FHRevModPhys.70.1039} is known to have concrete physical realizations~\cite{tarruell2018quantum}. Its Hamiltonian is given by
\begin{equation}
H_{\mathrm{FH}}=-t\sum_{\langle ij\rangle}(c^\dagger_{i\uparrow} c_{j\uparrow}+c^\dagger_{i\downarrow} c_{j\downarrow})+U\sum_i n_{i\uparrow}n_{i\downarrow},
\end{equation}
where $t$ is the hopping amplitude and $U$ is the on-site interaction. Here too we calculate $T_{\mathrm{eff,min}}$, with respect to RDMs computed according to dashed boxes in Fig.~\ref{fig:FHL}. To perform this computation we first need to find the ground state, employing exact diagonalization. Specifically, we compute GSs of a Fermi-Hubbard model on a  $3\times 3$ lattice with periodic boundary conditions, and different numbers of electrons, $N$. We have addressed finite-size manifestations of two macroscopic phases. (i) Super-conducting phase. Here we consider $N=8$.   We then compute $T_{\mathrm{eff,min}}$ based on the exact GS wavefunctions; $T_{\mathrm{eff,min}}$ turns out to be of order $0.2 \times \operatorname{gap}[H_{\mathrm{FH}}]$, cf. Fig.~\ref{fig:FHR}, which marks the lower bound achievable through our passive steering, but not necessarily a temperature that can be reached in reality. This energy turns out to be smaller than estimates of the critical temperature, $T_c$, of $d$-wave superconductor. The latter is approximated by $1/1.76\times \operatorname{gap}[H_{\mathrm{FH}}]\approx 0.57\times \operatorname{gap}[H_{\mathrm{FH}}]$~\cite{sigrist2005introduction}. We also compute $T_{\mathrm{eff,min}}$ in momentum space, cf. Fig.~\ref{fig:FHM}.  Note that there is no cap on our protocols to reach a temperature compatible with the excitation gap, as well as the thermodynamics critical temperature. Note that quantum simulation of $d$-wave superconductors has been a central focus in cold atom experiments. The difficulty to explore the superconducting phase may reflect the fact that for this unsteerable model a tighter bound of $T_{\mathrm{eff,min}}$ is set at a temperature higher than the critical temperature $T_c$.  (ii) The anti-ferromagnetic phase of $H_{\mathrm{FH}}$ with $N=9$ electrons, which acts as an example of a degenerate case, cf. Fig.~\ref{fig:FHAFM}. In this case we have a four-fold degenerate GS manifold. Since the necessary Condition~\ref{cond:3} is not satisfied, the model is NFFSS. Details of these results are presented in Appendix~\ref{sec:Implementation of Examples}. 

\begin{figure}[htbp]
    \centering
    \begin{subfigure}[b]{0.3\textwidth}
         \centering
         \include{FHlattice}
         \caption{}
         \label{fig:FHL}
     \end{subfigure}
     \hfill
     \begin{subfigure}[b]{0.31\textwidth}
         \centering
         \includegraphics[width=\textwidth]{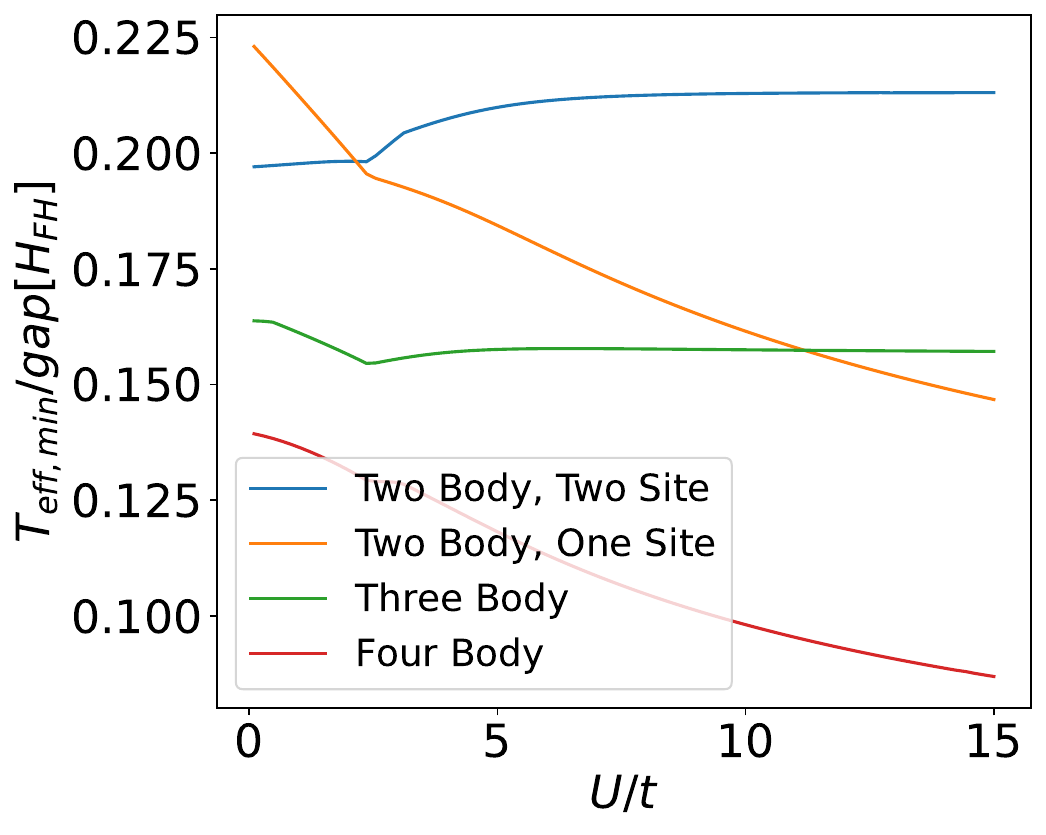}
         \caption{}
         \label{fig:FHR}
     \end{subfigure}
     \hfill
     \begin{subfigure}[b]{0.3\textwidth}
         \centering
         \includegraphics[width=\textwidth]{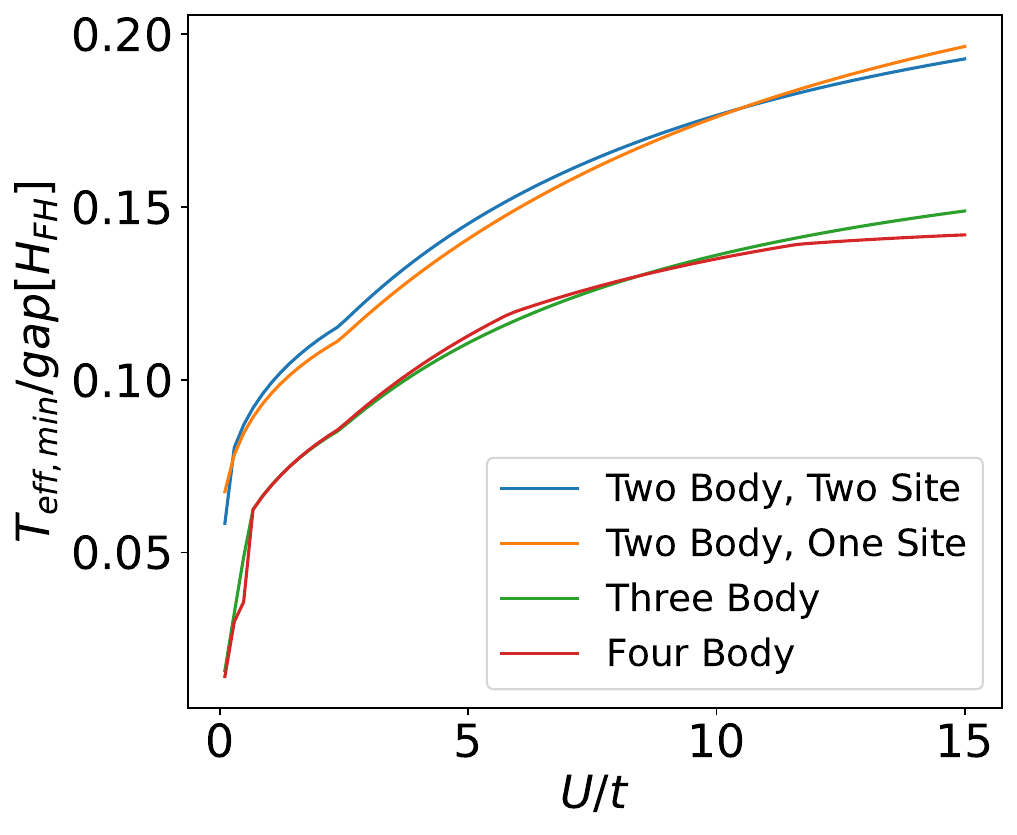}
         \caption{}
         \label{fig:FHM}
     \end{subfigure}
     \hfill
     \begin{subfigure}[b]{0.3\textwidth}
         \centering
         \includegraphics[width=\textwidth]{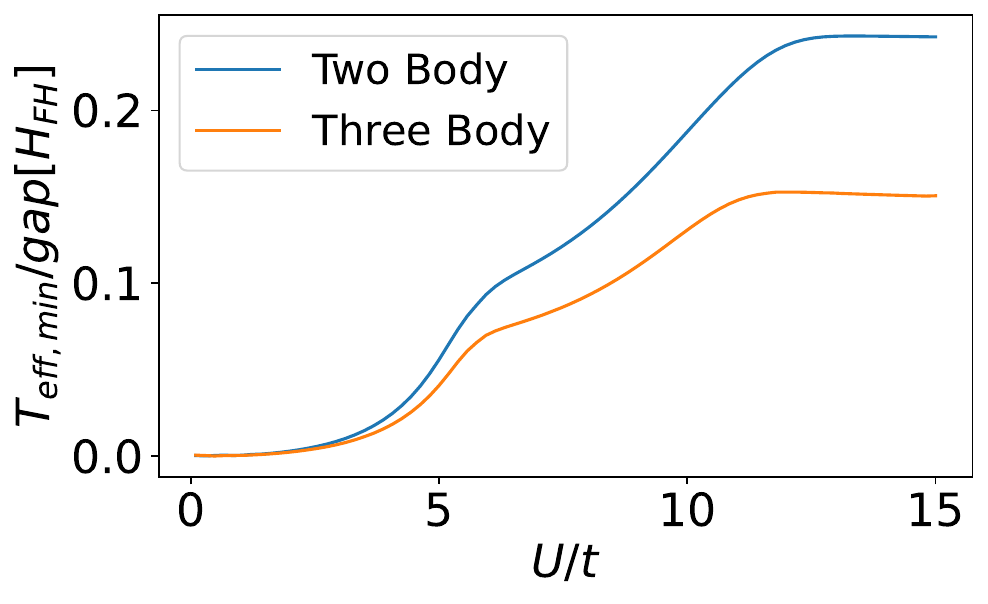}
         \caption{}
         \label{fig:FHAFM}
     \end{subfigure}
    \caption{(a)Lattice of the two dimensional Fermi-Hubbard model, showing single electron fermionic modes. Each mode may  be filled by electrons of spin up (blue) or down (red). The boxed (marked by dashed green lines) represents the support of the reduced density matrices employed for the calculation of the parameter $p(\Pi_{\mathrm{GS}})$, cf. Eq.~\eqref{eq: upper bound degenerate}.  (b) $T_{\mathrm{eff,min}}/\operatorname{gap}[H_{\mathrm{FH}}]$ for $N=8$ electrons, which (nearly) reflects the doping required  from the the superconducting phase in the thermodynamic limit. (c) $T_{\mathrm{eff,min}}/\operatorname{gap}[H_{\mathrm{FH}}]$ for $N=8$ calculated in momentum space. (d) $T_{\mathrm{eff,min}}/\operatorname{gap}[H_{\mathrm{FH}}]$ for the half-filling phase, as function of of $U/t$. In the thermodynamic limit, this corresponds to the anti-ferromagnetic phase.}
    \label{fig:FH}
\end{figure}

\section{Summary and Perspective}
\subsection{Main results and their significance}\label{sec:Summary of main contributions}

Our analysis here addresses a family of steering protocols, referred to as  passive (blind) steering of quantum states~\cite{Measurement-inducedsteeringPhysRevResearch.2.033347},  aimed at acting on an arbitrarily chosen initial state and driving ("steering") it towards a linear target space (e.g., a target state) which is pre-selected at will.  This represents a class of protocols consisting of generalized (weak) measurements, with the following provisions: (i) all measurements are local (that is, involve a finite number of the system’s degrees of freedom); (ii) each measurement involves an initialization of the detector in a prescribed state, and then coupling and decoupling of the system and the detector, and finally tracing out over the detector’s readouts; (iii) the nature of the individual measurements and their sequence are passive: they are independent of the detectors’ readouts and of the system’s running state. 

The analysis presented here comprises two layers:
\begin{itemize}
    \item We present a classification of steerable (and non-steerable) classes of states, introducing a steerable class which has been hitherto unnoticed: a Non-Frustration-Free Jittery Steerable (NFFJS) class. We derive the  necessary conditions for states belonging with this class should satisfy,  and provide a family of examples consisting of local commuting Pauli Hamiltonians. Our examples represent classically frustrated Hamiltonians. 
    \item The Non-Frustration-Free Non-Steerable (NFFNS) class of Hamiltonians does not allow steering towards the Hamiltonian’s ground state.  Instead, we ask how close to the ground state one can get through passive steering. We quantify the distance to the ground state invoking three criteria: (i) the (global) fidelity  (involving the scalar product of the asymptotically steered state (surrogate state) and the true target state), cf. Eq.~\eqref{eq:upper bound fidelity}; (ii) the "effective temperature" of the asymptotically steered state, cf. Eq.~\eqref{eq:temp}; (iii) the energy expectation value of the asymptotically steered state, cf. Eq.~\eqref{eq: energy}. For criterion (i) we provide an upper bound on the fidelity, while for (ii) we present a lower bound on the effective temperature, i.e. the "glass floor", and for (iii) we derive a lower bound on the energy. We implement our general approach on several paradigmatic examples: both the Dirac fermion and the Majorana fermion Sachdev–Ye–Kitaev (SYK) model, the Fermi-Hubbard model, and the anti-ferromagnetic Heisenberg model.
\end{itemize}

We note that criterion (ii) and (iii) are essentially equivalent (cf. Eq.~\eqref{eq: energy}). We then focus on criteria (i) and (ii) only. For the four non-steerable models considered, the effective temperature lower bound reveals the following insights:
\begin{itemize}
    \item  $T_{\mathrm{eff,min}}/\operatorname{gap}[H]$  serves as a normalized quantifier for the degree of frustration. The fact that with the lower bound on the effective temperature  this parameter may reach values much smaller than unity, does not imply that this bound is necessarily reachable. In case it does it would imply an efficient cooling protocol, carrying us to sub-gap temperatures. 
\item The degree of frustration diminishes with
increased controlling power, where steering operators involve larger subsystems (a larger number of system's degrees-of-freedom). 
\item With constant controlling power, the degree of frustration converges in the thermodynamic limit.
\end{itemize}

In summary, our discussion extends and refines the classification of Hamiltonians defining steerability of ground states. A key theme here is frustration. Revising existing knowledge, we underline the subtlety of this notion, demonstrating that for certain classes of frustration steering can be realized. Likewise, we stress that steering may not necessarily  result in a steady steerable state, but rather in a steady subspace featuring jittery dynamics. We have provided a full characterization of the Frustration-Free Steerable (FFS) and Non-Frustration-Free Steadily Steerable (NFFSS) class, along with necessary conditions for the Non-Frustration-Free Jittery Steerable (NFFJS) class, and equivalently sufficient condition for the Non-Frustration-Free Non-Steerable (NFFNS) class.  Each of these classes was supported by certain examples. 

\subsection{Discussion and Perspective}

Evidently, our analysis presents new intriguing questions, opening directions for future study. Some of these challenges are listed here:

{\it Incorporating system’s Hamiltonian.} Our present analysis of steerability and non-steerability scenarios does not include a system’s Hamiltonian; the dynamics is solely introduced through local steering operators. A hypothetical local Hamiltonian is invoked (cf. Def.~\ref{def:stabilizability}) in order to define a target state (manifold), which happens to be the GS of this local Hamiltonian. The question at focus here is whether one may employ a quantum computer (no need to introduce a system's Hamiltonian, as opposed to a quantum simulator) in order to prepare the ground state of a given Hamiltonian through passive steering.  As we have discussed, the steerability of the target GS (manifold) is intimately related to the frustration of the hypothetical Hamiltonian defining the target manifold. Given the fact that there is no system’s Hamiltonian, we implement an alternative definition of Frustration-Free property of the target state(s), i.e. the parent Hamiltonian, cf. Def.~\ref{def:pHFF}.

Incorporating a system's Hamiltonian into the dynamics, might modify, and in fact improve, our "glass floor" lower bound, see below.

\textit{Finding tighter bounds}. It is noteworthy that, in the case of non-steerable ground states,  our lower bound on the minimal normalized temperature that cannot be surpassed, $T_{\mathrm{eff,min}}/\operatorname{gap}[H]$, is a bit loose. Establishing a tighter bound, closer to the actual achievable temperature, may be facilitated through the study of specific models. Such an endeavor could reveal that certain quantum phases which we try to engineer through our steering protocols (which amounts to local cooling) are unattainable,  if the normalized temperature lower bound turns out to be higher than the phase transition temperature. Our effort might also address achieving  a tighter bound  on the  fidelity of a numerical low-rank Ansatz (such as Tensor Network States~\cite{orus2014practical}), with respect to the genuine target state.

\textit{Ergodicity}. The issue of ergodicity arises when the steering protocol does not converge to a specific target state (or a specific state within the target space). One example is the class of NFFJS states, which are shown here to be steerable (we approach the target manifold), but then proceed to hop from one state to another within this manifold. Averaging over time (or alternatively over different steering trajectories), the question is whether one covers uniformly the target space (hence “ergodicity”). Another open issue here concerns non-steerable states. Given one asymptotically approaches the glass floor, there is clearly a lack of ergodicity (the weight of high energy states vanishes). Can one quantitatively characterize the degree of non-ergodicity, and does the latter relate to the glass floor bound?

\textit{Relation of steerability to integrability}. To facilitate the classification of Hamiltonian steerability, we introduced the concept of subspace conserved quantity (SCQ). Given that a subspace conserved quantity can be interpreted as a conserved quantity within a subspace, the steerability of a Hamiltonian is closely tied to its integrability. Further investigation might unveil the deep connection between steerability and integrability.

{\it Alternative approaches to GS  engineering}. We stress that the protocols discussed  here employ \textit{local}, \textit{passive} (Markovian) and \textit{physical} operations (superoperators). Our classification of steerable and non-steerable classes of states may be drastically revised if the restriction to \textit{local passive}  steering is relaxed.  There are several paradigmatic  extensions to the protocols discussed here:   

(i) An extension to non-local steering superoperators implies that originally non-steerable states may become steerable. This non-steerable scenario is the case when the detector is coupled to a set of system operators, each covering a finite volume of the system. Importantly, non-local steering operators may emerge through the following protocol. Let us combine weak local steering with strong non-commuting system Hamiltonian~\cite{PhysRevResearch.6.033147,Josias2025prep}. The dressing of local steering with the system Hamiltonian (e.g., when the unitary time evolution between two consecutive measurements in long enough) leverages it to a non-local superoperator, or, alternatively, non-local Kraus operators describing the system's evolution at each discrete measurement step. This may eventually facilitate  the GS engineering of originally "non-steerable" states. As a comparison, our present steering protocol, focusing on local superoperators, can be thought of as the limit of finite steering with a vanishingly weak system Hamiltonian. In this limit, the Hamiltonian evolution can be Trotterized into a product of local unitaries, and then the whole dynamics can be described with local superoperators.

(ii) One can implement a series of active decision-making protocols~\cite{PRXQuantum.4.020347}, where we follow the evolution of the initial state and the readout history. Based on these (that allow the computation of the running state), we can  determine the next time step steering operator that optimizes the cost function. This facilitates steering towards states that, in our classification, are non-steerable (e.g., the W or the GHZ states)~\cite{PhysRevResearch.6.013244}.  This family, by construction, represents non-Markovian protocols. Another example of the latter are variational imaginary time evolution protocols, where the imaginary time evolution at each time step is simulated by an optimized unitary, where the optimization is based on computing the running state ~\cite{mcardle2019variational}. Finally,  we note that one can simplify the steering dynamics discussed here by resorting to {\it dilute steering} ~\cite{Langbehn2023}, where one a limited subset of the system's degrees of freedom are operated upon by the measuring detectors.

(iii) One can resort to non-trace-preserving protocols, which might be physically not feasible or inefficient. A paradigmatic example here is dynamics based on post-selection: only a certain subset of readout histories are to be included (when only "no jump" trajectories are considered, one resort to non-Hermitian Hamiltonian dynamics~\cite{plenio1998quantum,PhysRevA.101.062112,minev2019catch}. Furthermore, in the absence of any inherent system's Hamiltonian, such post-selection amounts to anti-Hermitian Hamiltonian dynamics, mimicking imaginary time evolution). Once certain trajectories are discarded, the probability (i.e., the trace) is not preserved. Clearly, the efficiency of such protocols (unless some correction steps are taken ~\cite{minev2019catch}) is exponentially low, since, with increasing measurement time, we lose an exponentially large number of possible trajectories. 

{\it Determining steerability: how hard is it?}.  As the steering protocol is equivalent to the stable preparation of the ground state of a quantum Hamiltonian, an obvious question is how hard it is to determine the steerability of the system at hand.  In essence, the steerability problem is connected to the \textit{Quantum $k$-SAT Problem} ~\cite{kempe2006complexity}, which seeks to determine whether a given Hamiltonian is Frustration-Free. If it is Frustration-Free, then it is steerable. However, the 3-local version of this problem is $QMA_1$-complete ~\cite{gosset2016quantum}, implying that determining whether a given Hamiltonian is steerable might be inherently challenging, even on a quantum computer.

{\it Quantum advantage in executing steering protocols? } Conversely, one might contemplate the potential existence of a quantum speedup in finding ground states. We note that  simply determining the ground state energy of a general commuting Pauli Hamiltonian is already an NP-complete problem, as it is equivalent to the Maximum-Likelihood Decoding Problem ~\cite{NPC1055873} (cf. Prop.~\ref{prop:npc} in Appendix~\ref{app:Spectrum of Commuting Pauli Hamiltonian}). Consequently, at present one may not expect quantum speedup (as compared with classical methods) in solving for the ground state manifold, noting that  currently there is no clue of quantum speedup for NP-complete problems. Nevertheless, a quantum advantage could emerge in the quest for non-trivial correlation functions of ground states, given the direct measurability of such quantities.

\section*{Acknowledgement}

We acknowledge very useful discussions with Peize Ding, Igor Gornyi, Yong-Jian Han, Christiane Koch, Giovanna Morigi and Hao Zhang.  In particular we enjoyed active interaction with Josias Langbehn, Yuzhuo Tian and Zekai Wang. Y.W. was supported by the BMBF project Spinning (grant no. 13N16215), project PhoQuant (grant no. 13N16110). Y.G. was supported by the Deutsche Forschungsgemeinschaft (DFG) grant SH 81/8-1 and by a National Science Foundation (NSF)–Binational Science Foundation (BSF) grant 2023666.

\appendix

\section{Introductory comments on open Quantum system dynamics}\label{Open Quantum System Dynamics}

In the following, we provide a concise review of the notations and key results pertaining to the dynamics of open quantum systems. We also elucidate the distinctions between steering protocols driven by continuous evolution and those governed by discrete evolution.

\subsection{Preliminaries}

In this section, we provide a brief review of the dynamics of open quantum systems. The Lindblad equation, an instance of the quantum master equation, describes the non-unitary evolution of a quantum system, extending the standard unitary dynamics driven by a Hamiltonian~\cite{preskill1998lecture}. Without loss of generality, it can be written as
\begin{align}
\dot{\rho}=\mathcal{L}(\rho)=&-\frac{i}{\hbar}[H,\rho]\nonumber\\
&+\sum_i \gamma_i \left( L_i\rho L_i^\dagger-\frac{1}{2}\{L_i^\dagger L_i,\rho\} \right),
\end{align}
where $\rho$ denotes the density matrix of the system, $\mathcal{L}$ is the Lindbladian superoperator, $H$ is the system's Hamiltonian, $\gamma_i$ are non-negative real numbers representing jumping rates, and $L_i \in \mathcal{L}(\mathcal{H})$ are the Lindblad or jump operators. Further details can be found in Ref.~\cite{manzano2020short}.

We now analyze the asymptotic behavior of the quantum system in the infinite-time limit, following Ref.~\cite{PRX15}. Note that the symbol $\mathcal{L}$ may refer to a linear space or the Lindbladian superoperator, depending on the context. The solution can be formally expressed as
\begin{equation*}
\rho(t)=e^{\mathcal{L}t}(\rho(0)),
\end{equation*}
where we consider a time-independent $\mathcal{L}$ for simplicity. The asymptotic behavior is encoded in the infinite-time superoperator $\mathcal{P} = \lim_{t \rightarrow \infty} e^{\mathcal{L} t}$, which we refer to as the strong superoperator. Denoting the invariant subspace, or fixed points of $\mathcal{P}$, as $\operatorname{As}(\mathcal{H})$, the steady state $\rho(\infty) = \mathcal{P}(\rho(0))$ must lie within this subspace.

Furthermore, it has been shown that $\operatorname{As}(\mathcal{H})$ constitutes a "distorted" C*-algebra, which can be decomposed up to an isomorphism as
\begin{equation*}
\mathcal{H}\simeq\bigoplus_k A_k\otimes \mathbbm{1}_{d_k},
\end{equation*}
where $\{A_k\}$ are linear spaces, and $\{\mathbbm{1}_{d_k}\}$ are $d_k$-dimensional identity matrices~\cite{InvariantPhysRevA.82.062306}. Consequently, the states in $\operatorname{As}(\mathcal{H})$ must have the form
\begin{equation}
\rho= \sum_k p_k \rho_k, \quad \sum_k p_k=1,\quad \rho_k=a_k\otimes \mu_k,
\label{eq:decomposition}
\end{equation}
where $a_k$ is an arbitrary density matrix on $\mathcal{L}(\mathcal{H}_{A_k})$, and $\mu_k \in \mathcal{L}(\mathcal{H}_{d_k})$ is a fixed state. In other words, $\operatorname{As}(\mathcal{H})$ exhibits a block-diagonal structure, and the possible steady states are mixtures of states from different blocks (see Fig.~\ref{fig:fixed}).

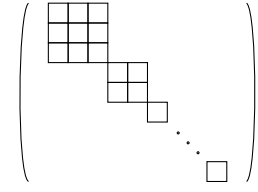
\begin{figure}
    \centering
    \begin{tikzpicture}[x=0.75pt,y=0.75pt,yscale=-1,xscale=1]

\draw    (80,50) .. controls (74.5,54.44) and (74,136.44) .. (80,140) ;
\draw   (90,50) -- (100,50) -- (100,60) -- (90,60) -- cycle ;
\draw   (100,50) -- (110,50) -- (110,60) -- (100,60) -- cycle ;
\draw   (110,50) -- (120,50) -- (120,60) -- (110,60) -- cycle ;
\draw   (90,60) -- (100,60) -- (100,70) -- (90,70) -- cycle ;
\draw   (100,60) -- (110,60) -- (110,70) -- (100,70) -- cycle ;
\draw   (110,60) -- (120,60) -- (120,70) -- (110,70) -- cycle ;
\draw   (90,70) -- (100,70) -- (100,80) -- (90,80) -- cycle ;
\draw   (100,70) -- (110,70) -- (110,80) -- (100,80) -- cycle ;
\draw   (110,70) -- (120,70) -- (120,80) -- (110,80) -- cycle ;
\draw   (120,80) -- (130,80) -- (130,90) -- (120,90) -- cycle ;
\draw   (130,80) -- (140,80) -- (140,90) -- (130,90) -- cycle ;
\draw   (120,90) -- (130,90) -- (130,100) -- (120,100) -- cycle ;
\draw   (130,90) -- (140,90) -- (140,100) -- (130,100) -- cycle ;
\draw   (140,100) -- (150,100) -- (150,110) -- (140,110) -- cycle ;
\draw   (155,115.5) .. controls (155,115.22) and (155.22,115) .. (155.5,115) .. controls (155.78,115) and (156,115.22) .. (156,115.5) .. controls (156,115.78) and (155.78,116) .. (155.5,116) .. controls (155.22,116) and (155,115.78) .. (155,115.5) -- cycle ;
\draw   (160,120.5) .. controls (160,120.22) and (160.22,120) .. (160.5,120) .. controls (160.78,120) and (161,120.22) .. (161,120.5) .. controls (161,120.78) and (160.78,121) .. (160.5,121) .. controls (160.22,121) and (160,120.78) .. (160,120.5) -- cycle ;
\draw   (165,125.5) .. controls (165,125.22) and (165.22,125) .. (165.5,125) .. controls (165.78,125) and (166,125.22) .. (166,125.5) .. controls (166,125.78) and (165.78,126) .. (165.5,126) .. controls (165.22,126) and (165,125.78) .. (165,125.5) -- cycle ;
\draw   (170,130) -- (180,130) -- (180,140) -- (170,140) -- cycle ;
\draw    (190,50) .. controls (195.5,54.44) and (195.5,135.94) .. (190,140) ;

\end{tikzpicture}
    \caption{Schematics of the fixed points and block-diagonal structure.}
    \label{fig:fixed}
\end{figure}

For the convenience of representing superoperators, we introduce the so-called double-ket notation. This establishes an equivalence between $\mathcal{L}(\mathcal{H})$ and $\mathcal{H} \otimes \mathcal{H}^*$; specifically, a matrix $M = M_{ij} |i\rangle \langle j|$ is represented as $|M\rangle\rangle = M_{ij} |i\rangle \otimes |j\rangle$. The inner product of double-kets and double-bras corresponds to the Frobenius inner product, namely, $\langle\langle N|M\rangle\rangle = \operatorname{Tr} N^\dagger M$. In particular, the expectation value of an observable $O$ with respect to a quantum state $\rho$ can now be written as $\operatorname{Tr} O \rho = \operatorname{Tr} O^\dagger \rho = \langle\langle O|\rho\rangle\rangle$. The superoperator can thus be viewed as a linear map over $\mathcal{H} \otimes \mathcal{H}^*$, simply expressed as $\mathcal{P}: |\rho\rangle\rangle \rightarrow \mathcal{P} |\rho\rangle\rangle$.

Additionally, we introduce superoperators operating in the Heisenberg picture. Given the established equivalence between $\mathcal{L}(\mathcal{H})$ and $\mathcal{H} \otimes \mathcal{H}^*$, the evolution of the expectation value becomes $\langle\langle O|\rho\rangle\rangle \rightarrow \langle\langle O| \mathcal{P} |\rho\rangle\rangle$. Therefore, in the Heisenberg picture, the quantum state $|\rho\rangle\rangle$ remains unchanged, while observables evolve as $|O\rangle\rangle \rightarrow \mathcal{P}^\dagger |O\rangle\rangle$, where $\mathcal{P}^\dagger$ is the Hermitian conjugate of $\mathcal{P}$ as a linear map over $\mathcal{H} \otimes \mathcal{H}^*$. Specifically, with the Kraus representation $\mathcal{P}(\cdot) = \sum_i K_i \cdot K_i^\dagger$, its Hermitian conjugate is $\mathcal{P}^\dagger(\cdot) = \sum_i K_i^\dagger \cdot K_i$, where $K_i$ are Kraus operators satisfying $\sum_i K_i^\dagger K_i = \mathbbm{1}$ to ensure the map is trace-preserving. Note that in the Heisenberg picture, superoperators are applied in reverse time order.

\subsection{Steering superoperators}\label{app:Steering superoperators}

In this section,  we further clarify certain notions concerning  steering operators

For FFS and NFFSS Hamiltonians, we can directly construct a set of superoperators $\{\mathcal{P}_i\}_i$ such that the GS manifold coincides with the common invariant subspace of all $\{\mathcal{P}_i\}_i$. This steering protocol can be simulated by the Lindblad equation $\dot{\rho} = \sum_i \mathcal{L}_i(\rho)$, where each $\mathcal{L}_i$ contains only local Lindblad operators and satisfies $\lim_{t \rightarrow \infty} e^{\mathcal{L}_i t} = \mathcal{P}_i$. However, this simulation fails for NFFJS Hamiltonians where frustration is present. The simultaneous action of local "cooling" operations fails due to geometric frustration.

We now discuss the assumptions regarding strong superoperators. Suppose $\mathcal{P}_0$ is a non-strong superoperator, and its strong version is defined as $\mathcal{P}_s = \lim_{a \rightarrow \infty} \mathcal{P}_0^a$, where the limit can be approximated by a single, sufficiently large $a$ when the limit does not exist. If $\mathcal{P}_0$ stabilizes the target space $\mathcal{H}_1$, then $\mathcal{P}_0(\rho) \in \mathcal{H}_1$ for all $\rho \in \mathcal{H}_1$. Consequently, for every integer $a$, we have $\mathcal{P}_0^a(\rho) \in \mathcal{H}_1$ for all $\rho \in \mathcal{H}_1$, implying that sequences of superoperators can be replaced by strong ones. Additionally, when unitary dynamics exist within $\operatorname{As}(\mathcal{H})$, since $\mathcal{P}_0(\rho) \in \mathcal{H}_1$ for all $\rho \in \mathcal{H}_1$, such unitary dynamics are confined within $\mathcal{H}_1$ and decoupled from other subspaces. Therefore, unitary dynamics do not affect discussions about steering. For simplicity and to avoid ambiguity in defining the infinite-time limit, we assume there are no unitary dynamics in $\operatorname{As}(\mathcal{H})$ and that the superoperators are strong.

\section{Conditions for steerability}\label{Discussions about steerability}

In the following, we first derive the necessary and sufficient conditions for the Frustration-Free Steerable (FFS) and Non-Frustration-Free Steadily Steerable (NFFSS) classes, and propose the Kitaev model as an example of an NFFSS class~\cite{kitaev2006anyons}. Subsequently, we derive the three necessary conditions in Conditions~(\ref{cond:1}-\ref{cond:3}) for the Non-Frustration-Free Jittery Steerable (NFFJS) class. The independence of these conditions indicates that all of them must be considered.

\subsection{Parent Hamiltonian and Frustration-Free Steering}\label{sec:Parent Hamiltonian and FF Steering}

\addtocounter{theorem}{-1}
\begin{theorem}
\label{thm:FFS_ap}
$H=\sum_i H_i$ is an FFS or NFFSS Hamiltonian if and only if there exists a local FF Hamiltonian $H_{\mathrm{PH}}$, such that $H_{\mathrm{PH}}$ have the same GS manifold as $H$. $H_{\mathrm{PH}}$ can be constructed from a set of trivial SCQs in GS subspace.
\end{theorem}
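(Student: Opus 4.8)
The plan is to recognize that the statement is a biconditional whose two sides are, once unpacked via Def.~\ref{def:pHFF}, almost definitional: ``$H$ is FFS or NFFSS'' is precisely the assertion that $H$ is PHFF, i.e.\ that a local frustration-free parent Hamiltonian $H_{\mathrm{PH}}$ sharing the GS manifold of $H$ exists (FFS being the special case in which $H$ itself serves as $H_{\mathrm{PH}}$). I would therefore prove the equivalence ``$H$ is steadily steerable, in the sense of case (1a) of Def.~\ref{def:stabilizability}, $\iff$ $H$ is PHFF,'' splitting into the two implications.

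For the direction PHFF $\Rightarrow$ steerable I would invoke the construction of Sec.~\ref{sec:Steerability of FF Hamiltonian}. Writing $H_{\mathrm{PH}}=\sum_i H_{\mathrm{PH},i}$ with each term local, frustration-freeness (Eq.~\eqref{eq:ff def}) guarantees that $\mathcal{H}_{\mathrm{GS}}$ is the common ground space of all local terms. To each $i$ I attach a local steering superoperator $\mathcal{P}_i$ (equivalently a local Lindbladian) whose fixed-point set equals the local ground space of $H_{\mathrm{PH},i}$. Then every $\sigma\in\mathcal{H}_{\mathrm{GS}}$ is fixed by all $\mathcal{P}_i$, while the common invariant subspace of the whole family is exactly $\bigcap_i(\text{local GS of }H_{\mathrm{PH},i})=\mathcal{H}_{\mathrm{GS}}$. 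Cycling the $\mathcal{P}_i$ in the strong-superoperator limit (Appendix~\ref{app:Steering superoperators}) drives every initial state into this intersection and leaves it stationary, which is precisely case (1a).

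For the converse, steerable $\Rightarrow$ PHFF, I would reconstruct $H_{\mathrm{PH}}$ from the steering data following the reduced-density-matrix recipe of Sec.~\ref{sec:Identifying Non-Frustration-Free Steadily Steerable Hamiltonians}. Each steering operator $\mathcal{P}_i$ acts on a local support $S_i$; since case (1a) keeps every $\sigma\in\mathcal{H}_{\mathrm{GS}}$ invariant, $\mathcal{H}_{\mathrm{GS}}$ lies in the fixed-point set of each $\mathcal{P}_i$. I would let $\Pi_i$ be the local projector onto the support of the $S_i$-reduced density matrices of GS states (the span of the non-vanishing RDM eigenvectors), so that $\Pi_i|\psi_{\mathrm{GS}}\rangle=|\psi_{\mathrm{GS}}\rangle$; by Def.~\ref{def:SCQ} these $\Pi_i$ are trivial SCQs. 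Setting $H_{\mathrm{PH}}=\sum_i(\mathbbm{1}-\Pi_i)\ge 0$ yields a local, positive, frustration-free operator annihilating every GS, so that $\mathcal{H}_{\mathrm{GS}}\subseteq\ker H_{\mathrm{PH}}$.

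The main obstacle is the reverse inclusion $\ker H_{\mathrm{PH}}\subseteq\mathcal{H}_{\mathrm{GS}}$: one must exclude ``extra'' common ground states of the local constraints, namely $|\phi\rangle$ with $\Pi_i|\phi\rangle=|\phi\rangle$ for all $i$ that are not genuine GS of $H$. I would argue by contradiction using steady steerability: any such $|\phi\rangle$ has all its local RDMs supported inside the images of the $\Pi_i$, hence is not removed by the local steering maps, and would persist as an asymptotic state outside $\mathcal{H}_{\mathrm{GS}}$, contradicting the requirement of case (1a) that every initial state be driven arbitrarily close to $\mathcal{H}_{\mathrm{GS}}$. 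Making this rigorous requires relating the fixed-point set of each CPTP map $\mathcal{P}_i$ to its support projector $\Pi_i$; here I would use the C*-algebra/block-diagonal structure of the asymptotic subspace (Eq.~\eqref{eq:decomposition}) together with the strong-superoperator assumption of Appendix~\ref{app:Steering superoperators} to ensure that the common support-fixed subspace coincides with the common fixed-point subspace, which by hypothesis equals $\mathcal{H}_{\mathrm{GS}}$. This pins $\ker H_{\mathrm{PH}}=\mathcal{H}_{\mathrm{GS}}$ and establishes that $H$ is PHFF.
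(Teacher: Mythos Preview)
Your overall strategy matches the paper's: the forward direction (PHFF $\Rightarrow$ steadily steerable) is the standard construction from Sec.~\ref{sec:Steerability of FF Hamiltonian}, and the paper's appendix proof focuses entirely on the converse, as you do. The substantive difference lies in how the parent Hamiltonian is built from the steering data. You take $\Pi_i$ to be the RDM-support projector of the ground states on $S_i$, following the recipe of Sec.~\ref{sec:Identifying Non-Frustration-Free Steadily Steerable Hamiltonians}; the paper instead takes $\Pi_i$ to be the projector onto the \emph{invariant subspace of the local map $\mathcal{P}_i$} itself. Your $\Pi_i$ is in general smaller, so your $H_{\mathrm{PH}}$ has a potentially smaller ground space, which still sandwiches correctly but makes the reverse inclusion the genuine obstacle you identify.

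The gap you flag --- passing from ``$|\phi\rangle$ has RDMs supported in $\operatorname{im}\Pi_i$'' to ``$|\phi\rangle$ is a fixed point of $\mathcal{P}_i$'' --- is exactly what the paper resolves, but by a more elementary device than the C*-algebra structure you invoke. The paper uses the Schmidt decomposition directly: if a pure state $|\psi\rangle=\sum_j c_j|\phi_{S_i,j}\rangle|\varphi_{\bar S_i,j}\rangle$ with all $c_j\neq 0$ satisfies $\mathcal{P}_i(|\psi\rangle\langle\psi|)=|\psi\rangle\langle\psi|$, then contracting against the orthonormal $|\varphi_{\bar S_i,j}\rangle$ on both sides forces $\mathcal{P}_i(|\phi_{S_i,j}\rangle\langle\phi_{S_i,k}|)=|\phi_{S_i,j}\rangle\langle\phi_{S_i,k}|$ for all $j,k$. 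This single linear-algebra step shows that the local fixed-point set of $\mathcal{P}_i$ contains \emph{all} matrix units over $\operatorname{im}\Pi_i$, so any global state whose $S_i$-Schmidt vectors lie in $\operatorname{im}\Pi_i$ is automatically fixed. The paper then runs the same Schmidt argument in the other direction (and its double-ket version for mixed states) to show that any state outside $\ker H_{\mathrm{PH}}$ fails to be jointly invariant, pinning $\ker H_{\mathrm{PH}}=\mathcal{H}_{\mathrm{GS}}$. Your appeal to Eq.~\eqref{eq:decomposition} would ultimately reduce to this same fact, but the Schmidt route is shorter and self-contained.
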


In the following, we provide the proof of Thm.~\ref{thm:FFS}. As noted in Def.~\ref{def:stabilizability}, only two possibilities are possible. Here, we focus on the scenario in which all GSs remain invariant under the action of all later-time superoperators. Suppose $\{\mathcal{P}_i\}$ is a set of superoperators constructed such that each GS of $H$ is invariant. In this case, the Hamiltonian $H$ is classified as either FFS or NFFSS.
\begin{proof}
Suppose that $|\psi\rangle$ is an invariant state of the superoperator $\mathcal{P}_i$, which acts non-trivially on $\mathcal{H}_{S}$, and let $\Pi_i$ denote the projection operator onto the invariant subspace of $\mathcal{P}_i$. We can perform a Schmidt decomposition of $|\psi\rangle$ as follows:
\begin{equation}
|\psi\rangle = \sum_j c_j^i |\phi_{S_i,j}\rangle \otimes |\varphi_{\bar S_i,j}\rangle, \quad c_j^i \neq 0,
\end{equation}
where $S_i$ is the support of $\mathcal{P}_i$, and $\bar S_i$ is the complement set of $S_i$. Since $\mathcal{P}_i(|\psi\rangle \langle \psi|) = |\psi\rangle \langle \psi|$, as given the definition of an FFS Hamiltonian, it follows that
\begin{equation}
    \mathcal{P}_i\big(|\phi_{S_i,j}\rangle \langle \phi_{S_i,k}|\big) = |\phi_{S_i,j}\rangle \langle \phi_{S_i,k}|.
\end{equation}
This implies that $\Pi_i |\phi_{S_i,j}\rangle = |\phi_{S_i,j}\rangle$. Considering all $\Pi_i$, we conclude that $|\psi\rangle$ is a ground state of the frustration-free parent Hamiltonian $H_{\mathrm{PH}} = \sum_i \mathbbm{1} - \Pi_i$.

To complete the proof, we need to show that no additional invariant states exist other than the GSs of $H_{\mathrm{PH}}$. According to the decomposition given in Eq.~\eqref{eq:decomposition}, it suffices to demonstrate that any pure state outside the ground subspace is not invariant. This implies that the ground subspace forms a maximal invariant sector in the decomposition of invariant subspaces of the global superoperators, constructed as the product of local superoperators $\{\mathcal{P}_i\}$. Equivalently, since these superoperators can be applied in arbitrary order, the asymptotic invariant state must remain invariant under each individual $\mathcal{P}_i$. Furthermore, we also need to show that no other mixed states are jointly invariant, ensuring that there is exactly one global jointly invariant sector.

Suppose $|\alpha\rangle$ is a pure state not belonging to the ground-state subspace. Since $\langle\alpha|H_{\mathrm{PH}}|\alpha\rangle > 0$, there exists at least one index $i$ such that $\langle\alpha|\Pi_i \otimes \mathbbm{1}|\alpha\rangle < 1$, indicating that $|\alpha\rangle$ is not an eigenstate of the projector $\Pi_i$. Consequently, its Schmidt decomposition can be expressed as
\begin{equation}
|\alpha\rangle= \sum_j c_j^{i} |\phi'_{S_i,j}\rangle\otimes|\varphi'_{\bar S_i,j}\rangle,\quad c_j^{i}\neq 0,
\end{equation}
and there exists some $j$ such that $\Pi_i|\phi'_{S_i,j}\rangle \neq |\phi'_{S_i,j}\rangle$. If $|\alpha\rangle\langle\alpha|$ is invariant under the operation $\mathcal{P}_i $, it would contradict our earlier assumption that the set $\{|\phi'_{S_i,j}\rangle\}_j$ is the maximal invariant subspace of $\mathcal{P}_i$, since $|\phi'_{S_i,j}\rangle$ should also be included in this subspace. Therefore, $|\alpha\rangle\langle\alpha|$ must not be an invariant state. This proves that the invariant sector spanned by GSs of parent Hamiltonian $H_{\mathrm{PH}}$ is maximal.

Furthermore, using a similar argument, we can perform a Schmidt decomposition on a mixed state $\rho$, or more precisely, a Schmidt decomposition on its double-ket representation:
\begin{equation}
|\rho\rangle\rangle = \sum_j c_j^i |M_{S_i,j}\rangle\rangle \otimes |N_{\bar S_i,j}\rangle\rangle.
\end{equation}
Equivalently, $\rho = \sum_j c_j^i M_{S_i,j} \otimes N_i$, where $M_{S_i,j} \in \mathcal{L}(\mathcal{H}_{S_i})$ and $\operatorname{Tr} M_{S_i,j}^\dagger M_{S_i,k} = \langle\langle M_{S_i,j} | M_{S_i,k} \rangle\rangle = \delta_{jk}$, with a similar condition for $N_{\bar S_i,j}$. Employing orthogonality and a similar line of reasoning, we conclude that $\rho$ is jointly invariant only if $\{M_{S_i,j}\}$ lie within $\operatorname{span}\{|\phi_{S_i,j}\rangle \langle\phi_{S_i,k}| \}_{j,k}$. Thus we prove that there are no other jointly invariant states.

\end{proof}

As a further example, we illustrate how the Kitaev model can be steered by applying Thm.~\ref{thm:FFS}. The steering process can be constructed exactly as in the analytical solution of the model. The Kitaev model is an exactly solvable spin-$1/2$ system on a hexagonal lattice, with the Hamiltonian given by~\cite{kitaev2006anyons} 
\begin{align*}
H = & -J_x \sum_{\text{$x$-links}} \sigma^x_j \sigma^x_k - J_y \sum_{\text{$y$-links}} \sigma^y_j \sigma^y_k \\
    & - J_z \sum_{\text{$z$-links}} \sigma^z_j \sigma^z_k,
\end{align*}
where the three directions of edges on the hexagonal lattice correspond to the three distinct types of links. Note that each plaquette of the lattice possesses conserved quantities $W_p = \sigma^x \otimes \sigma^y \otimes \sigma^z \otimes \sigma^x \otimes \sigma^y \otimes \sigma^z$, all commuting with each other and having eigenvalues $\pm 1$. Restricting to the common eigenspaces of the operators $W_p$ decomposes the total Hilbert space into a direct sum $\mathcal{H} = \bigoplus_{w_1,\cdots,w_m} \mathcal{H}_{w_1,\cdots,w_m}$, where each subspace $\mathcal{H}_{w_1,\cdots,w_m}$ corresponds to a fixed set of eigenvalues $W_1=w_1,\cdots,W_m=w_m$.

By introducing a representation for each spin-$1/2$ particle in terms of four Majorana operators $b^{x,y,z}$ and $c$, with Pauli operators defined as $\tilde{\sigma}^{x,y,z} = i b^{x,y,z} c$, the original Hamiltonian $H$ becomes a non-interacting fermionic Hamiltonian $\tilde{H}_{w_1,\cdots,w_m}$ in each subspace $\mathcal{H}_{w_1,\cdots,w_m}$. Consequently, the GS manifold becomes steerable, with the non-interacting fermionic Hamiltonian acting as a Frustration-Free parent Hamiltonian. Note that this representation introduces an additional $\mathbb{Z}_2$ gauge degree of freedom, which can be fixed by imposing the condition $D_j = b^{x}_j b^{y}_j b^{z}_j c = 1$. This constraint commutes with each Hamiltonian $\tilde{H}_{w_1,\cdots,w_m}$. Although the steering discussed here occurs in a Hilbert space including gauge degrees of freedom, the results can be directly mapped to the physical GS manifold by a local relabeling of states, preserving steerability.

Next, we discuss the inapplicability of Thm.~\ref{thm:FFS} to the one-dimensional Ising anti-ferromagnet introduced in Sec.~\ref{sec:NFFJS Examples}. The Hamiltonian of this model is given by $H = \sum_i H_i = \sum_i J \sigma^z_i \sigma^z_{i+1}$ with coupling strength $J = 1$, periodic boundary conditions, and an odd number $N$ of spins. The GSs can be denoted by $|m\pm\rangle = |\cdots \pm_{m-2} \mp_{m-1} \pm_{m} \pm_{m+1} \mp_{m+2} \pm_{m+3} \cdots\rangle$, where $\sigma^z_i |\pm_i\rangle = \pm |\pm_i\rangle$.

Although $H$ consists solely of $2$-local (two-qubit) Pauli operators, constructing a corresponding $2$-local parent Hamiltonian is not possible. This is because considering all two-qubit RDMs (not just nearest-neighbor pairs) derived from the GS manifold yields the complete two-spin Hilbert space spanned by ${|+_i +_j\rangle, |+_i -_j\rangle, |-_i +_j\rangle, |-_i -_j\rangle}$. Thus, no $2$-local SCQ exists that characterizes the GS manifold. Examining the nearest-neighbor three-spin RDMs, the GS manifold now spans the set n $\{|+_{i-1} -_{i} -_{i+1}\rangle, |+_{i-1} -_i +_{i+1}\rangle, |+_{i-1} +_i -_{i+1}\rangle, |-_{i-1} +_{i} +_{i+1}\rangle, |-_{i-1} +_{i} -_{i+1}\rangle, |-_{i-1} -_i +_{i+1}\rangle\}$. Thus, we can construct an SCQ as $\Pi_i = \mathbbm{1} - |+_{i-1} +_{i} +_{i+1}\rangle \langle +_{i-1} +_{i} +_{i+1}| - |-_{i-1} -_{i} -_{i+1}\rangle \langle -_{i-1} -_{i} -_{i+1}|$. However, this set of SCQs is not sufficient, since the GS manifold of the corresponding parent Hamiltonian $H_{\mathrm{PH}} = \sum_i \mathbbm{1} - \Pi_i$ also contains other states, such as $|+_1 +_2 -_3 -_4 +_5 \cdots\rangle$. Numerical calculations performed up to $N = 23$ indicate that one requires at least $\lceil (N + 4)/3 \rceil$-local nearest-neighbor SCQs to ensure that $H_{\mathrm{PH}}$ and $H$ share exactly the same GS manifold, $\{|m\pm\rangle\}_{m=1,2,\cdots,N}$. Hence, any local parent Hamiltonian $H_{\mathrm{PH}}$ either possesses a strictly larger GS manifold compared to $H$, or if it matches exactly, it must necessarily be non-local.

\subsection{Necessary conditions for the Non-Frustration-Free Jittery Steerable (NFFJS) class}\label{sec:Necessary conditions for Non-Frustration-Free Jittery Steerable (NFFJS) class}

In this section, we derive the necessary conditions for the NFFJS class. Since we are addressing an NFFJS scenario, we necessarily need to consider a multi-dimensional steerable target space—the GS manifold of a local Hamiltonian. We focus on steering protocols comprising local operators, or more generally, operators that act on a finite number of the system's degrees of freedom. This implies that, in constructing a steering protocol, we must design "local superoperators" involving finite segments of the system, using local information of the target manifold. The following necessary condition for NFFJS is based on this observation. We note, though, that further non-local features (e.g., long-range entanglement) may underlie the states that span our target space; it follows that a full characterization of the NFFJS Hamiltonian may require global information, which is beyond the scope of the present analysis (see discussions below).

The necessary conditions for the NFFJS class is presented in the following proposition.
\begin{prop}
    Given a steerable target subspace $\mathcal{H}_{\mathrm{GS}}\subset \mathcal{H}$, as described by Def.~\ref{def:stabilizability}, consider the situation that condition in Thm.~\ref{thm:FFS_ap} is not satisfied.  Then the following three conditions are satisfied:
    \begin{itemize}
        \item $\mathcal{H}_{\mathrm{GS}}$ is degenerate.
        \item There exists at least one trivial  SCQ inside one subspace of $\mathcal{H}_{\mathrm{GS}}$.
        \item $\mathcal{H}_{\mathrm{GS}}$ is bipartite indistinguishable.
    \end{itemize}
    Moreover, the second necessary condition is independent of the first and the third condition.
\end{prop}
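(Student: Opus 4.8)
The plan is to notice that the hypotheses force us into case (1b) of Def.~\ref{def:stabilizability}. Indeed, $\mathcal{H}_{\mathrm{GS}}$ is assumed steerable, but because the condition of Thm.~\ref{thm:FFS_ap} fails it cannot be steadily steerable, so the steering must be jittery. The single structural fact I would extract and reuse throughout is this: there exist a state $\sigma\in\mathcal{H}_{\mathrm{GS}}$ and a local steering superoperator $\mathcal{P}_n$, supported on a region $S$, with $\gamma:=\mathcal{P}_n(\sigma)\neq\sigma$ and $\gamma\in\mathcal{H}_{\mathrm{GS}}$; combined with the preamble observation that every $\mathcal{P}_n$ fixes at least one ground state, this suffices to produce all three conditions.

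First I would dispose of Condition~I. The jittery property supplies two distinct states $\sigma\neq\gamma$ lying in $\mathcal{H}_{\mathrm{GS}}$, which is impossible if $\dim\mathcal{H}_{\mathrm{GS}}=1$; hence the manifold is degenerate. Next I would obtain Condition~III from locality and trace preservation. Writing $\mathcal{P}_n=\mathcal{E}_S\otimes\mathrm{id}_{\bar S}$ with $\mathcal{E}_S$ a trace-preserving channel and expanding $\sigma=\sum_k A_k\otimes B_k$, one computes $\operatorname{Tr}_S\mathcal{P}_n(\sigma)=\sum_k\operatorname{Tr}[\mathcal{E}_S(A_k)]\,B_k=\sum_k\operatorname{Tr}[A_k]\,B_k=\operatorname{Tr}_S\sigma$. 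Thus $\operatorname{Tr}_S\gamma=\operatorname{Tr}_S\sigma$ while $\gamma\neq\sigma$, which is exactly the statement that $\mathcal{H}_{\mathrm{GS}}$ is bipartite indistinguishable for the cut $(S,\bar S)$.

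Condition~II is the step I expect to be the main obstacle, since it requires converting the dynamical statement ``$\mathcal{P}_n$ fixes a ground state'' into a static algebraic object of the type in Def.~\ref{def:SCQ}. My approach would be to take the fixed-point (recurrent) subspace of the local channel $\mathcal{E}_S$, carrying a projector $\Pi_n$ supported on $S$; extended trivially it satisfies $(\Pi_n\otimes\mathbbm{1})|\psi\rangle=|\psi\rangle$ for every ground state in the span $\Pi'$ of the $\mathcal{P}_n$-invariant ground states. I would then verify the two SCQ conditions with respect to $\Pi'$: from $\Pi_n\Pi'=\Pi'\Pi_n=\Pi'$ one gets $[\Pi',\Pi_n]=0$ and $\Pi'\Pi_n\Pi'=\Pi'$ (so $\Pi_n$ is trivial), while $\Pi'[H,\Pi_n]\Pi'=0$ follows from $\Pi'H\Pi'=E_{\mathrm{GS}}\Pi'$. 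Provided $\mathcal{E}_S$ is a genuine (non-unital) steering map so that $\Pi_n\neq\mathbbm{1}_S$, this exhibits a trivial SCQ inside the subspace $\Pi'\subseteq\mathcal{H}_{\mathrm{GS}}$. The delicate point is justifying that the recurrent subspace of $\mathcal{E}_S$ is exactly a local projector onto which the invariant ground states are supported, which I would underpin using the block-diagonal structure of $\operatorname{As}(\mathcal{H})$ recalled in Appendix~\ref{Open Quantum System Dynamics}.

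For the independence clause I would produce explicit counterexamples separating Condition~II from the other two. To show that Conditions~I and~III do not force~II, I would use a degenerate, bipartite-indistinguishable manifold whose local reduced density matrices are full rank at the relevant range, since full rank precludes any local projector fixing the states and hence any local trivial SCQ; the antiferromagnetic Heisenberg manifold of Sec.~\ref{sec:Anti-ferromagnetic Heisenberg model} is the natural physical instance. Conversely, to show that~II does not force~III, the two ferromagnetic states $\{|0\cdots0\rangle,|1\cdots1\rangle\}$ suffice: they are degenerate and carry the local trivial SCQ $|00\rangle\langle00|+|11\rangle\langle11|$ on any pair of sites, yet their reduced states on $\bar S$ differ, so they are bipartite distinguishable. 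Independence from the mere degeneracy Condition~I is then immediate, completing the argument; the remaining bookkeeping is deferred to Prop.~\ref{prop:independent}.
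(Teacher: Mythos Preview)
Your argument for the three conditions is correct and follows essentially the same line as the paper. For Condition~II the paper exploits the \emph{strong} superoperator property ($\mathcal{P}^2=\mathcal{P}$, cf.\ Appendix~\ref{app:Steering superoperators}) to conclude directly that the image state $\tau=\mathcal{P}(\rho)$ is itself fixed, $\mathcal{P}(\tau)=\tau$, and then reads off the trivial SCQ from the local support of $\tau$. Your route via the recurrent subspace of $\mathcal{E}_S$ reaches the same conclusion but with more machinery; the idempotence shortcut avoids invoking the full block structure of $\operatorname{As}(\mathcal{H})$.

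The independence clause is where your proposal diverges and has a gap. Your ferromagnetic example for ``II does not imply III'' is fine and matches the paper's first counterexample in Prop.~\ref{prop:independent}. However, your proposed witness for ``I and III do not imply II'' --- the degenerate antiferromagnetic Heisenberg manifold --- is not obviously bipartite indistinguishable. The four ground states (odd $N$, spin $1/2$) carry distinct total $S_z$, and tracing out a small region $S$ leaves states on $\bar S$ that still differ in $S_z^{\bar S}$ support; there is no evident pair $\rho\neq\tau$ with $\operatorname{Tr}_S\rho=\operatorname{Tr}_S\tau$. The paper does not use AFH here: it instead constructs an explicit four-state subspace on $\mathcal{H}_S=\mathbb{C}^3$ in which a specific linear combination $\operatorname{Tr}_S(|\psi_0\rangle\langle\psi_0|+|\psi_3\rangle\langle\psi_3|-|\psi_1\rangle\langle\psi_1|-|\psi_2\rangle\langle\psi_2|)=0$ is checked by hand, while every pure ground state is verified to have full Schmidt rank. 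Since you defer to Prop.~\ref{prop:independent} anyway, this does not break your overall proposal, but the AFH suggestion should be dropped.
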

\begin{proof}
Each local superoperator $\mathcal{P}_i$ must ensure that the system remains within the target manifold at later times; otherwise, Def.~\ref{def:stabilizability} would be violated. Explicitly, for an arbitrary ground state $\rho$, the resulting state under the action of a local superoperator, $\mathcal{P}_i(\rho) = \tau$, must also belong to the GS manifold. This requirement directly implies the first necessary condition: the GS manifold must be \textit{degenerate}.

Considering strong superoperators, there must exist at least two distinct ground states $\rho \neq \tau\in \mathcal{H}_{\mathrm{GS}}$ and at least one local superoperator $\mathcal{P}$ satisfying $\mathcal{P}(\rho) = \tau$ and $\mathcal{P}(\tau) = \mathcal{P}^2(\rho) = \mathcal{P}(\rho) = \tau$. Let us denote the support of the superoperator $\mathcal{P}$ as $S$. The condition $\mathcal{P}(\tau) = \tau$ implies that there exists a \textit{trivial SCQ} within a subspace of $\mathcal{H}_S$ containing $\tau$. This is the second necessary condition.

The third necessary condition emerges from the fact that a local superoperator $\mathcal{P}$, acting non-trivially on $\mathcal{H}_S$ and realizing a transformation between two distinct GSs, e.g., $\mathcal{P}(\rho)=\tau$, can be represented in terms of Kraus operators as $P(\rho)= \sum_i K_i \rho K_i^{\dagger}$. From this representation, it immediately follows that the RDMs of $\rho$ and $\tau$ on the complementary region $\bar S$ must be identical, i.e. $\operatorname{Tr}_S \tau= \operatorname{Tr}_S \sum_i K_i \rho K_i^{\dagger}=\operatorname{Tr}_S \sum_i  K_i^{\dagger} K_i \rho= \operatorname{Tr}_S \rho$. Thus, the RDMs of $\rho$ and $\tau$ coincide on the region $\bar S$, which typically constitutes a large region. This property contrasts with the definition of a bipartite distinguishable subspace provided in Def.~\ref{def:qgd}. The independence of these necessary conditions will be demonstrated later (see Prop.~\ref{prop:independent}).
\end{proof}

As stated above, here we introduce the notion of \textit{bipartite distinguishable} subspace. 
\begin{definition}
\label{def:qgd}    Given a Hilbert space $\mathcal{H}=\mathcal{H}_S\otimes\mathcal{H}_{\bar S}$, where $S$ is a local region, a subspace $\mathcal{H}_1\subseteq\mathcal{H}$ is called bipartite distinguishable, if for any two quantum states $\rho\neq\tau\in\mathcal{L}(\mathcal{H}_1)$, we have  $\operatorname{Tr}_S\rho\neq \operatorname{Tr}_S\tau$. 
\end{definition}

We now discuss how to determine whether a subspace is bipartite distinguishable and the implications of this property. For a target GS subspace $\mathcal{H}_{\mathrm{GS}} \subseteq \mathcal{H} = \mathcal{H}_S \otimes \mathcal{H}_{\bar S}$, suppose $\mathcal{H}_{\mathrm{GS}}$ is spanned by $\{|\psi_i\rangle\}$. The subspace $\mathcal{H}_{\mathrm{GS}}$ is bipartite distinguishable if and only if the set of operators $\{\operatorname{Tr}_S |\psi_i\rangle \langle \psi_j|\}_{i,j}$ are linearly independent operators on $\mathcal{H}_{\bar S}$. If there exist $\rho \neq \tau$ such that $\operatorname{Tr}_S \rho = \operatorname{Tr}_S \tau$, then $\operatorname{Tr}_S (\rho - \tau) = 0$, implying that $\operatorname{Tr}_S |\psi_i\rangle \langle \psi_j|$ are not linearly independent. Conversely, when $\operatorname{Tr}_S |\psi_i\rangle \langle \psi_j|$ are not linearly independent—say, there exists a linear combination $\sum_{ij} c_{ij} \operatorname{Tr}_S |\psi_i\rangle \langle \psi_j| = 0$ with some $c_{ij} \neq 0$—we can then choose $\rho = (1/d_{\mathrm{GS}}) \sum_i |\psi_i\rangle \langle \psi_i|$ and $\tau = \rho + \Delta \sum_{ij} c_{ij} \operatorname{Tr}_S |\psi_i\rangle \langle \psi_j|$, where $d_{\mathrm{GS}} = \operatorname{dim} \mathcal{H}_{\mathrm{GS}}$. With $1 \gg \Delta > 0$, $\tau$ is positive and satisfies $\operatorname{Tr}_S \rho = \operatorname{Tr}_S \tau$. In this spirit, we have the following proposition in a subspace $\mathcal{H}_{\mathrm{GS}} \subseteq \mathcal{H}$.

Aside from the proposition regarding steerability, note that the necessary conditions discussed previously are also closely connected to the existence of non-trivial SCQs, as we elaborate next.
\begin{prop}
\label{prop:nontrivial SCQ}
    If there exists non-trivial local SCQ $\mathcal{A}$ supported on local region $S$ in target subspace $\mathcal{H}_{\mathrm{target}}$, then the following three conditions are satisfied:
    \begin{itemize}
        \item $\mathcal{H}_{\mathrm{target}}$ is degenerate.
        \item There exists at least one trivial  SCQ inside one subspace of $\mathcal{H}_{\mathrm{target}}$.
        \item $\mathcal{H}_{\mathrm{GS}}$ is bipartite indistinguishable with respect to the bipartition $\mathcal{H}=\mathcal{H}_S\otimes \mathcal{H}_{\bar S}$, where $\bar S$ is the complement of  region $S$.
    \end{itemize}
The converse is not true, which means they are necessary but not sufficient conditions. Moreover, the second necessary condition is independent of the first and the third condition.
\end{prop}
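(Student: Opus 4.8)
The plan is to treat $\mathcal{H}_{\mathrm{target}}=\mathcal{H}_{\mathrm{GS}}$ as a genuine eigenspace of $H$ (so that $H$ acts as the constant $E_{\mathrm{GS}}$ on it and $\Pi=\Pi_{\mathrm{GS}}$), and to read off all three conditions from the spectral decomposition of the non-trivial SCQ $\mathcal{A}$. Since $\mathcal{A}$ is supported on $S$, its spectral projectors have the form $\Pi_{\lambda}=P_\lambda\otimes\mathbbm{1}_{\bar S}$. I would first carry out the two-eigenvalue case $\mathcal{A}=a\Pi_a+b\Pi_b$ with $a\neq b$ and $\Pi_a+\Pi_b=\mathbbm{1}$ exactly as in the main text, and then note that the general case follows by applying the same argument pairwise to the spectral projectors. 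For Condition~\ref{cond:1} (degeneracy): non-triviality means $\Pi\mathcal{A}\Pi\not\propto\Pi$, so $\mathcal{A}$ assumes at least two distinct values $a,b$ on $\mathcal{H}_{\mathrm{GS}}$; the associated eigenvectors are orthogonal and both lie in $\mathcal{H}_{\mathrm{GS}}$, forcing $\dim\mathcal{H}_{\mathrm{GS}}\ge 2$.

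For Condition~\ref{cond:2} (a trivial SCQ in a subspace), I would first show that each spectral projector is itself an SCQ. Since $[\Pi,\mathcal{A}]=0$ we get $[\Pi,\Pi_a]=0$, and from $\Pi[H,\mathcal{A}]\Pi=0$ together with $\Pi_a+\Pi_b=\mathbbm{1}$ (so $[H,\Pi_a]=-[H,\Pi_b]$) one obtains $(a-b)\Pi[H,\Pi_a]\Pi=0$, hence $\Pi[H,\Pi_a]\Pi=0$, i.e. $\Pi_a$ is an SCQ. Next I restrict to $\mathcal{H}_a:=\operatorname{Im}(\Pi)\cap\operatorname{Im}(\Pi_a)$, whose projector is $\tilde\Pi=\Pi\Pi_a$. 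Because $\mathcal{H}_a\subseteq\mathcal{H}_{\mathrm{GS}}$ and $\mathcal{H}_{\mathrm{GS}}$ is an eigenspace of $H$, every subspace of $\mathcal{H}_{\mathrm{GS}}$ is $H$-invariant, so $[\tilde\Pi,H]=0$ and the SCQ conditions hold with $\Pi\to\tilde\Pi$. On $\mathcal{H}_a$ the operator $\Pi_a$ acts as the identity, i.e. $\tilde\Pi\Pi_a\tilde\Pi=\tilde\Pi$, so $\Pi_a$ is a trivial SCQ inside the subspace $\mathcal{H}_a\subseteq\mathcal{H}_{\mathrm{GS}}$.

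Condition~\ref{cond:3} (bipartite indistinguishability) is the most delicate step, and the key is that the locality of $\mathcal{A}$ annihilates the inter-sector coherences under $\operatorname{Tr}_S$. I would pick a ground state $|\psi\rangle$ with $0<\langle\psi|\Pi_a|\psi\rangle<1$ (which exists by non-triviality), set $\tau=|\psi\rangle\langle\psi|$, and let $\rho=\Pi_a\tau\Pi_a+\Pi_b\tau\Pi_b$ be its dephasing in the $\mathcal{A}$-eigenbasis. Both are mixed ground states, and $\tau\neq\rho$ precisely because $|\psi\rangle$ carries coherence between the two sectors. Then $\tau-\rho=\Pi_a\tau\Pi_b+\Pi_b\tau\Pi_a$, and since $\Pi_{a/b}=P_{a/b}\otimes\mathbbm{1}_{\bar S}$ with $P_bP_a=0$, cyclicity of the partial trace over $S$ gives $\operatorname{Tr}_S(\Pi_a\tau\Pi_b)=\operatorname{Tr}_S\big((P_bP_a\otimes\mathbbm{1}_{\bar S})\tau\big)=0$, and likewise for the conjugate term. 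Hence $\operatorname{Tr}_S\rho=\operatorname{Tr}_S\tau$ with $\rho\neq\tau$, which is exactly bipartite indistinguishability for the bipartition $\mathcal{H}=\mathcal{H}_S\otimes\mathcal{H}_{\bar S}$.

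For the two ancillary claims I would proceed as follows. To show the converse fails, I would exhibit a GS manifold satisfying all three conditions yet admitting no non-trivial SCQ; the one-dimensional antiferromagnetic Ising chain of Sec.~\ref{sec:NFFJS Examples} is the natural candidate, as it is NFFJS (hence meets the three conditions) but, as established there and in Appendix~\ref{sec:Parent Hamiltonian and FF Steering}, supports only trivial SCQs. For the independence of Condition~\ref{cond:2} from Conditions~\ref{cond:1} and~\ref{cond:3}, I would construct explicit small subspaces realizing the relevant truth-value combinations (in particular a degenerate, bipartite-indistinguishable subspace with no trivial SCQ), deferring the details to Prop.~\ref{prop:independent}. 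The step I expect to be the real obstacle is Condition~\ref{cond:3}: one must guarantee simultaneously that $\rho$ and $\tau$ are genuinely distinct ground states and that their $\bar S$-marginals coincide. The resolution, and the crux of the whole proposition, is that the support of $\mathcal{A}$ lies entirely in $S$ (so $P_aP_b=0$ on $\mathcal{H}_S$), which forces the coherence terms to vanish under $\operatorname{Tr}_S$ cleanly.
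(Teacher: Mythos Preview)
Your proposal is correct and follows essentially the same approach as the paper. For Condition~\ref{cond:3} you use a slightly different but equivalent packaging: the paper picks two pure $\mathcal{A}$-eigenstates $|a_1\rangle,|a_2\rangle\in\mathcal{H}_{\mathrm{GS}}$ and exhibits the pair $\tfrac{1}{\sqrt{2}}(|a_1\rangle\pm|a_2\rangle)$, whereas you dephase a coherent superposition $\tau=|\psi\rangle\langle\psi|$ to $\rho=\Pi_a\tau\Pi_a+\Pi_b\tau\Pi_b$; both arguments rest on the same mechanism, namely that $\operatorname{Tr}_S(\Pi_a\,\cdot\,\Pi_b)=0$ because the spectral projectors of the local $\mathcal{A}$ factor through $\mathcal{H}_S$.
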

\begin{proof}
Firstly, since $\mathcal{A}$ is a non-trivial local SCQ in $\mathcal{H}_{\mathrm{target}}$, which means that in this subspace $\mathcal{A}$ has eigenstates with different eigenvalues, $\mathcal{H}_{\mathrm{target}}$ must be at least two-dimensional and thus degenerate.

Secondly, let us derive the trivial SCQs for a given non-trivial SCQ. Without loss of generality, consider a non-trivial SCQ operator $\mathcal{A}$ with two distinct eigenvalues $a$ and $b$, defined with respect to the subspace $\mathcal{H}_{\mathrm{target}}$. Then it can be decomposed as  $\mathcal{A}=\sum_i a |a^i\rangle\langle a^i|+\sum_i b |b^i\rangle\langle b^i|=a\Pi_a + b\Pi_b$. Here, $a$ and  $b$ are eigenvalues of $\mathcal{A}$, and $\{|a^i\rangle\}_i,\{|b^i\rangle\}_i$ and $\Pi_a,\Pi_b$ are the corresponding eigenvectors and projection operators. Let us also denote $\Pi_{\mathrm{target}}$ as the projection operator onto $\mathcal{H}_{taarget}$. Since we have imposed that $\mathcal{A}$ is a non-trivial SCQ with respect to $\mathcal{H}_{\mathrm{target}}$, we can derive the following properties, by examining Def.~\ref{def:SCQ}: (i) $[\Pi_{\mathrm{target}},\mathcal{A}]=0$; (ii)$\Pi_{\mathrm{target}} [H,\mathcal{A}]\Pi_{\mathrm{target}}=0$; (iii) $\mathcal{A}$ is non-trivial implies $\Pi_{\mathrm{target}} \mathcal{A} \Pi_{\mathrm{target}}\not\propto \Pi_{\mathrm{target}}$, or in other words both $\{|a^i\rangle\}$ and $\{|b^i\rangle\}$ have overlap with $\mathcal{H}_{\mathrm{target}}$. In the following we will show that $\Pi_a$ is a trivial SCQ in (a subspace of) $\mathcal{H}_{\mathrm{target}}$. It is to be done by verifying Eq.~\eqref{eq:SCQ}.

We first prove Eq.~\eqref{eq:SCQcond1} is satisfied for $\Pi_a$. According to the property (i), $0=\langle a^i|[\mathcal{A},\Pi_{\mathrm{target}}]|b^j\rangle=(a-b)\langle a^i|\Pi_{\mathrm{GS}}|b^j\rangle$ implies that $\langle a^i|\Pi_{\mathrm{target}}|b^j\rangle=0$, hence $\langle a^i|[\Pi_a,\Pi_{\mathrm{GS}}]|b^j\rangle=\langle a^i|\Pi_{\mathrm{GS}}|b^j\rangle=0$. Similarly, one can show that  $\langle a^i|[\Pi_a,\Pi_{\mathrm{target}}]|a^j\rangle=\langle b^i|[\Pi_a,\Pi_{\mathrm{target}}]|b^j\rangle=0$. We thus infer that $[\Pi_{\mathrm{target}},\Pi_a]=0$, thus completing the proof of Eq.~\eqref{eq:SCQcond1}. 

Next, we address Eq.~\eqref{eq:SCQcond2}. According to Def.~\ref{def:SCQ} and specifically property (i), $[\Pi_{\mathrm{target}}, \mathcal{A}]=[\Pi_{\mathrm{target}}, H]=0$. The latter commutations imply two facts:  (a) we can define  
\begin{align*}
\mathcal{A}'&=\Pi_{\mathrm{target}} O\Pi_{\mathrm{target}} \\
&= a \Pi_{\mathrm{target}} \Pi_a\Pi_{\mathrm{target}} + b \Pi_{\mathrm{target}} \Pi_b \Pi_{\mathrm{target}};
\end{align*}
 it can be shown that $\Pi_{\mathrm{target}} \Pi_a\Pi_{\mathrm{target}}$ and $ \Pi_{\mathrm{target}} \Pi_b \Pi_{\mathrm{target}}$ are projection operators, such as 
\begin{align*}
(\Pi_{\mathrm{GS}} \Pi_a\Pi_{\mathrm{GS}})^2&=\Pi_{\mathrm{GS}} \Pi_a\Pi_{\mathrm{GS}}\Pi_{\mathrm{GS}} \Pi_a\Pi_{\mathrm{GS}}\\
&=\Pi_{\mathrm{GS}}\Pi_a=\Pi_{\mathrm{GS}} \Pi_a\Pi_{\mathrm{GS}};
\end{align*}
(b) We can show that 
\begin{align*}
    0&=\Pi_{\mathrm{target}}[H,\mathcal{A}]\Pi_{\mathrm{target}}\\
    &=[H,\Pi_{\mathrm{target}} \mathcal{A} \Pi_{\mathrm{target}}]=[H,\mathcal{A}'].
\end{align*}
 That leads to the conclusion that $H$ and $\mathcal{A}'$ commute. Similar to the reasoning above about $\mathcal{A}$ and $\Pi_{\mathrm{target}}$, we can infer that $[H, \Pi_{\mathrm{target}}\Pi_a\Pi_{\mathrm{target}}]=0$, hence 
 $$
 \Pi_{\mathrm{target}}[H,\Pi_a]\Pi_{\mathrm{target}}= [H, \Pi_{\mathrm{target}}\Pi_a \Pi_{\mathrm{target}}]=0.
 $$ 

Therefore, given that $\Pi_{\mathrm{target}}[H,\Pi_a]\Pi_{\mathrm{target}}=[\Pi_{\mathrm{target}},\Pi_a]=0$, the local projection operator $\Pi_a$ is also an SCQ with respect to $\mathcal{H}_{\mathrm{target}}$. If further for the whole image of $\Pi_a$, there does not exist eigenstate of $\Pi_{\mathrm{target}}$ with eigenvalue 0, i.e.  $\Pi_{\mathrm{target}} \Pi_a \Pi_{\mathrm{target}} = \Pi_{\mathrm{target}}$, then $\Pi_a$ is a trivial SCQ. 

If this is not the case, let us denote $\Pi_{\mathrm{target}}\Pi_a \Pi_{\mathrm{target}} = \Pi'_{\mathrm{target}}$. Obviously image of $\Pi'_{\mathrm{target}}$ is a subspace of $\Pi_a$. It can then be verified that
\begin{align*}
    \text{(i)}[\Pi'_{\mathrm{target}},H]=&[\Pi_{\mathrm{target}} \Pi_a \Pi_{\mathrm{target}}, H]\\
    =&\Pi_{\mathrm{target}}[H,\Pi_a]\Pi_{\mathrm{target}}\\
    =&0;\\
    \text{(ii)}[\Pi'_{\mathrm{target}},\Pi_a]=&[\Pi_{\mathrm{target}} \Pi_a \Pi_{\mathrm{target}},\Pi_a]\\
    =&\Pi_{\mathrm{target}} \Pi_a \Pi_{\mathrm{target}} \Pi_a\\
    &- \Pi_a \Pi_{\mathrm{target}} \Pi_a \Pi_{\mathrm{target}}\\
    =&\Pi'_{\mathrm{target}}-\Pi'_{\mathrm{target}}=0;\\
    \text{(iii)}\Pi'_{\mathrm{target}} \Pi_a \Pi'_{\mathrm{target}}=&\Pi_{\mathrm{target}} \Pi_a \Pi_{\mathrm{target}}\\
    &\times\Pi_a\Pi_{\mathrm{target}} \Pi_a \Pi_{\mathrm{target}}\\
    =&\Pi'_{\mathrm{target}};\\
    \text{(iv)}\Pi'_{\mathrm{target}}[H,\Pi_a]\Pi'_{\mathrm{target}}=&[H,\Pi'_{\mathrm{target}}\Pi_a\Pi'_{\mathrm{target}}]\\
    =&[H, \Pi'_{\mathrm{target}}]=0.
\end{align*}
Therefore we conclude $\Pi_a$ is a trivial SCQ with respect to the reduced subspace defined by $\Pi'_{\mathrm{target}}$. Such discussion can be generalized to SCQs with more eigenvalues.

Thirdly, when there exists a non-trivial SCQ $A$, namely $[\Pi_{\mathrm{target}}, \mathcal{A}] = 0$ and $\Pi_{\mathrm{target}} \mathcal{A} \Pi_{\mathrm{target}} \not\propto \Pi_{\mathrm{target}}$, then there exist two pure states $|a_1\rangle, |a_2\rangle \in \mathcal{H}_{\mathrm{target}}$ which are eigenstates of $\mathcal{A}$ with different eigenvalues, i.e., $\mathcal{A} |a_1\rangle = a_1 |a_1\rangle$, $\mathcal{A} |a_2\rangle = a_2 |a_2\rangle$ with $a_1 \neq a_2$. This implies that if we perform a Schmidt decomposition for $|a_1\rangle$ and $|a_2\rangle$ with respect to the bipartition $\mathcal{H} = \mathcal{H}_S \otimes \mathcal{H}_{\bar S}$, then each branch of the Schmidt vector of $|a_1\rangle$ on $\mathcal{H}_S$ is an eigenstate of $\mathcal{A}$ with eigenvalue $a_1$, and similarly for $|a_2\rangle$. Therefore, $\operatorname{Tr}_S (|a_1\rangle \langle a_2| + |a_2\rangle \langle a_1|) = 0$. This already implies that the two globally orthogonal states $\frac{1}{\sqrt{2}}(|a_1\rangle + |a_2\rangle)$ and $\frac{1}{\sqrt{2}}(|a_1\rangle - |a_2\rangle)$ are bipartite indistinguishable.
\end{proof}
Equivalently, if a subspace $\mathcal{H}_{\mathrm{target}}$ is bipartite distinguishable with respect to the bipartition $\mathcal{H} = \mathcal{H}_S \otimes \mathcal{H}_{\bar S}$, it implies that no SCQ or only trivial SCQs can exist on $\mathcal{H}_S$. Thus, \textit{bipartite distinguishability on $\bar S$ implies the absence of local distinguishability on $S$}, although the converse is not necessarily true. Here, by the absence of local distinguishability, we specifically mean that there is no local operator capable of distinguishing the target states as eigenstates corresponding to distinct eigenvalues.

We now present the proof demonstrating the independence of the necessary conditions.
\begin{prop}
\label{prop:independent}
    In the following, the second necessary condition is independent of the first and the third condition:
    \begin{itemize}
        \item $\mathcal{H}_{\mathrm{GS}}$ is degenerate.
        \item There exists at least one trivial  SCQ inside one subspace of $\mathcal{H}_{\mathrm{GS}}$.
        \item $\mathcal{H}_1$ is bipartite indistinguishable with respect to the bipartition $\mathcal{H}=\mathcal{H}_S\otimes \mathcal{H}_{\bar S}$.
    \end{itemize}
\end{prop}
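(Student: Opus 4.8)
The plan is to establish independence by exhibiting two explicit target manifolds realizing the two relevant combinations of truth values, thereby showing that Condition~\ref{cond:2} neither follows from, nor forces, the conjunction of Conditions~\ref{cond:1} and~\ref{cond:3}. Before constructing them I would record the elementary criterion that a \emph{local} projector $\Pi$ supported on a region $S$ satisfies $\Pi|\psi\rangle=|\psi\rangle$ if and only if the range of the reduced density matrix $\operatorname{Tr}_{\bar S}|\psi\rangle\langle\psi|$ is contained in the range of $\Pi$. Consequently, if every pure state of a candidate manifold has a \emph{full-rank} reduced density matrix on every admissible local region $S$, then the only such projector is the identity, and no non-trivial trivial SCQ can exist on any subspace. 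This reduces the verification of the failure of Condition~\ref{cond:2} to a statement about the ranks of local RDMs.

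For the easy direction (Condition~\ref{cond:2} does not imply Conditions~\ref{cond:1} and~\ref{cond:3}) I would take as target manifold a single, non-degenerate local product state, e.g. $|\psi\rangle=|0\rangle^{\otimes N}$. Here the single-site projector $|0\rangle\langle 0|$ is a trivial SCQ, so Condition~\ref{cond:2} holds; yet the manifold is one-dimensional, so Condition~\ref{cond:1} fails, and since $\mathcal{L}(\mathcal{H}_{\mathrm{GS}})$ then contains only the state $|\psi\rangle\langle\psi|$, there are no two distinct ground states with coinciding reduced states and Condition~\ref{cond:3} fails as well. This settles one direction.

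For the hard direction (Conditions~\ref{cond:1} and~\ref{cond:3} do not imply Condition~\ref{cond:2}) I would produce a genuinely entangled, two-fold degenerate manifold $\operatorname{span}\{|\psi_1\rangle,|\psi_2\rangle\}$ with two properties: (i) $\operatorname{Tr}_S|\psi_1\rangle\langle\psi_1|=\operatorname{Tr}_S|\psi_2\rangle\langle\psi_2|$ for some bipartition, so that the (orthogonalised) states are bipartite indistinguishable and Conditions~\ref{cond:1},~\ref{cond:3} hold; and (ii) every pure state in the span has full-rank reduced density matrix on \emph{every} local region, so that by the criterion above Condition~\ref{cond:2} fails. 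Two concrete realisations are available: one may invoke the degenerate ground-state manifold of the antiferromagnetic Heisenberg model of Sec.~\ref{sec:Anti-ferromagnetic Heisenberg model}, already shown there to support no trivial SCQ, and check that its symmetry-related ground states share identical local RDMs; or one may argue abstractly that for local subsystems of dimension at least three a generic two-dimensional subspace misses the Segre variety of product states entirely, by the dimension count $1+\dim(\text{Segre})<\dim\mathbb{P}(\mathcal{H})$, so that all its states have full-rank marginals, while the subspace can still be chosen spanned by two maximally entangled states with maximally mixed, hence coinciding, reduced states.

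The main obstacle is this hard direction, and specifically the tension between its two requirements: coinciding reduced states (Condition~\ref{cond:3}) is a \emph{local} coincidence, whereas the absence of any trivial SCQ (failure of Condition~\ref{cond:2}) demands full-rank marginals \emph{simultaneously} on all local regions and for all states of the subspace, a global condition that rules out the naive examples—the GHZ and ferromagnetic manifolds contain product states and therefore \emph{do} admit trivial SCQs. I expect to control this either through the symmetry argument for the Heisenberg manifold or through the genericity/dimension-counting argument; the remaining bookkeeping—verifying that the chosen states are orthogonal, that the marginals genuinely coincide, and that no smaller local region accidentally produces a rank-deficient RDM—is routine once the construction is fixed.
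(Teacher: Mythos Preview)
Your easy direction is fine and coincides with the paper's. The hard direction, however, runs into a structural obstruction that breaks both of your proposed constructions.

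Suppose $|\psi_1\rangle,|\psi_2\rangle$ are pure states with $\operatorname{Tr}_S|\psi_1\rangle\langle\psi_1|=\operatorname{Tr}_S|\psi_2\rangle\langle\psi_2|$ and $|\psi_1\rangle$ has full Schmidt rank $d_S$. Then $|\psi_1\rangle,|\psi_2\rangle$ are two purifications (with ancilla $S$) of the same state on $\bar S$, so $|\psi_2\rangle=(U_S\otimes I_{\bar S})|\psi_1\rangle$ for some unitary $U_S$ on $\mathcal H_S$. Hence $\alpha|\psi_1\rangle+\beta|\psi_2\rangle=\bigl((\alpha I+\beta U_S)\otimes I_{\bar S}\bigr)|\psi_1\rangle$, whose Schmidt rank equals $\operatorname{rank}(\alpha I+\beta U_S)$. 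Since $U_S$ has some eigenvalue $\lambda$, the choice $(\alpha,\beta)=(-\lambda,1)$ makes $\alpha I+\beta U_S$ singular, so the span \emph{always} contains a state with a rank-deficient marginal on $S$---and hence a trivial SCQ on the one-dimensional subspace that state spans. This kills the ``two maximally entangled states'' route outright, and it equally blocks the Heisenberg route as you have sketched it: if two symmetry-related ground states really shared identical local RDMs, their span would again contain a rank-deficient state, contradicting the very absence of SCQs you wish to invoke. A related slip: for $d_S\ge 3$, ``missing the Segre variety'' (no rank-$1$ states) is strictly weaker than ``all states have full-rank marginals'' (no rank~$<d_S$ states), so your dimension count does not deliver what you need.

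The paper sidesteps the obstruction by realising indistinguishability through \emph{mixed} rather than pure states, at the price of a larger (four-dimensional) target manifold: it writes down explicit vectors $|\psi_0\rangle,\ldots,|\psi_3\rangle$ in $\mathbb C^3\otimes\mathbb C^6$ with $\operatorname{Tr}_S\bigl(|\psi_0\rangle\langle\psi_0|+|\psi_3\rangle\langle\psi_3|-|\psi_1\rangle\langle\psi_1|-|\psi_2\rangle\langle\psi_2|\bigr)=0$, and then checks directly that every superposition retains a full-rank marginal on $S$. Note also that the paper establishes \emph{pairwise} independence (Condition~2 versus Condition~1, then Condition~2 versus Condition~3) with four counterexamples in total, which is the intended reading of the statement, not independence from the conjunction.
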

\begin{proof}
By constructing explicit counterexamples, we demonstrate the independence of these conditions. First, we examine the first and second conditions. Consider the case where $\mathcal{H}_{\mathrm{GS}}$ is a single product state, for example, $|0\rangle^\otimes N$; in this scenario, the first condition is violated while the second condition is satisfied. Alternatively, if we define $\mathcal{H}_{\mathrm{GS}}$, consisting of $N$ qubits, as $|\psi_i\rangle = \frac{1}{2}(|0\rangle |\phi_{2i}\rangle + |1\rangle |\phi_{2i+1}\rangle)$ for $i = 0, 1$, where $|\phi_{i}\rangle$ are orthonormal states of $N-1$ qubits, then the second condition is violated while the first condition is satisfied.

Furthermore, the second and third conditions are independent, as illustrated by two counterexamples. The first counterexample demonstrates a scenario where a trivial SCQ exists within a subspace of $\mathcal{H}_{\mathrm{GS}}$, yet $\mathcal{H}_{\mathrm{GS}}$ is bipartite distinguishable. Specifically, consider $\mathcal{H}_{\mathrm{GS}} = {|\psi_0\rangle, |\psi_1\rangle}$, where $|\psi_0\rangle = |0\rangle_S \otimes |\alpha_0\rangle_{\bar S}$ and $|\psi_1\rangle = |0\rangle_S \otimes |\alpha_1\rangle_{\bar S}$, with ${}_{\bar S}\langle \alpha_0 | \alpha_1 \rangle_{\bar S} = 0$. It can be verified that $|0\rangle_S {}_S\langle 0|$ is a trivial SCQ in the subspace $\operatorname{span}\{|\psi_0\rangle\}$, while $\operatorname{Tr}_S \sum_{ij} \rho_{ij} |\psi_i\rangle \langle \psi_j| = \sum_{ij} \rho_{ij} |\alpha_i\rangle_{\bar S} {}_{\bar S}\langle \alpha_j| \neq 0$ for any non-zero coefficient matrix $\rho_{ij}$.

The second counterexample illustrates a case where no trivial SCQ exists within any subspace of $\mathcal{H}_{\mathrm{GS}}$, yet $\mathcal{H}_{\mathrm{GS}}$ is bipartite indistinguishable. Consider $\mathcal{H}_{\mathrm{GS}} = \operatorname{span}\{|\psi_0\rangle, |\psi_1\rangle, |\psi_2\rangle, |\psi_3\rangle\}$, with $\mathcal{H}_S = \operatorname{span}\{|i\rangle_S\}_{i=0,1,2}$ and $\mathcal{H}_{\bar S} = \operatorname{span}\{|i\rangle_{\bar S}\}_{i=0,1,\ldots,5,\ldots}$. We define the states as follows: 
\begin{align*} 
|\psi_0\rangle &= \frac{1}{\sqrt{3}} \left( |0\rangle_S |0\rangle_{\bar S} + |1\rangle_S |1\rangle_{\bar S} + |2\rangle_S |2\rangle_{\bar S} \right), \\ |\psi_1\rangle &= \frac{1}{\sqrt{3}} \left( |0\rangle_S |1\rangle_{\bar S} + |1\rangle_S |3\rangle_{\bar S} + |2\rangle_S |0\rangle_{\bar S} \right), \\ |\psi_2\rangle &= \frac{1}{\sqrt{3}} \left( |0\rangle_S |2\rangle_{\bar S} + |1\rangle_S |5\rangle_{\bar S} + |2\rangle_S |4\rangle_{\bar S} \right), \\ |\psi_3\rangle &= \frac{1}{\sqrt{3}} \left( |0\rangle_S |3\rangle_{\bar S} + |1\rangle_S |4\rangle_{\bar S} + |2\rangle_S |5\rangle_{\bar S} \right). 
\end{align*} 
Note that 
\begin{align*}
\operatorname{Tr}_S ( &|\psi_0\rangle\langle\psi_0| + |\psi_3\rangle\langle\psi_3| \\
&- |\psi_1\rangle\langle\psi_1| - |\psi_2\rangle\langle\psi_2| ) = 0,
\end{align*}
which implies that $\mathcal{H}_{\mathrm{GS}}$ is bipartite indistinguishable. Simultaneously, it can be verified that for any ground state $|\psi\rangle = \sum_i c_i |\psi_i\rangle$, the equation $\Pi_S |\psi\rangle = 0$ has no solution, where $\Pi_S$ is any potential local SCQ defined on $S$. Therefore, the third condition does not necessarily imply the possibility of transforming one ground state to another. This is because if we suppose there exists a local superoperator $\mathcal{P}$ such that
$$
\mathcal{P}(|\psi_0\rangle\langle\psi_0|+|\psi_3\rangle\langle\psi_3|-|\psi_1\rangle\langle\psi_1|-|\psi_2\rangle\langle\psi_2|)=0,$$
this equation leads to 
\begin{align*}
\mathcal{P}(|i\rangle_S {}_S\langle i|)=&\mathcal{P}(|i\rangle_S {}_S\langle i|),\\
\mathcal{P}(|i\rangle_S {}_S\langle j|)=&0,\quad\text{for } i\neq j \in\{0,1,2\}.
\end{align*}
In other words, $\mathcal{P}$ maps all states to the same state, destroying the coherence of the ground states. Thus, it cannot transform one ground state to another but instead replaces a ground state with a product of a final state on $S$ and some other state on $\bar S$.
\end{proof}

These examples indicate that all of the necessary conditions above must be considered. Here, we also discuss the incompleteness arising from strictly local conditions. Working in the Heisenberg picture, observables on $\mathcal{H}_S$, such as $O_S \otimes \mathbbm{1}_{\bar S}$, transform as desired into $\mathcal{P}_S^\dagger(O_S) \otimes \mathbbm{1}_{\bar S}$, while observables on the complementary subsystem $\bar S$ remain invariant, i.e., $\mathcal{P}_S^\dagger(\mathbbm{1}_S) \otimes O_{\bar S} = \mathbbm{1}_S \otimes O_{\bar S}$.
Our objective is to stabilize the GS manifold of a given local Hamiltonian. Practically, we often only have access to local system information, such as local energy of the Hamiltonian. During the steering process, only local information regarding the GS manifold can be exploited to design local superoperators—allowing us to control how this local information is transformed or preserved across different ground states. Such considerations lead directly to the previous necessary conditions. However, due to the presence of entanglement in quantum systems, certain ground states cannot be distinguished by purely local observables. Thus, a complete characterization of the NFFJS Hamiltonian may require global information about the system.
Furthermore, observables of the more general form $\sum_i O_{S,i} \otimes O_{\bar S,i}$ are transformed into $\sum_i \mathcal{P}_S(O_{S,i}) \otimes O_{\bar S,i}$, and cannot easily be converted into SCQs without detailed prior knowledge of their properties. Consequently, determining necessary and sufficient conditions for the steerability of the GS manifold $\mathcal{H}_{\mathrm{GS}}$ inevitably involves non-local information, which lies beyond the scope of the present paper.

\section{Proof for steerability of Commuting Pauli Hamiltonians}\label{app:Proof for Stabilizability of Commuting Pauli Hamiltonian}

In this section, we prove the steerability of commuting Pauli Hamiltonians. The discussion is organized into the following steps:
(i) we begin by analyzing the spectrum of a commuting Pauli Hamiltonian, establishing an equivalent, simplified binary representation of its spectrum; we also show that determining the GS energy of a commuting Pauli Hamiltonian is an NP-complete problem;
(ii) next, we introduce a family of steering superoperators and describe their action in the binary representation;
(iii) finally, we propose a simple random algorithm to generate a sequence of such superoperators, demonstrating that, in the asymptotic limit, the system converges to the GS manifold of a given commuting Pauli Hamiltonian. We further discuss alternative choices of steering superoperators, implementation via Clifford unitaries, and the robustness of the protocol under environmental decoherence.

\subsection{Spectrum of Commuting Pauli Hamiltonians}\label{app:Spectrum of Commuting Pauli Hamiltonian}

In the following, to establish the steerability of commuting Pauli Hamiltonians, we first introduce a lemma regarding their spectrum.

Commuting Pauli Hamiltonians are $n$-qubit Hamiltonians of the form $H = \sum_i H^{(i)}$, where each term $H^{(i)}$ belongs to the Pauli group and satisfies $[H^{(i)}, H^{(j)}] = 0$ for all $i,j$. Since each eigenstate of $H$ is simultaneously an eigenstate of every individual operator $H^{(i)}$, the eigenstates of $H$ can be uniquely labeled by the corresponding eigenvalues of the operators ${H^{(i)}}$. This leads naturally to a binary representation of the spectrum of $H$. The lemma is stated as follows:

\begin{lemma}
The spectrum of a $n$-qubit commuting Pauli Hamiltonian $H$ one-to-one corresponds to the weights of a subspace over $\mathbb{Z}_2$ field. This subspace is purely determined by $H$.
\end{lemma}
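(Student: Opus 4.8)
The plan is to exploit the simultaneous diagonalizability of the commuting terms and then translate the eigenvalue data into the symplectic (binary) representation of the Pauli group. First I would note that each term $H^{(i)}$, being a signed Pauli string, is a Hermitian involution, $(H^{(i)})^2=\mathbbm{1}$, with eigenvalues $\pm 1$. Because $[H^{(i)},H^{(j)}]=0$ for all $i,j$, the family $\{H^{(i)}\}$ can be simultaneously diagonalized, so every eigenstate of $H$ is a joint eigenstate carrying a definite sign pattern $s=(s_1,\dots,s_m)\in\{\pm 1\}^m$, $H^{(i)}|\psi\rangle=s_i|\psi\rangle$, with energy $E=\sum_i s_i$. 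Writing $s_i=(-1)^{b_i}$ with $b_i\in\mathbb{Z}_2$ turns this into $E=m-2\,w(b)$, where $w(b)=\sum_i b_i$ is the Hamming weight of $b=(b_1,\dots,b_m)$. Since $w\mapsto m-2w$ is strictly monotonic, distinct energies are already in bijection with distinct weights; the remaining task is to identify the set of admissible $b$.

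The core of the argument is to show that the admissible syndromes $b$ form an (affine) subspace of $\mathbb{Z}_2^m$ that depends only on $H$. To this end I would pass to the binary symplectic encoding: each $H^{(i)}$ is represented by a vector $v^{(i)}\in\mathbb{Z}_2^{2n}$ (its $X$- and $Z$-supports) together with a sign, and multiplication of Pauli strings corresponds to addition of these vectors. Let $k=\dim_{\mathbb{Z}_2}\operatorname{span}\{v^{(i)}\}$ and let $R=\ker(\,e_i\mapsto v^{(i)}\,)\subseteq\mathbb{Z}_2^m$ be the space of linear dependencies; each $r=\chi_S\in R$ encodes a relation $\prod_{i\in S}H^{(i)}=c_S\mathbbm{1}$ with $c_S\in\{\pm 1\}$ fixed by the explicit signs together with the $\pm 1,\pm i$ phases accumulated in the Pauli products (these phases are necessarily real here, since the product is a Hermitian involution proportional to the identity). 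Every admissible pattern must obey $\prod_{i\in S}s_i=c_S$, i.e. $\langle b,r\rangle=\phi(r)$ for a linear functional $\phi$ on $R$; hence the admissible $b$ lie in the coset $C=b_0+R^\perp$ of the linear code $R^\perp$ of dimension $k$, manifestly determined by $H$ alone.

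For the converse — that every $b\in C$ is actually realized in the spectrum — I would select a maximal independent subset of the $v^{(i)}$, say the first $k$, and invoke the standard fact that an independent commuting set of Pauli involutions admits joint eigenstates for all $2^k$ sign choices (each occurring with equal degeneracy $2^{n-k}$, since independence forbids any relation producing $\pm\mathbbm{1}$). The remaining signs are then pinned by the relations, so the image of the syndrome map is exactly $C$. Collecting these steps, the spectrum equals $\{\,m-2\,w(b):b\in C\,\}$, and the distinct eigenvalues are in one-to-one correspondence with the weights appearing in $C$, a subspace fixed by $H$. I expect the main obstacle to be the sign/phase bookkeeping in the second paragraph: one must verify that $c_S$ is well defined (independent of the order of multiplication) and that $\phi$ is genuinely $\mathbb{Z}_2$-linear on $R$, so that $C$ is a single coset rather than an unstructured set; once this is secured, the realizability argument and the weight–energy dictionary are routine. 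When all $c_S=+1$ the coset is the linear subspace $R^\perp$ itself, and in general one may absorb $b_0$ by a relabeling to present $C$ as a genuine linear subspace.
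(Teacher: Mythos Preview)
Your proposal is correct and follows essentially the same route as the paper: encode each $H^{(i)}$ as a binary vector, read off the relation space $R$ as the kernel of this encoding, translate the multiplicative phase constraints $\prod_{i\in S}H^{(i)}=c_S\mathbbm{1}$ into the affine condition $C_{\mathrm{H}}v_{\mathrm{E}}^T=p_{\mathrm{ph}}^T$, and conclude that the admissible sign patterns form a coset of $R^\perp$ whose Hamming weights give the spectrum via $E=m-2w$.

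The only substantive difference is in the realizability step: the paper establishes that every solution $v_{\mathrm{E}}$ is actually attained by computing the trace of the projector $2^{-N}\prod_i(\mathbbm{1}+(-1)^{v_{i,\mathrm{E}}}H^{(i)})$ explicitly and showing it equals $2^{d_r}$ precisely when the constraint holds, whereas you invoke the standard fact that $k$ independent commuting Pauli involutions realize all $2^k$ joint sign patterns with equal degeneracy. Both arguments are valid; the trace computation has the mild advantage of giving the degeneracy directly without appealing to an external lemma, while your formulation makes the affine-versus-linear distinction (the $b_0$ offset) more transparent than the paper does. Your caution about verifying that $\phi$ is $\mathbb{Z}_2$-linear on $R$ is well placed and easily discharged: since the $H^{(i)}$ commute, $c_{S_1\triangle S_2}=c_{S_1}c_{S_2}$ follows immediately from multiplying the two relations.
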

Since we are considering commuting Pauli Hamiltonians, there exists a complete set of eigenstates of $H$ such that each eigenstate is simultaneously an eigenstate of every local term $H^{(i)}$, with eigenvalues $E^{(i)} = \pm 1$. These eigenvalues collectively define a one-dimensional linear representation of the abelian group generated by the set $\{H^{(i)}\}$. Consequently, the spectrum of $H$ can be fully characterized by analyzing this linear representation.

\begin{proof}
The proof proceeds in two steps. First, we derive a system of linear equations for the eigenvalues $E^{(i)}$, which define a one-dimensional linear representation of the group generated by the local terms $\{H^{(i)}\}$. Second, we show that every solution to these equations corresponds uniquely to a valid set of eigenenergies of $\{H^{(i)}\}$.

First, we derive the constraints on the eigenvalues $E^{(i)}$. Simply speaking, there may be commuting but dependent local eigenenergies. For instance, consider $H^{(1)} = \sigma_z \otimes \sigma_z \otimes I$, $H^{(2)} = I \otimes \sigma_z \otimes \sigma_z$, and $H^{(3)} = \sigma_z \otimes I \otimes \sigma_z$. Then $H^{(1)} H^{(2)} H^{(3)} = I$ implies $E^{(1)} E^{(2)} E^{(3)} = 1$. Such constraints can be converted into linear equations. To study $\{E^{(i)}\}$, which forms a linear representation of the group generated by $H^{(i)}$, we first examine this group.
We begin by deriving constraints on the eigenvalues $E^{(i)}$. In general, although the terms $\{H^{(i)}\}$ commute, their eigenvalues may not be independent. For example, consider the local terms $H^{(1)} = \sigma_z \otimes \sigma_z \otimes I$, $H^{(2)} = I \otimes \sigma_z \otimes \sigma_z$, and $H^{(3)} = \sigma_z \otimes I \otimes \sigma_z$. These satisfy the relation $H^{(1)} H^{(2)} H^{(3)} = I$, which imposes the constraint $E^{(1)} E^{(2)} E^{(3)} = 1$ on any simultaneous eigenstate. Such multiplicative constraints can be systematically translated into linear equations over $\mathbb{Z}_2$. To analyze the eigenvalue structure $\{E^{(i)}\}$, we thus begin by examining the group generated by the $H^{(i)}$.

Suppose there are $n$ qubits, and $N = \operatorname{poly}(n)$ local terms in $H$, as is typical for a local Hamiltonian. We introduce an Abelian subgroup of the Pauli group generated by $H^{(i)}$ and $-1$, denoted as the Hamiltonian group $\mathbb{H}$. Formally, $\mathbb{H}$ can be written as 
\begin{equation}
\label{eq:Hgroup}
    \mathbb{H}=\{g_1^{i_1} g_2^{i_2} \cdots g_N^{i_N}| i_{j} \in \{0,1\}, g_j\in\{-1,H^{(1)},\cdots\}\}.
\end{equation}
However, the generators $H^{(i)}$ are not independent. According to fundamental theorem of homomorphism~\cite{hungerford2012algebra}, an arbitrary group $G$ is fully specified by its generators $GE=\{g_1,\cdots g_p\}$ and certain constraints called relations, written in the form 
    $$
    r=g_1^{i_1} g_2^{i_2} \cdots g_N^{i_N}=1.
    $$
Therefore, if we can determine the relations of $\mathbb{H}$, we can fully specify $\mathbb{H}$. This can be accomplished using linear equations.

Each Pauli observable $H^{(i)}$ can be represented by a string of $2n$ binary numbers $v^{(i)}_{\mathrm{P}} = (a^{(i)}_1, \cdots, a^{(i)}_n; b^{(i)}_1, \cdots, b^{(i)}_n)_{\mathrm{P}}$ when it is proportional to $\sigma_x^{a^{(i)}_1} \sigma_z^{b^{(i)}_1} \otimes \cdots \otimes \sigma_x^{a^{(i)}_n} \sigma_z^{b^{(i)}_n}$ up to a phase factor. The subscript $P$ indicates that it encodes the powers of the Pauli matrices. Note that the index within parentheses denotes the label of the local term $H^{(i)}$, while the indices without parentheses refer to the qubit sequence number. For a relation $H^{(i_1)} \times \cdots \times H^{(i_k)} = \pm 1$, this is equivalent to
$$
    v^{(i_1)}_{\mathrm{P}}\oplus\cdots\oplus v^{(i_k)}_{\mathrm{P}}=0,
$$
where $\oplus$ denotes addition modulo 2. By representing the Hamiltonians as an $N \times 2n$ binary matrix $M$, the relations can be found as the left null space of this matrix:
    \begin{align*}
         0=&c_{\mathrm{H}}\times M_{\mathrm{P}}=(c_1,c_2,\cdots,c_N)_{\mathrm{H}}\\
         &\times\begin{pmatrix}
    a_1^{(1)} & \cdots & a_n^{(1)} & b_1^{(1)} &\cdots & b_n^{(1)}\\
    \vdots    &        &            &           &       & \vdots\\
    a_1^{(N)} & \cdots & a_n^{(N)} & b_1^{(N)} &\cdots & b_n^{(N)}
    \end{pmatrix}_{\mathrm{P}},
    \end{align*}   
where the subscript $H$ indicates that it represents the powers of $H^{(i)}$. All possible solutions $c_{\mathrm{H}}$ to the above linear equation encompass all relations.

We represent this linear null subspace with its independent bases $\{c_{\mathrm{H}}^{(i)}\in \mathbb{Z}_2^N, i=1,2,\cdots,d_r\}$. Note that the product of the $H^{(i)}$ could yield either $-1$ or $1$. The actual phases for each relation can be represented by a $d_r$-dimensional binary vector $p_{\mathrm{ph}}$ as
$$
    \prod_{j=1}^N(H^{(j)})^{c_{j,\mathrm{H}}^{(i)}}=(-1)^{p_{i,\mathrm{ph}}},
$$
where the subscript denotes that it encodes the phases of the relations. The relations of $\mathbb{H}$ are thus fully characterized by all possible $c_{\mathrm{H}}$ and the corresponding $p_{\mathrm{ph}}$.

Next, we derive the constraints on $E^{(i)}$ based on the constraints on $H^{(i)}$ described above. We encode the eigenvalues of a given eigenstate as an $N$-dimensional binary vector $v_{\mathrm{E}}$, where the subscript $E$ indicates that it encodes information about the eigenenergies, with $E^{(i)} = (-1)^{v_{i,\mathrm{E}}}$. The relation above then reads
    $$
    \prod_{j=1}^N(E^{(j)})^{c_{j,\mathrm{H}}^{(i)}}=(-1)^{\boldsymbol{c_{\mathrm{H}}^{(i)}}\cdot\boldsymbol{v_{\mathrm{E}}}}=(-1)^{p_{i,\mathrm{ph}}}.
    $$
If we rewrite the relations as a $d_r \times N$ binary matrix $C_{ij,\mathrm{H}}=c^{(i)}_{j,\mathrm{H}}$, then the energy vector must satisfy
    \begin{equation}
        C_{\mathrm{H}}\times v_{\mathrm{E}}^T= p_{\mathrm{ph}}^T,
        \label{EnergyEquation}
    \end{equation}
where the addition is defined over the field $\mathbb{Z}_2$. This equation provides the constraint on the eigenenergies $E^{(i)}$.

The second step is to prove that for each solution $v_{\mathrm{E}}$, there exists a corresponding eigenstate. In other words, each solution of this constraint corresponds one-to-one with a set of eigenenergies. For an arbitrary $N$-dimensional vector $v_{\mathrm{E}}$, the corresponding projection operator onto the subspace assigning energies according to $v_{\mathrm{E}}$ is
    \begin{align*}
        &\frac{1}{2^N}\prod_{i=1}^N (1+(-1)^{v_{i,\mathrm{E}}}H^{(i)})\\
        =& \frac{1}{2^N}\sum_{\boldsymbol{b}\in\{0,1\}^N} (-1)^{\boldsymbol{b}\cdot \boldsymbol{v}_{E}} \prod_{i=1}^N (H^{(i)})^{b_i}.
    \end{align*}
Since all local terms commute, the square of the operator above is itself, confirming that it is indeed a projection operator. The eigenstates of this projection operator possess the energy vector $v_{\mathrm{E}}$. Whether the corresponding subspace has non-zero dimension can be determined by its trace, which is non-zero if and only if the product $\prod_{i=1}^N (H^{(i)})^{b_i} \propto 1$. This holds true if and only if $\boldsymbol{b} \in \operatorname{span}\{c^{(i)}_{\mathrm{H}}\}$, or equivalently, $\boldsymbol{b} = \boldsymbol{x} \times C_{\mathrm{H}}$, where $\boldsymbol{x} \in \mathbb{Z}_2^{d_r}$; that is, $\boldsymbol{b}$ is a vector representing possible relations. The trace can thus be computed as follows:
    \begin{align*}
         &\frac{1}{2^N}\operatorname{Tr}\sum_{\boldsymbol{b}\in\{0,1\}^N} (-1)^{\boldsymbol{b}\cdot \boldsymbol{v}_{E}} \prod_{i=1}^N (H^{(i)})^{b_i}\\
         =&\sum_{\boldsymbol{x}\in\{0,1\}^{d_r}} (-1)^{\boldsymbol{x}\times C_{\mathrm{H}} \times \boldsymbol{v}_{\mathrm{E}}^T}\times (-1)^{\boldsymbol{x}\cdot\boldsymbol{p}_{\mathrm{ph}}}.
    \end{align*}
Therefore, if $v_{\mathrm{E}}$ is the solution of Eq.~\eqref{EnergyEquation}, the trace is simply $\sum_{\boldsymbol{x}}1=2^{d_r}$. Otherwise, it is $\sum_{\boldsymbol{x}}(-1)^{\boldsymbol{x}\cdot\boldsymbol{y}}=0$, where $\boldsymbol{y}=C_{\mathrm{H}}\times \boldsymbol{v}_{\mathrm{E}}^T+\boldsymbol{p}_{\mathrm{ph}}^T\neq 0$.   Thus, only if $v_{\mathrm{E}}$ is a solution of Eq.~\eqref{EnergyEquation} do there exist $2^{d_r}$ distinct eigenstates with the corresponding eigenenergies.

In conclusion, solving Eq.~\eqref{EnergyEquation} provides complete information about the spectrum of $H$. Let $\operatorname{span}\{a_{\mathrm{E}}^{[i]\} \in \mathbb{Z}_2^N,\ i = 1, \cdots, N - d_r}$ denote a basis for the right null space of $C_{\mathrm{H}}$, and let $s_{\mathrm{E}}$ be a particular solution to Eq.~\eqref{EnergyEquation}. Then the full spectrum of the Hamiltonian is obtained by evaluating all $2^{N - d_r}$ solutions of the form $v_{\mathrm{E}} = s_{\mathrm{E}} + \sum_{i=1}^{N - d_r} x_i a_{\mathrm{E}}^{[i]}$ where $x \in \mathbb{Z}_2^{N - d_r}$ are binary coefficients. Each such binary $N$-bit string encodes a valid eigenvalue configuration, with entries $0$ and $1$ corresponding to $E = +1$ and $E = -1$, respectively. Counting the number of $0$s and $1$s in each solution yields the spectrum of $H$.
\end{proof}

Solving for the ground-state energy of a commuting Pauli Hamiltonian can therefore be reduced to searching within the solution space of Eq.~\eqref{EnergyEquation} for a binary vector with the maximal number of $1$ entries. Equivalently, this is the well-known Maximum-Likelihood Decoding Problem, which has been proven to be NP-complete~\cite{NPC1055873}. In our context, this corresponds to transforming $p_{\mathrm{ph}} \rightarrow p_{\mathrm{ph}} + C_{\mathrm{H}} \times (1, \cdots, 1)^\mathrm{T}$ and then identifying the binary string in the solution space with the maximal number of $0$ entries. This reduction is manifestly polynomial in time, leading to the following proposition.

\begin{prop}
\label{prop:npc}
    Determining the ground-state energy of a commuting Pauli Hamiltonian is NP-complete. 
\end{prop}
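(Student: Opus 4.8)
The plan is to establish the two halves of NP-completeness separately—membership in NP and NP-hardness—both leveraging the binary spectral characterization of the preceding Lemma. Throughout I phrase the task as the decision problem ``is $E_{\mathrm{GS}}\le k$?'', since NP-completeness is a statement about decision problems.

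First I would verify membership in NP. The Lemma guarantees that every eigenvalue configuration $v_{\mathrm{E}}\in\mathbb{Z}_2^N$ satisfying $C_{\mathrm{H}}v_{\mathrm{E}}^T=p_{\mathrm{ph}}^T$ labels a genuine (nonempty) eigenspace, of energy $E=\sum_i(-1)^{v_{i,\mathrm{E}}}$. A witness for a YES instance is therefore just such a vector $v_{\mathrm{E}}$ with $E(v_{\mathrm{E}})\le k$. The verifier reads off the $2n$-bit symplectic representation of each $H^{(i)}$, computes the relation matrix $C_{\mathrm{H}}$ and the phase vector $p_{\mathrm{ph}}$ by Gaussian elimination over $\mathbb{Z}_2$ (with the relation signs tracked through the standard symplectic phase bookkeeping), checks the linear constraint $C_{\mathrm{H}}v_{\mathrm{E}}^T=p_{\mathrm{ph}}^T$, evaluates the associated energy, and compares it with $k$. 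Every step is polynomial in $n$ and $N$, so the problem lies in NP.

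The substantive half is NP-hardness, for which I would reduce the Maximum-Likelihood Decoding (MLD) problem—NP-complete by Ref.~\cite{NPC1055873}—to the ground-state energy problem. The forward correspondence (a commuting Pauli Hamiltonian yields, in polynomial time, an affine system over $\mathbb{Z}_2$ whose maximal-Hamming-weight solution fixes the ground state, with the flip $p_{\mathrm{ph}}\to p_{\mathrm{ph}}+C_{\mathrm{H}}(1,\dots,1)^T$ interchanging the max- and min-weight objectives) is precisely the content of the Lemma and the discussion above. For hardness I need the \emph{converse} embedding. Given an MLD instance—a binary matrix $A$, a syndrome $b$, and a weight bound—I would realize the prescribed affine solution space $\{v:Av=b\}$ as the spectral-constraint space of a commuting Pauli Hamiltonian. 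The clean device is to use only diagonal ($\sigma^z$-type) terms $H^{(i)}=s_i\prod_k(\sigma^z_k)^{M_{ik}}$ with $s_i=\pm1$: these commute automatically, their multiplicative relations are governed by the left null space of the binary matrix $M$, and the sign carried by a relation $c$ is $(-1)^{s\cdot c}$. Choosing $M$ so that its left null space equals the row space of $A$, and choosing the sign vector $s$ to reproduce the phases $b$, are both routine $\mathbb{Z}_2$ linear-algebra tasks; the resulting set of realizable configurations is exactly $\{v:Av=b\}$, and minimizing $\sum_i(-1)^{v_i}=N-2\,\mathrm{wt}(v)$ over it is exactly maximizing Hamming weight, i.e.\ the MLD objective. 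Composing with the membership argument then yields the proposition.

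I expect the main obstacle to be this reverse embedding: proving that an \emph{arbitrary} MLD instance can be reproduced, with both the correct relation structure \emph{and} the correct relation phases, by a polynomially sized commuting Pauli Hamiltonian. The forward direction is already secured by the Lemma, so the crux is to confirm that the diagonal construction above is surjective onto the family of MLD instances (up to polynomial-time pre- and post-processing), and in particular that the phase vector $b$ can always be matched by a consistent choice of signs $\{s_i\}$, for which one must check the relevant $\mathbb{Z}_2$ system is solvable on every genuine input.
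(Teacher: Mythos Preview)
Your plan is correct and follows the same route as the paper: both pin the problem to Maximum-Likelihood Decoding via the affine spectral characterization established in the Lemma. The paper is considerably terser, however—it observes that the ground-state problem \emph{reduces to} an MLD instance (after the bit-flip $p_{\mathrm{ph}}\to p_{\mathrm{ph}}+C_{\mathrm{H}}(1,\dots,1)^T$) and declares NP-completeness from there, leaving the reverse embedding implicit. You are right that hardness strictly requires the converse direction, and your diagonal-$\sigma^z$ construction is the natural way to supply it; the phase-matching obstacle you flag dissolves under routine preprocessing, since once $A$ is reduced to full row rank the sign system $A\sigma=b$ is solvable for every feasible MLD instance (and infeasible instances are detected and rejected in polynomial time).
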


Since this is an NP-complete problem, it is widely believed that no efficient algorithm exists for solving it in general, even on a quantum computer. Therefore, our goal is not to establish quantum advantage in solving this problem. Rather, the key insight is that it is possible to steer a quantum system toward its ground state without requiring explicit knowledge of that state—though the process may, in the worst case, require exponential time.

\subsection{Steering operators}\label{app:Steering Operators}

In this section, we introduce the set of steering operators for commuting Pauli Hamiltonians and analyze their effects in the Heisenberg picture. Our focus is on the evolution of the energy expectation value throughout the steering process. We begin by characterizing the evolution of the Hamiltonian group, denoted $\mathbb{H}$ (cf. Eq.~\eqref{eq:Hgroup}), under the action of these steering operators, and subsequently represent the transformations using a binary matrix formalism.

The superoperators are chosen as follows:
\begin{align}
\label{eq:sup_op_CPH}
    \mathcal{P}^{(i)}_{\pm}(\cdot;V_{+})&=\frac{\mathbbm{1}\pm H^{(i)} }{2} \cdot \frac{\mathbbm{1}\pm H^{(i)} }{2}\nonumber\\
    &+ \frac{\mathbbm{1}\mp H^{(i)} }{2} V^\dagger_{+} \cdot V_{+}\frac{\mathbbm{1}\mp H^{(i)} }{2},\nonumber\\
    \mathcal{P}^{(i)}_{\pm}(\cdot;V_{-})&=\frac{\mathbbm{1}\pm H^{(i)} }{2} \cdot \frac{\mathbbm{1}\pm H^{(i)} }{2}\nonumber\\
    &+ \frac{\mathbbm{1}\pm H^{(i)} }{2} V^\dagger_{-} \cdot V_{-}\frac{\mathbbm{1}\pm H^{(i)} }{2},
\end{align}
where $V_{\pm}=V_{\pm}^\dagger$ are some Pauli operators and $V_{\pm} H^{(i)} V_{\pm}=\pm H^{(i)}$. 

Intuitively, the effect of $\mathcal{P}^{(i)}_{\pm}$ keeps the subspace $(1\pm H^{(i)})/2$ invariant and   flip the orthogonal subspaces with $V_{\pm}$. Thus $\mathcal{P}^{(i)}_-(\cdot; V_-)$ locally cools $H^{(i)}$ while $\mathcal{P}^{(i)}_+(\cdot; V_-)$ locally heats $H^{(i)}$. This is manifested in the following equations:
Intuitively, the effect of $\mathcal{P}^{(i)}_{\pm}$ is to leave the subspace $(1 \pm H^{(i)})/2$ invariant while flipping the orthogonal subspace using $V_{\pm}$. Therefore, $\mathcal{P}^{(i)}-(\cdot; V-)$ acts to locally cool $H^{(i)}$, whereas $\mathcal{P}^{(i)}_+(\cdot; V-)$ locally heats $H^{(i)}$. This behavior is captured in the following equations:
\begin{align*}
    \mathcal{P}^{(i)\dagger}_-(H^{(i)};V_-)&=-\mathbbm{1},\\
    \mathcal{P}^{(i)\dagger}_+(H^{(i)};V_-)&=\mathbbm{1}.
\end{align*}

Since $V_{\pm}$ either commute or anti-commute with each $H^{(i)}$, this can be represented by a $N$-dimensional binary vector $g_{i,\mathrm{VH}}$ as
$$
VH^{(i)}V=(-1)^{g_{i,\mathrm{VH}}}H^{(i)}.
$$
Then the effect of superoperators on Hamiltonians can be simply written as
$$
    \mathcal{P}^{(j)\dagger}_{\pm}(H^{(i)})=(\pm H^{(j)})^{g_{j,\mathrm{VH}}}H^{(i)}.
$$
Noteworthy, $\mathbb{H}$ (cf. Eq.~\eqref{eq:Hgroup}) is closed under the application of $\mathcal{P}_{\pm}$ as
\begin{align}
&\mathcal{P}^{(j)\dagger}_{\pm}(H^{(i_1)}H^{(i_2)}\cdots H^{(i_n)})\nonumber\\
=&(\pm H^{(j)})^{g_{i_1,\mathrm{VH}}+\cdots+g_{i_n,\mathrm{VH}}}H^{(i_1)}H^{(i_2)}\cdots H^{(i_n)}.
\end{align}
This is the reason we introduced $\mathbb{H}$, and in the following, we will keep track of $H^{(i)}$ under a sequence of steering operators. Above we have ignored the dependence on $V_{\pm}$ since using $g_{\mathrm{VH}}$ is more convenient, and whether we use $V_+$ or $V_-$ depends on $g_{i,\mathrm{VH}}$. Nevertheless, the value of $g_{\mathrm{VH}}$ cannot be arbitrarily assigned, as we shall see in the following lemma. 
This is precisely why we introduced $\mathbb{H}$—to track the evolution of each $H^{(i)}$ under a sequence of steering operators. In the discussion above, we have omitted the explicit dependence on $V_{\pm}$, as it is more convenient to work with the binary vector $g_{\mathrm{VH}}$. The choice between $V_+$ and $V_-$ is determined by the value of $g_{i,\mathrm{VH}}$. However, as we will show in the following lemma, the values of $g_{\mathrm{VH}}$ cannot be assigned arbitrarily.

\begin{lemma}
\label{lemma:V}
    For arbitrary Pauli operator $V$, with its commuting relation with $H^{(i)}$ represented by vector $g_{\mathrm{VH}}$, must satisfy that $C_{\mathrm{H}} \times g_{\mathrm{VH}}^T=0$, where the matrix $C_{\mathrm{H}}$ is defined in Appendix~\ref{app:Spectrum of Commuting Pauli Hamiltonian}. Conversely for arbitrary $g_{\mathrm{VH}}$ in right null space of $C_{\mathrm{H}}$, we can construct a Pauli operator $V$.
\end{lemma}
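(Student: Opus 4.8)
The plan is to translate the entire statement into the symplectic (binary) representation of the Pauli group and reduce it to a single linear-algebra identity over $\mathbb{Z}_2$. First I would fix the symplectic form $\Lambda=\bigl(\begin{smallmatrix}0 & I_n\\ I_n & 0\end{smallmatrix}\bigr)$ compatible with the encoding $\sigma_x^{a}\sigma_z^{b}\leftrightarrow(a;b)$ used in Appendix~\ref{app:Spectrum of Commuting Pauli Hamiltonian}, so that for any Pauli $V$ with binary vector $v_V$ the commutation rule reads $V H^{(i)} V=(-1)^{\,v_V\Lambda (v^{(i)}_{\mathrm{P}})^{T}}H^{(i)}$. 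Stacking the defining vectors $v^{(i)}_{\mathrm{P}}$ as the rows of $M_{\mathrm{P}}$, the vector recording the commutation pattern is exactly
\begin{equation}
g_{\mathrm{VH}}^{T}=M_{\mathrm{P}}\,\Lambda\, v_V^{T}.
\end{equation}
Hence the content of the lemma is a statement about the image of the $\mathbb{Z}_2$-linear map $v_V\mapsto M_{\mathrm{P}}\Lambda v_V^{T}$, namely that this image equals the right null space of $C_{\mathrm{H}}$.

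The forward direction is then immediate. By construction of $C_{\mathrm{H}}$ (cf. Appendix~\ref{app:Spectrum of Commuting Pauli Hamiltonian}), its rows span the left null space of $M_{\mathrm{P}}$, i.e. $C_{\mathrm{H}} M_{\mathrm{P}}=0$. Therefore $C_{\mathrm{H}} g_{\mathrm{VH}}^{T}=C_{\mathrm{H}} M_{\mathrm{P}}\Lambda v_V^{T}=0$ for every Pauli $V$, which is precisely $C_{\mathrm{H}}\times g_{\mathrm{VH}}^{T}=0$. In particular this establishes the inclusion $\operatorname{Im}(M_{\mathrm{P}}\Lambda)\subseteq\ker C_{\mathrm{H}}$, which I will upgrade to an equality in the converse.

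For the \emph{converse} I would argue by dimension counting over $\mathbb{Z}_2$. Since $\Lambda$ is invertible ($\Lambda^{2}=I$), the image of $v_V\mapsto M_{\mathrm{P}}\Lambda v_V^{T}$ coincides with the column space $\operatorname{Im}(M_{\mathrm{P}})$, of dimension $\operatorname{rank}M_{\mathrm{P}}=N-d_r$. On the other hand, as the $d_r$ rows of $C_{\mathrm{H}}$ are independent, $\dim\ker C_{\mathrm{H}}=N-\operatorname{rank}C_{\mathrm{H}}=N-d_r$. Combined with the inclusion already proved, equality of dimensions forces $\operatorname{Im}(M_{\mathrm{P}}\Lambda)=\ker C_{\mathrm{H}}$. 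Consequently, any $g_{\mathrm{VH}}$ annihilated by $C_{\mathrm{H}}$ admits a binary solution $v_V$ of $M_{\mathrm{P}}\Lambda v_V^{T}=g_{\mathrm{VH}}^{T}$; since every vector of $\mathbb{Z}_2^{2n}$ is the symplectic label of some Pauli operator (up to an irrelevant global phase), this $v_V$ defines the required $V$.

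The main obstacle I anticipate is not the algebra itself but handling the linear-algebra over $\mathbb{Z}_2$ cleanly: one is tempted to phrase the converse through orthogonal complements ($\operatorname{Im}(M_{\mathrm{P}})=(\text{left null space})^{\perp}$), but over $\mathbb{Z}_2$ the standard dot product admits isotropic vectors, so one must verify that $V^{\perp\perp}=V$ and $\dim V+\dim V^{\perp}=N$ still hold (they do, because the form is non-degenerate). The dimension-counting route above sidesteps this subtlety entirely, relying only on the inclusion from the forward direction together with the rank identities $\operatorname{rank}M_{\mathrm{P}}=N-d_r=\dim\ker C_{\mathrm{H}}$, and is therefore the argument I would commit to.
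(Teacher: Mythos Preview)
Your argument is correct. Both you and the paper work in the symplectic binary representation, and the forward direction is the same idea in both: $V$ commutes with any product $\prod_i (H^{(i)})^{c_i}$ that equals the identity, which in your language is exactly $C_{\mathrm{H}}M_{\mathrm{P}}=0$ composed with $\Lambda v_V^{T}$. The difference lies in the converse. The paper proceeds constructively: it reduces to a basis $\{a^{[i]}_{0,\mathrm{C}}\}$ of $\ker C_{\mathrm{H}}$ and, for each basis vector, writes down the linear system $u^{(j)}_{\mathrm{P}} J v^{(i)T}_{\mathrm{P}}=\delta_{ij}$, arguing solvability from the independence of the $u^{(j)}_{\mathrm{P}}$. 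Your dimension count is slicker---equating $\operatorname{rank}M_{\mathrm{P}}=N-d_r=\dim\ker C_{\mathrm{H}}$ and invoking the already-established inclusion---and avoids choosing bases altogether. The trade-off is that the paper's route actually \emph{produces} the Pauli $V$ (useful later when locality of $V$ is discussed), whereas yours only certifies existence; both are valid proofs of the lemma as stated.
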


This lemma arises from the fact that the group $\mathbb{H}$ contains the identity operator, i.e. the relations, and any operator $V$ must commute with the identity. We defer the proof of this lemma to Appendix~\ref{sec:Locality of superoperator}, where we also discuss the locality constraints involved. For now, let us proceed to analyze the evolution of the Hamiltonian terms $H^{(i)}$ in the Heisenberg picture.

As we have seen, the action of $\mathcal{P}^{(i)\dagger}_\pm$ simply maps one element of $\mathbb{H}$ to another. This transformation process terminates when the element becomes $\pm \mathbbm{1}$, since $\mathcal{P}^{(i)\dagger}_\pm(\mathbbm{1}) = \mathbbm{1}$. This implies that all initial states are ultimately steered into a subspace where each $H^{(i)}$ has a definite expectation value. To analyze this systematically, it is useful to decompose elements of $\mathbb{H}$ in terms of independent generators (see below), such that all nontrivial generators can eventually be mapped to the identity operator. Fortunately, this evolution can be described as a sequence of linear operations over the finite field $\mathbb{Z}_2$.

Without loss of generality, we consider that 
$$
H^{(i)}=(H^{(1)}\cdots H^{(i-1)})^0 (H^{(i)})^1 (H^{(i+1)}\cdots H^{(N)})^0
$$ 
can be represented by a vector $x_{\mathrm{H}}$ with $x_{j,\mathrm{H}}=\delta_{ij}$. As a further example, the operator $H^{(i_1)}H^{(i_2)}$ can be represented by a vector $x_{j,\mathrm{H}}=\delta_{i_1 j}+\delta_{i_2 j}$. Therefore, all of local Hamiltonian terms $\{H^{(i)}\}$ can be represented by a $N\times N$ identity matrix $\mathbbm{1}_{N\times N,\mathrm{H}}$ with its $i$th row representing $H^{(i)}$, where the subscript $H$ indicates that it represents the powers of $\{H^{(i)}\}$ operator. Without loss of generality, we can assume $C_{\mathrm{H}}$ is a upper triangular matrix (otherwise we can make use of Gaussian elimination and permute orders of $H^{(i)}$), so the orthogonal subspace of row subspace of $C_{\mathrm{H}}$ can have a simple form, spanned by row subspace of $(N-d_r)\times N$ matrix $D_{\mathrm{H}}$ with $D_{ij,\mathrm{H}}= \delta_{j-i,d_r}$. In this bases, we have the following decomposition
\begin{widetext}
        \begin{align}
        \mathbbm{1}_{N\times N,\mathrm{H}}=&\begin{pmatrix}
        1 & &\\
          &\ddots&\\
          & &1
        \end{pmatrix}_{\mathrm{H}}=\left(\begin{array}{@{}c|c@{}}
  B^T_{0,\mathrm{C}}&A^T_{0,\mathrm{C}}\\
\end{array}\right)\left(\begin{array}{c}
  C_{\mathrm{H}}\\
\hline
  D_{\mathrm{H}}
\end{array}\right)\nonumber\\
=&\left(\begin{array}{@{}c|c@{}}
  M^{-1}&-M^{-1}K\\
  \hline\\
   & \mathbbm{1}_{(N-d_r)\times (N-d_r)}
\end{array}\right)_{\mathrm{C}}\left(\begin{array}{@{}c|c@{}}
  M&K\\
  \hline\\
   & \mathbbm{1}_{(N-d_r)\times (N-d_r)}
\end{array}\right)_{\mathrm{H}}=\left(\begin{array}{c}
  C_{\mathrm{H}}\\
\hline
  D_{\mathrm{H}}
\end{array}\right)\left(\begin{array}{@{}c|c@{}}
  B^T_{0,\mathrm{C}}&A^T_{0,\mathrm{C}}\\
\end{array}\right),
    \end{align}
\end{widetext}
where the subscript 0 denotes this is the initial state and subscript $C$ denotes that it is a coefficient matrix. $M$ is a $d_r\times d_r$ upper triangular matrix, which can be given by $C_{\mathrm{H}}$, thus $M^{-1}$ is also an upper triangular matrix. Noteworthy, the column subspaces of $A^T_{0,\mathrm{C}}$ or the row subspaces of $A_{0,\mathrm{C}}$ is just the right null subspace of $C_{\mathrm{H}}$, and columns of $B^T_{0,\mathrm{C}}$, denoted as $\{\boldsymbol{b}^{[i]}_{\mathrm{C}}, i=1,\cdots,d_r\}$, spans the orthogonal subspaces such that $(C_{\mathrm{H}}\times \boldsymbol{b}^{[i]}_{\mathrm{C}})_j=\delta_{ij}, i,j=1,\cdots, d_r$. In the following we denote $a^{[i]}_{\mathrm{C}}$, $b^{[i]}_{\mathrm{C}}$ as rows of $A_{\mathrm{C}}$, $B_{\mathrm{C}}$ respectively while $a^{(i)}_{\mathrm{C}}$ and $b^{(i)}_{\mathrm{C}}$ as columns of $A_{\mathrm{C}}$ and $B_{\mathrm{C}}$. In other words, they can be represented as follows:
\begin{equation*}
    \left(\begin{array}{c}
  B_{\mathrm{C}}\\
\hline
  A_{\mathrm{C}}
\end{array}\right)=\left(\begin{array}{c}
  b_{\mathrm{C}}^{[1]}\\
  \vdots\\
  b_{\mathrm{C}}^{[d_r]}\\
\hline
  a_{\mathrm{C}}^{[1]}\\
  \vdots\\
  a_{\mathrm{C}}^{[N-d_r]}
\end{array}\right)=\left(\begin{array}{ccc}
  b_{\mathrm{C}}^{(1)} &\cdots &b_{\mathrm{C}}^{(N)}\\
\hline
   a_{\mathrm{C}}^{(1)} &\cdots &a_{\mathrm{C}}^{(N)}
\end{array}\right).
\end{equation*}

Without loss of generality, for an given element from $\mathbb{H}/\{1,-1\}$(cf. Eq.~\eqref{eq:Hgroup}), it is sufficient to use coefficient vectors with respect to the relation and non-trivial generator of $\mathbb{H}$, which are represented with the matrix $C_{\mathrm{H}}$ and $D_{\mathrm{H}}$, respectively. Accordingly, the coefficient vector can be separated into $b$ part, representing the trivial or relation part, and $a$ part, representing the non-trivial part, and they are $d_r$-dimensional, $N-d_r$-dimensional, respectively. For instance, $H^{(i)}$ is represented by the  vector $x_{\mathrm{H}}=(b^{(i)}_{0,\mathrm{C}},a^{(i)}_{0,\mathrm{C}})\times (C^T_{\mathrm{H}},D^T_{\mathrm{H}})^T$.

At the same time, since $\mathcal{P}_-$ can potentially introduce a minus sign, we also need a $N$-dimensional vector tracking for this dynamical phase, say $p_{\mathrm{dp}}$. So during the process, we can trace the dynamics of each local term $H^{(i)}$ with the coefficient vector and phase number, say it is $(b^{(i)}_{\mathrm{C}},a^{(i)}_{\mathrm{C}})$ and $p_{i,\mathrm{dp}}$ for $H^{(i)}$ at some time. Here we use parentheses for indices to emphasize that it ranges from $1$ to $N$. In summary, the dynamical phase and powers of Hamiltonians for $\mathcal{P}^{(j_1)\dagger}(\mathcal{P}^{(j_{2})\dagger}(\cdots\mathcal{P}^{(j_q)\dagger}(H^{(i)})))$ is $-1^{p_{i,\mathrm{dp}}}$ and $(b^{(i)}_{\mathrm{C}},a^{(i)}_{\mathrm{C}})\times (C^T_{\mathrm{H}},D^T_{\mathrm{H}})^T$. 

Then let us determine whether an element $H$ from $\mathbb{H}$ commutes or anti-commutes with a given Pauli operator $V$. Let us assume $H$ is represented by the coefficients $(b_{\mathrm{C}},a_{\mathrm{C}})$. In this situation, note that for any given Pauli operator $V$ represented with $g_{\mathrm{VH}}$, $C_{\mathrm{H}}\times g_{\mathrm{VH}}^T=0$ implies that $g_{\mathrm{VH}}$ lies in the subspace spanned by $a_{0,\mathrm{C}}^{[i]}$, see also Lemma~\ref{lemma:V}. Therefore, we can rewrite it as the linear combination $\boldsymbol{g}_{\mathrm{VH}}=\sum_i e_{i,\mathrm{CVH}} a^{[i]}_{0,\mathrm{C}}$, where $e_{\mathrm{CVH}}$ is a $N-d_r$-dimensional binary vector as a coefficient vector for the bases $\{a^{[i]}_{0,\mathrm{C}}\}$. Then whether $H$ commutes or anti-commutes with $V$ is determined by
\begin{align*}
    &\left(\begin{array}{@{}c|c@{}}
  b_{\mathrm{C}}&a_{\mathrm{C}}\\
\end{array}\right)\left(\begin{array}{c}
  C_{\mathrm{H}}\\
\hline
  D_{\mathrm{H}}
\end{array}\right)\left(\begin{array}{@{}c|c@{}}
  B^T_{0,\mathrm{C}}&A^T_{0,\mathrm{C}}\\
\end{array}\right)\left(\begin{array}{c}
  0\\
\hline
  e_{\mathrm{CVH}}^T
\end{array}\right)\\
=&\left(\begin{array}{@{}c|c@{}}
  b_{\mathrm{C}}&a_{\mathrm{C}}\\
\end{array}\right)\left(\begin{array}{c}
  0\\
\hline
  e_{\mathrm{CVH}}^T
\end{array}\right)\\
=& \boldsymbol{a}_{\mathrm{C}}\cdot \boldsymbol{e}_{\mathrm{CVH}},
\end{align*}
which is $0$ if commuting and $1$ otherwise. 

Given the discussions above, let us now summarize the effect of a superoperator $\mathcal{P}^{(j)}_{\pm}$, which we characterize using a three-tuple $\{(j, \pm, e_{\mathrm{CVH}})\}$, where $j \in \{1, \cdots, N\}$ denotes the Hamiltonian index, and the Pauli operator $V_{\pm}$ associated with the superoperator is specified by the binary vector $e_{\mathrm{CVH}}$ (see the discussion above). We consider an element $H \in \mathbb{H}$ (cf. Eq.~\eqref{eq:Hgroup}) characterized by its coefficient vectors $b_{\mathrm{C}}$, $a_{\mathrm{C}}$, and a dynamical phase $p_{\mathrm{dp}}$ (a scalar). Under the action of the superoperator, these components transform as follows:
\begin{align}
    &H\rightarrow \mathcal{P}^{(j)}_{\pm}(H),\\
    &b_{\mathrm{C}} \rightarrow b_{\mathrm{C}}+(\boldsymbol{a}_{\mathrm{C}}\cdot \boldsymbol{e}_{\mathrm{CVH}})b^{(j)}_{0,\mathrm{C}},\\
    &a_{\mathrm{C}} \rightarrow a_{\mathrm{C}}+(\boldsymbol{a}_{\mathrm{C}}\cdot \boldsymbol{e}_{\mathrm{CVH}})a^{(j)}_{0,\mathrm{C}}\label{a_change},\\
    \label{p_change}&p_{\mathrm{dp}} \rightarrow\begin{cases}
        p_{\mathrm{dp}}, & \text{if with }\mathcal{P}^{(j)}_+;\\
        p_{\mathrm{dp}}+(\boldsymbol{a}_{\mathrm{C}}\cdot \boldsymbol{e}_{\mathrm{CVH}}),& \text{if with }\mathcal{P}^{(j)}_-.
    \end{cases}
\end{align}

Let us now determine how the set of Hamiltonians $\{H^{(i)}\}$ evolves in the Heisenberg picture under the action of a sequence of steering superoperators $\{\mathcal{P}^{(j_1)}, \mathcal{P}^{(j_2)}, \cdots, \mathcal{P}^{(j_q)}\}$. During the steering process, we represent the evolution of each $H^{(i)}$ by its coefficient vectors $a_{\mathrm{C}}^{(i)}$, $b_{\mathrm{C}}^{(i)}$, and the dynamical phase scalar $p_{i,\mathrm{dp}}$. Collectively, the evolution of the full set $\{H^{(i)}\}$ can be described using the coefficient matrices $A_{\mathrm{C}}$, $B_{\mathrm{C}}$ (whose columns are the individual $a_{\mathrm{C}}^{(i)}$ and $b_{\mathrm{C}}^{(i)}$, respectively), along with the dynamical phase vector $p_{\mathrm{dp}} = (p_{1,\mathrm{dp}}, \cdots, p_{N,\mathrm{dp}})$.

According to the discussion above, under the action of a superoperator $\mathcal{P}^{(j_q)}$ characterized by the tuple $\{j_q, \pm_q, e^q_{\mathrm{CVH}}\}$, the matrices $A_{\mathrm{C}}$, $B_{\mathrm{C}}$, and the vector $p_{\mathrm{dp}}$ evolve as follows:
\begin{widetext}
\begin{align}
    &B_{\mathrm{C}}\rightarrow B_{\mathrm{C}}+ N^{q}\times A_{\mathrm{C}}=B_{\mathrm{C}}+\left(\begin{pmatrix}
    \\
     b^{(j_{q})}_{0,\mathrm{C}}\\
     \\  
    \end{pmatrix}\begin{pmatrix}
     & e^q_{\mathrm{CVH}} & 
    \end{pmatrix}\right)\times A_{\mathrm{C}},\\
    &A_{\mathrm{C}}\rightarrow M^{q}\times A_{\mathrm{C}}=\left(\mathbbm{1}_{(N-d_r)\times(N-d_r)}+\begin{pmatrix}
    \\
     a^{(j_{q})}_{0,\mathrm{C}}\\
     \\  
    \end{pmatrix}\begin{pmatrix}
     & e^q_{\mathrm{CVH}} & 
    \end{pmatrix}\right)\times A_{\mathrm{C}},\\
    &p_{\mathrm{dp}} \rightarrow\begin{cases}
        p_{\mathrm{dp}}, & \text{if with }\mathcal{P}^{(j_{q})}_+;\\
        p_{\mathrm{dp}} + \begin{pmatrix}
     e^q_{\mathrm{CVH}}
    \end{pmatrix}\times \begin{pmatrix}
     & A_{\mathrm{C}} & 
    \end{pmatrix},& \text{if with }\mathcal{P}^{(j_{q})}_-,
    \end{cases} 
\end{align}    
\end{widetext}
where $N^{(j_{q})}$ is the $d_r\times (N-d_r)$ binary matrix defined above and $M^{(j_{q})}$ is a $(N-d_r)\times (N-d_r)$ binary matrix defined above.

Furthermore, let us consider the whole evolution of $\{H^{(i)}\}$ under the sequence of steering superoperators, $\{\mathcal{P}^{(j_1)},\mathcal{P}^{(j_2)},\cdots,\mathcal{P}^{(j_{q})}\}$. Note that in the Heisenberg picture, the Hamiltonians evolve in a reverse order in time, namely
\begin{align}
    H^{(i)}&\rightarrow \mathcal{P}^{(j_1)\dagger}(H^{(i)})\rightarrow \mathcal{P}^{(j_1)\dagger}(\mathcal{P}^{(j_2)\dagger}(H^{(i)}))\rightarrow\cdots\nonumber\\
    &\rightarrow \mathcal{P}^{(j_1)\dagger}(\mathcal{P}^{(j_2)\dagger}(\cdots \mathcal{P}^{(j_q)\dagger}(H^{(i)})\cdots)).
\end{align}
Suppose these superoperators are characterized by the sequence of tuples $S=\{(j_1,\pm_1,e^{1}_{\mathrm{CVH}}),(j_2,\pm_2,e^{2}_{\mathrm{CVH}})\cdots,(j_q,\pm_q,e^{q}_{\mathrm{CVH}})\}$, then 
the parameters $A_{\mathrm{C}}, B_{\mathrm{C}}$ and $p_{\mathrm{dp}}$ defined above will transform as follows:
\begin{widetext}
    \begin{align}
    B_{0,\mathrm{C}}&\rightarrow B_{0,\mathrm{C}}+ N^{1}\times A_{0,\mathrm{C}}\rightarrow B_{0,\mathrm{C}}+ (N^{2}+N^{1}\times M^{2})\times A_{0,\mathrm{C}}\nonumber\\
    &\rightarrow B_{\mathrm{C}}^q= B_{0,\mathrm{C}}+ (N^{q}+N^{q-1}\times M^{q}+\cdots+N^{1}\times M^{2}\times\cdots\times M^{q})\times A_{0,\mathrm{C}},\\
    A_{0,\mathrm{C}}&\rightarrow M^{1}\times A_{0,\mathrm{C}}\rightarrow M^{1}\times M^{2}\times A_{0,\mathrm{C}}\rightarrow A_{\mathrm{C}}^q= M^{1}\times\cdots\times M^{q}\times A_{0,\mathrm{C}},\\
    p_{\mathrm{dp}}^0=0&\rightarrow e^{1}_{\mathrm{CVH}}\times A_{0,\mathrm{C}}\rightarrow (e^{2}_{\mathrm{CVH}}+e^{1}_{\mathrm{CVH}}\times M^{2})\times A_{0,\mathrm{C}}\nonumber\\
    & \rightarrow p_{\mathrm{dp}}^q= (e^{q}_{\mathrm{CVH}}+e^{q-1}_{\mathrm{CVH}}\times M^{q}+\cdots+e^{1}_{\mathrm{CVH}}\times M^{2}\times\cdots\times M^{q})\times A_{0,\mathrm{C}}.    
\end{align}
\end{widetext}
Here we assume all superoperators are defined with a minus sign, while a similar expression of $p_{\mathrm{dp}}$ can be found if there are also $\mathcal{P}_+^\dagger$, by subtracting corresponding terms. For the convenience of discussion, let us further define two quantities:
\begin{align}
    f^q=&e^{q}_{\mathrm{CVH}}+e^{q-1}_{\mathrm{CVH}}\times M^{q}+\cdots+e^{1}_{\mathrm{CVH}}\nonumber\\
    &\times M^{2}\times\cdots\times M^{q},\nonumber\\
    P^q=&N^{q}+N^{q-1}\times M^{q}+\cdots+N^{1}\nonumber\\
    &\times M^{2}\times\cdots\times M^{q}.\nonumber
\end{align}
In this situation, the evolution maps above can be rewritten as follows:
\begin{align}
    B^q_{\mathrm{C}}=&B_{0,\mathrm{C}}+P^q A_{0,\mathrm{C}},\nonumber\\
    A^q_{\mathrm{C}}=&M^1 M^2\cdots M^q A_{0,\mathrm{C}},\nonumber\\
    p_{\mathrm{dp}}^q=&f^q\times A_{0,\mathrm{C}},\nonumber\\
    P^{q+1}=&N^{q+1}+P^q M^{q+1},\nonumber\\
    f^{q+1}=&e^{q+1}_{\mathrm{CVH}}+f^q M^{q+1}.\label{eq:evo_map}
\end{align}

\subsection{Proof of steerability}

In this section, we present and prove the main theorem regarding the steerability of commuting Pauli Hamiltonians. Additionally, we provide a randomized algorithm for generating a sequence of steering operators, such that these operators ensure that the energy of the system is non-increasing, hence driving the system toward the ground-state manifold.

\begin{theorem}
    GS manifold of a Commuting Pauli Hamiltonian is a steerable subspace.
    \label{thm:CPH}
\end{theorem}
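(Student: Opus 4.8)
The plan is to prove steerability in the sense of Def.~\ref{def:stabilizability}, specifically case (1b), by exhibiting an explicit (randomized) sequence of the local cooling superoperators $\mathcal{P}^{(i)}_{-}$ of Eq.~\eqref{eq:sup_op_CPH} that drives any initial state into the minimal-energy sector and thereafter keeps it there while inducing jumps among degenerate ground states. First I would reduce the dynamics to a classical walk: since every Kraus operator of $\mathcal{P}^{(i)}_{\pm}$ is a product of a projector onto an eigenspace of $H^{(i)}$ and a Pauli flip $V_{\pm}$, the family $\{\mathcal{P}^{(i)}_{\pm}\}$ maps simultaneous eigenstates of $\{H^{(j)}\}$ to simultaneous eigenstates. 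Hence, using the binary encoding $E^{(j)}=(-1)^{v_{j,\mathrm{E}}}$ of the spectrum together with the constraint $C_{\mathrm{H}}\,v_{\mathrm{E}}^{T}=p_{\mathrm{ph}}^{T}$ of Eq.~\eqref{EnergyEquation}, the problem becomes a walk on the affine set of valid eigenvalue vectors, and the target $\mathcal{H}_{\mathrm{GS}}$ is exactly the maximal-Hamming-weight stratum of that set.

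Second, I would record the effect of a single cooling move in this binary picture through the Heisenberg-picture transfer maps of Eq.~\eqref{eq:evo_map}. Applying $\mathcal{P}^{(i)}_{-}$ forces $\langle H^{(i)}\rangle=-1$ (consistent with $\mathcal{P}^{(i)\dagger}_{-}(H^{(i)})=-\mathbbm{1}$), i.e. it sets $v_{i,\mathrm{E}}=1$, while simultaneously flipping precisely those $v_{j,\mathrm{E}}$ for which the chosen flip $V_{-}$ anticommutes with $H^{(j)}$. By Lemma~\ref{lemma:V} the admissible flip vectors $g_{\mathrm{VH}}$ lie in the right null space of $C_{\mathrm{H}}$, which guarantees that every move preserves the constraint and therefore lands on another valid configuration; the locality analysis (Lemma~\ref{lemma:V} and Appendix~\ref{sec:Locality of superoperator}) ensures these flips can be taken of bounded support, so the superoperators are genuinely local. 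The resulting energy change is a computable $\mathbb{Z}_2$-linear functional of the current configuration and the flip choice.

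Third, I would specify the randomized algorithm and prove convergence. The algorithm repeatedly selects, at random, an unsatisfied term (one with $E^{(i)}=+1$) together with an admissible flip that does not increase the total energy $\langle H\rangle=\sum_j\langle H^{(j)}\rangle$, and applies the corresponding cooling superoperator; energy-neutral flips are also included so that the walk can explore each energy level. Treating $\langle H\rangle$ as a Lyapunov function that is non-increasing and assumes finitely many values, the energy must stabilize. The claim I would then establish is that the only configurations on which the randomized family can stabilize are the global minima: from any non-minimal configuration there is, with positive probability, a finite string of energy-neutral moves followed by a strictly descending move, so the walk cannot remain trapped above the ground-state energy in the infinite-time limit. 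This shows that any initial state is steered, with probability one, arbitrarily close to $\mathcal{H}_{\mathrm{GS}}$, as required by Def.~\ref{def:stabilizability}.

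Finally, I would verify the jittery invariance of $\mathcal{H}_{\mathrm{GS}}$. Cooling moves never raise the energy, so once the minimal-energy stratum is reached the state can never leave it; at the same time, because the Hamiltonian is frustrated the minimal stratum is degenerate, and there exist admissible flips (for example those translating a frustrated term, as for the single domain wall of the one-dimensional Ising antiferromagnet of Sec.~\ref{sec:NFFJS Examples}) that carry one ground configuration to another of equal energy. Hence the running state does not settle to a fixed $\sigma$ but keeps jumping within $\mathcal{H}_{\mathrm{GS}}$, which is precisely case (1b). The hard part will be the convergence argument of the third step: since determining the ground-state energy is NP-complete (Prop.~\ref{prop:npc}), the configuration landscape may harbor local minima, and the delicate point is to show that the randomized inclusion of energy-neutral moves renders the level sets sufficiently connected that the global minimum is reached almost surely — albeit, in the worst case, only after a time exponential in the system size.
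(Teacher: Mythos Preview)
Your reduction to a classical walk on eigenvalue configurations and your Lyapunov argument are in the right spirit, but there is a structural gap at the very start: your algorithm ``selects an unsatisfied term (one with $E^{(i)}=+1$)'', which presupposes that the running state already sits in a single simultaneous eigensector of all $H^{(j)}$. For an arbitrary initial state this is false, and because the protocol must be passive (the sequence $\{\mathcal{P}_n\}$ fixed in advance, Def.~\ref{def:stabilizability}), you cannot branch on the ``current'' $v_{\mathrm E}$ of the state. The paper handles this with an explicit first stage: in the Heisenberg picture it drives the matrix $A_{\mathrm{C}}$ to zero (Eq.~\eqref{eq:evo_map}), which forces every $H^{(i)}$ to evolve to $\pm\mathbbm{1}$ and hence collapses \emph{all} initial states into one and the same eigensector, determined solely by the protocol parameters. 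Only after that does the classical walk you describe become well-defined for a pre-determined protocol.

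The second gap is your restriction to the cooling channels $\mathcal{P}^{(i)}_-$. The paper's second stage samples a random flip direction $e_{\mathrm{CVH}}$ and then \emph{chooses} both the index $j$ and the sign $\pm$ so that the energy is guaranteed not to increase (the case analysis around Eq.~\eqref{eq:energy change}); the availability of $\mathcal{P}^{(j)}_+$ is exactly what lets one ``switch off'' a bad flip and keep $v_{\mathrm E}$ invariant. With that control, any non-ground $v_{\mathrm E}$ is not a fixed point because some $e_{\mathrm{CVH}}$ strictly lowers the energy, and random sampling hits it with positive probability. Your scheme, by contrast, is constrained to flips with $g_{i,\mathrm{VH}}=1$ at an unsatisfied index $i$ and offers no mechanism to reject an energy-raising move; the hope that ``energy-neutral moves render the level sets sufficiently connected'' is precisely the statement that needs proof, and without the $\pm$ freedom it is not obvious that local minima of the greedy walk can always be escaped.
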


To prove this theorem, we explicitly construct a steering protocol that drives the system toward the GS manifold. We exploit the superoperators introduced in the previous section and track the evolution of energy expectation values. The protocol proceeds in two stages:
(i) all initial states are steered into a subspace that is a common eigenspace of all $H^{(i)}$; (ii) steering operators are then generated using randomly chosen Pauli operators $V$, with the constraint that the energy of the system does not increase. Once the system has converged to the GS manifold, these superoperators further ensure ergodic transitions within the ground-state subspace.

\begin{proof}

For the first stage of steering, let us consider randomly generating a sequence of superoperators defined by $S=\{(j_1,\pm_1,e^{1}_{\mathrm{CVH}}),(j_2,\pm_2,e^{2}_{\mathrm{CVH}})\cdots,(j_q,\pm_q,e^{q}_{\mathrm{CVH}})\}$, until all $H^{(i)}$ evolve into $\pm \mathbbm{1}$. Equivalently,  all $N-d_r$ non-trivial generators of $\mathbb{H}$ (cf. Eq.~\eqref{eq:Hgroup})become $\pm \mathbbm{1}$, implying that the matrix $A_{\mathrm{C}}$ evolve into a zero matrix.  Note that the row vectors of $A_{0,\mathrm{C}}$, namely $a_{0,\mathrm{C}}^{[i]}$, are linearly independent, thus $A_{\mathrm{C}}=0$ if and only if $M^{1}\times\cdots\times M^{q}=0$. This is always possible since $M^{i}$ is a rank-$(N-d_r-1)$ matrices, so there always exists a sequence $S$ with length $q\ge N-d_r$, which makes the product to be a zero matrix. For instance, we can take $j_1=d_r+1,j_2=d_r+1,\cdots, j_{N-d_r}=N$ and proper $e^i_{\mathrm{CVH}}$.

Practically, this process can also be done in the following iterative way: firstly for $H^{(1)}$ we can find some superoperator such that $H^{(1)}\rightarrow \pm (H^{(1)})^2=\pm 1$, and then if during this process $H^{(2)}\rightarrow \pm H^{(1)} H^{(2)}\neq \pm 1$, then we can use another two superoperator such that $H^{(1)} H^{(2)}\rightarrow \rightarrow \pm H^{(1)} (H^{(2)})^2=\pm H^{(1)} \rightarrow\pm (H^{(1)})^2=\pm 1$ and repeat this process until all local terms are transformed into $\pm 1$. 

Then for the second stage, we will prove that under our steering protocol, the Hamiltonian is transformed to an identity operator multiplied by an energy constant, which will result in GS energy.

Before proceeding, we need to calculate the corresponding energy constant. After converging in the first stage, each $H^{(i)}$ is transformed into some linear combination of relations, encoded in $b^{i}_{\mathrm{C}}$, $i$th column $B_{\mathrm{C}}$. Thus obviously it is $H^{(i)}\rightarrow (-1)^{\boldsymbol{b}^{i}_{\mathrm{C}}\cdot \boldsymbol{p}_{\mathrm{ph}}+p_{i,\mathrm{dp}}}\mathbbm{1}$. In other words, the final states go into the common eigenspace of all local terms $H^{(i)}$, with the energy vector
\begin{equation}
v_{\mathrm{E}}=p_{\mathrm{ph}}\times B_{\mathrm{C}}+ p_{\mathrm{dp}}.
\end{equation}
Obviously it satisfy Eq.~\eqref{EnergyEquation}, by noting that $C_{\mathrm{H}}\times e_{\mathrm{CVH}}^T=0 $ and $C_{\mathrm{H}}\times B_{\mathrm{C}}^T= C_{\mathrm{H}}\times B_{0,\mathrm{C}}^T=\mathbbm{1}_{d_r\times d_r}$. 

Now let us study how does the energy constant evolve under the steering superoperators. Now we consider a subsequent superoperator defined by the parameters $(j_{q+1},\pm_{q+1},e^{q+1}_{\mathrm{CVH}})$. According to Eq.~\eqref{eq:evo_map}, the parameters $A_{\mathrm{C}},B_{\mathrm{C}},p_{\mathrm{dp}}$ evolve as follows:
\begin{align}
    A^q_{\mathrm{C}}\rightarrow & A^{q+1}_{0,\mathrm{C}}= 0\times M^{q+1}\times A_{0,\mathrm{C}}=0,\\
    p^q_{\mathrm{dp}}\rightarrow & p^{q+1}_{\mathrm{dp}}=p_{\mathrm{dp}}+(1+f^q\cdot a_{0,\mathrm{C}}^{(j_{q+1})})e^{q+1}_{\mathrm{CVH}}\times A_{0,\mathrm{C}},\\
    B^q_{\mathrm{C}}\rightarrow&B^{q+1}_{\mathrm{C}}= B^q_{\mathrm{C}}+(b^{(j_{q+1})}_{0,\mathrm{C}}+P^q\times a_{0,\mathrm{C}}^{(j_{q+1})})\nonumber\\
    &\otimes e_{\mathrm{CVH}}^{(0)}\times A_{0,\mathrm{C}}.
\end{align}
Therefore, the energy vector $v_{\mathrm{E}}$ transforms as follows:
\begin{widetext}
    \begin{equation}
    \label{eq:energy change}
    v^{q+1}_{\mathrm{E}}=\begin{cases}
         v^q_{\mathrm{E}}+ [p_{\mathrm{ph}}\times(b^{(j_{q+1})}_{0,\mathrm{C}}+P^q\cdot a_{0,\mathrm{C}}^{(j_{q+1})})]e_{\mathrm{CVH}}^{q+1}\times A_{0,\mathrm{C}},& \text{if with }\mathcal{P}^{(j_{q+1})}_+;\\
        v^q_{\mathrm{E}}+ [p_{\mathrm{ph}}\times(b^{(j_{q+1})}_{0,\mathrm{C}}+NM\cdot a_{0,\mathrm{C}}^{(j_{q+1})})+(1+f^q\cdot a_{0,\mathrm{C}}^{(j_{q+1})})]e_{\mathrm{CVH}}^{q+1}\times A_{0,\mathrm{C}},& \text{if with }\mathcal{P}^{(j_{q+1})}_-.
    \end{cases} 
\end{equation}
\end{widetext}
In other words, the energy vector might change by a vector in the right null subspace of $C_{\mathrm{H}}$, i.e. $e^{q+1}_{\mathrm{CVH}} \times A_{0,\mathrm{C}}$, or be invariant, controlled by the coefficients in the brackets. 

Before proceeding, let us further prove two facts for this evolution map: (i) when $p_{\mathrm{ph}}\neq 0$, $p_{\mathrm{ph}}\times(B_{0,\mathrm{C}}+P^q\cdot A_{0,\mathrm{C}})\neq f^q\times A_{0,\mathrm{C}}$. Note that $p_{\mathrm{ph}}\times(B_{0,\mathrm{C}}+P^q\cdot A_{0,\mathrm{C}})\times C^T_{\mathrm{H}}=p_{\mathrm{ph}}\times(\mathbbm{1}_{d_r\times d_r}+0)=p_{\mathrm{ph}}$, and $f\times A_{0,\mathrm{C}}\times C^T_{H}=0$. Therefore, assuming $p_{\mathrm{ph}}\neq 0$, $p_{\mathrm{ph}}\times(B_{0,\mathrm{C}}+P^q\cdot A_{0,\mathrm{C}})= f^q\times A_{0,\mathrm{C}}$, then $p_{\mathrm{ph}}=0$, contradicting with the assumption, finishing the proof. (ii) when $p_{\mathrm{ph}}\neq 0$, $p_{\mathrm{ph}}\times(B_{0,\mathrm{C}}+P^q\cdot A_{0,\mathrm{C}})\neq 0$. Otherwise $p_{\mathrm{ph}}=p_{\mathrm{ph}}B_{0,\mathrm{C}}C_{\mathrm{H}}^T=p_{\mathrm{ph}}\times(B_{0,\mathrm{C}}+P^q\cdot A_{0,\mathrm{C}})\times C^T_{\mathrm{H}}=0\times C^T_{\mathrm{H}}=0$, contradicting with the assumption that $p_{\mathrm{ph}}\neq 0$.

Based on this evolution map, in the following we will show that it is always possible to choose a subsequent proper superoperator $\mathcal{P}$ tied with the parameters $(j_{q+1},\pm_{q+1},e^{q+1}_{\mathrm{CVH}})$, such that the energy of the system does not increase. This can be discussed for the two cases, whether $p_{\mathrm{ph}}$ is a zero vector or not.

Let us first discuss the case $p_{\mathrm{ph}}\neq 0$. If the energy would increase after $e^{q+1}_{\mathrm{CVH}}A_{0,\mathrm{C}}$ is added to $v^q_{\mathrm{E}}$, our goal is then to choose $j_{q+1},\pm_{q+1}$ properly, such that the coefficients before $e^{q+1}_{\mathrm{CVH}}$ in Eq.~\eqref{eq:energy change} is $0$, hence $v^q_{\mathrm{E}}$ keeps invariant. First note that since we have proved $p_{\mathrm{ph}}\times(B_{0,\mathrm{C}}+P^{q}\cdot A_{0,\mathrm{C}})\neq f^{q}\times A_{0,\mathrm{C}}$, it is impossible that $p_{\mathrm{ph}}\times(b^{(j)}_{0,\mathrm{C}}+P^{q}\cdot a_{0,\mathrm{C}}^{(j)})=f^q\cdot a^{(j)}_{0,\mathrm{C}}=1$ is established for every $j$. Let us denote $x=p_{\mathrm{ph}}\times(b^{(j_{q+1})}_{0,\mathrm{C}}+P^q\cdot a_{0,\mathrm{C}}^{(j_{q+1})})$ and $y=f^q\cdot a^{(j_{q+1})}_{0,\mathrm{C}}$, then there exists at least one $j_{q+1}$ such that $(x,y)\neq(1,1)$. For these remaining possibilities and such a $j_{q+1}$, we can choose the sign $\pm_{q+1}$
\[
\pm_{q+1}=\begin{cases}
    -_{q+1}, & \text{if } x=1,y=0,\\
    \pm_{q+1}, & \text{if } x=0,y=1,\\
    +_{q+1}, & \text{if } x=0,y=0.
\end{cases}
\]
It can be verified that $v_{\mathrm{E}}$ is then invariant. 

On the other hand, if the energy would decrease or keep the same after $e^{q+1}_{\mathrm{CVH}}A_{0,\mathrm{C}}$ is added to $v^q_{\mathrm{E}}$, our goal is then to make the coefficient before $e^{q+1}_{\mathrm{CVH}}$ in Eq.~\eqref{eq:energy change} to be $0$. Note that we have proved that the components of $p_{\mathrm{ph}}\times(B_{0,\mathrm{C}}+P^qA_{0,\mathrm{C}})$ cannot all be zero, so there exist at least one $j_{q+1}$ such that $x=1$. Then we can choose the sign $\pm_{q+1}$ as follows:
\[
\pm_{q+1}=\begin{cases}
    \pm_{q+1}, & \text{if } x=1,y=1,\\
    +_{q+1}, & \text{if } x=1,y=0,\\
    -_{q+1}, & \text{if } x=0,y=0,
\end{cases}
\]
and it can be directly verified that $v^q_{\mathrm{E}}\rightarrow v^q_{\mathrm{E}}+e^{q+1}_{\mathrm{CVH}}\times A_{0,\mathrm{C}}$.

For $p_{\mathrm{ph}}=0$, now $v_{\mathrm{E}}=f\times A_{0,\mathrm{C}}$. So we can use $\mathcal{P}_+^\dagger$ if energy would increase and use $\mathcal{P}_-^\dagger$ if energy would decrease or keep the same. Note that the latter case fails when $v_{\mathrm{E}}=f\times A_{0,\mathrm{C}}=(1,1,\cdots,1)$. This means all $H^{(i)}$ can simultaneously take eigenvalue $-1$, thus it is Frustration-Free and does not have degeneracy in this binary representation (although $H$ might have some degeneracy due to symmetry), so it is natural that it terminates the energy decreasing.

We now describe a randomized procedure that generates an infinite sequence $S=\{(j_1,e^{(1)}_{\mathrm{CVH}},\pm_1), (j_2,e^{(2)}_{\mathrm{CVH}},\pm_2),\cdots\}$ which guarantees the convergence to GS manifold. (i) The first stage is to erase any information about the initial state by stabilizing all initial states into a subspace with definite local energies. This is achieved by evolving the system until the matrix $A_{\mathrm{C}} = 0$. (ii) The second stage is to further cool the system. At each step, this is done in a greedy fashion: we first randomly generate a binary vector $e^{(q+1)}_{\mathrm{CVH}}$, and then determine $j_{q+1}$ and $\pm_{q+1}$ based on the discussion above. The action of the superoperator $\mathcal{P}^{(j_{q+1})}$ is to update the energy configuration via $v_{\mathrm{E}}\rightarrow v_{\mathrm{E}} + e_{\mathrm{CVH}}\times A_{0,\mathrm{C}}$ if the resulting binary vector has the same or a greater number of $1$s—i.e., if the energy does not increase. If the number of $1$s decreases, the update is rejected. As a result, the GS subspace becomes the only possible fixed point of the evolution, ensuring convergence of the protocol.
\end{proof}

In conclusion, we have shown that for a commuting Pauli Hamiltonian $H$, one can randomly generate binary vectors $e_{\mathrm{CVH}}$ to construct superoperators that either lower the system’s energy or drive it into another eigensubspace of $H$ with the same energy. As a result, in the long-time limit, this sequence of superoperators converges to the GS manifold and enables transitions within the GS subspace. In general, the convergence time may be exponentially long, as the underlying problem is classically NP-complete.

\subsection{Locality of superoperator}\label{sec:Locality of superoperator}

Another important aspect of steerability is the locality of the superoperators introduced above. To demonstrate this locality, we begin by proving Lemma~\ref{lemma:V}, which is stated in Appendix~\ref{app:Steering Operators}.

\addtocounter{theorem}{-2}
\begin{lemma}
    For arbitrary Pauli operator $V$, with its commuting relation with $H^{(i)}$ represented by vector $g_{\mathrm{VH}}$, must satisfy that $C_{\mathrm{H}} \times g_{\mathrm{VH}}^T=0$, where the matrix $C_{\mathrm{H}}$ is defined in Appendix~\ref{app:Spectrum of Commuting Pauli Hamiltonian}. Conversely for arbitrary $g_{\mathrm{VH}}$ in right null space of $C_{\mathrm{H}}$, we can construct a Pauli operator $V$.
\end{lemma}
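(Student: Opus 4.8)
The plan is to reduce the statement, in both directions, to linear algebra over $\mathbb{Z}_2$, using that the entire content of $g_{\mathrm{VH}}$ is how $V$ permutes the signs of the commuting generators. For the forward implication I would feed each relation into the conjugation by $V$. Each row $c^{(i)}_{\mathrm{H}}$ of $C_{\mathrm{H}}$ encodes a relation $\prod_j (H^{(j)})^{c^{(i)}_{j,\mathrm{H}}}=s_i\mathbbm{1}$ with $s_i=\pm1$. Conjugating by $V$ leaves the scalar right-hand side fixed, since $V^2=\mathbbm{1}$ for a Hermitian Pauli operator, whereas on the left $V$ distributes across the commuting product and replaces each factor by $(-1)^{g_{j,\mathrm{VH}}}H^{(j)}$. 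Collecting signs gives
\[
s_i\mathbbm{1}=V\Big(\prod_j (H^{(j)})^{c^{(i)}_{j,\mathrm{H}}}\Big)V=(-1)^{\,c^{(i)}_{\mathrm{H}}\cdot g_{\mathrm{VH}}}\,s_i\mathbbm{1},
\]
forcing $c^{(i)}_{\mathrm{H}}\cdot g_{\mathrm{VH}}=0\pmod 2$ for every $i$, i.e. $C_{\mathrm{H}}g_{\mathrm{VH}}^{T}=0$.

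For the converse I would work in the symplectic representation of the Pauli group. Writing $v_V=(a_V;b_V)\in\mathbb{Z}_2^{2n}$ for the binary string of a candidate $V$ and recalling that the $i$-th row of $M_{\mathrm{P}}$ is $(a^{(i)};b^{(i)})$, the condition $VH^{(i)}V=(-1)^{g_{i,\mathrm{VH}}}H^{(i)}$ is exactly $a_V\cdot b^{(i)}+b_V\cdot a^{(i)}=g_{i,\mathrm{VH}}$. Assembling these $N$ equations yields the single $\mathbb{Z}_2$-linear system
\[
M_{\mathrm{P}}\,\Lambda\,v_V^{T}=g_{\mathrm{VH}}^{T},\qquad
\Lambda=\begin{pmatrix}0&\mathbbm{1}_n\\ \mathbbm{1}_n&0\end{pmatrix},
\]
so that constructing a valid $V$ is equivalent to showing that $g_{\mathrm{VH}}$ lies in the column space of $M_{\mathrm{P}}\Lambda$.

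The crux is then to identify that column space. Because $\Lambda$ is invertible, the left null space of $M_{\mathrm{P}}\Lambda$ equals $\{c:cM_{\mathrm{P}}=0\}$, which by definition is the row space of $C_{\mathrm{H}}$. Invoking the standard relation $\operatorname{Im}(A)=(\ker A^{T})^{\perp}$ over $\mathbb{Z}_2$ — which holds by the dimension count $\dim\operatorname{Im}(A)+\dim\ker(A^{T})=N$ together with the trivial inclusion $\operatorname{Im}(A)\subseteq(\ker A^{T})^{\perp}$ — the column space of $M_{\mathrm{P}}\Lambda$ is precisely the orthogonal complement of the row space of $C_{\mathrm{H}}$, namely the right null space $\{g:C_{\mathrm{H}}g^{T}=0\}$. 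Hence any $g_{\mathrm{VH}}$ annihilated by $C_{\mathrm{H}}$ is attained by some solution $v_V$, and that solution defines the sought Pauli operator $V$, completing the converse.

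The symplectic bookkeeping and the check of transpose conventions are routine; I expect the only genuinely delicate point to be the image/orthogonal-complement identity over $\mathbb{Z}_2$, where one must keep in mind that a subspace need not be transverse to its own orthogonal complement, so the conclusion rests on the dimension count rather than on any naive ``$W\oplus W^{\perp}$'' decomposition. With that identity in hand, both directions close immediately.
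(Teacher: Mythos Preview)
Your proof is correct. The forward direction is identical to the paper's: both conjugate each relation by $V$ and read off $c^{(i)}_{\mathrm{H}}\cdot g_{\mathrm{VH}}=0$.

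For the converse you and the paper both pass to the symplectic $\mathbb{Z}_2$-representation, but you take a slightly different route through the linear algebra. The paper first uses multiplicativity of $g_{\mathrm{VH}}$ under products of $V$'s to reduce to a basis $\{a^{[i]}_{0,\mathrm{C}}\}$ of the right null space of $C_{\mathrm{H}}$, and then for each basis vector sets up the system $u^{(j)}_{\mathrm{P}}\,J\,v^{(i)T}_{\mathrm{P}}=\delta_{ij}$, arguing solvability from the linear independence of the $u^{(j)}_{\mathrm{P}}$. You instead write the full system $M_{\mathrm{P}}\Lambda v_V^T=g_{\mathrm{VH}}^T$ in one shot and identify its image directly via $\operatorname{Im}(A)=(\ker A^T)^\perp$, which is cleaner and avoids the basis choice. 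The paper's more constructive variant, however, pays off in the section that follows, where the explicit basis operators $V^{[i]}$ are decomposed into local pieces to establish locality of the superoperators; your argument gives existence but would need to be supplemented there.
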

\addtocounter{theorem}{1}

\begin{proof}
    As introduced in Appendix~\ref{app:Steering Operators}, the commutation relation of $V$ with the elements from $\mathbb{H}$ (cf. Eq.~\eqref{eq:Hgroup}) is determined by the component of the binary vector $g_{\mathrm{VH}}$ as follows:
    $$
    VH^{(i)}V=(-1)^{g_{i,\mathrm{VH}}}H^{(i)}.
    $$
    However, these $N$ commuting relations are not independent, since $\{H^{(i)}\}$ are not independent generators of $\mathbb{H}$. In other words, $V$ must commute with those product of $\{H^{(i)}\}$ that constitutes a relation of group $\mathbb{H}$, since this product is proportional to the identity operator. 

    In general, let us consider an element $G\in \mathbb{H}/\{1,-1\}$, that can be written as $G=\prod_i (H^{(i)})^{c_{i,\mathrm{H}}}$, with the binary vector $c_{\mathrm{H}}$ representing powers of $\{H^{(i)}\}$. It can then be verified that the commutation relation is as follows:
    $$
    VGV=(-1)^{\boldsymbol{c}_{\mathrm{H}}\cdot\boldsymbol{g}_{\mathrm{VH}}}G.
    $$
Therefore, when $G$ is proportional to the identity operator, or equivalently $c_{\mathrm{H}}$ lies in the row space of the matrix $C_{\mathrm{H}}$, then $\boldsymbol{c}_{\mathrm{H}}\cdot\boldsymbol{g}_{\mathrm{VH}}=0$. Since this commutation relation hold for all binary vectors in the row space of $C_{\mathrm{H}}$, we conclude that $C_{\mathrm{H}}\times g_{\mathrm{VH}}^T=0$.

On the other hand, let us prove that for any binary vector $g_{\mathrm{VH}}$, such that $C_{\mathrm{H}}\times g_{\mathrm{VH}}^T=0$, there always exists a Pauli operator that the corresponding commutation relation holds. Note that if we consider two Pauli operator $V,V'$ defined with the parameters $g_{\mathrm{VH}},g_{\mathrm{VH}}'$, respectively, then $VV' G V'V = (-1)^{\boldsymbol{c}_{\mathrm{H}}\cdot(\boldsymbol{g}_{\mathrm{VH}}+\boldsymbol{g}'_{\mathrm{VH}})}G$. Therefore, it suffices to construct it for bases of the subspace spanned by $g_{\mathrm{VH}}$, i.e. consider $V^{[i]}$ defined with $\{a^{[i]}_{0,\mathrm{C}}\}$. Note that $D_{\mathrm{H}}\times A_{0,\mathrm{C}}^T=\mathbbm{1}_{(N-d_r)\times(N-d_r)}$, thus $V^{(i)}$ anti-commutes with $H^{(d_r+i)}$, $i$th independent generators of $\mathbb{H}$, while commuting with other independent generators.
    
In the following, let us solve for the binary representation of the Pauli operator $V^{(i)}$. Note that for two Pauli operator $\sigma_x^m\sigma_z^n$ and $\sigma_x^p\sigma_z^q$, if we permute them we will obtain 
    $$
    \sigma_x^m\sigma_z^n\times \sigma_x^p\sigma_z^q=(-1)^{mq+np}\sigma_x^p\sigma_z^q\times\sigma_x^m\sigma_z^n
    $$
thus induces a symplectic structure (note that $-1\equiv 1\mod{2}$). More generally, the phase obtained by permuting two Pauli operator $G$, $G'$ represented with $g_{P}$ and $g'_{P}$ can be determined as
    \begin{align}
    G\times G'=&(-1)^{g_{P}\times J\times g'^{T}_{\mathrm{P}}}G'\times G,\nonumber\\ 
   \text{where } J=&\begin{pmatrix}
    0 & \mathbbm{1}_{n\times n}\\
    \mathbbm{1}_{n\times n} & 0
    \end{pmatrix}.
    \end{align}
Now suppose the binary vectors encoding powers of $H^{(d_r+i)}$ in terms of $\sigma_x$ and $\sigma_z$ is denoted as $u^{(i)}_{\mathrm{P}}$, and $v^{(i)}_{\mathrm{P}}$ encodes $V^{(i)}$, then the commutation relation
    $$
    V^{(i)}H^{(d_r+j)}V^{(i)}=(-1)^{\delta_{ij}} H^{(d_r+j)},
    $$
where $i,j \in\{1,\cdots, N-d_r\}$, is translated as follows:
    $$
    u^{(j)}_{\mathrm{P}}\times J \times v^{(i)}_{\mathrm{P}} =\delta_{ij}, 
    $$
which is a linear system of equations. Since $u^{(i)}_{\mathrm{P}}$ are all independent, $v^{(i)}_{\mathrm{P}}$ all have many solutions, determined up to right null space of $U\times J$ where $i$th row of $U$ is $u^{(i)}_{\mathrm{P}}$. The Pauli operators $\{V^{(i)}\}$ is then solved given the powers of $\sigma_x$ and $\sigma_z$, i.e. $v^{(i)}_{\mathrm{P}}$. For a general Pauli operator $V$ defined with the binary vector $g_{\mathrm{VH}}$, it can be similarly solved.
\end{proof}

In the following, let us discuss the locality of the superoperator. Note that support of $\mathcal{P}^{(j)}$ is determined by $\operatorname{supp}(H^{(j)})\cup\operatorname{supp}(V)$. Above we did not make any assumption about the locality of the Hamiltonian, so in principle with global commuting Pauli Hamiltonian, we might need global superoperators. So to further discuss locality, here we consider translationally invariant $k-$local commuting Pauli Hamiltonian. And we assume there is only one qubit on each lattice point for simplicity. The discussion here can be simplified by introducing a $\mathbb{F}_2$-field representation~\cite{haah2013commuting}, where $\mathbb{F}_2=\{0,1\}$. Suppose we are given a spatial lattice with translation symmetry group $\Lambda$, for instance in a $2$-dimensional infinite lattice it is $\mathbb{Z}\otimes\mathbb{Z}$. Then we can introduce a  $R-$module, where $R=\mathbb{F}_2(\Lambda)$, as the representation of the Pauli operators.  For an element $a\in R$, we can expand it as $\sum_{g\in\Lambda}a_g g$, where $a_g$ are coefficients. We define trace of $a\in R$ as $\operatorname{tr}(a)=a_1$. Then we can represent a Pauli operator $O$ as follows:
$$
O= \left(\bigotimes_{g\in\Lambda} (\sigma_x^{(g)})^{x_g}\right)\left(\bigotimes_{g\in\Lambda} (\sigma_z^{(g)})^{z_g}\right),
$$
where $(x,z)\in R^2$. For two Pauli operators $O_1,O_2$ represented by $(x_1,z_1),(x_2,z_2)\in R^2$, $O_1$ commute with $O_2$ if and only if 
$$
\operatorname{tr}\left(\begin{pmatrix}
 \bar x_1& \bar z_1
\end{pmatrix}\begin{pmatrix}
 0& 1\\
 1& 0
\end{pmatrix}\begin{pmatrix}
 x_2\\
 z_2
\end{pmatrix}
\right)=0,
$$
where $\bar x= \sum_{g\in\Lambda}x_{g^{-1}}g$. 

Based on this representation, a commuting Pauli Hamiltonian can be written as $H=\sum_{g\in\Lambda}h_{1,g}+\cdots+h_{t,g}$ where $t$ is the number of types of interaction. For instance, in one dimension $\Lambda=\mathbb{Z}$, then the group algebra is $R=\mathbb{F}_2[x,\bar x]$, namely a polynomial ring over $\mathbb{F}_2$. Consider the Ising model $H=-\sum_{i\in\mathbb{Z}}\sigma^z_i\otimes \sigma^z_{i+1}$, then the Pauli module is $R^2$ and stabilizer module $S$ is generated by $(0,1+x)^T$, since $\sigma^z_i\otimes \sigma^z_{i+1}$ is represented by $(0,x^i+x^{i+1})$. Then in general, the Hamiltonian is represented by $2$ by $t$ matrix over $R$. Every element of the Hamiltonian group is then generated by this module, namely
$$
S\times a=\begin{pmatrix}
 x_1 & \cdots &x_t\\
 z_1 & \cdots &z_t
\end{pmatrix}\begin{pmatrix}
 a_{1}\\
 \vdots\\
 a_{t}
\end{pmatrix},
$$
where $a\in R^t$. Then the relations of the Hamiltonian group $\mathbb{H}$ here is just the right kernel of $S$. Suppose the generators of constraints are $c_{i,\mathrm{H}}\in R^t$, then the energy vectors can be solved as
$$
\operatorname{tr}\begin{pmatrix}
c_{1,\mathrm{H}}\\
c_{2,\mathrm{H}}\\
\vdots
\end{pmatrix}\begin{pmatrix}
 v_{1,\mathrm{E}}\\  v_{2,\mathrm{E}} \\\cdots \\v_{t,\mathrm{E}}
\end{pmatrix}= p_{\mathrm{ph}}^T,
$$
where $v_{i,\mathrm{E}}=\sum_{g\in\Lambda} v_{g,i,\mathrm{E}}g$ denotes the eigenvalue of $h_{i,g}=v_{g,i,\mathrm{E}}$, and $p_{\mathrm{ph}}$ are binary numbers as defined above. Then we can also represented $V$ by a $g_{\mathrm{VH}}\in R^t$, where
\begin{equation}
V h_{i,g} V= (-1)^{g_{g,i,\mathrm{VH}}} h_{i,g}.
\end{equation}

Therefore, the locality of $V$ and $H^{(i)}$ can be established when the entries of $g_{\mathrm{VH}}$ and the representation of $H^{(i)}$ are restricted to operators of bounded support. For example, in two dimensions with $R = \mathbb{F}_2[x, y, \bar{x}, \bar{y}]$, a term like $g_{1,\mathrm{VH}} = x^2 + x^{50} + xy^{-50}$ implies that $V$ anti-commutes with $h_{1,x^2}$, $h_{1,x^{50}}$, and $h_{1,xy^{-50}}$, which is hardly non-local. Thus, if a randomly generated $g_{\mathrm{VH}}$ is found to be non-local, it can be decomposed into a sum of local components, each corresponding to a local operator $V_k$. By slightly modifying Def.~\ref{def:stabilizability}—replacing a single local superoperator with a product of local superoperators, each realized by a local Pauli operator $V_k$—we conclude that the GS manifold of any commuting Pauli Hamiltonian is a steerable subspace.

\subsection{Alternative superoperators}\label{app:Alternative superoperators}

In addition to the family of steering operators discussed above, we now consider an alternative class of steering operators.

Above we have chosen the the superoperators such that the system is always stabilized in subspace with definite local energy $\langle H^{(i)}\rangle=\pm 1$, which works just like flipping spins classically. In general, steering superoperators can exhibit more complex behaviors. For instance consider the anti-ferromagnetic 1-dimensional Ising model $H=\sum_i H^{(i)}=\sum_i \sigma^x_i\otimes \sigma^x_{i+1}$. This Hamiltonian possesses global $\mathbb{Z}_2$ symmetry, namely $[H,Z]=0$, where $Z=\bigotimes_i\sigma^z_i$. We can choose superoperator as follows:
\begin{align*}
    &\mathcal{P}^{(i)}(\cdot)\\
    =&\frac{\mathbbm{1}-H^{(i)}}{2}\cdot\frac{\mathbbm{1}-H^{(i)}}{2}\\
    &+ \frac{\mathbbm{1}-H^{(i)}}{2} V \frac{\mathbbm{1}-P^{(i)}}{2}\cdot\frac{\mathbbm{1}-P^{(i)}}{2} V \frac{\mathbbm{1}-H^{(i)}}{2}\\
    &+ \frac{\mathbbm{1}-H^{(i)}}{2} V \frac{\mathbbm{1}+P^{(i)}}{2}\cdot\frac{\mathbbm{1}+P^{(i)}}{2} V \frac{\mathbbm{1}-H^{(i)}}{2},
\end{align*}
where $V=\mathbbm{1}_i\otimes \sigma^z_{i+1}$ and $P^{(i)}=\sigma^z_i\otimes \sigma^z_{i+1}$. Note that $\mathcal{P}^{(i)\dagger}(Z)=Z$ is invariant, and $\mathcal{P}^{(i)\dagger}(H^{(i)})=-1$, establishing a local cooling of $H^{(i)}$ while respecting the $\mathbb{Z}_2$ symmetry. On the other hand, $\mathcal{P}^{(i)\dagger}(H^{(i+1)})=\mathcal{P}^{(i)\dagger}(\mathbbm{1}_i\otimes \sigma^x_{i+1})\otimes \sigma^x_{i+2}=(\mathbbm{1}_i\otimes \sigma^x_{i+1}-\sigma^x_{i}\otimes\mathbbm{1}_{i+1})/2\otimes \sigma^x_{i+2}$ is not an element of the Hamiltonian group. 

As a result, when using this alternative family of superoperators, the total energy of the system decreases exponentially with the number of applications, converging toward the GS manifold. Specifically, the energy expectation value evolves as $\langle H\rangle \sim E_{\mathrm{GS}}+E \exp -O(N)$, where $E$ is a constant and $N$ denotes the number of applied superoperators. Notably, the resulting asymptotic state does not, in general, possess definite local energy with respect to each $H^{(i)}$, and therefore lies beyond the scope of a simple classical description.

\subsection{Realization of superoperator with Clifford unitary}

We now explain how the superoperators $\mathcal{P}^{(i)}_{\pm}$, defined in Eq.~\eqref{eq:sup_op_CPH}, can be physically realized through the following procedure. First, we introduce an auxiliary qubit (serving as a "detector"), and then apply a Clifford unitary operation~\cite{FTQCPhysRevA.57.127}—which maps Pauli operators to Pauli operators—on a segment of the system \textit{together with} the auxiliary qubit. Finally, we trace out the auxiliary qubit, thereby obtaining a superoperator acting solely on the corresponding segment of the system.

Consider the unitary operator
\begin{align*}
    U=&|0\rangle\langle 0|\otimes \frac{1-H^{(i)}}{2} + |0\rangle\langle 1|\otimes \frac{1+H^{(i)}}{2}\\
    &+|1\rangle\langle 0|\otimes \frac{1-H^{(i)}}{2}V+|1\rangle\langle 1|\otimes \frac{1+H^{(i)}}{2}V,
\end{align*}
where $\{V,H^{(i)}\}=0$, and the first qubit is the auxiliary qubit. It can be verified that $UU^\dagger=U^\dagger U=\mathbbm{1}$ is a unitary operator. Then note that if we initialize the auxiliary qubit as $|0\rangle$, apply $U$ and then trace out auxiliary qubit, we then realize $\mathcal{P}^{(i)}_-(\cdot)$, since the Kraus operators are $\langle 0|U|0\rangle=(1-H^{(i)})/2$ and $\langle 1|U|0\rangle=(1-H^{(i)})/2\times V$. Direct calculation of $U^\dagger (\sigma_{x,z}\otimes P)U$ prove it is a Clifford unitary operator, where $P$ is a Pauli operator. For instance, $U^\dagger (\sigma_x\otimes P)U=\sigma_z\otimes P$ when $\{V,P\}=\{H,P\}=0$. Similar unitary operator can also be constructed for other cases of the superoperator.

\subsection{Stability against heating}\label{app:Stability against heating}
 
In this section, we explain how environmental heating can be suppressed with the superoperators. This suppression is evident from the fact that the system does not leak out of the GS manifold, as shown previously. Here, we focus on illustrating the dynamical process by which this elimination of environmental errors (i.e., heating) occurs.

Let us first consider the situation that the system is subjected to a Kraus operator $K$, originating from environmental decoherence. Without loss of generality, let us assume $K$ is a local Pauli operator. Then we focus on the situation that $K$ does not commute with all of $\{H^{(i)}\}$, or equivalently, $K$ anti-commutes with a subset of $\{H^{(i)}\}$, denoted as $\mathbb{H}_K$, that is spatially local. In this situation, since $K H^{(i)} K=-H^{(i)}$ for some $i$, this $K$ error can induce the energy flip of $H^{(i)}$, potentially cause heating of the system. In this situation, we can also design a local Pauli operator $V$, such that $V$ anti-commutes with the element from $\mathbb{H}_K$. Then for a $H^{(i)}\in\mathbb{H}_{K}$, we define a superoperator $\mathcal{P}$ defined with the parameters $(i,\pm,V)$. When the system is in the eigensubspace of $H^{(i)}$ with eigenvalue $+1$, we use the sign $+$ to define the superoperator, and we use $-$ otherwise. We apply this superoperator after the error occurs. Then in the the Heisenberg picture, an element $H^{(j)}\in\mathbb{H}_{K}$ evolve as follows:
\begin{align}
    H^{(j)}\rightarrow& K\mathcal{P}^\dagger(H^{(j)})K\nonumber\\
    = & \pm K H^{(i)}H^{(j)} K\nonumber\\
    =& \pm H^{(i)}H^{(j)}.
\end{align}
Therefore, when the system is driven into eigensubspaces of all $\{H^{(i)}\}$, the expectation value $\langle H^{(j)}\rangle$ is invariant,  as a manifestation that the error $K$ is corrected.

On the other hand, we may also investigate the decoherence effect by examining the evolution of expectation values of other operators. Under certain errors, expectation values of Pauli operators in GS manifold, other than $\{H^{(i)}\}$, might be altered, so here we study the expectation value of a Pauli operator $O$. It evolves as follows: 
\begin{widetext}
\[
\mathcal{P}^{(i)\dagger}_+(O), \mathcal{P}^{(i)\dagger}_-(O)=\begin{cases}
    O(\mathbbm{1}\pm H^{(i)}), O, & \text{if } [O,H^{(i)}]=[O,V^{(i)}]=0;\\
    0, \pm O H^{(i)}, & \text{if } [O,H^{(i)}]=\{O,V^{(i)}\}=0;\\
    0,0, & \text{if } \{O,H^{(i)}\}=0,
\end{cases}
\]
\end{widetext}
where for the former two cases, the two results are obtained by assuming $[V^{(i)},H^{(i)}]=0$ and $\{V^{(i)},H^{(i)}\}=0$, respectively. Therefore, we observe that $\langle O\rangle =0$ when (i) $[O,H^{(i)}]=[V^{(i)},H^{(i)}]=\{O,V^{(i)}\}=0$ or  (ii)$\{O, H_i\}=0$. Therefore, the errors which change values of such $O$ must be suppressed. 

\section{Important facts of the Non-Steerable case}\label{Discussions about non-steerable case}

In this section, we first study the non-steerability of target state(s) with a full Schmidt rank. We then discuss the calculation of $1 - p(\Pi_{\mathrm{GS}})$, introduced in Eq.~\eqref{eq: upper bound degenerate} in Sec.~\ref{sec:A bound on the distance between the surrogate state and the target state employing local measure}. Finally, we present the numerical estimation of the effective temperature lower bound for the Fermi–Hubbard model and the SYK model, as introduced in Sec.~\ref{sec:Implementations of NFFNS model}.

\subsection{Non-steerability for states with full-Schmidt-rank}\label{sec:Non-steerability for states with full-rank reduced states} 

As discussed in Sec.~\ref{sec:Definition of stabilizability}, a prerequisite for steerability is that the target GS can either be stabilized by certain non-trivial steering operations or be transformed from other intermediate states. In this section, we demonstrate that for GSs with full Schmidt rank, such steering operations do not exist.

We first prove a corollary stating that states with full Schmidt rank are fragile under any non-trivial operation. In particular, such states violate Condition~\ref{cond:2}. In general, we consider partitioning the whole system into two parts, $S$ and $\bar S$, such that $S$ is a local region coupled to the local steering operator $\mathcal{P}_S$, and $\bar S$ is the rest of the system. The Schmidt decomposition will be performed over this bipartition.
\begin{corollary}
\label{corollary:full rank}
    Given a (mixed) state $\rho\in\mathcal{L}(\mathcal{H}_{S})\otimes\mathcal{L}(\mathcal{H}_{\bar S})$ ($S$ is smaller than $\bar S$), if the Schmidt rank of $|\rho\rangle\rangle$ calculated for this bipartition is full, then the local superoperator $\mathcal{P}_S$ defined on $\mathcal{H}_S$ satisfying $\mathcal{P}_S(\rho)=\rho$ has only one trivial solution: $\mathcal{P}_S(\cdot)=\cdot$, namely an identity map.
\end{corollary}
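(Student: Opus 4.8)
The plan is to recast the fixed-point equation $\mathcal{P}_S(\rho)=\rho$ in the double-ket notation introduced in Appendix~\ref{Open Quantum System Dynamics} and exploit the operator Schmidt decomposition of $|\rho\rangle\rangle$ with respect to the bipartition $\mathcal{L}(\mathcal{H})\cong\mathcal{L}(\mathcal{H}_S)\otimes\mathcal{L}(\mathcal{H}_{\bar S})$. Since $\mathcal{P}_S$ acts non-trivially only on $\mathcal{H}_S$, as a linear map on the vectorized operator space it factorizes as $\hat{\mathcal{P}}_S\otimes\hat{\mathbbm{1}}_{\bar S}$, where $\hat{\mathcal{P}}_S$ is the matrix of $\mathcal{P}_S$ on $\mathcal{L}(\mathcal{H}_S)$ and $\hat{\mathbbm{1}}_{\bar S}$ the identity on $\mathcal{L}(\mathcal{H}_{\bar S})$. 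The strategy is then to show that the fixed-point condition forces $\hat{\mathcal{P}}_S$ to act as the identity on a complete basis of $\mathcal{L}(\mathcal{H}_S)$.

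First I would write the operator Schmidt decomposition
$$|\rho\rangle\rangle=\sum_j \lambda_j\,|M_{S,j}\rangle\rangle\otimes|N_{\bar S,j}\rangle\rangle,\qquad \lambda_j>0,$$
where $\{|M_{S,j}\rangle\rangle\}$ and $\{|N_{\bar S,j}\rangle\rangle\}$ are orthonormal sets in $\mathcal{L}(\mathcal{H}_S)$ and $\mathcal{L}(\mathcal{H}_{\bar S})$ respectively, in the Frobenius inner product $\langle\langle N|M\rangle\rangle=\operatorname{Tr}N^\dagger M$. Because $S$ is smaller than $\bar S$ we have $\dim\mathcal{L}(\mathcal{H}_S)=(\dim\mathcal{H}_S)^2\le(\dim\mathcal{H}_{\bar S})^2=\dim\mathcal{L}(\mathcal{H}_{\bar S})$, so full Schmidt rank means the sum runs over all $(\dim\mathcal{H}_S)^2$ indices; consequently the $\{|M_{S,j}\rangle\rangle\}$ form a \emph{complete} orthonormal basis of $\mathcal{L}(\mathcal{H}_S)$. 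This is precisely where the hypothesis "$S$ smaller than $\bar S$" is essential: otherwise full rank would only guarantee a spanning set on the smaller factor $\mathcal{L}(\mathcal{H}_{\bar S})$, leaving the action of $\hat{\mathcal{P}}_S$ on a complementary subspace of $\mathcal{L}(\mathcal{H}_S)$ undetermined.

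Next I would impose $(\hat{\mathcal{P}}_S\otimes\hat{\mathbbm{1}}_{\bar S})|\rho\rangle\rangle=|\rho\rangle\rangle$, which reads
$$\sum_j \lambda_j\,(\hat{\mathcal{P}}_S|M_{S,j}\rangle\rangle)\otimes|N_{\bar S,j}\rangle\rangle=\sum_j \lambda_j\,|M_{S,j}\rangle\rangle\otimes|N_{\bar S,j}\rangle\rangle.$$
Pairing both sides with $|N_{\bar S,k}\rangle\rangle$ on the $\bar S$ factor and using $\langle\langle N_{\bar S,k}|N_{\bar S,j}\rangle\rangle=\delta_{kj}$ together with $\lambda_k>0$ yields $\hat{\mathcal{P}}_S|M_{S,k}\rangle\rangle=|M_{S,k}\rangle\rangle$ for every $k$. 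Since $\{|M_{S,k}\rangle\rangle\}$ is a complete basis of $\mathcal{L}(\mathcal{H}_S)$, the map $\hat{\mathcal{P}}_S$ fixes every operator in $\mathcal{L}(\mathcal{H}_S)$, hence $\hat{\mathcal{P}}_S=\hat{\mathbbm{1}}_S$ and $\mathcal{P}_S(\cdot)=\cdot$ is the identity map.

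The argument is short and purely linear-algebraic, so the main work is conceptual rather than computational. The step I expect to require the most care is the bookkeeping that converts "$\mathcal{P}_S$ acts locally on $\mathcal{H}_S$" into the clean factorization $\hat{\mathcal{P}}_S\otimes\hat{\mathbbm{1}}_{\bar S}$: one must verify that the double-ket isomorphism respects the tensor structure $\mathcal{L}(\mathcal{H}_S\otimes\mathcal{H}_{\bar S})\cong\mathcal{L}(\mathcal{H}_S)\otimes\mathcal{L}(\mathcal{H}_{\bar S})$ and that "local" means exactly the identity channel on the $\bar S$ block. I would also flag that the conclusion needs no complete positivity or trace preservation of $\mathcal{P}_S$, and that the decoupling in the final step relies crucially on both the strict positivity of the Schmidt coefficients and the linear independence of the $\bar S$-side Schmidt vectors; dropping full rank reopens exactly the loophole that non-trivial steering exploits in the steerable classes, so this is also the natural place to connect the corollary back to the failure of Condition~\ref{cond:2}.
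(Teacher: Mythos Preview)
Your proposal is correct and follows essentially the same approach as the paper: both write the operator Schmidt decomposition of $|\rho\rangle\rangle$, use orthonormality of the $\bar S$-side Schmidt vectors to peel off the relation $\hat{\mathcal{P}}_S|M_{S,k}\rangle\rangle=|M_{S,k}\rangle\rangle$ for every $k$, and conclude from completeness of the $S$-side basis that $\mathcal{P}_S$ is the identity. Your write-up is in fact somewhat more careful than the paper's, since you make explicit why ``$S$ smaller than $\bar S$'' is needed for full rank to imply completeness on $\mathcal{L}(\mathcal{H}_S)$ and why the local action factorizes as $\hat{\mathcal{P}}_S\otimes\hat{\mathbbm{1}}_{\bar S}$.
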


\begin{proof}
The proof of it is straightforward by calculating the constraints on $\mathcal{P}_S$ explicitly. Let us write down the Schmidt decomposition as 
    $$
    |\rho\rangle\rangle=\sum_i c_i |\alpha_i\rangle\rangle_S\otimes|\beta_i\rangle\rangle_{\bar S}. 
    $$ 
Since $|\alpha_i\rangle\rangle_S$ forms a complete basis for $\mathcal{L}(\mathcal{H}_S)$, the equation $\mathcal{P}_S|\rho\rangle\rangle_S=|\rho\rangle\rangle_S$ implies
\begin{align*}
    &{}_{\bar S}\langle\langle \beta_i|\mathcal{P}_S|\rho\rangle\rangle=c_i |\alpha_i\rangle\rangle_S  \nonumber\\
    \Rightarrow &\mathcal{P}_S|\alpha_i\rangle\rangle_S= |\alpha_i\rangle\rangle_S\nonumber\\
    \Rightarrow &{}_S\langle\langle\alpha_i|\mathcal{P}_S|\alpha_j\rangle\rangle_S =\delta_{ij}.
\end{align*}
In other words, $\mathcal{P}_S$ is an identity map.
\end{proof}

We now investigate non-steerability based on the corollary above. Let us first consider the case of a non-degenerate target GS $|\psi\rangle$. Suppose that for every bipartition where subsystem $S$ is accessible by some steering operations, the state $|\psi\rangle\langle\psi|$ always has full Schmidt rank. Then, no local superoperator can leave $|\psi\rangle$ invariant, implying that $|\psi\rangle$ is non-steerable. Moreover, such a $|\psi\rangle$ has a RDM with full rank and supports no non-trivial SCQ, thus violating Condition~\ref{cond:2}. An example of such a case is the ground state of the half-filled SYK model, as discussed in Sec.~\ref{sec: SYK model}.

Next, consider a degenerate GS manifold $\mathcal{H}_{\mathrm{GS}}$, where no trivial SCQ exists in any subspace of $\mathcal{H}_{\mathrm{GS}}$, cf. Condition~\ref{cond:2} in Sec.~\ref{sec:Necessary conditions for NFFJS Hamiltonian}. Then, for any pure state $|\psi\rangle \in \mathcal{H}_{\mathrm{GS}}$, the Schmidt rank of $|\psi\rangle\langle\psi|$ is full, which means any non-trivial operation would necessarily change $|\psi\rangle$. However, the Schmidt ranks of mixed states in $\mathcal{H}_{\mathrm{GS}}$ are not necessarily full,  see the example below. In such cases, for a mixed state $\rho$, it may be possible to construct a non-trivial superoperator $\mathcal{P}_S$ that keeps $\rho$ invariant, i.e., $\mathcal{P}_S(\rho) = \rho$. But the degrees of freedom in $\mathcal{P}_S$ are limited to eliminating off-diagonal elements of the density matrix without enabling population transfer. Such superoperators do not help stabilize the system, as populations in excited states cannot be transferred into the GS manifold.
Therefore, in this situation, there does not exist any non-trivial superoperator that both preserves at least one state in $\mathcal{H}_{\mathrm{GS}}$ and helps stabilizing the subspace. Moreover, if Condition~\ref{cond:3} is also violated, then even superoperators that merely eliminate off-diagonal elements cannot prevent leakage from $\mathcal{H}_{\mathrm{GS}}$, thereby establishing non-steerability.

As an example that mixed state in $\mathcal{H}_{\mathrm{GS}}$ is not necessarily Schmidt-rank-full, let us take $\mathcal{H}_S=\operatorname{span}\{|\alpha_i\rangle_S\}_{i=0,1,2,3}$, and $\mathcal{H}_{\mathrm{GS}}=\operatorname{span}\{|\psi_0\rangle,|\psi_1\rangle\}$ is a two-dimensional GS manifold. The Schmidt decomposition  of two GSs is given by
\begin{align*}
    |\psi_0\rangle=&1/2(|\alpha_0\rangle_S\otimes|\beta_0\rangle_{\bar S}+|\alpha_1\rangle_S\otimes|\beta_1\rangle_{\bar S}\\
    &+|\alpha_2\rangle_S\otimes|\beta_2\rangle_{\bar S}+|\alpha_3\rangle_S\otimes|\beta_3\rangle_{\bar S}),\\
    |\psi_1\rangle=&1/2(e^{i \theta}|\alpha_0\rangle_S\otimes|\beta_1\rangle_{\bar S}+|\alpha_1\rangle_S\otimes|\beta_2\rangle_{\bar S}\\
    &+|\alpha_2\rangle_S\otimes|\beta_3\rangle_{\bar S}+|\alpha_3\rangle_S\otimes|\beta_0\rangle_{\bar S}),
\end{align*}
where $\theta\neq 2n\pi, n\in\mathbb{Z}$. It can be verified that any linear combination $\alpha|\psi_0\rangle+\beta|\psi_1\rangle$ also does not support an SCQ, therefore the Condition~\ref{cond:2} is violated. However the mixed state $1/2(|\psi_0\rangle\langle\psi_0|+|\psi_1\rangle\langle\psi_1|)$ has a Schmidt rank of $12$ instead of a full rank.

As shown above in Corollary~\ref{corollary:full rank}, for target states with full Schmidt rank, it is not possible to preserve the state under the action of any superoperator. One may ask whether it is nevertheless possible to transform other intermediate states into the target state. Here, we demonstrate that this is also impossible, as formalized in the following corollary.

\begin{corollary}
For a target GS $|\psi\rangle=\sum_{i} c_i|\alpha_{i}\rangle_S|\beta_{i}\rangle_{\bar S}$ that has full Schmidt rank, the local superoperator $\mathcal{P}_S$, supported on $\mathcal{H}_S$, and the intermediate state $\rho$ satisfying $\mathcal{P}_S(\rho)=|\psi\rangle\langle \psi|$ has only trivial solution: $\rho=|\varphi\rangle\langle \varphi|$, where $|\varphi\rangle=U^\dagger_S\otimes \mathbbm{1}|\psi\rangle$, $U_S$ is an unitary operator on $S$, and $\mathcal{P}_S(\cdot)=U_S\cdot U_S^\dagger$. If there is no such other GS $|\varphi\rangle$, then the superoperator must be trivial.
\end{corollary}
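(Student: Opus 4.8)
The plan is to exploit the rigidity of full Schmidt rank together with the trace-preserving structure of a local channel. Write $\mathcal{P}_S(\cdot)=\sum_k K_k\cdot K_k^\dagger$ with local Kraus operators $K_k=M_k\otimes\mathbbm{1}_{\bar S}$ obeying $\sum_k K_k^\dagger K_k=\mathbbm{1}$. First I would reduce to the pure components of the preimage. Decomposing $\rho=\sum_j p_j|\varphi_j\rangle\langle\varphi_j|$ with $p_j>0$, the requirement $\mathcal{P}_S(\rho)=|\psi\rangle\langle\psi|$ gives $\sum_{k,j}p_j(K_k|\varphi_j\rangle)(K_k|\varphi_j\rangle)^\dagger=|\psi\rangle\langle\psi|$, a sum of positive rank-$\le 1$ operators equal to a rank-one projector. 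Sandwiching between any $|\chi\rangle\perp|\psi\rangle$ and using positivity of each summand forces $K_k|\varphi_j\rangle=\mu_{kj}|\psi\rangle$ for scalars $\mu_{kj}$, and normalization of $|\varphi_j\rangle$ together with $\sum_k K_k^\dagger K_k=\mathbbm{1}$ then yields $\sum_k|\mu_{kj}|^2=1$ for each $j$.

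Next I would pass to the operator representation of bipartite vectors, $|\psi\rangle\leftrightarrow\Psi$ and $|\varphi_j\rangle\leftrightarrow\Phi_j$, under which $(M\otimes\mathbbm{1})|\cdot\rangle$ becomes left multiplication by $M$. The relation above reads $M_k\Phi_j=\mu_{kj}\Psi$. Multiplying by $M_k^\dagger$ and summing, trace preservation gives $\Phi_j=\sum_k M_k^\dagger M_k\Phi_j=A_j\Psi$ with $A_j:=\sum_k\mu_{kj}M_k^\dagger$. This is where full Schmidt rank enters decisively: since $S$ is the smaller factor and all Schmidt coefficients of $|\psi\rangle$ are nonzero, $\Psi$ admits a right inverse $\Psi^\dagger(\Psi\Psi^\dagger)^{-1}$, so $X\mapsto X\Psi$ is injective. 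Hence $M_kA_j\Psi=\mu_{kj}\Psi$ collapses to the operator identity $M_kA_j=\mu_{kj}\mathbbm{1}_S$.

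From $M_kA_j=\mu_{kj}\mathbbm{1}_S$ the conclusion follows. I would first show $A_j$ is invertible: any $|v\rangle\in\ker A_j$ would give $\mu_{kj}|v\rangle=M_kA_j|v\rangle=0$ for all $k$, forcing $\sum_k|\mu_{kj}|^2=0$, contradicting normalization. Thus $M_k=\mu_{kj}A_j^{-1}$, so all Kraus operators are mutually proportional; substituting into $\sum_k M_k^\dagger M_k=\mathbbm{1}$ and using $\sum_k|\mu_{kj}|^2=1$ shows $A_j^{-1}$ is unitary. Setting $U_S:=A_j^{-1}$ gives $\mathcal{P}_S(\cdot)=U_S\cdot U_S^\dagger$ and $\Phi_j=A_j\Psi=U_S^\dagger\Psi$, i.e. $|\varphi_j\rangle=(U_S^\dagger\otimes\mathbbm{1})|\psi\rangle$. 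Comparing two eigencomponents $j,j'$ through any common nonzero $M_{k_0}$ shows $A_{j'}^{-1}$ and $A_j^{-1}$ differ by a unit-modulus phase, so all $\Phi_j$ coincide up to phase and $\rho=|\varphi\rangle\langle\varphi|$ is pure. Finally, if $|\psi\rangle$ is the unique ground state, demanding $|\varphi\rangle$ be a ground state forces $(U_S^\dagger\otimes\mathbbm{1})|\psi\rangle\propto|\psi\rangle$, whence right-invertibility of $\Psi$ gives $U_S\propto\mathbbm{1}$ and $\mathcal{P}_S$ is trivial.

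I expect the main obstacle to be the step that simultaneously leverages full Schmidt rank (right-invertibility of $\Psi$) and the Kraus normalization to collapse the matrix equation into $M_kA_j=\mu_{kj}\mathbbm{1}_S$ and thereby force $A_j$ to be unitary. The genuinely delicate part is then \emph{ruling out a mixed} $\rho$: one must verify that the unitary extracted from each eigencomponent is the same up to phase, so that the apparently many admissible $\Phi_j$ in fact collapse to a single ray. This rigidity is available only by combining injectivity of $X\mapsto X\Psi$ with the constraint $\sum_k|\mu_{kj}|^2=1$; neither ingredient alone suffices.
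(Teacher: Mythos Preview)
Your proof is correct. Both you and the paper begin by reducing to pure components of the preimage via purity of the output, but the subsequent arguments diverge. The paper observes that a local channel cannot increase Schmidt rank, so each $|\varphi_j\rangle$ is also full-Schmidt-rank; it then invokes the preceding corollary (full-rank fixed states force the channel to be the identity) to argue that the Schmidt blocks $\{{}_{\bar S}\langle\beta_m|\varphi_j\rangle\langle\varphi_j|\beta_n\rangle_{\bar S}\}_{m,n}$ span $\mathcal{L}(\mathcal{H}_S)$ and hence determine $\mathcal{P}_S$ completely, from which it asserts that all $|\varphi_j\rangle$ coincide and that the channel is a unitary conjugation. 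Your approach instead works directly with the Kraus operators and the vector--operator correspondence $|\psi\rangle\leftrightarrow\Psi$: the right-invertibility of $\Psi$ collapses $M_kA_j\Psi=\mu_{kj}\Psi$ to the operator identity $M_kA_j=\mu_{kj}\mathbbm{1}_S$, and the Kraus normalization then forces $A_j^{-1}$ to be unitary, with all $M_k$ proportional to it. The paper's route is shorter because it recycles the earlier corollary, but it leaves the key deduction that all $|\varphi_j\rangle$ agree (and that the resulting channel is unitary) at the level of ``it is easy to see''; your argument is self-contained and makes explicit the two ingredients---injectivity of $X\mapsto X\Psi$ and $\sum_k|\mu_{kj}|^2=1$---that together force the rigidity.
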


\begin{proof}
Intuitively, $\mathcal{P}_S(\rho)=|\psi\rangle\langle \psi\rangle$ is possible if and only if $\mathcal{P}_S$ is some unitary. Suppose the intermediate state can be decomposed as $\rho=\sum_i p_i |\varphi_i\rangle\langle\varphi_i|$, then $\sum_i p_i\mathcal{P}_S( |\varphi_i\rangle\langle\varphi_i|)=|GS\rangle\langle GS|$. Since $|\psi\rangle\langle \psi|$ is a pure state, i.e. a limit point of the set of density matrices, then $\mathcal{P}_S( |\varphi_i\rangle\langle\varphi_i|)=|\psi\rangle\langle \psi|$ for all $i$, implying $|\varphi_i\rangle$ also has full Schmidt rank since local superoperator does not increase Schmidt rank. So matrix element of $\mathcal{P}_S$ is fully specified as in Corollary~\ref{corollary:full rank}, namely 
\begin{align*}
\mathcal{P}_S({}_{\bar S}\langle\beta_{j}|\varphi_i\rangle\langle\varphi_i|\beta_{k}\rangle_{\bar S})&={}_{\bar S}\langle\beta_{j}|\psi\rangle\langle\psi|\beta_{k}\rangle_{\bar S}\\
&= c_j c_k^*|\alpha_{j}\rangle_S{}_S\langle\alpha_{k}|.
\end{align*}
These matrix elements should be the same for each $|\varphi_i\rangle$, thus all $|\varphi_i\rangle$ are the same, denoted as $\rho=|\varphi\rangle\langle \varphi|$. It is easy to see, the only solution to $\mathcal{P}_S(|\varphi\rangle\langle\varphi|)=|\psi\rangle\langle \psi|$ is then, $|\varphi\rangle=U^\dagger_S\otimes \mathbbm{1}|\psi\rangle$ and $\mathcal{P}_S(\cdot)=U_S\cdot U_S^\dagger$, where $U_S$ is some unitary operator acting on $\mathcal{H}_S$. 
\end{proof}

Thus, even when allowing non-strong superoperators, which is introduced in Appendix~\ref{app:Steering superoperators} and employing it to derive bounds on steerability in Sec.~\ref{sec:A bound on the distance between the surrogate state and the target state employing local measure}, there exists no non-trivial mechanism to generate the target state. This is because the unitary superoperators considered above merely induce rotations within the total Hilbert space, without reducing it to a subspace or enabling convergence to a specific GS. Therefore, it is sufficient to restrict our analysis to strong superoperators.

In the situation that GS manifold is non-degenerate, this corollary can also be used for proof of non-steerability, incorporated with continuous symmetry. This is established by proving the reduced desity matrices are full-rank. For instance, consider a density matrix $\rho$ defined on a qubit, and it is invariant under arbitrary $SU(2)$ transformation, i.e. $\exp(-i\alpha \boldsymbol{n\cdot \sigma}) \times \rho \times \exp(i\alpha \boldsymbol{n\cdot \sigma})=\rho$, then it must be $\rho=\frac{1}{2}\mathbbm 1$. One-dimensional antiferromagnetic Heisenberg model, which is discussed in Sec.~\ref{sec:Anti-ferromagnetic Heisenberg model}, is one model with such local symmetry, $SU(2)^{\otimes N}$. Its Hamiltonian is defined as
$$
H_{\mathrm{AFH}} = \sum_i S_x^i S_x^{i+1} + S_y^i S_y^{i+1} + S_z^i S_z^{i+1},
$$
where $S_{x,y,z}^i$ represent the spin operators for the $i$-th spin, and it is invariant under any joint $SU(2)^{\otimes N}$ rotation. Then for non-degenerate eigenstates, their $n$-body RDM must be invariant under $SU(2)^{\otimes n}$, i.e. $\exp -i \alpha \boldsymbol{n}\cdot\sum_i S^{(i)}$. Since center of $SU(2)$ is $\mathbb{Z}_2$($\pm 1$), a spin-$J$ state invariant under arbitrary $SU(2)$ rotation must be proportional to identity matrix.  Therefore, when we combine multiple spins, their RDM must be proportional to the identity matrix in each collective spin-$J$ sector (it can also be zero). In this situation, when we inspect the RDMs of the GS, if we know that the RDM has a finite population for each collective spin sector, then the RDM on each sector is proportional to the identity matrix, implying that the RDM is full-rank. Therefore, there does not exist SCQ for the GS manifold of $H_{\mathrm{AFH}}$, violating the necessary Condition~\ref{cond:2}.

Another extreme example would be Absolute Maximally Entangled (AME) states~\cite{PhysRevA.92.032316ame}, where any RDM of less than half of the whole system is proportional to identity matrices. In this situation, one must get access to half of the whole system to stabilize this highly entangled states.

\subsection{Calculation of function \texorpdfstring{$p(\Pi_{\mathrm{GS}})$}{p(ΠGS)}}

In this section, we consider the calculation of $1 - p(\Pi_{\mathrm{GS}})$, as introduced in Eq.~\eqref{eq: upper bound degenerate} in Sec.~\ref{sec:A bound on the distance between the surrogate state and the target state employing local measure}. We then illustrate the necessity of computing this quantity exactly by examining the example of Dicke states.

Let us first discuss the calculations of $p(\Pi_{\mathrm{GS}})$. Given the projection operator for the GS manifold, $\Pi_{\mathrm{GS}}$. the function is defined as follows:
\begin{equation}
1-p(\Pi_{\mathrm{GS}})=\sup_{\tilde{\Pi}^{\mathrm{surr}}_i}\lambda(\tilde{\Pi}^{\mathrm{surr}}_i \Pi_{\mathrm{GS}}\tilde{\Pi}^{\mathrm{surr}}_i),
\end{equation}
which is maximized over all possible projection operators $\{\tilde{\Pi}^{\mathrm{surr}}_i\}$ defined on all possible local regions $S_i$, and $\lambda(M)$ is the largest eigenvalue of the operator $M$. As we discussed in Sec.~\ref{sec:Non-Frustration-Free Steadily Steerable Hamiltonians}, when there exist an SCQ in (a subspace of) the GS manifold, then (in this subspace) there exist a GS $|\psi\rangle$ and a local projection operator $\Pi$ such that $\Pi_{\mathrm{GS}}|\psi\rangle=\Pi|\psi\rangle=|\psi\rangle$. This implies $\Pi\Pi_{\mathrm{GS}}\Pi|\psi\rangle=|\psi\rangle$, hence $p(\Pi_{\mathrm{GS}})=0$. Note that this prerequisite satisfies the necessary Condition~\ref{cond:2}. 

Then let us consider the situation that the necessary Condition~\ref{cond:2} is violated, leading to the non-zero $p(\Pi_{\mathrm{GS}})$ function value. For a non-degenerate GS manifold, defined with a single target state $|\psi\rangle$, the calculation of the function $p(\Pi_{\mathrm{GS}})$ is already discussed in Sec.~\ref{sec:A bound on the distance between the surrogate state and the target state employing local measure}. It is given by the smallest eigenvalue of the RDM of $|\psi\rangle$, minimized over local region $S_i$. Equivalently, this value can be obtained as the eigenvalue of $\Pi^{\mathrm{surr}}_i\Pi_{\mathrm{GS}}\Pi^{\mathrm{surr}}_i$, with the eigenvector $\Pi^{\mathrm{surr}}_i|\psi\rangle$. Then let us generalize it to the degenerate case, defined with a set of GS $\{|\psi_i\rangle\}$. In this situation, the GSs projected by the local projection operator $\Pi^{\mathrm{surr}}_i$ might not be orthogonal, i.e. $(\langle \psi_{j}|\Pi^{\mathrm{surr}}_i)(\Pi^{\mathrm{surr}}_i|\psi_k\rangle) \not \propto \delta_{jk}$. Without loss of generality, assume we take a proper basis of GS manifold, such that $\Pi^{\mathrm{surr}}_i|\psi_j\rangle$ are mutually orthogonal for different $j$. Then in this basis, the eigenvectors of $\Pi^{\mathrm{surr}}_i\Pi_{\mathrm{GS}}\Pi^{\mathrm{surr}}_i$ are given by $\Pi^{\mathrm{surr}}_i|\psi_j\rangle$, with eigenvalue $\langle \psi_j|\Pi^{\mathrm{surr}}_i|\psi_j\rangle$. Similar to the discussion of non-degenerate case, running over $\Pi^{\mathrm{surr}}_i$ but for fixed $S_i$, the maximal eigenvalue of $\Pi^{\mathrm{surr}}_i\Pi_{\mathrm{GS}}\Pi^{\mathrm{surr}}_i$ is given by the smallest eigenvalue of the set of RDM $\{\operatorname{Tr}_{S_i}|\psi_j\rangle\langle\psi_j|\}_j$. So in summary, the function $p(\Pi_{\mathrm{GS}})$ can be calculated as the smallest eigenvalue of $\operatorname{Tr}_{S_i}|\psi\rangle\langle\psi|$, running over all possible GS and all local region $S_i$ where the steering operator is applied to.

In general, consider the GSs are not strongly entangled and contain components with small Schmidt coefficients. According to the discussions above, the value of $p(\Pi_{\mathrm{GS}})$ function is given by these small Schmidt coefficients. When we numerically solve for the GSs' wavefunctions approximately, these components might be discarded for numerical efficiency. Let us demonstrate an example that these components should not be discarded when we talk about steerability. 

Consider Dicke states $|D^n_k\rangle$~\cite{PhysRev.93.99}, which is a $n$-qubit state with Hamming Weight $k$. The Dicke state with presence of single $1$ can be written as $|D^n_1\rangle= \frac{1}{\sqrt{n}}(|100\cdots 0\rangle+|010\cdots 0\rangle +|0010\cdots 0\rangle +\cdots + |000\cdots 01\rangle)$. If we view it as a one-dimensional system, then the spatially $k$-local RDM is just $\rho_k=\operatorname{Tr}_{k+1,k+2\cdots N}|D^n_1\rangle\langle D^n_1|= \frac{1}{n}((n-k)|0\rangle\langle 0|^{\otimes k}+ (|100\cdots 0\rangle + |010\cdots 0\rangle + \cdots |00\cdots 1\rangle )(\langle 100\cdots 0|+\langle 010\cdots 0|+\cdots +\langle 00\cdots 1| ))$, namely $\rho_k$ only have two non-zero eigen values, $(n-k)/n$ and $k/n$. So long as $k$ is not macroscopically large, for instance $k=O(\log n)$, $k/n$ vanishes in the thermodynamic limit. 
In this case, the naive approach of discarding eigenstates of the RDM with very small probabilities leads to an approximate RDM of the form $|0\rangle^{\otimes k}$. Designing a steering protocol based on this approximate RDM would then result in the completely trivial state $|0\rangle^{\otimes N}$. Consequently, even for a very large qubit chain with $n = 2 \times 10^{10}$, and the smaller eigenvalue of $\rho_2$ is $10^{-10}$, it should still not be neglected. Moreover, such an approximation may artificially reduce the value of the $p(|0\rangle^{\otimes N})$ function from a tiny but non-zero quantity to exactly zero. This stresses the need for caution when approximating quantum states that exhibit non-local correlations, even if those correlations appear locally irrelevant.

\subsection{Implementation of examples}\label{sec:Implementation of Examples}

In this section, we first discuss the notion of locality for indistinguishable particles. We then present the numerical calculation of the effective temperature lower bound introduced in Sec.~\ref{sec:Derivation of the Temperature lower bound}, for the SYK and Fermi–Hubbard models discussed in Sec.~\ref{sec:Implementations of NFFNS model}. 

Before proceeding, let us clarify the notion of locality for indistinguishable particles. In such systems, the notion of locality can be subtle, as one can perform mode transformations, e.g. Bogoliubov transformations~\cite{valatin1958comments}, that change the support of operators. In other words, an operator that is local in one mode basis can become non-local in another, and vice versa. For example, evaporative cooling is implemented using jump operators like $c_{k > k_0}$, which annihilate particles with momentum above a threshold $k_0$~\cite{PhysRevLett.61.935}. This operator is clearly local in momentum space, but highly non-local in position space, since $c_k = (1/\sqrt{V}) \sum_r e^{-ikr} c_r$. Therefore, we define locality in this context as follows: an operator is said to be local if there exists a physically realizable mode transformation under which the operator acts non-trivially only on a finite number of modes. For instance, any non-interacting Hamiltonian can be diagonalized via such a transformation; in the resulting basis, the Hamiltonian becomes Frustration-Free, and its GS can be steered by annihilating quasi-particle excitations.

\begin{figure*}[htbp]
    \centering
    \begin{subfigure}[b]{0.48\textwidth}
         \centering
         \includegraphics[width=\textwidth]{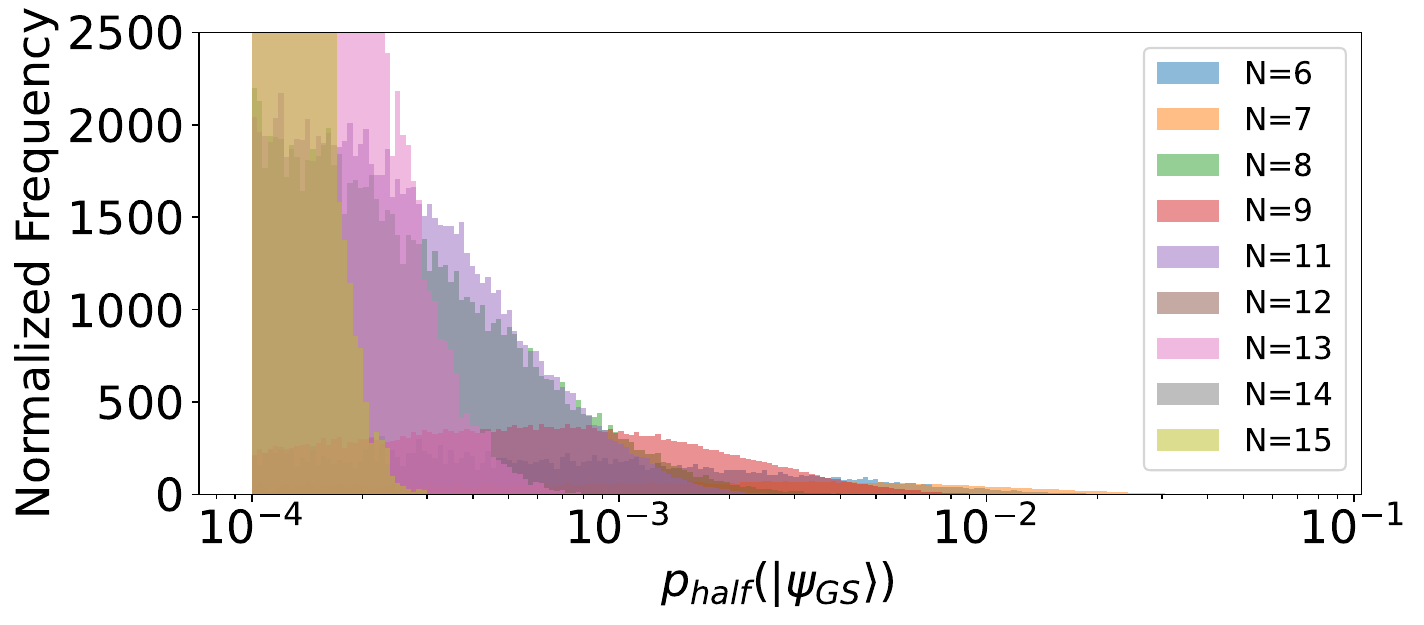}
         \caption{}
     \end{subfigure}
     \hfill
     \begin{subfigure}[b]{0.45\textwidth}
         \centering
         \includegraphics[width=\textwidth]{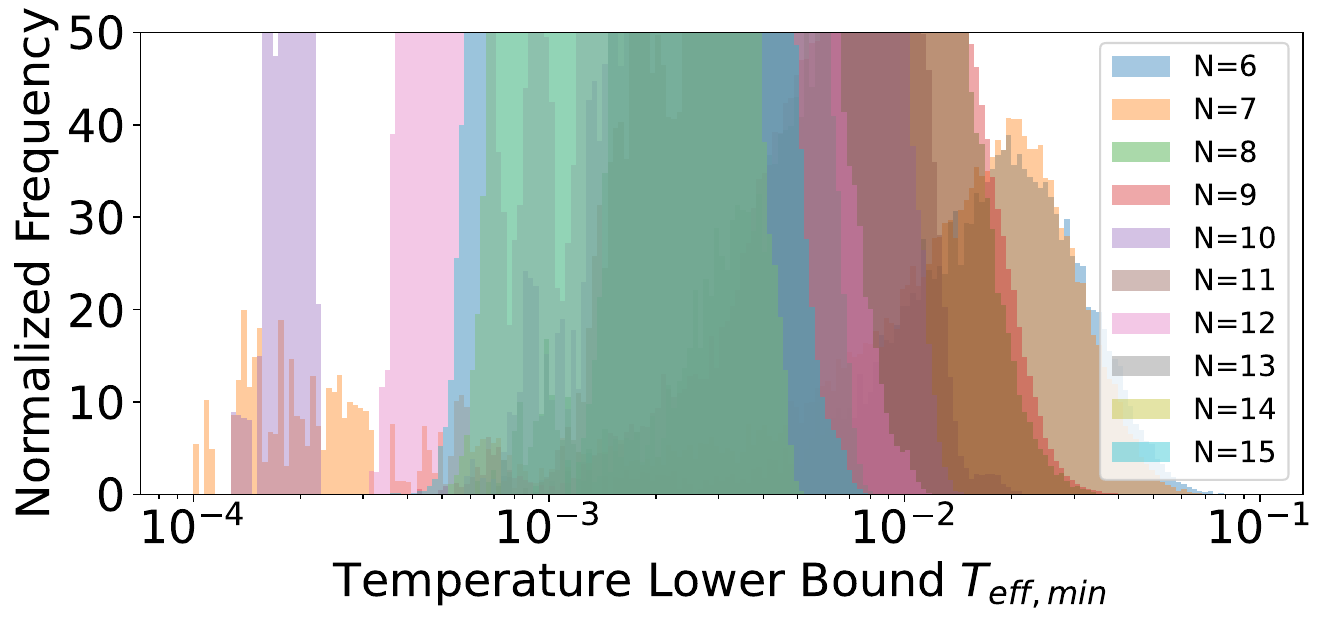}
         \caption{}
     \end{subfigure}
     \hfill
     \begin{subfigure}[b]{0.48\textwidth}
         \centering
         \includegraphics[width=\textwidth]{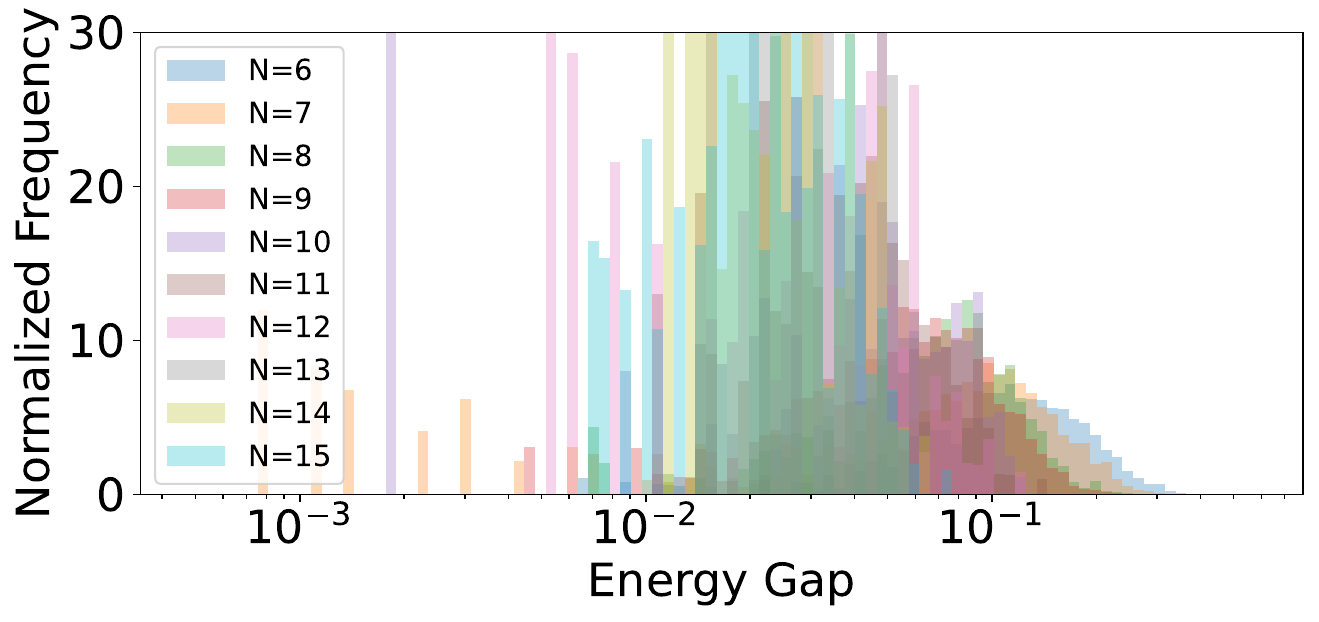}
         \caption{}
     \end{subfigure}
     \hfill
      \begin{subfigure}[b]{0.45\textwidth}
         \centering
         \includegraphics[width=\textwidth]{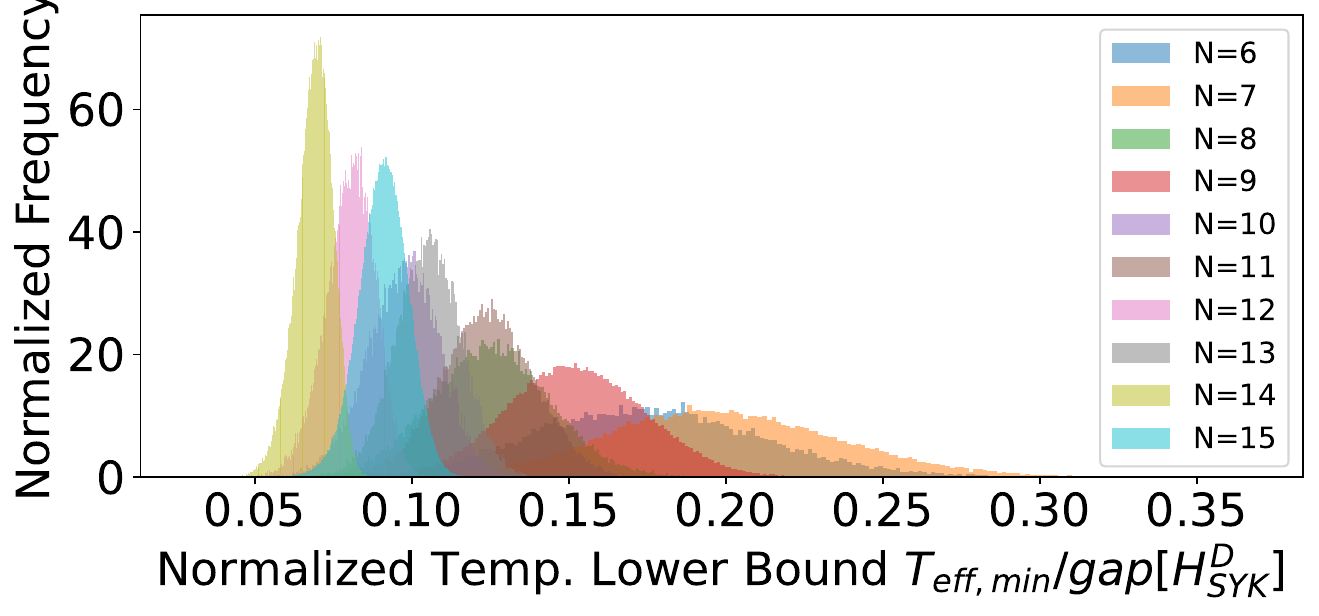}
         \caption{}
     \end{subfigure}
     \hfill
      \begin{subfigure}[b]{0.5\textwidth}
         \centering
         \includegraphics[width=\textwidth]{SYK/SYK-mean.pdf}
         \caption{}
     \end{subfigure}
    \caption{{\it SYK model for Dirac Fermion.} Shown are normalized histograms of several physical quantities for different numbers of fermionic modes $N$. Here the occupation (number of Dirac fermions) is  $\lfloor N/2 \rfloor$.  Steering operators consist of the detector's degree of freedom and $m=\lfloor N/2 \rfloor$ fermionic degrees of freedom. Hence $p_{half}(|\psi\rangle_{\mathrm{GS}})$ (Eq.~\eqref{eq:upper bound}) is obtained by running over all partitions of the system into {\it $m$ fermionic modes} and {\it the rest}. These data are computed with different disorder configurations of $H^{\mathrm{D}}_{\mathrm{SYK}}$. (For the sake of clarity, small tails of the distributions are cut off)  (a) Upper bound on the overlap of the presumed surrogate target state and the ground state $p_{half}(|\psi\rangle_{\mathrm{GS}})$ (cf. Eq.~\eqref{eq:upper bound}). (b)  The minimal effective temperature, $T_{\mathrm{eff,min}}$  (cf. Eq.~\eqref{eq:temp}).  (c) The spectral gap from the ground state energy, $\operatorname{gap}[H^{\mathrm{D}}_{\mathrm{SYK}}]$. (d) The ratio of the minimal effective temperature over the gap, $T_{\mathrm{eff,min}}/\operatorname{gap}[H^{\mathrm{D}}_{\mathrm{SYK}}]$. Note the Gaussian-like nature of the distributions.  (e) Mean and standard deviation for the Gaussian-like distributions in (d). Note that the means and standard deviations for  $N$ odd are systematically larger than for $N$ even.}
    \label{fig:SYK-app}
\end{figure*}

\begin{figure}[htbp]
    \centering
    \begin{subfigure}[b]{0.4\textwidth}
         \centering
         \includegraphics[width=\textwidth]{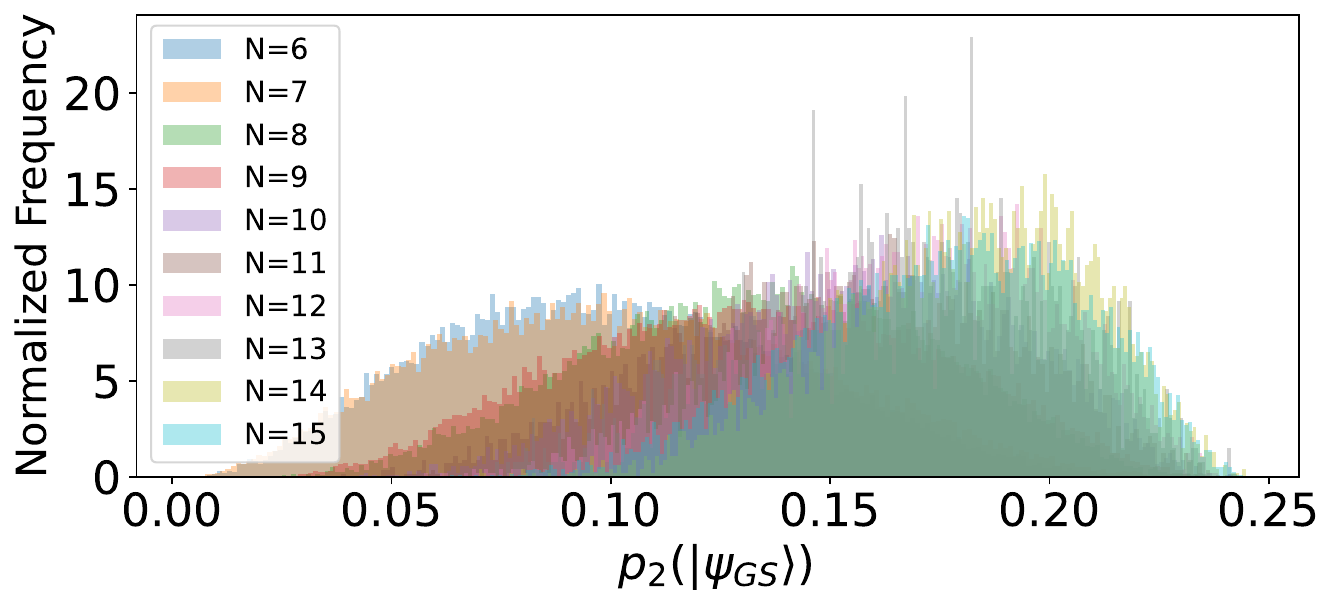}
         \caption{}
     \end{subfigure}
     \hfill
     \begin{subfigure}[b]{0.4\textwidth}
         \centering
         \includegraphics[width=\textwidth]{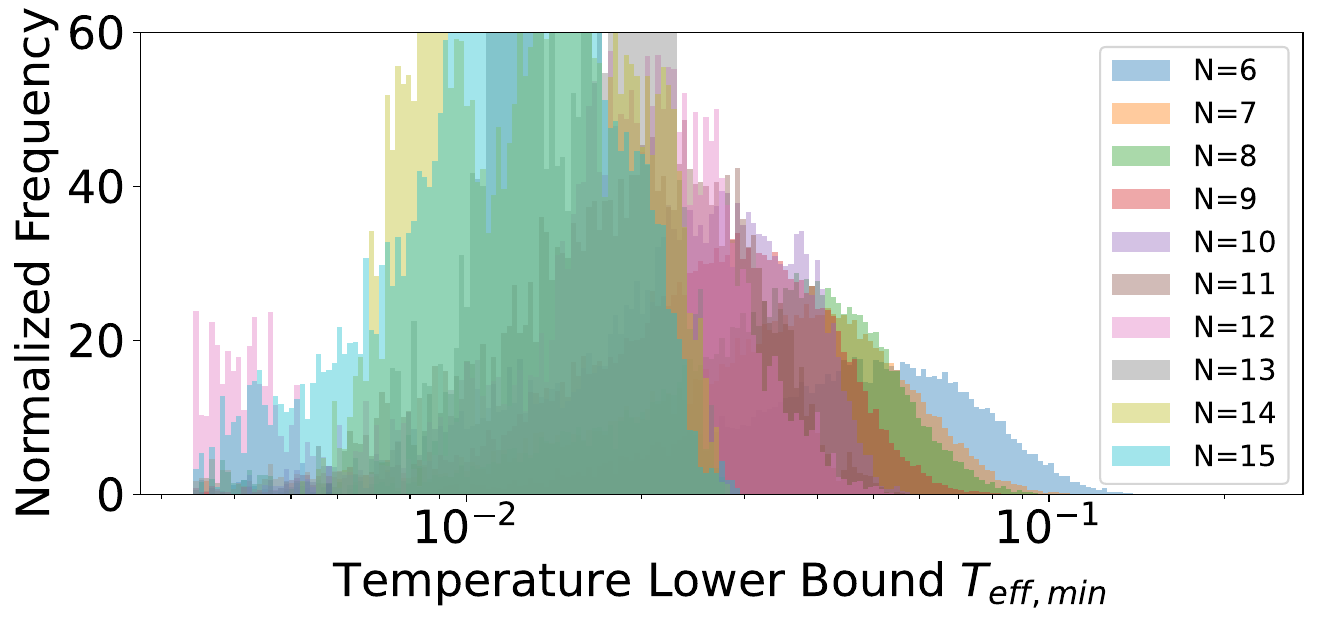}
         \caption{}
     \end{subfigure}
     \hfill
      \begin{subfigure}[b]{0.4\textwidth}
         \centering
         \includegraphics[width=\textwidth]{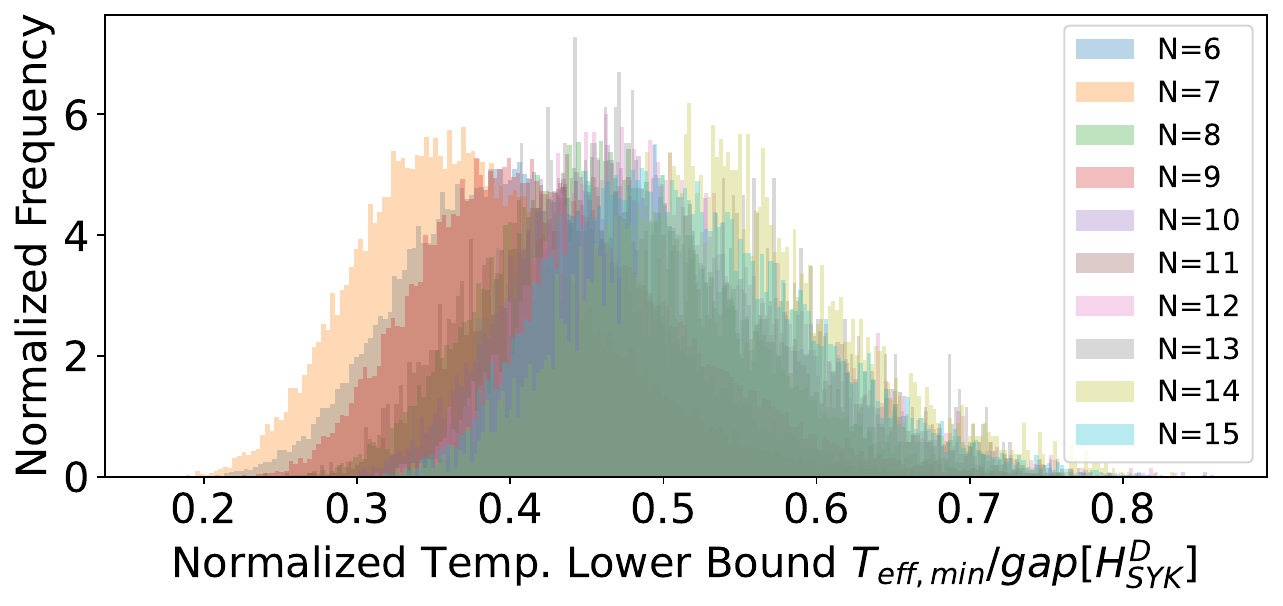}
         \caption{}
     \end{subfigure}
    \caption{{\it SYK model for Dirac fermions}: shown are normalized histograms of several physical quantities for different numbers of fermionic modes $N$. Here the occupation (number of Dirac fermions) is  $\lfloor N/2 \rfloor$.  Steering operators consist of the detector's degree of freedom and  $m=2$ fermionic degrees of freedom.    (a) Maximal overlap of the presumed surrogate target state and the true ground state $p_2(|\psi\rangle_{\mathrm{GS}})$ (cf. Eq.~\eqref{eq:upper bound}). It decreases with $N$. The rank of the RDM coupling to steering operator is 4, hence the function $p_2(|\psi\rangle_{\mathrm{GS}})$ cannot exceed $1/4$. (b)  The minimal effective temperature, $T_{\mathrm{eff,min}}$  (cf. Eq.~\eqref{eq:temp}).  (c) The ratio of the minimal effective temperature over the gap, $T_{\mathrm{eff,min}}/\operatorname{gap}[H^{\mathrm{D}}_{\mathrm{SYK}}]$.  Note deviations from a Gaussian form.}
    \label{fig:SYK2-app}
\end{figure}

The SYK model is introduced in Appendix~\ref{Discussions about steerability}, defined either with Majorana fermions or Dirac fermions. Here, we present the details of the numerical calculation of the lower bound on effective temperature for the SYK model, cf. Sec.~\ref{sec: SYK model}. The calculations are performed using exact diagonalization, where the fermionic Hamiltonian is mapped to qubits via the Jordan–Wigner transformation~\cite{Jordan1928}.

We first consider the case of Dirac fermions, as discussed in Sec.~\ref{sec: SYK model}. The SYK model with Dirac fermions is particle number preserving, i.e. it commutes with the total particle number operator: $[H^{\mathrm{D}}_{\mathrm{SYK}}, N_{\mathrm{par}}] = 0$, where $N_{\mathrm{par}} = \sum_i c_i^\dagger c_i$. As a result, $N_{\mathrm{par}}$ can be used to generate trivial SCQs. For example, in the one-particle subspace ($N_{\mathrm{par}} = 1$), the operator $|1_i\rangle\langle 1_i| \otimes |1_j\rangle\langle 1_j|$ for modes $i, j$ is a trivial bipartite SCQ, since the probability of observing two particles is zero. To avoid such trivial contributions, we perform exact diagonalization in the half-filling subspace $N_{\mathrm{par}} = \lfloor N/2 \rfloor$, where $N$ is the number of fermionic modes.
For each realization of the coupling constants $\{J_{ij;kl}\}$, we compute the function $p(|\psi_{\mathrm{GS}}\rangle)$ using all $m$-body RDMs, where $m = 2$ or $m = \lfloor N/2 \rfloor$, over the range $N = 6, 7, \dots, 15$. We then estimate the lower bound on the effective temperature from the spectrum of the SYK model restricted to the subspace $N_{\mathrm{par}} = \lfloor N/2 \rfloor$, implicitly assuming that steering operations act only within this particle-conserving subspace. The resulting distributions of effective temperature lower bounds are shown in Fig.~\ref{fig:SYK-app} for $\lfloor N/2 \rfloor$-body steering operators and Fig.~\ref{fig:SYK2-app} for $2$-body steering operators. In the former case, the distributions exhibit approximately Gaussian behavior over different realizations of the random couplings $J_{ij;kl}$.

Following the same approach, we also consider the SYK model defined with Majorana fermions, denoted $H^{\mathrm{M}}_{\mathrm{SYK}}$ (see Sec.~\ref{sec: SYK model}). In contrast to $H^{\mathrm{D}}_{\mathrm{SYK}}$, $H^{\mathrm{M}}_{\mathrm{SYK}}$ does not respect particle conservation symmetry, and thus we analyze its full spectrum. For each realization of the random couplings $J_{ijkl}$, we compute $p(|\psi_{\mathrm{GS}}\rangle)$ using exact diagonalization and all $m$-body RDMs with $m = 2$ or $m = \lfloor N/2 \rfloor$, for system sizes $N = 5, 6, \dots, 12$. The resulting distributions of effective temperature lower bounds are shown in Fig.~\ref{fig:MFSYK-app} for $\lfloor N/2 \rfloor$-body steering operators and Fig.~\ref{fig:MFSYK2-app} for $2$-body steering operators. The former distributions also exhibit approximately Gaussian behavior.

\begin{figure*}[htbp]
    \centering
    \begin{subfigure}[b]{0.48\textwidth}
         \centering
         \includegraphics[width=\textwidth]{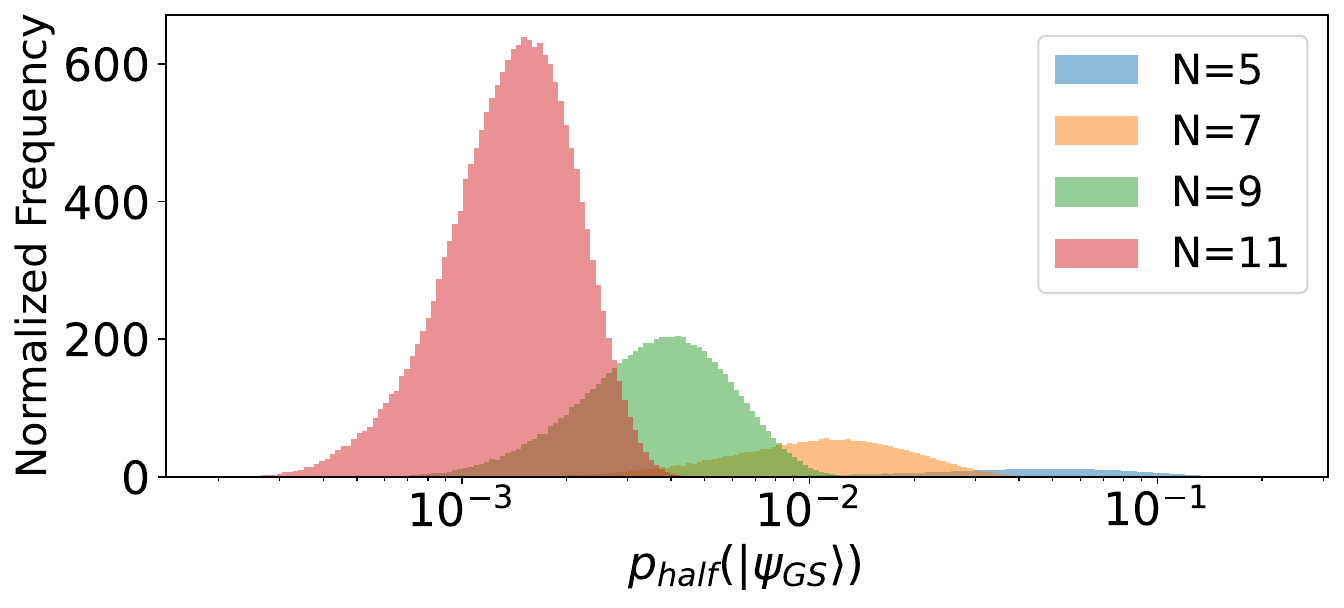}
         \caption{}
     \end{subfigure}
     \hfill
     \begin{subfigure}[b]{0.45\textwidth}
         \centering
         \includegraphics[width=\textwidth]{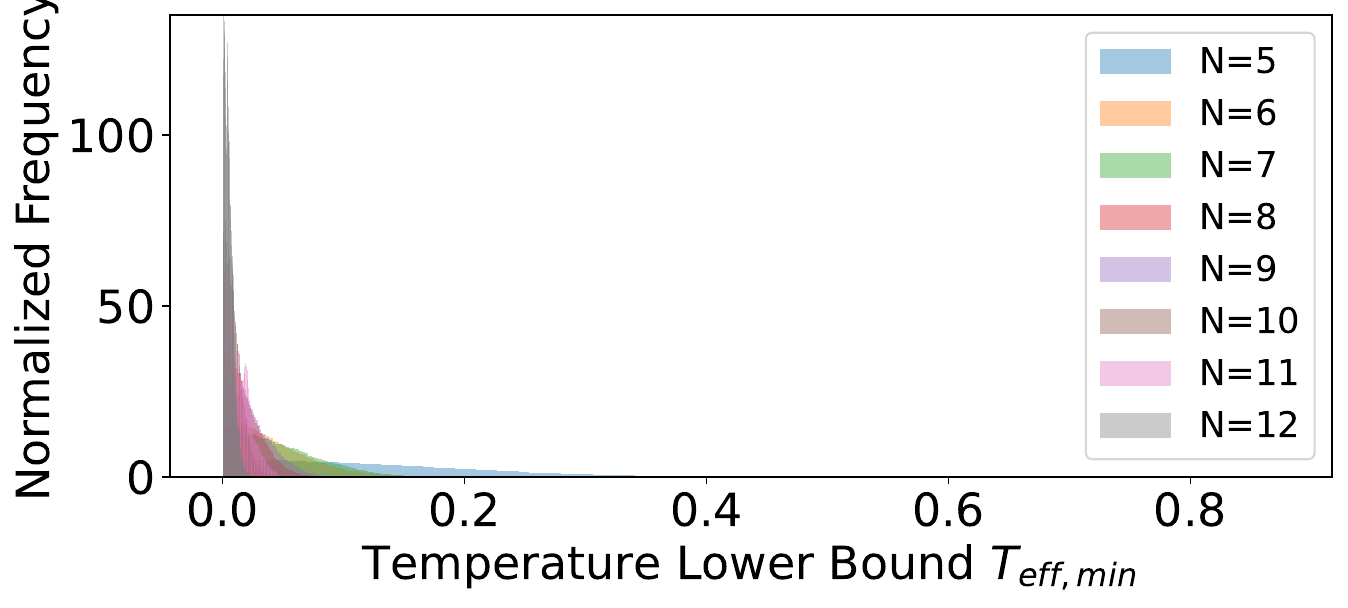}
         \caption{}
         \label{fig:MFSYK-T}
     \end{subfigure}
     \hfill
     \begin{subfigure}[b]{0.45\textwidth}
         \centering
         \includegraphics[width=\textwidth]{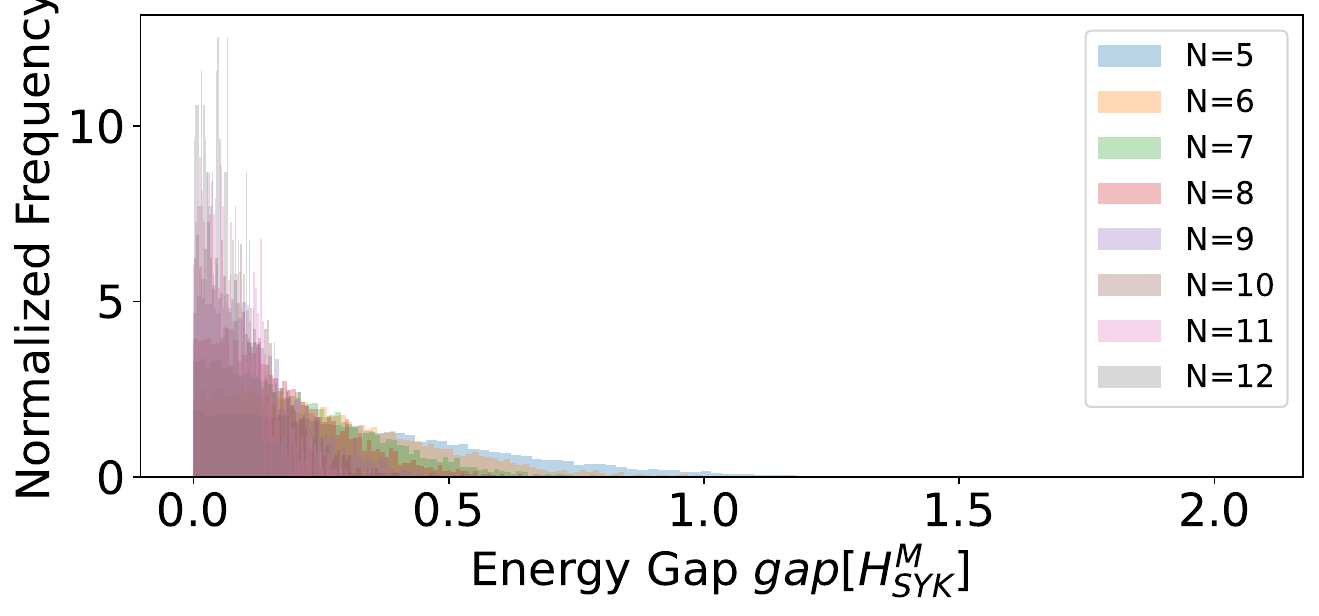}
         \caption{}
         \label{fig:MFSYK-gap}
     \end{subfigure}
     \hfill
      \begin{subfigure}[b]{0.45\textwidth}
         \centering
         \includegraphics[width=\textwidth]{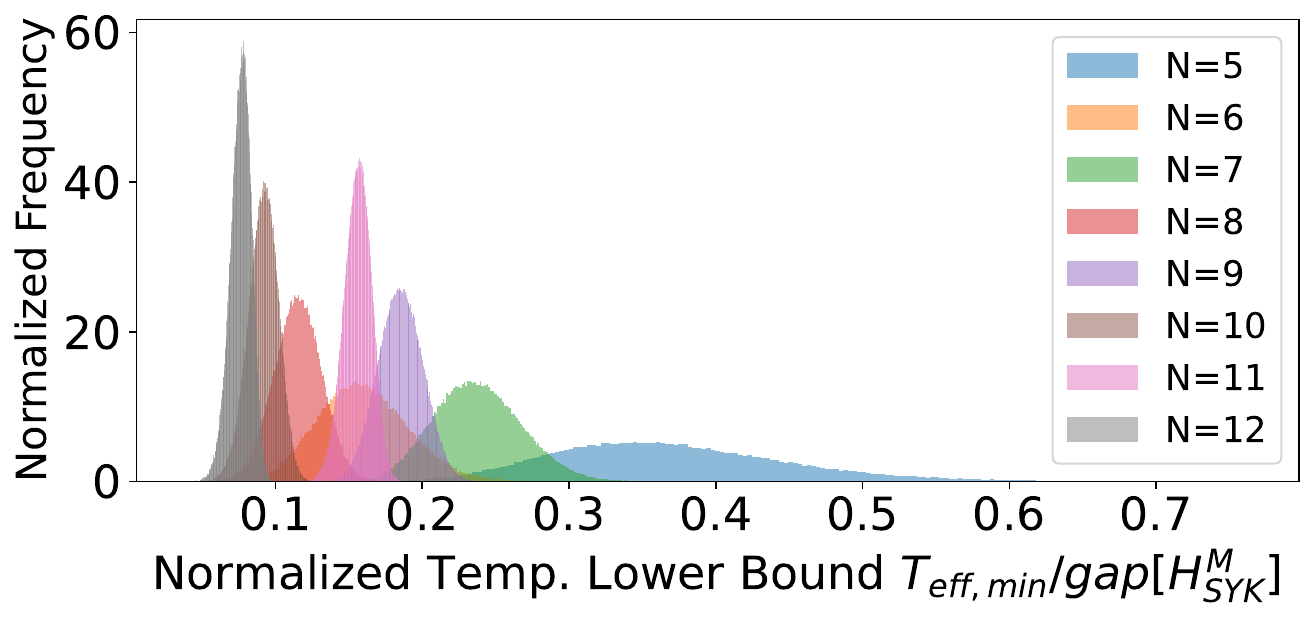}
         \caption{}
     \end{subfigure}
     \hfill
      \begin{subfigure}[b]{0.5\textwidth}
         \centering
         \includegraphics[width=\textwidth]{MFSYK/MFSYK-mean.pdf}
         \caption{}
     \end{subfigure}
    \caption{{\it SYK model for Majorana Fermion.} Shown are normalized histograms of several physical quantities for different numbers of fermionic modes $N$. Steering operators consist of the detector's degree of freedom and $m=\lfloor N/2 \rfloor$ fermionic degrees of freedom. Hence $p_{half}(|\psi\rangle_{\mathrm{GS}})$ (Eq.~\eqref{eq:upper bound}) is obtained by running over all partitions of the system into {\it $m$ fermionic modes} and {\it the rest}. These data are computed with different disorder configurations of $H^{\mathrm{M}}_{\mathrm{SYK}}$.  (a) Upper bound on the overlap of the presumed surrogate target state and the ground state $p_{half}(|\psi\rangle_{\mathrm{GS}})$ (cf. Eq.~\eqref{eq:upper bound}). (b)  The minimal effective temperature, $T_{\mathrm{eff,min}}$  (cf. Eq.~\eqref{eq:temp}).  (c) The spectral gap from the ground state energy, $\operatorname{gap}[H^{\mathrm{M}}_{\mathrm{SYK}}]$. (d) The ratio of the minimal effective temperature over the gap, $T_{\mathrm{eff,min}}/\operatorname{gap}[H^{\mathrm{M}}_{\mathrm{SYK}}]$. Note the Gaussian-like nature of the distributions.  (e) Mean and standard deviation for the Gaussian-like distributions in (d). Note that the means and standard deviations for  $N$ odd are systematically larger than for $N$ even.}
    \label{fig:MFSYK-app}
\end{figure*}

\begin{figure}[htbp]
    \centering
    \begin{subfigure}[b]{0.4\textwidth}
         \centering
         \includegraphics[width=\textwidth]{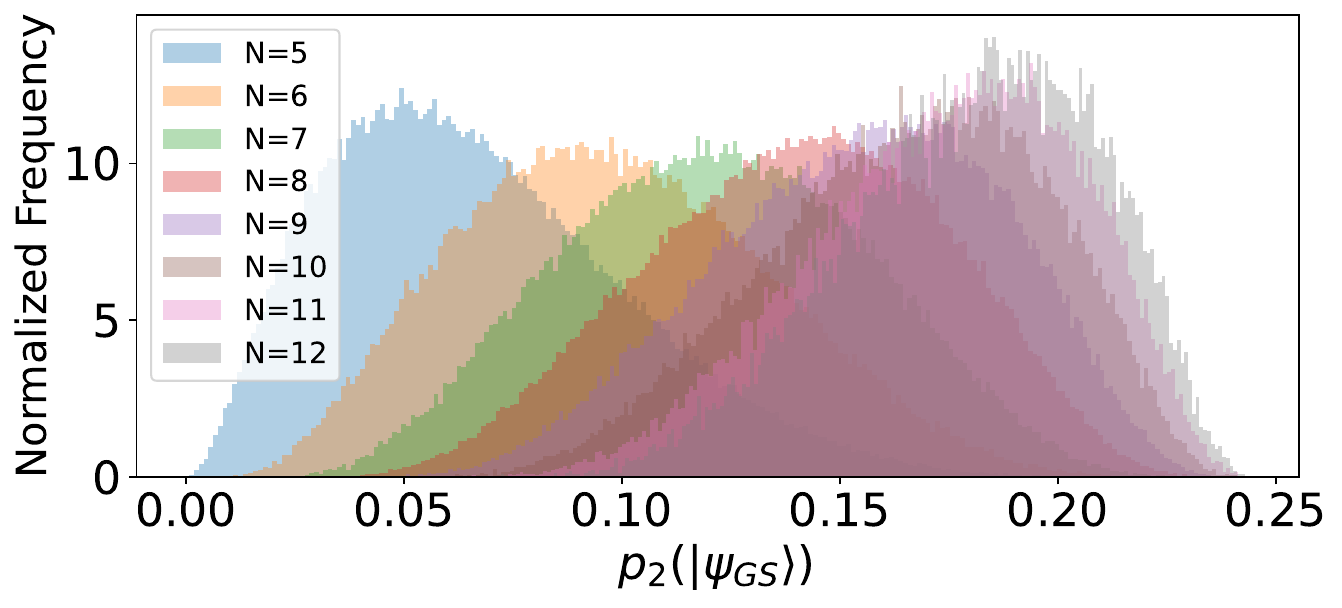}
         \caption{}
     \end{subfigure}
     \hfill
     \begin{subfigure}[b]{0.4\textwidth}
         \centering
         \includegraphics[width=\textwidth]{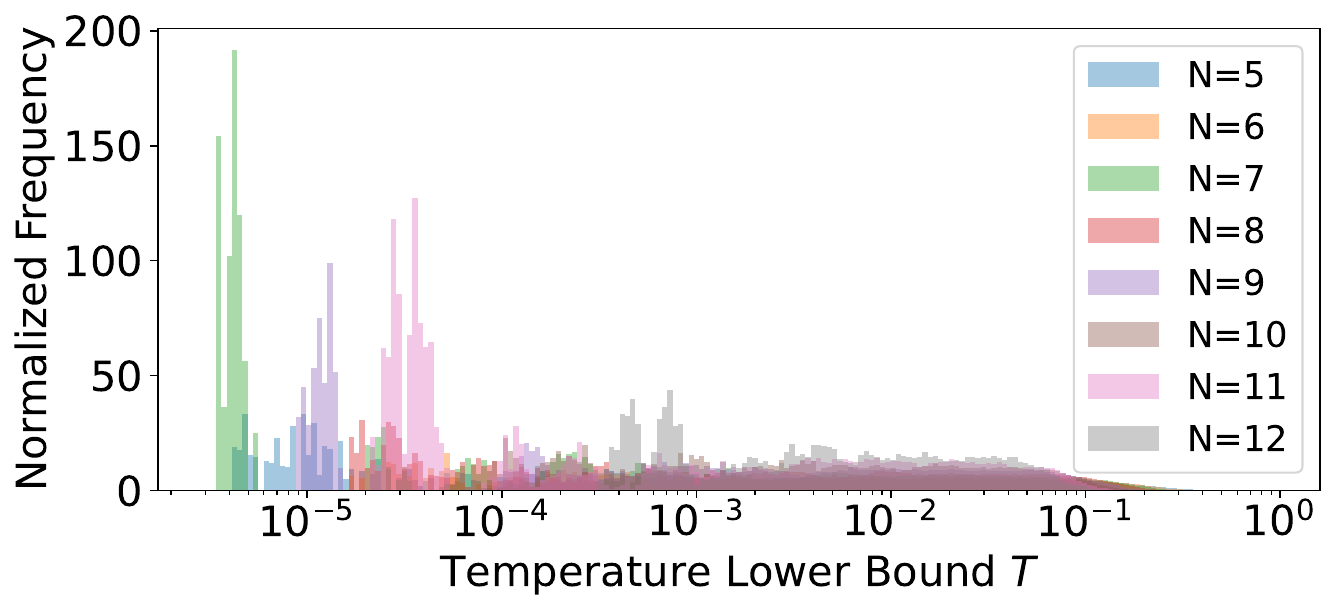}
         \caption{}
     \end{subfigure}
     \hfill
      \begin{subfigure}[b]{0.4\textwidth}
         \centering
         \includegraphics[width=\textwidth]{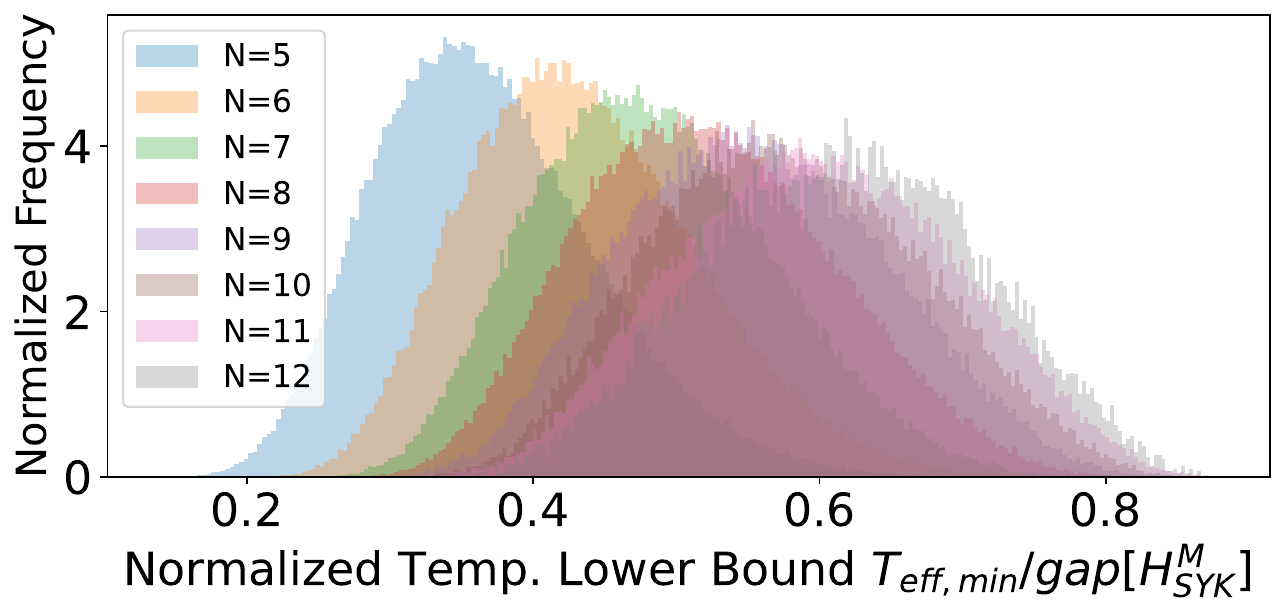}
         \caption{}
     \end{subfigure}
    \caption{{\it SYK model for Fermion fermions}: shown are normalized histograms of several physical quantities for different numbers of fermionic modes $N$. Steering operators consist of the detector's degree of freedom and  $m=2$ fermionic degrees of freedom.    (a) Maximal overlap of the presumed surrogate target state and the true ground state $p_2(|\psi\rangle_{\mathrm{GS}})$ (cf. Eq.~\eqref{eq:upper bound}). It decreases with $N$. The rank of the RDM coupling to steering operator is 4, hence the function $p_2(|\psi\rangle_{\mathrm{GS}})$ cannot exceed $1/4$. (b)  The minimal effective temperature, $T_{\mathrm{eff,min}}$  (cf. Eq.~\eqref{eq:temp}).  (c) The ratio of the minimal effective temperature over the gap, $T_{\mathrm{eff,min}}/\operatorname{gap}[H^{\mathrm{M}}_{\mathrm{SYK}}]$.  Note deviations from a Gaussian form.}
    \label{fig:MFSYK2-app}
\end{figure}

To analyze the Fermi–Hubbard model discussed in Sec.~\ref{sec:Fermi-Hubbard model}, where the GS manifold may be degenerate in certain regimes, we require an additional proposition to determine whether the GS subspace is bipartite distinguishable, i.e. whether it violates Condition~\ref{cond:3}. For a target subspace $\mathcal{H}_{\mathrm{target}}$, let ${|\psi_i\rangle}$ denote an orthonormal basis. As discussed in Definition~\ref{def:qgd} in Appendix~\ref{sec:Necessary conditions for Non-Frustration-Free Jittery Steerable (NFFJS) class}, bipartite distinguishability of $\mathcal{H}_{\mathrm{target}}$ is equivalent to the linear independence of the set ${\operatorname{Tr}_S(|\psi_i\rangle\langle\psi_j|)}$.
In non-steerable systems, the degeneracy of the GS manifold often arises from global symmetries. Consequently, the states $\{|\psi_i\rangle\}$ are globally very different, and are thus expected to be distinguishable on $\mathcal{H}_{\bar S}$. In such cases, we can formulate the following proposition as a practical criterion for bipartite distinguishability.
\begin{prop}
\label{prop:qgd}
    $\mathcal{H}_{\mathrm{target}}=\operatorname{span}\{|\psi_i\rangle\}$ is bipartite distinguishable for the bipartition $\mathcal{H}_S\otimes \mathcal{H}_{\bar S}$, if $\operatorname{span}(\operatorname{Tr}_S(|\psi_{i}\rangle\langle\psi_{i}|))\not \subset \bigcup_{j<i}\operatorname{span}(\operatorname{Tr}_S(|\psi_{j}\rangle\langle\psi_{j}|))$. Here by the span of a matrix (the RDM) we mean the linear space spanned by the eigenvectors of the matrix.
\end{prop}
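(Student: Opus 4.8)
The plan is to reduce the statement to a linear-independence problem and then exploit the Schmidt structure of the basis states. By the equivalence established in the paragraph following Def.~\ref{def:qgd}, $\mathcal{H}_{\mathrm{target}}=\operatorname{span}\{|\psi_i\rangle\}$ is bipartite distinguishable precisely when the marginals $\rho_{ij}:=\operatorname{Tr}_S|\psi_i\rangle\langle\psi_j|$ are linearly independent operators on $\mathcal{H}_{\bar S}$, i.e. when the linear map $\Phi(M)=\operatorname{Tr}_S M$ is injective on $\mathcal{L}(\mathcal{H}_{\mathrm{target}})$. So it suffices to show that the hypothesis on the diagonal marginals forces $\sum_{ij}c_{ij}\rho_{ij}=0\Rightarrow c_{ij}=0$.

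First I would Schmidt-decompose $|\psi_i\rangle=\sum_a\sqrt{\lambda^i_a}\,|u^i_a\rangle_S|v^i_a\rangle_{\bar S}$, which yields $\rho_{ii}=\sum_a\lambda^i_a|v^i_a\rangle\langle v^i_a|$ with support $W_i=\operatorname{span}\{|v^i_a\rangle\}$ (the ``span'' appearing in the statement), and more generally the containments $\operatorname{range}(\rho_{ij})\subseteq W_i$ together with $\operatorname{range}(\rho_{ij}^\dagger)=\operatorname{range}(\rho_{ji})\subseteq W_j$. Grouping a hypothetical vanishing combination by its range index gives $0=\sum_i R_i$ with $R_i=\sum_j c_{ij}\rho_{ij}$ and $\operatorname{range}(R_i)\subseteq W_i$. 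I would then peel off the largest index $n$: the hypothesis $W_n\not\subset\bigcup_{j<n}W_j$ is used to produce a covector $\langle\phi|$ that annihilates the earlier supports while detecting $W_n$, giving $\langle\phi|R_n=0$ and, after the identical peeling applied to the daggered (corange) side, $c_{nj}=c_{jn}=0$. Removing $|\psi_n\rangle$ leaves a target space whose inherited diagonal supports still satisfy the staircase hypothesis, so induction on $\dim\mathcal{H}_{\mathrm{target}}$ closes the argument, with the one-dimensional case trivial.

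The hard part is that the separating covector in the peeling step actually requires $W_n$ to stick out of the linear span $\sum_{j<n}W_j$, so that $\langle\phi|$ can be chosen orthogonal to every earlier support while remaining nonzero on $W_n$; this is strictly stronger than the stated non-containment in the set-theoretic union, and, more seriously, controlling the off-diagonal coherences $c_{ij}$ with $i\neq j$ is not implied by any condition on the diagonal marginals alone (a small off-diagonal deformation of a diagonal mixture can leave $\operatorname{Tr}_S$ unchanged). I would close both gaps using the standing assumption spelled out before the proposition: the degeneracy here originates from distinct global symmetry sectors, so the Schmidt vectors $\{|v^i_a\rangle\}$ of different $|\psi_i\rangle$ carry distinct (charge-shifted) quantum numbers on $\bar S$. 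This makes the supports $W_i$ mutually orthogonal, collapsing the union/span distinction, and forces every cross-marginal $\rho_{ij}$ ($i\neq j$) to be Hilbert--Schmidt orthogonal to the diagonal blocks and to one another; the full linear independence then follows from the staircase condition on the diagonal supports. I therefore expect the cleanest route is to make the symmetry-sector orthogonality explicit as the operative structure, present the support-staircase condition of Prop.~\ref{prop:qgd} as the efficiently checkable surrogate it is intended to be, and flag that absent such sector structure the diagonal condition alone does not rigorously suffice.
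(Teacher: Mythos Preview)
Your setup and overall strategy mirror the paper's proof closely: both reduce bipartite distinguishability to linear independence of the family $\rho_{ij}=\operatorname{Tr}_S|\psi_i\rangle\langle\psi_j|$, and both run a triangular peeling argument based on the Schmidt supports $W_i$. The paper makes the triangularity explicit by Gram--Schmidt orthogonalising the $\bar S$-Schmidt vectors of $|\psi_1\rangle,|\psi_2\rangle,\ldots$ in order, so that the coefficient array $c^i_{jk}$ is block lower-triangular in the $\bar S$-index $k$; it then reads off $e_{d_1,d_1}=0$ from the top corner entry and asserts that ``one can similarly assert that $e_{d_1,d_1-1}=e_{d_1-1,d_1}=0,\ldots$''.

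You are right to be suspicious of that ``similarly'', and in fact more careful than the paper. The off-diagonal peeling step needs the cross entry $\rho_{d_1,j}[d_2,\ell]=\sum_m c^{d_1}_{m,d_2}\,c^{j*}_{m,\ell}$ to be nonzero, but the $S$-side Schmidt vector of $|\psi_{d_1}\rangle$ attached to the new $\bar S$-index $d_2$ can be orthogonal to every $S$-side Schmidt vector of $|\psi_j\rangle$, killing that entry. A minimal instance is $|\psi_1\rangle=|0\rangle_S|0\rangle_{\bar S}$, $|\psi_2\rangle=|1\rangle_S|1\rangle_{\bar S}$: the staircase hypothesis holds ($W_2=\operatorname{span}\{|1\rangle\}\not\subset W_1=\operatorname{span}\{|0\rangle\}$), yet $\rho_{12}=\rho_{21}=0$ and the two Bell states $(|\psi_1\rangle\pm|\psi_2\rangle)/\sqrt 2$ share the same $\bar S$-marginal, so the subspace is \emph{not} bipartite distinguishable. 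Thus the proposition as literally stated does not hold, the paper's proof has precisely the gap you flagged, and your instinct to fall back on the symmetry-sector structure alluded to before the proposition (which makes the $W_i$ orthogonal and the cross-marginals $\rho_{ij}$ live in disjoint Hilbert--Schmidt blocks) is the honest way to rescue the intended application. Your union-versus-span worry is genuine but secondary; the uncontrolled off-diagonal coherences are the essential obstruction.
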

This proposition means, when there is a way to distinguish $|\psi_i\rangle$ from $|\psi_j\rangle$ for all $j<i$, then the subspace is bipartite distinguishable. It is sufficient that, if $\dim{\bigcup_{j}\operatorname{span}(\operatorname{Tr}_S(|\psi_{j}\rangle\langle\psi_{j}|))}=\sum_j\dim{\operatorname{span}(\operatorname{Tr}_S(|\psi_{j}\rangle\langle\psi_{j}|))}$, then the condition in Prop.~\ref{prop:qgd} is satisfied.
\begin{proof}
    Let us take a bases of $\mathcal{H}_S$ as $|i_S\rangle$ and a bases of $\mathcal{H}_{\bar S}$ as $|j_{\bar S}\rangle$, and denote $d_1=\dim\mathcal{H}_{\mathrm{target}}$ as the dimension of $\mathcal{H}_{\mathrm{target}}$. Then any state in the subspace can be spanned as $|\psi_i\rangle=\sum_{jk} c^i_{jk}|j_S\rangle|k_{\bar S}\rangle$. Then it suffices to prove that  $\operatorname{Tr}_S(|\psi_i\rangle\langle\psi_j|)=\sum_{klm}c^i_{mk}c^{j*}_{ml}|k_{\bar S}\rangle\langle l_{\bar S}|$ are linearly independent. Suppose they are not linearly independent, then there exist a coefficient matrix $e_{ij}\neq 0$ such that $\sum_{ijm}e_{ij}c^i_{mk}c^{j*}_{ml}=0$.  Let us reselect the bases $|j_{\bar S}\rangle$ as the sequential Schmidt orthogonalization of the set of bases, $\{\langle i_S|\psi_1\rangle\}_i,\{\langle i_S|\psi_2\rangle\}_i,\cdots,\{\langle i_S|\psi_{d_1}\rangle\}_i$. The condition in this proposition implies that the eigenstates of $\operatorname{Tr}_S|\psi_i\rangle\langle\psi_i|$ is not included in the bases $\operatorname{span}_{j,k<i}\{\langle j_S|\psi_k\rangle\}$, then $c_{ij}$ is a lower triangle matrix if we treat $i$ as row indices and $j$ as column indices. Let us denote $ d_2= \dim{\operatorname{span}|j_{\bar S}\rangle}$, then the condition also implies that $c^{d_1}_{mk}c^{d_1*}_{ml}$ is the only matrix has non-zero matrix element on $k=d_2,l=d_2$. This implies $e_{d_1, d_1}=0$. Then one can similarly assert that $e_{d_1, d_1-1}= e_{d_1-1, d_1}=0,e_{d_1, d_1-2}= e_{d_1-2, d_1}=0, \cdots$ thus $e_{ij}=0$. So we prove $c^{i}_{mk}c^{j*}_{ml}$ are linearly independent.
\end{proof}

In this paragraph, we discuss numerical calculations related to the Fermi–Hubbard model introduced in Sec.~\ref{sec:Fermi-Hubbard model}, which is widely used as a theoretical model for high-$T_c$ superconductivity~\cite{anderson2013twenty}. The Hamiltonian is given by
$$
H_{\mathrm{FH}}=-t\sum_{\langle ij\rangle}(c^\dagger_{i\uparrow} c_{j\uparrow}+c^\dagger_{i\downarrow} c_{j\downarrow})+U\sum_i n_{i\uparrow}n_{i\downarrow},
$$
where $c_{i\uparrow,\downarrow}$ is the annihilation operator for an electron at site $i$ with spin up or down, and $n_{i\uparrow,\downarrow} = c^\dagger_{i\uparrow,\downarrow} c_{i\uparrow,\downarrow}$ is the corresponding number operator. The first summation runs over nearest-neighbor pairs. This Hamiltonian does not possess any local conserved quantities, but it does have trivial SCQs due to particle-number conservation, i.e., $[H_{\mathrm{FH}}, N_{\uparrow}] = [H_{\mathrm{FH}}, N_{\downarrow}] = 0$, where $N_{\uparrow,\downarrow} = \sum_i n_{i\uparrow,\downarrow}$.

We use exact diagonalization to compute the GS manifold of $H_{\mathrm{FH}}$ on a $3 \times 3$ lattice with periodic boundary conditions (see Fig.~\ref{fig:FHL}) and analyze its few-body RDMs to calculate the effective temperature lower bound discussed in Sec.~\ref{sec:Derivation of the Temperature lower bound}. We consider two cases: (i) Half-filling, i.e., the antiferromagnetic phase;(ii) Near-half-filling, corresponding to a regime close to $d$-wave superconductivity with doping around $0.15$~\cite{sigrist2005introduction}. The temperature lower bound is calculated with the grand canonical ensemble.

For case (i), we focus on the subspace with $N_{\uparrow} = 4$, $N_{\downarrow} = 5$, where the GS of $H_{\mathrm{FH}}$ is four-fold degenerate. This degeneracy arises from translation invariance. Denoting the translation operators by one lattice site along the $x$ and $y$ directions as $T_x$ and $T_y$, with eigenvalues $1, e^{2\pi i/3}, e^{4\pi i/3}$, the four GSs belong to subspaces labeled by $(T_x,T_y)=(1,e^{\frac{2i\pi}{3}}),(1,e^{\frac{4i\pi}{3}}),(e^{\frac{2i\pi}{3}},1),(e^{\frac{4i\pi}{3}},1)$. These states are globally distinct, and direct investigations of their wavefunctions confirms that the GS manifold is bipartite distinguishable, as established by Prop.~\ref{prop:qgd}. The corresponding temperature lower bound is shown in Fig.~\ref{fig:FHAFM}.
Assuming spatial locality of the superoperators and exploiting translation invariance, it suffices to consider representative regions $S$ as shown in Fig.~\ref{fig:FHL}. For the function $p(\Pi_{\mathrm{GS}})$ defined in Eq.~\eqref{eq: upper bound degenerate}, we consider: (1) two-body RDMs, where superoperators can either flip the spin on a single site or transfer a particle between neighboring sites without spin change; (2) three-body and four-body RDMs, where only particle transfers (no spin flips) are considered.

\bibliography{main}

\end{document}